\newtheorem{thm}{Theorem}[section]
\newtheorem{lem}[thm]{Lemma}
\newtheorem{prop}[thm]{Proposition}
\newtheorem{cor}[thm]{Corollary}
\theoremstyle{definition}
\newtheorem{rem}[thm]{Remark}
\newtheorem{defn}[thm]{Definition}
\newtheorem{eg}[thm]{Example}
\numberwithin{equation}{section}
\newcommand{\Li}[1]{\operatorname{Li}_{#1}}
\newcommand{\set}[1]{\{#1\}}
\newcommand{\abs}[1]{\lvert #1 \rvert}
\newcommand{\norm}[1]{\lVert #1 \rVert}
\newcommand{\iu}{\mathrm{i}}
\newcommand{\ipi}{\iu \pi}
\newcommand{\td}[1][]{\mathrm{d}^{#1}}
\DeclareMathOperator{\codim}{codim}
\DeclareMathOperator{\Res}{Res}
\DeclareMathOperator{\Var}{Var}
\DeclareMathOperator{\var}{var}
\newcommand{\tvar}{\operatorname{\widetilde{\var}}} 
\DeclareMathOperator{\End}{End}
\DeclareMathOperator{\Aut}{Aut}
\DeclareMathOperator{\Hom}{Hom}
\DeclareMathOperator{\sgn}{sgn}
\newcommand{\is}[2]{\langle {#1} \,\vert\, {#2} \rangle}
\newcommand{\dR}{\mathsf{dR}}
\newcommand{\inv}[1]{ {#1}^{-1} }
\newcommand{\mb}[1]{\mathbb{#1}}
\newcommand{\ti}[1]{\widetilde{#1}}
\newcommand{\rk}[1]{\mathrm{rank}({#1}) }
\newcommand{\rest}[1]{|_{#1}}
\newcommand{\RR}{\mathbb{R}}
\newcommand{\CC}{\mathbb{C}}
\newcommand{\ZZ}{\mathbb{Z}}
\newcommand{\QQ}{\mathbb{Q}}
\newcommand{\PP}{\mathbb{P}}
\newcommand{\Upol}{\mathcal{U}}
\newcommand{\Fpol}{\mathcal{F}}
\newcommand{\II}{\mathcal{I}}
\newcommand{\Transpose}{\intercal}
\newcommand{\defas}{\mathrel{\mathop:}=}
\newcommand{\Sphere}{\mathbb{S}} 
\newcommand{\oBall}{\mathbb{B}} 
\newcommand{\cBall}{\overline{\oBall}} 
\newcommand{\oBallD}{\mathbb{D}} 
\newcommand{\cBallD}{\overline{\oBallD}} 
\newcommand{\setcompl}{\mathsf{c}} 
\newcommand{\fibSphere}{\delta}
\newcommand{\PLm}{m} 
\DeclareMathOperator{\id}{id}
\newcommand{\irrone}[1]{\mathfrak{#1}}
\newcommand{\asyO}[1]{\mathcal{O}(#1)}
\newcommand{\vcell}{\zeta}
\newcommand{\vcyc}{\nu}
\newcommand{\dvcyc}{\widetilde{\vcyc}}
\newcommand{\conc}{\star} 
\newcommand{\CV}{\leqslant}
\newcommand{\nCV}{\nleqslant}
\newcommand{\nMil}{\mu} 
\newcommand{\simple}[1]{#1_{\mathsf{s}}}
\newcommand{\orcidicon}[1]{\href{https://orcid.org/#1}{\includegraphics[height=2.5ex]{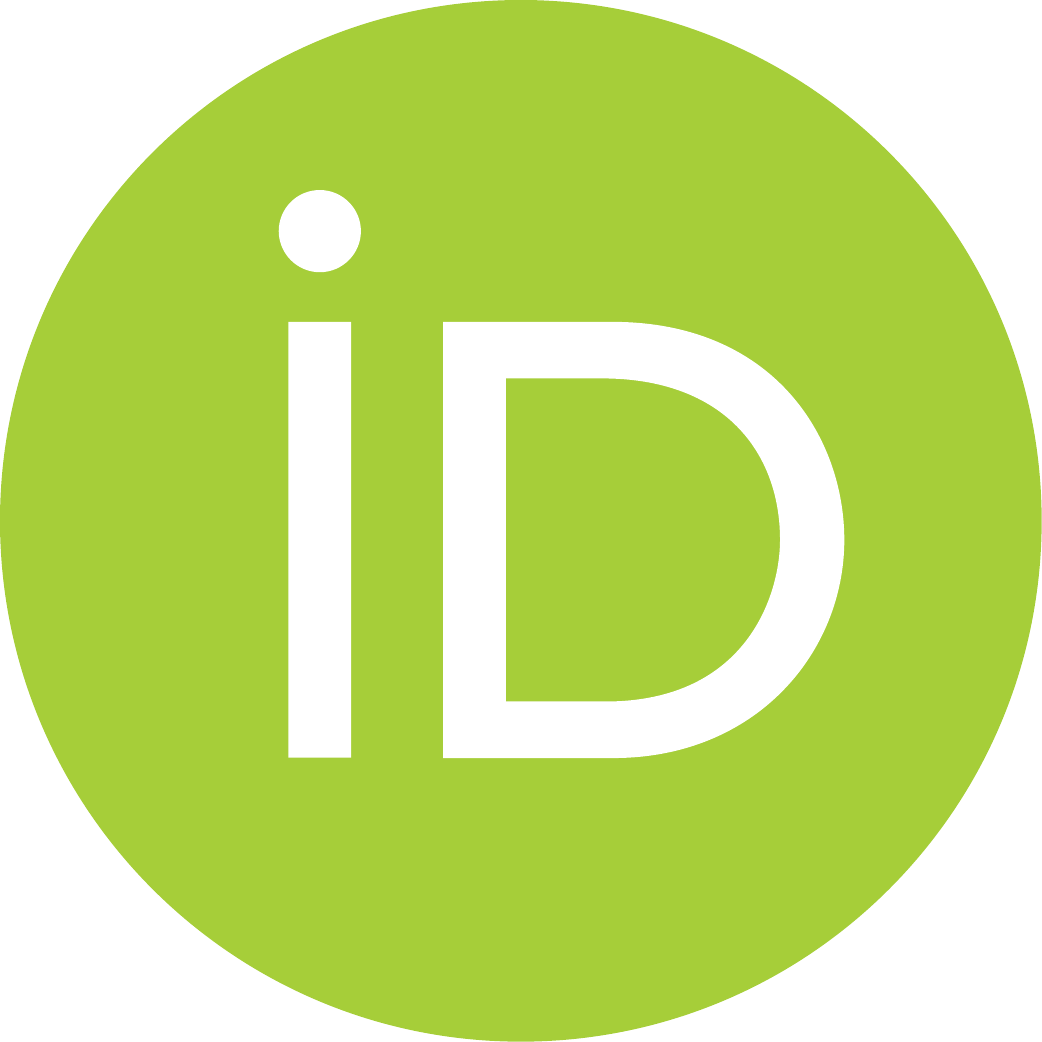}}}
\newcommand{\setm}{\setminus}
\title{Hierarchies in relative Picard-Lefschetz theory}
\author{Marko Berghoff$^{(1,2)}$ \protect\orcidicon{0000-0002-9108-3045}}
\address[1]{Mathematical Institute, University of Oxford, Oxford OX2 6GG, UK}
\address[2]{(now at) Institut f\"ur Mathematik, Humboldt Universit\"{a}t zu Berlin, Rudower Chaussee 25, 12489 Berlin, Germany}
\email{berghoffmj@gmail.com}
\author{Erik Panzer$^{(1)}$ \protect\orcidicon{0000-0002-9897-5812}}
\email{erik.panzer@maths.ox.ac.uk}
\begin{document}

\begin{abstract}
We prove a relative version of the Picard-Lefschetz theorem, describing the variation of relative homology groups $H_d(Y_t \setm A_t,B_t\setm A_t)$ in the fibers of a smooth fiber bundle $Y \to T$ of complex manifolds with $A\cup B \subset Y$ transverse. From this we derive the vanishing of certain iterated variations, a system of constraints dubbed ``hierarchy''.

As applications, we rederive the known analytic structure of Aomoto polylogarithms and massive one loop Feynman integrals. Moreover, we introduce the ``simple type'' to prove hierarchy constraints in degenerate cases where the Picard-Lefschetz formula does not apply, e.g.\ the massless triangle or the ice cream cone Feynman diagram.
We compare our findings with a ``classical" hierarchy of iterated variations (from 1960's $S$-matrix theory) and show how our setup not only explains, but also refines the latter. In order to do so, we need to further resolve the geometry of Feynman motives: We boldly blow up what no one has blown up before.   
\end{abstract}

\date{\today}
\maketitle

\tableofcontents

\section{Introduction}

\subsection{Monodromy}\label{sec:intro-mon}
A holomorphic surjection $\pi\colon Y \rightarrow T$ of connected complex manifolds defines a family $Y_t = \pi^{-1}(t)$ of analytic varieties that depend on a parameter $t\in T$. The set $L\subset T$ of critical values has measure zero, 
and the smooth fibres $Y_t$ over the complement of $L$ glue into a smooth fibre bundle if $\pi$ is proper. 
Such local trivializations $\pi^{-1}(U)\cong Y_{t_0}\times U$ over $t_0\in U \subseteq T \setm L$ show that the homology groups $H_d(Y_t;\ZZ)$ of the fibres
are locally constant over $T\setm L$.
The monodromy representation
\begin{equation*}
    \pi_1(T\setm L,t_0) \longrightarrow \Aut H_d (Y_{t_0};\ZZ)
\end{equation*}
is induced by parallel transport along paths. It is an important tool to study periods of algebraic varieties and variations of Hodge structures \cite{Griffiths:PeriodsSummary}.

More generally, two analytic subvarieties $A,B\subset Y$ give rise to a family of pairs $B_t \setm A_t \subset Y_t \setm A_t$ where $A_t = A\cap Y_t$ and $B_t=B\cap Y_t$.
We abbreviate the associated relative homology groups as
\begin{equation*}
    H_d(Y \setm A, B)_t = H_d(Y_t \setm A_t,B_t \setm A_t; \ZZ).
\end{equation*}
To describe their monodromy, we seek:
\begin{enumerate}
    \item[\textbf{(L)}]
    a subset $L \subset T$ such that the pairs $(Y_t \setm A_t,B_t \setm A_t)$ vary as a pair of topological fibre bundles over $T \setm L$, and
    \item[\textbf{(V)}]
    the representation $\rho\colon \pi_1(T \setm L,t_0) \longrightarrow \Aut H_d(Y \setm A, B)_{t_0} $.
\end{enumerate}
The \emph{Landau variety} of the pair $(Y\setm A,B\setm A)$ with respect to $\pi$ is a closed analytic subvariety $L$ of $T$ with property \textbf{(L)}. We will describe the monodromy $\gamma_*=\rho(\gamma)$ along a loop $\gamma\in\pi_1(T\setm L,t_0)$ in terms of the \emph{variation}
\begin{equation*}
    \Var_{\gamma} = \gamma_* - \id \in \End H_d(Y\setm A,B)_{t_0}.
\end{equation*}

\subsection{Parameter integrals}\label{sec:intro-integrals}
The monodromy data encodes the analytic continuation \textbf{(V)}, and a bound \textbf{(L)} on the singularities, of parameter integrals
\begin{equation*}
    \II(t) = \int_{\sigma_t} \omega_t 
\end{equation*}
of integrands with poles on $A$, over domains with boundary in $B$. Let
\begin{itemize}
    \item $n$ the complex dimension of the fibres $Y_t$,
    \item $\omega\in\Omega^n(Y\setm A)$ a holomorphic differential form,
    \item $\omega_t=\omega|_{Y_t\setm A_t}$ its restriction to a fibre,
    \item $\sigma_{t_0}$ an $n$-chain in $Y_{t_0}\setm A_{t_0}$ with boundary in $B_{t_0}\setm A_{t_0}$.
\end{itemize}
The class $[\sigma_{t_0}]$ extends to a multivalued section 
$[\sigma_t]\in H_n(Y\setm A,B)_t$ over $T\setm L$, representable locally by a continuous family $\sigma_t$ of chains.
Since $\omega_t$ is smooth on the compact support of $\sigma_t$, the integrals $\II(t)$ converge and define a multivalued holomorphic function on $T\setm L$. As restrictions of a global form $\omega$, the integrands $\omega_t$ are single-valued. Analytic continuation $\gamma \cdot \II$ along a loop $\gamma\in\pi_1(T\setm L,t)$ is thus completely determined by the monodromy:
\begin{equation*}
    \gamma \cdot \II(t) = \int_{\gamma_* [\sigma_{t}]} \omega_{t}
    = \II(t) + \int_{\Var_\gamma [\sigma_{t}]} \omega_{t}.
\end{equation*}

By Riemann's extension theorem, 
$\omega$ extends holomorphically over any irreducible components of $A$ with complex codimension $\geq 2$. We hence drop all such components and assume that $A$ is pure of codimension one. Locally,
\begin{equation*}
    \omega = \sum_{i_1<\cdots<i_n} \frac{P_{i_1\!\cdots i_n}(z)}{Q(z)}\ \td z_{i_1} \wedge \ldots \wedge \td z_{i_n}
\end{equation*}
can be written with holomorphic functions $P$ and $Q$ such that $A$ is the vanishing locus $\set{Q=0}$ of the denominator.

Similarly, the singularities of $\II$ are covered already by only those components $\ell_i \subseteq L$ with complex codimension one.
Items \textbf{(L)} and \textbf{(V)} above translate into finding an upper bound $L$ on the singularities $\bigcup_i \ell_i$ of $\II$, and studying the behaviour of $\II$ near each $\ell_i$. This purely homological approach is blind to the specific choice of the integrand $\omega$, and some integrands produce integrals with fewer singularities (consider e.g.\ $\omega=0$). We are not concerned with identifying precisely which components $\ell_i \subseteq L$ of the Landau variety are genuine singularities of the integral $\II$ for a given $\omega$---instead, the homological approach provides insights that hold for all integrands.

\subsection{Picard-Lefschetz theory}
Let
\begin{equation*}
    \irrone{A} = \{ A_1, A_2, \ldots \}
    \quad\text{and}\quad
    \irrone{B} = \{ B_1, B_2, \ldots \}
\end{equation*}
denote the finite sets of irreducible components of $A$ and $B$. We assume that these are smooth hypersurfaces with transverse intersections. 
Suppose that $\ell\subseteq L$ is a component such that $A\cup B$ has a unique, non-degenerate critical point $p(t)$ over $t\in \ell$. Let $A_1,\ldots,A_i,B_1,\ldots,B_j$ denote the hypersurfaces that contain $p(t)$. Then for a small loop $\gamma$ in $T\setm L$, that is based at $t_0$ and winds around $\ell$, the Picard-Lefschetz formula
\begin{equation}\label{eq:intro-PL}
    \Var_{\gamma} (\sigma) = (-1)^{(n+1)(n+2)/2} \cdot 
    \is{\dvcyc_p}{\sigma} \cdot \vcyc_p
\end{equation}
determines the variation of $\sigma\in H_n(Y\setm A,B)_{t_0}$ in terms of an intersection number $\is{\dvcyc_p}{\sigma}$ and vanishing cycles $\vcyc_p\in H_n(Y\setm A,B)_{t_0}$, $\dvcyc_p\in H_n(Y\setm B,A)_{t_0}$. The latter are constructed out of a sphere $\Sphere^{n-i-j}$ localized near $p(t)$ and embedded in the submanifold $A_1\cap\ldots\cap A_i\cap B_1\cap\ldots\cap B_j$.

For a single hypersurface $\abs{\irrone{A}\sqcup\irrone{B}}\leq 1$, the Picard-Lefschetz theorem is well-known \cite{Vassiliev:RamLac,Vassiliev:AppliedPL}. A proof, allowing for arbitrary isolated critical points, can be found in \cite[\S5]{Lamotke:HomIsoSing}. 
There are even generalizations for isolated critical points on complete intersections \cite{Hamm:LokTopKom,Looijenga:IsolatedCI}.
However, the case with \emph{multiple} hypersurfaces has not been studied as much. Arrangements with $\abs{\irrone{A}}>1$ are treated in \cite{FFLP}, but without boundary ($B=\varnothing$). The generalization to $B\neq\varnothing$ is briefly mentioned in \cite{Pham:Singularities}, but important details are left out.

The first result of this paper is a detailed, self-contained proof of \eqref{eq:intro-PL} in the general case where $A,B\neq\varnothing$. The main new insights are:
\begin{itemize}
    \item We correct an error in the literature:
All sources claim that all linear pinches ($i+j=n+1$) have zero variation.\footnote{See the last paragraph in \cite[\S I.9]{Vassiliev:RamLac} or \cite[\S I.8]{Vassiliev:AppliedPL}, and footnote~11 on \cite[p.~95]{Pham:Singularities}.} As \cref{ss:Li1} shows, this is not the case. Only the linear pinches of ``pure $\irrone{A}$ type'' ($j=0$) or ``pure $\irrone{B}$ type'' ($i=0$) have zero variation, whereas linear pinches on \emph{mixed} intersections of $\irrone{A}$'s \emph{and} $\irrone{B}$'s have non-zero variation.
    \item We prove a fairly general vanishing statement for iterated variations, see \cref{sec:intro-hierarchy}.
\end{itemize}

The main point of this paper is therefore that it pays off to keep track of which elements of $\irrone{A}$ and $\irrone{B}$ contain a given critical point.
The distinction between $\irrone{A}$ and $\irrone{B}$ is important, despite the fact that the Landau variety $L$ depends only on the combined arrangement $A\cup B$.

\subsection{Hierarchy principle}\label{sec:intro-hierarchy}

Consider an iterated variation around two components $\ell,\ell'$ of the Landau variety. The \emph{hierarchy principle} is a combinatorial criterion that ensures $\Var_{\gamma'}\circ\Var_{\gamma}=0$.
Suppose for example that the fibres over $\ell$ and $\ell'$ have a single non-degenerate critical point $p$ and $p'$, respectively.
Then according to \eqref{eq:intro-PL}, we are asking if $\is{\dvcyc_{p'}}{\vcyc_p}=0$.

The set of all hypersurfaces in $\irrone{A}\sqcup\irrone{B}$ that contain $p$ encodes a pair
\begin{equation*}
    (I,J)=(\set{i\colon p\in A_i},\set{j\colon p\in B_j}),
\end{equation*}
called the \emph{type} of $p$. Let $(I',J')$ denote the type of $p'$.
The variation around $\ell$ localizes in a small neighbourhood of $p$, which meets only those $A_i$ and $B_j$ with $i\in I, j \in J$. Hence the vanishing cycle $\vcyc_p$ is in the image of
\begin{equation*}
    H_n\Big(Y\setm A,\bigcup\nolimits_{j\in J} B_j\Big)_{t_0} \longrightarrow H_n(Y\setm A, B)_{t_0},
\end{equation*}
and therefore the boundary of $\vcyc_p$ remains forever confined to the $B_j$ with $j\in J$---even after continuation from $t_0$ to any $t \in T \setm L$.

Furthermore, $\vcyc_p$ is an iterated Leray coboundary (tube) around each $A_i$ with $i\in I$. It follows that for any $i\in I$, $\vcyc_p$ is in the kernel of
\begin{equation*}
    H_n(Y\setm A,B)_{t_0} \longrightarrow H_n\Big(Y\setm \bigcup\nolimits_{k\neq i} A_k, B\Big)_{t_0}.
\end{equation*}
So if some $i\in I$ is not in $I'$, then $\vcyc_p$ becomes zero in the localization of $\Var_{\gamma'}$ near $p'$. A dual argument for the vanishing cycles $\dvcyc_{p'}$ shows that $\is{\dvcyc_{p'}}{\vcyc_p}=0$ holds also whenever there exists $j\in J'$ that is not in $J$.

To summarize, the iterated variation $\Var_{\gamma'}\circ \Var_{\gamma}$ can only be nonzero if $p$ and $p'$ are \emph{compatible} (denoted $p'\CV p$), that is, if
\begin{equation}\label{eq:intro-compatible}
    I'\supseteq I \quad\text{and}\quad J'\subseteq J.
\end{equation}
Therefore, a non-zero iteration of variations $\Var_n\circ\cdots\circ\Var_1$ requires that the sets of participating $\irrone B$-elements get smaller ($I_1\supseteq I_2\supseteq\ldots$), whereas the sets of $\irrone A$-elements can only get bigger ($J_1\subseteq J_2\subseteq\ldots$).

In \cref{sec:hierarchy} we generalize this property in various ways:
\begin{itemize}
    \item For \emph{linear} simple pinches (\cref{defn:pinches}), compatibility requires strict containment: $I'\supsetneq I$ and $J'\subsetneq J$.
    \item When there are multiple critical points in a fiber, we call two components of the Landau variety \emph{compatible} ($\ell'\CV \ell$), if all critical points over $\ell'$ are compatible with all critical points over $\ell$.
    \item We can allow critical points that are not simple pinches, or critical \emph{sets} $p\subset A^I \cap B^J$ that are not even isolated. Then \eqref{eq:intro-compatible} becomes
    \begin{equation}\label{eq:intro-simple-hierarchy}
        I'\supseteq \simple{I}\quad\text{and}\quad \simple{J}'\subseteq J,
    \end{equation}
    in terms of the \emph{simple components} $\simple{I}\subseteq I$ and $\simple{J}'\subseteq J$ (\cref{def:simple-type}).
    Here a component $A_i$ or $B_j$ is called simple, if $p$ remains critical even if that component is removed from the arrangement.
\end{itemize}

The upshot is a relation $\CV$ on the components $\ell$ of $L$ (\cref{def:rel_landau_components}). The hierarchy principle (\cref{corr:povar}) states then that
\begin{equation}\label{eq:intro-hierarchy}
    \Var_{\gamma'} \circ \Var_{\gamma} = 0 \quad \text{if} \quad \ell'\nCV \ell. 
\end{equation}

As an application, we show in \cref{sec:Aomoto,ssec:oneloop} how to quickly derive the known Landau variety and hierarchy for Aomoto polylogarithms and massive one loop Feynman integrals.

\subsection{Feynman integrals}
The application of homological methods to Feynman integrals has a long history \cite{HwaTeplitz:HomFI,Golubeva:InvestigationFeynHom}.
Approaches using the \emph{momentum representation}, like \cite{Federbush:SomeHomology6,Westwater:SixthLadder}, are complicated by the need to compactify the integration domain, which poses unsolved challenges beyond one loop \cite{Muehlbauer:MomLanIso}.

This problem does not appear in the \emph{parametric representation}. However, in this representation, the integration domain has boundary. An analysis of (iterated) variations therefore requires a relative version of the (hierarchy) Picard-Lefschetz theorems.
Using this setup, Boyling derived constraints of the form \eqref{eq:intro-hierarchy} for massive one-loop graphs \cite[\S 2]{Boyling:HomParaFeyn}.

The Picard-Lefschetz formula \eqref{eq:intro-PL} is however not usually applicable to more general Feynman integrals, because the critical sets are often not simple and isolated.
Using our generalized hierarchy principle, we can nevertheless obtain constraints on iterated variations in such cases. We carry this out in detail for Feynman graphs with two loops (sunrise and ice cream cone) or one loop and zero masses (triangle), see \cref{sec:feynman}.

We find that the distinction in \eqref{eq:intro-simple-hierarchy} between the type of a singularity $\ell\subset L$ vs.\ its simple components is indeed crucial: applying \eqref{eq:intro-compatible} to singularities that are not simple pinches leads to wrong conclusions. This need for a refined analysis was noted for Feynman integrals in \cite{LandshoffOlivePolkinghorne:Hierarchical}, correcting a too optimistic expectation for the $\irrone{B}$-hierarchy from \cite{LandshoffPolkinghorneTaylor:MandelstamRepr}. In \cref{ssec:oneloop,sec:icecream} we demonstrate that also the $\irrone{A}$-hierarchy is interesting for Feynman integrals.

\subsection{Outlook}
The techniques developed in this paper provide a framework to explain variation constraints of the form \eqref{eq:intro-hierarchy}. We hope that these methods will be useful to prove constraints for infinite families of Feynman integrals, and thereby scattering amplitudes. For example, \emph{extended Steinmann relations} \cite[\S 3]{CDDHMP:CosmicSteinmann} are applied with great effect in high order perturbation theory. For now, these relations and generalizations thereof \cite{DrummondFosterGurdrogan:ClusterAdjN4} are conjectures.

\begin{rem}
    Originally, Steinmann relations \cite{Lassalle:AnaMany1,Steinmann:WFkommII} refer to amplitudes---not individual Feynman diagrams---and they only apply in a restricted (``physical'') region of parameter space. This translates into the statement $\Var_{\gamma'}\Var_{\gamma}\sigma=0$ for certain chains $\sigma$, but not all. In the physical region, first type singularities have simple pinch type and can thus be studied with Picard-Lefschetz theory \cite{Pham:DiffusionMultiple,Pham:SingMultiScatt,HMSV:SeqDiscOnShell}. The resulting hierarchy has a simple formulation, but it does not apply outside the physical region, ignores singularities of second type, and requires generic masses.\footnote{For example, external particles are not allowed to all have the same mass.}

    We do not make any such assumptions.
\end{rem}

\subsection{Outline of the paper} 
In \cref{sec:stratsandlandau} we review the solution of problem \textbf{(L)}, that is, how to find the Landau variety of a pair $(Y \setm A, B \setm A)$ with respect to a smooth map of complex manifolds $\pi\colon Y\to T$. We discuss (canonical) stratifications, Thom's first isotopy lemma (including a smooth version of it), how to compute $L$ and how this applies to the study of parameter integrals. 

\Cref{sec:PL} deals then with problem \textbf{(V)} via Picard-Lefschetz theory. We review the classical Picard-Lefschetz theorem and prove its generalization to relative homology. In the process, we give a concise account of all necessary technical tools: Definition of the monodromy and variation operators, localization arguments, characterization of \textit{simple pinch} critical points, Leray's residue and (co)boundary maps, and definition of the \textit{vanishing chains}. After the stage is set, we state and prove the (relative) Picard-Lefschetz theorem, \cref{thm:PL}, then discuss refinements for linear pinches (\cref{rem:lin-pinch-types}) and generalizations to arbitrary simple pinches (\cref{eq:general-simple-pinch}).  

In \cref{sec:hierarchy} we introduce our hierarchy principle. We first present it for simple pinches (\cref{eq:pinch-hierarchy,eq:itvar_B_linear,eq:itvar_A_linear,eq:itvar_n-m_odd}), then for certain components of the Landau variety (\cref{corr:simple-povar}), and finally state it in its most general form (corollaries \ref{corr:povar_upstairs} and \ref{corr:povar}) which allows for non-isolated critical sets.

The following two sections apply the previously developed theory to two families of examples. We rigorously derive their Landau varieties and discuss implications (and limitations) of the hierarchy principle. In \cref{sec:polylogs} we consider polylogarithms: While \cref{sec:dilog} deals with a specific (slightly degenerate) example, the dilogarithm, \cref{sec:Aomoto} discusses a tame family, the Aomoto polylogarithms, in full detail. Finally, in \cref{sec:feynman} we study Feynman integrals. After introducing some notation, we start with massive one loop graphs (\cref{ssec:oneloop}) where our setup works out of the box. Then we consider some specific examples where this is not the case, the massless triangle, the massive sunrise and the ice cream cone. We show how additional blow ups (beyond the Feynman motive) turn $A\cup B$ transverse, so that we can apply our methods. This allows us to explain and refine what is called \textit{breakdown of the hierarchical principle} in \cite{Boyling:HomParaFeyn,LandshoffOlivePolkinghorne:Hierarchical}.

The appendices provide several technical details, that are hard to find in the existing literature:
\begin{itemize}
\item
\Cref{sec:homology-groups} calculates all \emph{relative} homology groups of the arrangement of hypersurfaces at a linear or quadratic simple pinch, generalizing results of \cite{FFLP}. To be fully self-contained, we also compute the classical Picard-Lefschetz formula for the variation of a quadric.
\item
\Cref{sec:relative-residues} explains the construction of the long exact sequences for partial boundaries and relative residues from \cite[Chapitre~2]{Leray:CauchyIII}, in the general case of transverse arrangements of smooth closed submanifolds with arbitrary codimensions. 
\item
\Cref{sec:verdier+intersection} gives a general definition and the key properties of relative intersection numbers and the duality swapping $A\leftrightarrow B$.
\item
\Cref{sec:fg-codim1} shows that the codimension one components $\ell$ of $L$ already determine the full fundamental group $\pi_1(T\setm L)$, and that the kernel of $\pi_1(T\setm L) \relbar\joinrel\twoheadrightarrow \pi_1(T)$ is generated by so-called \textit{simple loops} (\cref{def:small-simple}).
\end{itemize}

\subsection{Two examples}\label{ss:introexamples}
We illustrate our findings with two very simple examples: The first example observes that the non-trivial monodromy of the logarithm arises from a linear pinch; the second example illustrates the hierarchy principle for the bubble Feynman integral. In both cases, $Y=X \times T$ is a product with $\pi(x,t)=t$ and compact fibre $X=\PP^1$.

\subsubsection{The logarithm}\label{ss:Li1}
For $t\in \PP^1$ consider the integral
\begin{equation*}
    \log (1-t) = \int_{\sigma} \frac{t\,\td x}{tx-1}.
\end{equation*}
For $t\neq[1,\infty]$, the interval $\sigma=[0,1]$ defines a class in $H_1(X\setm A_t,B_t)$, where the hypersurfaces $\irrone{A}=\set{A_1,A_2}$ and $\irrone{B}=\set{B_1,B_2}$ of $Y$ are given by
\begin{equation*}
    A_1=\set{xt=1 },\ 
    A_2=\set{ x=\infty },\ 
    B_1=\set{x=0},\ 
    B_2=\set{ x=1 }.
\end{equation*}
These four hypersurfaces are smooth and transverse to each other. 
Each hypersurface submerses onto $T$. Thus, the critical strata are the points (codimension two) where two hypersurfaces in $\irrone{A}$, or one in $\irrone{A}$ and one from $\irrone{B}$, intersect (see \cref{fig:loga2}):
\begin{equation*}
\begin{cases}
    \{ xt=1 \} \cap \{ x=\infty \} = \set{(\infty,0)}, & \irrone{A}\cap \irrone{A} \\
    \{ xt=1 \} \cap \{ x=0 \} = \set{(0,\infty)}, & \irrone{A}\cap\irrone{B} \\
    \{ xt=1 \} \cap \{ x=1 \} = \set{(1,1)}, & \irrone{A}\cap \irrone{B}. \\
\end{cases}
\end{equation*}

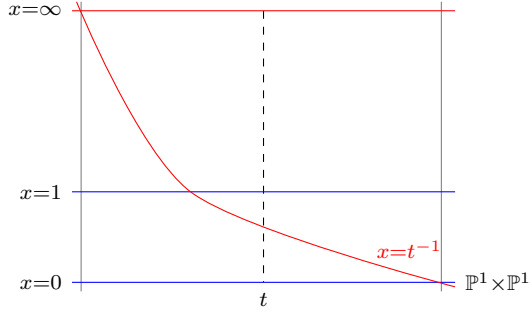
\begin{figure}
\begin{tikzpicture}[scale=1.2]
\coordinate (lu) at (-.1,0);
\coordinate (lo) at (-.1,3);
\coordinate (ru) at (4.1,0);
\coordinate (ro) at (4.1,3);
\coordinate (m) at (1,1);
\coordinate (ml) at (-.1,1);
\coordinate (mr) at (4.1,1);
\coordinate (mu) at (1,0);
\coordinate (mo) at (1,3);
\node[right] at (ru) {$\scriptstyle{ \mb P^1 \times \mb P^1 }$};
\node[left] at (lo) {$\scriptstyle{x=\infty}$};
\node[left] at (lu) {$\scriptstyle{x=0}$};
\node[left] at (ml) {$\scriptstyle{x=1}$};
\node[below] at (2,0) {$\scriptstyle{t}$};
\draw[gray] (0,-.1) -- (0,3.1);
\draw[gray] (3.95,-.1) -- (3.95,3.1);
\draw[dashed] (2,3) -- (2,0);
\draw[blue] (lu) -- (ru);
\draw[blue] (ml) -- (mr);
\draw[red] (lo) -- (ro);
\draw[red] plot [smooth] coordinates {(-.05,3.1) (1.2,1) (4.1,-0.05)} node[xshift=-.6cm, yshift=.5cm] {$\scriptstyle{x=\inv t}$};
\end{tikzpicture}
    \caption{The arrangement $D=A\cup B \subset \mb P^1 \times \mb P^1$ associated to the integral $\log (1-t) = \int_{\sigma_t} \frac{t\,\td x}{tx-1}$.}%
    \label{fig:loga2}%
\end{figure}

These critical points are called \emph{linear pinches} and their projection along $\pi$ determines the Landau variety $L=\{0,1,\infty\}$. At the points $t=1,\infty$, the function $\log(1-t)$ has logarithmic singularities. In contrast, at $t=0$, this function is smooth and takes values in $2\ipi\ZZ$. In particular, the variation at $t=0$ is zero. We will see in general that linear pinches of ``pure $\irrone{A}$-type'' (or ``pure $\irrone{B}$-type'') always have zero variation, whereas ``mixed $\irrone{A}$--$\irrone{B}$-type'' linear pinches have non-zero variation.

In order to describe the monodromy associated to small loops winding around $L$ we note that for generic $t$ the first relative Betti homology of the fiber is $H_1(Y\setm A,B)_t=H_1(X \setm A_t,B_t ) \cong \mb Z^2$.  It is spanned by the interval $\sigma=[0,1]$, and a small loop $\nu$ around the point $x=\inv{t}$. This cycle is constructed as follows (see \cref{fig:loga}): 
\begin{enumerate}
 \item Let $\ell\in L$. For $t$ close to $\ell$ we consider the pair of fibers $(X,A_t \cup B_t)$. The \textbf{vanishing cell} $\vcell_\ell$ is the unique (real) $n$-dimensional cell ($n=\dim_{\mb C}X$) in $X$ that is bounded by $A_t \cup B_t$ and vanishes as $t\to \ell$. 
 
 \noindent 
 In the present case there are three such cells, given by paths from $x$ to $\inv{t}$ for $x=0,1,\infty \in X$ (corresponding to $\ell=\infty,1,0$). Each represents an element in $H_1(Y,A \cup B)_t\cong \mb Z^3$. Note that, if we restrict to a small neighbourhood $U$ of a critical point $x=\inv \ell$, then each class is the unique generator of $H_1(U, (A_t \cup B_t) \cap U)\cong \mb Z$.
 
 \item The \textbf{vanishing cycle} $\vcyc_\ell$ is defined as the \emph{Leray coboundary} (see \cref{s:leray}) of the \emph{partial boundary} of the vanishing cell (both with respect to the relevant hypersurfaces of $\irrone{A}$-type), $\vcyc_\ell:=\delta_A \partial_A \vcell_\ell$.
 
 \noindent 
 Here $\nu_1$ and $\nu_\infty$ coincide, both given by small loops around $x=\inv{t}$. They represent classes $[\nu_1]\equiv [\nu_\infty]$ in $H_1(Y \setm A,B )_t$, unique after localisation. The vanishing cycle $\vcyc_0$ is zero, because $\partial_A \vcell_0= \partial_{A_2}\partial_{A_1}\vcell_0=0$.
\end{enumerate}
 
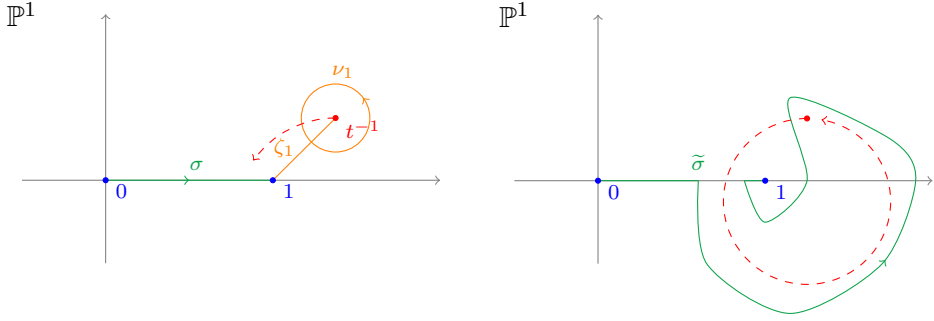
\begin{figure}
  \begin{tikzpicture}[scale=1.1]
\coordinate (lu) at (0,0);
\coordinate (lo) at (0,3);
\coordinate (ru) at (4,0);
\coordinate (ro) at (4,3);
\coordinate (m) at (1,1);
\coordinate (ml) at (0,1);
\coordinate (mr) at (5,1);
\coordinate (mu) at (1,0);
\coordinate (mo) at (1,3);
\coordinate (t) at (3.75,1.75);
\node at (lo) {$\mb P^1$};
\draw[black!50,->] (mu) -- (mo);
\draw[black!50,->] (ml) -- (mr);
\draw[olive!15!green,decoration={markings, mark=at position 0.5 with {\arrow{>}}},postaction={decorate}] (m) -- (3,1) node[above,xshift=-1cm] {$\scriptstyle{\sigma}$};
\draw[orange] (3,1) -- (t) node[left,xshift=-.4cm,yshift=-.4cm] {$\scriptstyle{\vcell_1}$};
\filldraw[blue] (m) circle (0.03) node[right,yshift=-.15cm] {$\scriptstyle{0}$};
\filldraw[blue] (3,1) circle (0.03) node[right,yshift=-.15cm] {$\scriptstyle{1}$};
\draw[orange,decoration={markings, mark=at position 0.1 with {\arrow{>}}},postaction={decorate}] (t) circle [radius = 0.41cm] node[above,xshift=.1cm,yshift=.4cm] {$\scriptstyle{\vcyc_1}$};
\filldraw[red] (t) circle (0.03) node[right,yshift=-.15cm] {$\scriptstyle{\inv t}$};
\draw[red,dashed,decoration={markings, mark=at position 1 with {\arrow{>}}},postaction={decorate}] (t) arc[radius = 1.2cm, start angle= 90, end angle= 145];
\end{tikzpicture}
\quad 
\raisebox{-.666cm}{
\begin{tikzpicture}[scale=1.1]
\coordinate (lu) at (0,0);
\coordinate (lo) at (0,3);
\coordinate (ru) at (4,0);
\coordinate (ro) at (4,3);
\coordinate (m) at (1,1);
\coordinate (ml) at (0,1);
\coordinate (mr) at (5,1);
\coordinate (mu) at (1,0);
\coordinate (mo) at (1,3);
\coordinate (t) at (3.5,1.75);
\coordinate (s1) at (2.2,1);
\coordinate (s2) at (2.75,1);
\node at (lo) {$\mb P^1$};
\draw[black!50,->] (mu) -- (mo);
\draw[black!50,->] (ml) -- (mr);
\draw[olive!15!green] (m) -- (s1) node[above] {$\scriptstyle{\ti \sigma}$};
\draw[olive!15!green] (s2) -- (3,1);
\draw [olive!15!green,decoration={markings, mark=at position 0.4 with {\arrow{>}}},postaction={decorate}] plot [smooth] coordinates {(s1) (2.3,0) (3.3,-.6) (4.4,0) (4.8,1) (4.5,1.5) (3.3,2) (3.5,1) (3,0.5) (s2)};
\filldraw[blue] (m) circle (0.03) node[right,yshift=-.15cm] {$\scriptstyle{0}$};
\filldraw[blue] (3,1) circle (0.03) node[right,yshift=-.15cm] {$\scriptstyle{1}$};
\filldraw[red] (t) circle (0.03) node[right,yshift=-.15cm] {};
\draw[red,dashed,decoration={markings, mark=at position 1 with {\arrow{>}}},postaction={decorate}] (t) arc[radius = 1cm, start angle= 90, end angle= 440];
\end{tikzpicture}}
  \caption{Geometry of   $\log(1-t)$: The left hand side shows the vanishing cell $\vcell_1$ and vanishing cycle $\vcyc_1$ for $t$ close to 1, the right hand side sketches the monodromy of $\sigma$ associated to a small loop winding around $t=1$; the resulting relative cycle $\ti \sigma$ is homologous to $\sigma + \vcyc_1$.}%
  \label{fig:loga}%
\end{figure}
 
The Picard-Lefschetz theorem states that the variation $\Var_\ell$ associated to a small loop winding around $\ell$ maps a homology class $h\in H_n(Y\setm A,B)_t$ to a multiple of the vanishing cycle $N [\nu_\ell]$ where $N\in \mb Z$ is determined by a (relative) \textit{intersection index} $\is {\partial_B h} {\partial_B \vcell_\ell}$
 with $\partial_B$ denoting the iterated partial boundary with respect to $B$.\footnote{We discuss intersection indices in \cref{sec:verdier+intersection}.}
 
In the present case we have $\nu_0=0$, so the variation associated to $0\in L$ vanishes. For $\ell=1,\infty$ one calculates $\is {\partial_B \sigma} {\partial_B \vcell_\ell}=\pm 1$, while $\is {\partial_B \nu_\ell} {\partial_B \vcell_\ell}=\is {0} {\partial_B \vcell_\ell}=0$. In summary, the variation of $[\sigma]$ is non-trivial around $\ell=1,\infty$, the variation of $[\nu_1]=[\nu_\infty]$ is trivial around all points of $L$. We recover thus the well-known analytic structure of the function $\log(1-t)$.

\subsubsection{The bubble Feynman integral}
\label{eg:bubble}
As a second example we consider the parametric Feynman integrals (cf.\ \cref{sec:feynman}) of the bubble graph:
\[
    \II(p^2,m_1^2,m_2^2) = \int_{ \sigma= \set{ [x_1:x_2] \colon x_i > 0}  } \frac{P\Omega}{\Upol^{a}\Fpol^{b} }
    = \int_0^{\infty} \left.\frac{P}{\Upol\Fpol}\right|_{x_1=1} \td x_2.
\]
Here $a$ and $b$ are integers, $P$ is any homogeneous polynomial in $x$ with degree $a+2b-2\geq 0$, the projectivized measure is $\Omega=x_1 \td x_2 - x_2\td x_1$, and
\begin{equation*}
    \Upol = x_1+x_2,\qquad
    \Fpol = (x_1+x_2)(m_1^2 x_1+m_2^2 x_2) - p^2 x_1 x_2.
\end{equation*}

The parameters are $T=\set{ (p^2,m_1,m_2) \in \mb C^3}$. 
The arrangements $\irrone{A}=\set{A_1,A_2}$ and $\irrone{B}=\set{B_1,B_2}$ consist of the boundary points $B_1=\set{[0:1]}$, $B_2=\set{[1:0]}$, a point $A_1=\set{\Upol=0}=\set{[1:-1]}$, and a pair of points
\begin{equation*}
    A_2=\set{\Fpol=0} = \set{Q_1,Q_2} 
    = \big\{\big[p^2-m_1^2-m_2^2\pm\sqrt{\Delta}:2m_1^2\big]\big\}
\end{equation*}
where $\Delta = (m_1^2+m_2^2-p^2)^2-4m_1^2m_2^2=\Delta_+\Delta_-$ with $\Delta_{\pm}=p^2-(m_1\pm m_2)^2$.

The Landau variety $L=\ell_{1}\cup \ell_{2} \cup \ell_{\Delta} \cup \ell_p$ has four components, determined by collisions of the points $Q_{1,2}$---with each other, or with one of $B_1,B_2,A_1$:
\begin{itemize}
    \item a linear pinch of $A_2 \cap B_1$ ($Q_1=B_1$) over $\ell_2=\set{m_2^2=0}$,
    \item a linear pinch of $A_2 \cap B_2$ ($Q_2=B_2$) over $\ell_1=\set{m_1^2=0}$,
    \item a quadratic pinch of $A_2$ ($Q_1=Q_2$) over $\ell_{\Delta}=\set{\Delta=0}$,
    \item a linear pinch of $A_1\cap A_2$ ($A_1=Q_1$ or $A_1=Q_2$) over $\ell_p=\set{p^2=0}$.
\end{itemize}
In physics terms, $\ell_{1,2}$ are called ``reduced singularities'', whereas $\ell_{\Delta}$ and $\ell_p$ are ``leading singularities'' of the ``first type'' ($\ell_{\Delta}$) and ``second type'' ($\ell_p$). The components $\Delta_+$ and $\Delta_-$ of $\Delta$ are also referred to as ``normal threshold'' and ``pseudo threshold'', respectively.

The vanishing cells of $\ell_{1}$, $\ell_2$, and $\ell_\Delta$ are paths
\begin{equation*}
    \vcell_{1}=[ B_2, Q_1 ], \quad
    \vcell_{2}= [ B_1, Q_2 ], \quad
    \vcell_{\Delta}= [ Q_1 , Q_2 ]
\end{equation*}
where the branch of $\sqrt{\Delta}$ is chosen such that $Q_1\rightarrow B_2$ for $m_1\rightarrow 0$ and $Q_2\rightarrow B_1$ for $m_2\rightarrow 0$.
The respective vanishing cycles $\vcyc_{\bullet}=\fibSphere_{A_2} \partial_{A_2}\vcell_{\bullet}$ are
\begin{equation*}
    \vcyc_{1}
    = \fibSphere_{A_2} [Q_1], \quad
    \vcyc_{2}
    = \fibSphere_{A_2} [Q_2], \quad
    \vcyc_{\Delta}
    = \vcyc_2-\vcyc_1.
\end{equation*}

The second type singularity $p^2=0$ is a linear pinch on the intersection of two elements of $\irrone A$. In this ``pure $\irrone{A}$'' case, the vanishing cycle is zero and so\footnote{%
This shows that the integral $\II$ is meromorphic at $\ell_p$, but not necessarily smooth. For example, the integral $\int_{\sigma} x_1 \Omega/(\Upol\Fpol)$ with $P=x_1$ develops a pole at $\ell_{p}$ (on some sheets).}
\begin{equation*}
    \Var_{\ell_p} = 0.
\end{equation*}

The dual vanishing cycles $\dvcyc_\bullet=-\fibSphere_B\partial_B \vcell_{\bullet}$ over $\ell_1$ and $\ell_2$ have intersection numbers $\is{\dvcyc_1}{\sigma}=\is{\partial_{B} \vcell_1}{\partial_B \sigma} = \is{-[B_2]}{[B_1]-[B_2]} = 1$, $\is{\dvcyc_2}{\sigma}=-1$, and $\is{\dvcyc_i}{\vcyc_j}=0$ for all $i,j\in\set{1,2}$ because $\partial_B\vcyc_{j}=0$. So we determine
\begin{equation*}
    \Var_{\ell_1} [\sigma] = -\vcyc_1
    ,\qquad
    \Var_{\ell_2} [\sigma] = \vcyc_2
    ,\qquad
    \Var_{\ell_i} \vcyc_j = 0
\end{equation*}
from the Picard-Lefschetz formula \eqref{eq:intro-PL}. For example, if $m_1^2\rightarrow 0$, then $Q_1=[1:m_1^2/(p^2-m_2^2) + \asyO{m_1^4}]$ so that, in the $x_2/x_1$ chart, $Q_1$ encircles the origin (the starting point $B_2=[1:0]$ of the integration path) in the same sense as $m_1^2$ goes around the origin, say \emph{counter clockwise}. Deforming $\sigma$ to avoid $Q_1$ winds up a \emph{clockwise} loop around $Q_1$ (cf.\ \cref{fig:loga}).

Assume that $m_1$ and $m_2$ are real and positive. Then at the normal threshold, the degeneration of $A_2$ into $Q_1=Q_2=[m_2:m_1]$ pinches the straight path $\sigma$, whereas at the pseudo-threshold, the collision $Q_1=Q_2=[-m_2:m_1]$ happens away from $\sigma$:
\begin{equation*}
    \Var_{\ell_{\Delta_+}} [\sigma] = \vcyc_{\Delta}, \qquad
    \Var_{\ell_{\Delta_-}} [\sigma] = 0.
\end{equation*}
In both cases, $Q_1$ and $Q_2$ get swapped, $\fibSphere_{A_2}[Q_1] \leftrightarrow \fibSphere_{A_2}[Q_2]$, and thus
\begin{equation*}
    \Var_{\ell_{\Delta}} \vcyc_1 = -\Var_{\ell_{\Delta}} \vcyc_2 =  \vcyc_2 - \vcyc_1.
\end{equation*}

Since the group $H_1(Y\setm A,B)_t\cong\ZZ^3$ is generated by $\sigma$, $\nu_1$ and $\nu_2$, we have completely determined all variation operators.
The hierarchy on $L$ can be expressed in the following diagram: 
\begin{equation}\label{eq:posetbubble}\begin{aligned}
\xymatrix{
    \ell_{2}  \ar[d] \ar[dr] & \ell_{1}  \ar[d] \ar[dl]  & \ \ \text{ -- intersections $A_2 \cap B_i$} \\
    \ell_{\Delta_+} \ar@(ul,l) \ar[d]  \ar[r] & \ell_{\Delta_-} \ar@(ur,r) \ar[dl] \ar[l] & \text{ -- degenerations of $A_2$} \\
    \ell_{p} &  & \ \ \ \text{ -- intersections $A_2 \cap A_1$}
} 
\end{aligned}\end{equation}
Only iterated variations that follow arrows may be non-zero, for example
\begin{equation*}
   \Var_{\ell_{\Delta}}  \circ  \Var_{\ell_2} [\sigma] = \Var_{\ell_{\Delta}} \vcyc_2  = \vcyc_1 - \vcyc_2.
\end{equation*}
All other iterated variations vanish. For instance,
\begin{equation*}
    \Var_{\ell_i} \circ \Var_{\ell_{\Delta}} , \Var_{\ell_i} \circ \Var_{\ell_p},
    \Var_{\ell_i} \circ \Var_{\ell_i} , \Var_{\ell_p} \circ \Var_{\ell_p}
    = 0.
\end{equation*}

\subsection{Acknowledgments}
We thank Matteo Parisi and \"{O}mer G\"{u}rdo\u{g}an for discussions at an early stage of this project, and Holmfridur S.~Hannesdottir for bringing \cite{LandshoffOlivePolkinghorne:Hierarchical} to our attention.
Marko Berghoff thanks Max M\"{u}hlbauer for many illuminating discussions on the subject, Erik Panzer thanks Matija Tapu\v{s}kovi\'{c} for clarifying aspects of Feynman motives.
Erik Panzer is funded as a Royal Society University Research Fellow through grant {URF{\textbackslash}R1{\textbackslash}201473}.
Marko Berghoff was also supported through this grant, and by the European Research Council (ERC) under the European Union’s Horizon 2020 research and innovation programme (grant agreement No.\ 724638).

\section{Stratified maps and their Landau varieties} \label{sec:stratsandlandau}

\subsection{Stratified sets}
Let $Y$ be a smooth manifold and $D \subset Y$ a closed subset. We require that $D$ can be decomposed into smooth pieces as follows.
\begin{defn}\label{defn:strat}
A \textbf{stratification} of a topological space $D$ is a sequence
\begin{equation} \label{eq:strat}
    \varnothing = C_{-1} \subseteq C_0 \subseteq \ldots \subseteq C_k = D
\end{equation}
of closed sets $C_i$ such that
\begin{itemize}
    \item the successive complements $C_{i}\setm C_{i-1}$ are smooth manifolds of dimension $i$. Their connected components are called \textbf{strata}, and
    \item for each stratum $S$, its boundary $\partial S=\overline{S} \setm S$ is a union of strata of smaller dimension than $S$.
\end{itemize}
\end{defn}
We write $\mathfrak{S}=\bigsqcup_i \pi_0(C_i\setm C_{i-1})$ for the set of all strata. The stratification can be recovered from this set as $C_i=\bigcup_{S\in\mathfrak{S}\colon \dim S\leq i} S$.
\begin{defn}\label{def:transverse}
A finite collection $\set{D_1,D_2,\ldots}$ of closed smooth submanifolds $D_i\subseteq Y$ is called \textbf{transverse}\footnote{Some authors, e.g.\ \cite{FFLP,Pham:Singularities}, call it ``in general position''.}, if every intersection 
\begin{equation*}
    D^{i_1,\ldots, i_m}=D_{i_1}\cap\ldots\cap D_{i_m}
\end{equation*}
can be covered with charts
$(x_0,x_1,\ldots,x_m)\colon Y\supset U\rightarrow\RR^{r_0}\times\cdots\times\RR^{r_m}=\RR^n$
such that $D_{i_k}\cap U=\set{x_k=0}$ for each $k=1,\ldots,m$.
Here we denote $n=\dim Y$, $r_k=n-\dim D_{i_k}$ and $r_0=n-\sum_{k=1}^m r_k$. See \cref{defn:transverse} for a coordinate-free formulation.
\end{defn}
In this situation, all intersections $D^I=\bigcap_{i\in I}D_i$ are themselves closed submanifolds. Thus we obtain a canonical stratification of the union $D=\bigcup_i D_i$ by setting $C_d=\bigcup_{I\colon \dim D^I\leq d} D^I$.

We consider the special case where $Y$ is a complex manifold with complex dimension $n$, and each $D_i$ is a smooth complex hypersurface. Then ``$D$ is transverse'' is also called ``$D$ has \textbf{simple normal crossings}'' and means that locally there exist $m\leq n$ and a holomorphic chart $x\colon U\rightarrow \CC^n$ with $D\cap U=\set{x_1\cdots x_m=0}$. The canonical stratification is then given by
\begin{equation}\label{eq:canonical-strat}
    C_{2d} = \bigcup_{\abs{I}=n-d}  D^I \quad\text{with}\quad n=\dim_{\mb C} Y.
\end{equation}
The topological closure $S\mapsto\overline{S}$ defines a bijection between the strata of dimension $2d$ (connected components of $C_{2d}\setm C_{2d-2}$) and the irreducible components of the analytic variety $C_{2d}$.

If the hypersurfaces are not smooth, or smooth but not transverse, then \eqref{eq:canonical-strat} is not necessarily a stratification---and even if it happens to be a stratification, it may not be sufficiently regular in order to apply the isotopy theorems of the next section. This issue can be resolved in at least two ways:
\begin{enumerate}
    \item Apply resolution of singularities, that is, construct an iterated blow up $\ti{Y}\rightarrow Y$ such that the total transform $\ti D$ of $D$ has simple normal crossings in $\ti{Y}$. This preserves the complement $Y\setm D\cong\ti{Y}\setm\ti{D}$.
    \item Replace \eqref{eq:canonical-strat} by a more subtle stratification and substitute transversality by a weaker notion of regular incidence.
\end{enumerate}
We will use the first option and discuss this procedure specifically for Feynman integrals in \cref{sec:feynman}.
For completeness, we briefly address the second option.
\begin{defn}
    A stratification is called \emph{Whitney regular} if every pair $S,S'$ of strata that are incident ($S\subseteq \partial S'$) satisfies \emph{Whitney's conditions A and B of regular incidence}: Embed $S$ and $S'$ in some Euclidean space. Then at every point $x \in S$, we require that for all convergent sequences $x'\to x$ of points $x'\in S'$:
\begin{itemize}
    \item[(A)] if $T_{x'}S'$ converges as $x'\mapsto x$, then its limit $\tau\subseteq T_xY$ (understood in the Gra{\ss}mannian bundle of subspaces of $TY$) contains $T_x S \subseteq \tau$, 
    \item[(B)] for any local retraction $r\colon S' \to S$, the angle between $T_{x'} S'$ and the vector $\overrightarrow{x',r(x')}$ converges to zero as $x'\to x$.
\end{itemize}
\end{defn}
Condition (A) is implied by (B) and hence redundant. For details we refer to \cite{Pham:Singularities,Mather:Stability,GoreskyMacPherson:Stratified,Trotman:StratTheory}.
Whitney showed \cite{Whitney:Tangents} that every (analytic or algebraic) subvariety $D$ admits a Whitney regular stratification.

Note that a (Whitney regular) stratification of $D$ always gives rise to a (Whitney regular) stratification of $Y$. Simply add $Y \setm D$ as an additional stratum; conditions (A) and (B) are satisfied automatically, since for $S'=Y\setm D$ we have $T_{x'}S'=T_{x'}Y$.

In the following we will assume that $D$ is normal crossing in $Y$. In this case (and more generally for transverse arrangements of smooth submanifolds) the canonical stratification described above is automatically Whitney regular. In fact it is minimal in the sense that every other Whitney stratification is a refinement of it.

\subsection{Isotopy theorems}
Given a smooth map $\pi\colon Y\rightarrow T$ of manifolds, the fibres $Y_t=\pi^{-1}(t)$ vary with $t\in T$. Let $T$ be connected. Then \emph{Ehresmann's fibration theorem} gives a sufficient criterion for all fibres to be diffeomorphic:
\begin{thm}[Ehresmann]\label{thm:Ehresmann}
    Every proper submersion is a smooth fibre bundle.
\end{thm}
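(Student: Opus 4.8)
The plan is the classical argument: construct a trivialising diffeomorphism by integrating a vector field that lifts the vector fields on the base. Fix a point $t_0 \in T$; since the statement is local on $T$ it suffices to trivialise $\pi$ over a coordinate ball $U \ni t_0$ diffeomorphic to $\RR^k$ (with $k = \dim T$), and then patch, or simply invoke connectedness to conclude all fibres are diffeomorphic. So assume $T = \RR^k$ with coordinate vector fields $\partial_1, \ldots, \partial_k$.

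First I would lift each $\partial_i$ to a vector field $v_i$ on $Y$ with $\td\pi(v_i) = \partial_i$. Locally such a lift exists because $\pi$ is a submersion (in submersion charts $\pi$ looks like a linear projection, so a lift is immediate); one then glues the local lifts with a partition of unity on $Y$, which preserves the property $\td\pi(v_i) = \partial_i$ since it is an affine condition on the fibres of $\td\pi$. Next, let $\varphi^i_s$ denote the (local) flow of $v_i$. Because $v_i$ is $\pi$-related to $\partial_i$, we have $\pi \circ \varphi^i_s = (\text{translation by } s e_i) \circ \pi$ wherever $\varphi^i_s$ is defined. Here properness enters decisively: a flow line of $v_i$ starting in the compact fibre $Y_{t}$ projects to the line $t + s e_i$, which stays in a compact subset of $T$ for $s$ in a bounded interval; hence the flow line remains in the compact set $\pi^{-1}(\text{that compact interval})$, so by the escape lemma the flow $\varphi^i_s$ is defined for all $s \in \RR$. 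Thus each $v_i$ is complete.

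Then I would assemble the trivialisation $\Phi\colon Y_{t_0} \times \RR^k \to Y$ by $\Phi(y, s_1, \ldots, s_k) = \varphi^k_{s_k} \circ \cdots \circ \varphi^1_{s_1}(y)$. From $\pi \circ \varphi^i_s = (+s e_i)\circ\pi$ one checks $\pi(\Phi(y,s)) = t_0 + s$, so $\Phi$ maps $Y_{t_0} \times \set{s}$ into $Y_{t_0 + s}$ and is fibre-preserving over the chart. It is smooth (flows of smooth vector fields depend smoothly on time and initial condition), and its inverse is obtained by running the flows backwards: $\Phi^{-1}(z) = \bigl(\varphi^1_{-s_1}\circ\cdots\circ\varphi^k_{-s_k}(z),\, \pi(z) - t_0\bigr)$ with $s = \pi(z) - t_0$. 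Hence $\Phi$ is a diffeomorphism and a local trivialisation of $\pi$, which is the definition of a smooth fibre bundle. Covering $T$ by such charts $U$ gives trivialisations over each $U$, all with the same typical fibre (diffeomorphic to $Y_{t_0}$ by connectedness of $T$ and a chain of overlapping charts), completing the proof.

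\textbf{Main obstacle.} The only genuine subtlety is completeness of the lifted vector fields, i.e.\ the step where properness of $\pi$ is used to confine flow lines to a compact set and thereby extend the flow to all times; without properness the flows may only be locally defined and no trivialisation need exist (e.g.\ the projection of a punctured plane to a line). Everything else — existence of lifts via submersion charts and partitions of unity, and the formal verification that $\Phi$ is a fibre-preserving diffeomorphism — is routine.
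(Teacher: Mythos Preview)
Your proof is correct and is the standard argument for Ehresmann's theorem. The paper itself does not prove this theorem---it is stated as a classical result without proof---so there is nothing to compare directly. That said, the same flow-based technique you use (lift coordinate vector fields via submersion charts and a partition of unity, then integrate) appears in the paper's proof of Proposition~\ref{prop:smoothom}, which is the stratified refinement of Ehresmann; your argument is essentially the $D=\varnothing$ special case of that proof.
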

Hence, if a closed subset $D\subset Y$ is smooth and the restriction $\pi|_D$ a proper submersion, then $D$ is a smooth fibre bundle over $T$. Namely, each $t\in T$ has an open neighbourhood $U$ over which $D$ trivializes into a product of $U$ with the (smooth) fibre $D_t=D\cap \pi^{-1}(t)$. That is, we have commutative diagrams
\begin{equation}\label{eq:fibre-bundle}\begin{gathered}\xymatrix{
    D \ar[d]^{\pi|_D} & D \cap \pi^{-1}(U) \ar@{_{(}->}[l] \ar[r]^-{\phi}_-{\cong} & D_t \times U \ar[dl]^{p} \\
    T & \ar@{_{(}->}[l] U &
}\end{gathered}\end{equation}
where $\phi$ is a diffeomorphism and $p$ denotes the projection onto the factor $U$.

If $D$ is singular, we cannot apply \cref{thm:Ehresmann} to $D$ itself. Thus, stratify $D$, so that we may apply the theorem to each (smooth) stratum $S$. If $\pi|_S$ is a submersion for each $S$, we conclude that all strata are smooth fibre bundles.

However, it is not always possible to glue trivializations of the strata $S$ into a trivialization of $D$. \emph{Thom's first isotopy lemma} guarantees that this \emph{does} work, provided that the stratification is sufficiently nice.
\begin{thm}[Thom's first isotopy lemma]
\label{thm:Thom}
Let $\pi\colon Y \to T$ be a smooth map and $D\subseteq Y$ a closed subset with a Whitney regular stratification, so that
\begin{itemize}
    \item $\pi|_{D}$ is proper, and
    \item $\pi|_S$ is a submersion for each stratum $S$ of $D$.
\end{itemize}
Then $\pi|_D$ is a stratified fibre bundle.
\end{thm}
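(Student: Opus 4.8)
The plan is to reduce Thom's first isotopy lemma to a statement about integrating stratified vector fields, following the classical strategy of Thom and Mather. The key point is that, unlike in the smooth case (Ehresmann), a ``lift'' of tangent vectors from $T$ to $Y$ cannot be chosen smoothly on all of $D$ at once, because $D$ is only piecewise smooth; instead one builds a \emph{controlled} (or \emph{rugose}) vector field, stratum by stratum, whose flow preserves the stratification. The Whitney regularity hypothesis is precisely what makes such flows exist and be continuous across strata.

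First I would work locally: since the statement is local over $T$, I may assume $T = \RR^m$ (or a ball in it) and it suffices to produce, for each coordinate direction $\partial/\partial t_\alpha$ on $T$, a stratified vector field $v_\alpha$ on $\pi^{-1}(U) \cap D$ that is $\pi$-related to $\partial/\partial t_\alpha$ — i.e.\ $d\pi(v_\alpha) = \partial/\partial t_\alpha$ — and that is tangent to each stratum. On a single stratum $S$ such a lift exists because $\pi|_S$ is a submersion. The work is to patch these stratum-wise lifts together compatibly. This is done by downward induction on dimension of strata, using a system of \emph{tubular neighbourhoods} $(T_S, \pi_S, \rho_S)$ of the strata (projection $\pi_S$ onto $S$ and a ``distance'' function $\rho_S$) satisfying the Thom--Mather control conditions $\pi_{S'}\circ\pi_S = \pi_{S'}$ and $\rho_{S'}\circ\pi_S = \rho_{S'}$ near the relevant strata; the existence of such a compatible system of tubes, for a Whitney regular stratification, is the technical heart and I would cite it from \cite{Mather:Stability} (or \cite{GoreskyMacPherson:Stratified,Trotman:StratTheory}). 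Given the tubes, one extends a vector field already defined on the closure of the lower-dimensional strata to a neighbourhood in a higher stratum by requiring it to be compatible with $\pi_S$ and $\rho_S$; Whitney's condition (B) guarantees that the resulting vector field, glued via a partition of unity subordinate to the tubes, is continuous (indeed ``rugose'') and still a lift of $\partial/\partial t_\alpha$.

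Next I would integrate these vector fields. Properness of $\pi|_D$ ensures that the flow of each $v_\alpha$ is complete in the fibre direction: a would-be finite escape time would force a sequence leaving every compact set while staying over a bounded set of parameters, contradicting properness (this is the one place properness is essential, exactly as in Ehresmann's theorem). The control conditions on the tubes guarantee that the time-$s$ flow $\Phi^\alpha_s$ maps strata to strata homeomorphically and commutes appropriately, so that composing the flows in the $m$ coordinate directions yields, for a point $t_0 \in U$, a stratum-preserving homeomorphism $D_{t_0} \times U \xrightarrow{\ \cong\ } D \cap \pi^{-1}(U)$ over $U$, which is the desired local trivialization \eqref{eq:fibre-bundle} in the stratified sense. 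One checks it is a homeomorphism by exhibiting the inverse as the flow run backwards.

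The main obstacle is genuinely the construction and compatibility of the tubular neighbourhood system and the proof that the patched vector field is continuous across strata — this is where Whitney (B) enters in an essential, non-formal way, and it is the part that is ``hard to find in the literature'' done carefully. In a self-contained write-up I would either (i) invoke the Thom--Mather machinery as a black box with precise references, and then carefully carry out only the flow/properness argument and the verification that the trivialization is stratum-preserving, or (ii) in the special case actually needed in this paper — $D$ a transverse arrangement of smooth submanifolds with its canonical stratification — give a direct construction of compatible tubes using the product chart structure of \cref{def:transverse}, where the tubes can be taken to be the obvious coordinate projections, making the control conditions hold on the nose and the continuity of the glued vector field transparent. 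Option (ii) is the honest route for this paper, since all later applications only use normal-crossing $D$; I would remark that the general Whitney case follows from the cited literature.
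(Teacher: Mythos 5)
Your sketch is the standard Thom--Mather argument, and it is essentially the proof contained in the references that the paper cites; the paper itself gives no argument for \cref{thm:Thom} beyond pointing to \cite{GoreskyMacPherson:Stratified,Mather:Stability}, so your option~(i) is exactly what the paper does. It is worth noting that your option~(ii) --- construct compatible tubes directly in the normal-crossing case where the product chart structure of \cref{def:transverse} makes the control conditions hold on the nose --- is what the paper then carries out separately, in \cref{prop:smoothom}, where the explicit lifting of vector fields using the product charts and a partition of unity yields a \emph{smooth} local trivialization. So the paper deliberately splits what you describe into two layers: \cref{thm:Thom} (cited, topological conclusion, general Whitney stratification) and \cref{prop:smoothom} (proved, smooth conclusion, transverse arrangement); your proposal correctly identifies both routes and their respective scopes.

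Two small technical remarks. First, the induction runs the other way: you build the controlled lift starting on the \emph{lowest}-dimensional strata and extend outward into neighbouring higher-dimensional ones (i.e.\ downward induction on \emph{codimension}, not dimension); your subsequent sentence says this correctly, but the phrase ``downward induction on dimension of strata'' should be fixed. Second, the glued vector field is only continuous (rugose), and continuity alone does not give uniqueness of integral curves; the flow is actually obtained by integrating the smooth restriction of the field on each stratum and then invoking the control conditions $\pi_{S'}\circ\pi_S=\pi_{S'}$, $\rho_{S'}\circ\pi_S=\rho_{S'}$ to check that the stratum-wise flows fit together to a continuous flow on $D$. This is implicit in citing Mather, but worth flagging if the aim is a self-contained write-up.
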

\begin{proof}
See \cite[\S1.5--1.7]{GoreskyMacPherson:Stratified} for a sketch, and \cite{Mather:Stability} for a detailed proof.
\end{proof}
By a \textbf{stratified fibre bundle}\footnote{also called \emph{locally trivial stratified fiber bundle} \cite{Pham:Singularities} or \emph{locally trivial stratified map} \cite{GoreskyMacPherson:Stratified}} we mean a topological fibre bundle $D\rightarrow T$ with the property that the trivializations $\phi$ in \eqref{eq:fibre-bundle} can be chosen such that they simultaneously trivialize all strata $S$ of $D$:
\begin{equation}\label{eq:stratified-bundle}
    \phi(S\cap\pi^{-1}(U))=S_t\times U,
\end{equation}
where $S_t=S\cap\pi^{-1}(t)$.
For connected $T$, it follows that all fibres $D_t$ are homeomorphic, in a stratum preserving way.

Thom's first isotopy lemma shows only that $\pi|_D$ is a \emph{topological} fibre bundle: the trivializations $\phi$ are homeomorphisms, but not necessarily smooth. To achieve smoothness, we need even stronger regularity of $D$. Transversality will be sufficient for our applications \cite[\S X, Lemma~2.1]{Pham:Singularities}:
\begin{prop}\label{prop:smoothom}
    Let $\pi\colon Y \to T$ be a proper submersion and $D=\bigcup_{i \in I} D_i$ for transverse smooth closed submanifolds $D_i\subset Y$. Suppose that the restriction of $\pi$ to $D^J= \bigcap_{j\in J}D_j$ is a submersion, for each $J\subseteq I$.
    
    Then $\pi$ is a \emph{smooth} stratified fibre bundle, which means that $Y$ has \emph{smooth} local trivializations $\phi\colon \pi^{-1}(U)\cong Y_t\times U$ with the property that\footnote{This condition is equivalent to \eqref{eq:stratified-bundle} for all strata $S$ of the canonical stratification of $D$.}
    \begin{equation*}
        \phi(D_i\cap \pi^{-1}(U))=(D_i\cap\pi^{-1}(t))\times U \quad\text{for all}\quad i\in I.
    \end{equation*}
\end{prop}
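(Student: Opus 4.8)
The plan is to reduce \cref{prop:smoothom} to \cref{thm:Thom} by upgrading the topological trivializations to smooth ones, using the fact that for a transverse arrangement of smooth submanifolds the canonical stratification is Whitney regular and, crucially, very rigid locally. First I would check that the hypotheses of \cref{thm:Thom} are met: $\pi$ restricted to each stratum $S$ of the canonical stratification is a submersion. The strata are the connected components of the sets $D^J\setm\bigcup_{k\notin J}D_k$ (together with $Y\setm D$ itself), so this is exactly the hypothesis that $\pi|_{D^J}$ is a submersion for all $J$. Properness of $\pi|_D$ follows from properness of $\pi$. Hence $\pi|_D$, and indeed $\pi$ itself (adding the open stratum $Y\setm D$), is a stratified topological fibre bundle.

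The substance of the proof is to produce \emph{smooth} local trivializations respecting all the $D_i$. The standard way to build trivializations for a proper submersion is to integrate a lift of a local frame of vector fields on the base $U$: choose coordinates $(t_1,\dots,t_s)$ on a small ball $U\ni t$, lift each coordinate vector field $\partial_{t_a}$ to a vector field $v_a$ on $\pi^{-1}(U)$ with $d\pi(v_a)=\partial_{t_a}$, and let the flows of the $v_a$ define $\phi\colon\pi^{-1}(U)\xrightarrow{\cong}Y_t\times U$. This $\phi$ is automatically smooth since the $v_a$ are smooth and their flows are complete on $\pi^{-1}(U)$ by properness. The point is to choose the lifts $v_a$ to be \emph{tangent to every $D_i$} (equivalently, at each point of $D^J$, tangent to $D^J$). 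This is possible precisely because $\pi|_{D^J}$ is a submersion for each $J$: one can pick lifts $v_a^J$ on each $D^J$ with $d\pi(v_a^J)=\partial_{t_a}$, and since the arrangement is transverse, in a chart $(x_0,x_1,\dots,x_m)$ adapted to $D^{i_1,\dots,i_m}$ as in \cref{def:transverse} the relations "tangent to $D_{i_k}$" are just "the $x_k$-components vanish on $\set{x_k=0}$", which are compatible conditions on overlaps. A partition of unity subordinate to a cover of $\pi^{-1}(U)$ by such adapted charts then patches the local choices into a single smooth vector field $v_a$ on $\pi^{-1}(U)$ whose restriction to each $D^J$ is tangent to $D^J$ and projects to $\partial_{t_a}$. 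The flow of $v_a$ therefore preserves each $D_i$, so the resulting $\phi$ satisfies $\phi(D_i\cap\pi^{-1}(U))=(D_i\cap Y_t)\times U$ for all $i$, as required. Shrinking $U$ if necessary guarantees completeness of the flows, i.e.\ that $\phi$ is defined on all of $\pi^{-1}(U)$.

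I expect the main obstacle to be the bookkeeping in the partition-of-unity patching step: one must be careful that a convex combination of vector fields each tangent to the relevant submanifolds is again tangent to all of them, which uses that tangency to a fixed submanifold is a \emph{linear} (hence convex) condition on the tangent vector at each point, \emph{and} that the condition "project to $\partial_{t_a}$" is preserved under convex combinations because the target is fixed — both are clear, but the cleanest way to see it is to argue chart-by-chart using the adapted coordinates of \cref{def:transverse}. A secondary technical point is completeness of the flows: this is where properness of $\pi$ enters, exactly as in the proof of \cref{thm:Ehresmann}, and it is why the statement assumes $\pi$ is a proper submersion rather than merely a submersion. Everything else — smoothness of $\phi$, the fact that $\phi$ is a diffeomorphism (its inverse is built from the backward flows), and the compatibility with the canonical stratification — then follows formally. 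This argument is essentially the one in \cite[\S X, Lemma~2.1]{Pham:Singularities}; the excerpt already cites it, so it would suffice to record the construction of $\phi$ in this level of detail and refer there for the remaining verifications.
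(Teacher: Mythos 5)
Your construction matches the paper's proof essentially line for line: both lift the coordinate vector fields on a chart of $T$ to vector fields on $Y$ tangent to every $D_i$ (using transversality-adapted charts as in \cref{def:transverse}), patch with a partition of unity, use properness to shrink $U$ so the flows are defined, and compose the flows to build $\phi$. The only blemish is the opening framing, which says the plan is to ``upgrade'' the topological trivializations from \cref{thm:Thom} to smooth ones—your actual argument never uses the topological trivialization at all but constructs the smooth one directly, exactly as the paper does.
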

\begin{proof}
It follows from \cref{def:transverse} that $Y$ has a cover by charts $x^V\colon V\rightarrow \RR^n$ with the property that if $J\subseteq I$ denotes the indices $i$ with $D_i\cap V\neq\varnothing$, then
\begin{equation*}
    x^V(D_i\cap V)=\bigcap_{\alpha\in K_i} \set{x^V_\alpha=0}    
\end{equation*}
for each $i\in J$, with disjoint subsets $K_i\subseteq\set{1,\ldots,n}$. Permute the coordinates so that the first $r=\dim D^J$ of them parametrize $D^J$, whereas $\bigcup_{i\in J}K_i=\set{r+1,\ldots,n}$. Since $\pi|_{D^J}$ is a submersion, we can furthermore arrange that the first $k=\dim T \leq r$ coordinates form a chart of $T$. This way, every smooth vector field $u$ on $T$ lifts to a smooth vector field $u^V$ on $V$ such that:
\begin{itemize}
    \item $u^V$ is a lift of $u$: $\td \pi_y(u^V_y)=u_{\pi(y)}$ at every $y\in V$, and
    \item $u^V$ is a linear combination of $\partial/\partial x^V_1,\ldots,\partial/\partial x^V_k$.
\end{itemize}
The latter implies that $u^V$ is tangent to all $D_i$. Using a partition of unity, we can glue the $u^V$ to a smooth vector field $w$ on $Y$. This construction provides a smooth lift $w$ of $u$ that is tangent to each $D_i$. The flow $\Phi^w_\tau$ along such a vector field preserves the submanifolds: $\Phi^w_\tau(D_i)=D_i$

Now pick any point $p\in T$ and a chart $(t_1,\ldots,t_k)\colon U\rightarrow \RR^k$ with $t(p)=0$. Construct lifts $w_{\alpha}$ of the vector fields $u_\alpha=\partial/\partial t_{\alpha}$ on $U$, then we have
\begin{equation*}
    \Phi^{u_\alpha}_\tau \circ \pi= \pi \circ \Phi^{w_\alpha}_\tau
\end{equation*}
whenever the corresponding flows are defined. Since $\pi$ is proper, we can shrink $U$ and find $\epsilon>0$ so that $\Phi^{w_\alpha}_\tau(y)$ is defined for all $y\in \pi^{-1}(U)$ and $\abs{\tau}<\epsilon$. We obtain the desired trivialization $\phi\colon \inv{\pi}(U) \to \inv{\pi}(p) \times U$ by 
\begin{equation*}
y  \longmapsto \Big(    \left(\Phi^{w_k}_{-t_k} \circ \cdots \circ \Phi^{w_1}_{-t_1} \right)(y), \pi(y) \Big) 
\end{equation*}
where $(t_1,\ldots,t_k)=\pi(y)$.
\end{proof}

\subsection{Landau varieties}

Let $\pi\colon Y \to T$ be smooth and surjective, and $\mathfrak{S}$ a Whitney regular stratification of a subset $D\subset Y$.
The \textbf{critical set} $cS\subseteq S$ of a stratum $S \in \mathfrak{S}$ is the subset where the restriction $\pi|_S$ fails to be a submersion: $\rk {\td \pi\rest{S}} < \dim T$. Points in $cS$ and $\pi(cS)\subset T$ are called \emph{critical points} and \emph{critical values}, respectively.
\begin{defn}\label{def:landau-variety}
    The \textbf{Landau variety} is the set of all critical values,\footnote{In \cite{Brown:PeriodsFeynmanIntegrals}, the Landau variety is defined to consist only of the codimension one part.}
\begin{equation*}
    L =\bigcup_{S\in\mathfrak{S}} \pi(cS) \subset T.
\end{equation*}
\end{defn}
If $\pi$ is proper, then \cref{thm:Thom} guarantees that the restriction of $\pi$ to $\inv{\pi}(T\setm L)$ is a stratified fibre bundle over $T\setm L$.

Now let $Y$ and $T$ be complex manifolds, $\pi\colon Y \to T$ proper and analytic, and $D\subset Y$ stratified such that the $C_i$ in \eqref{eq:strat} are complex analytic varieties. Then the Landau variety is a complex analytic variety \cite{Pham:Singularities}. We care only for its irreducible components with complex codimension one, which we denote
\begin{equation*}
    \irrone{L} = \set{\ell\subseteq L\colon\  \text{$\ell$ is irreducible and $\codim \ell=1$}}
    .
\end{equation*}

The restriction to codimension one arises as follows: An integral in fibres of $\pi$, see \cref{sec:integrals}, defines a multivalued analytic function $\II\colon T\setm L\to \CC$. The Landau variety provides an upper bound on the singularities of $\II$. However, we also have Riemann’s Extension Theorem \cite[\S 7.1]{GrauertRemmert:CAS}:
\begin{thm}
Let $U\subset \CC^n$ be open and $A \subset U$ a closed analytic subset of complex codimension greater than one. Then every analytic function $f \colon  U \setm A \to \CC$ can be extended to an analytic function on $U$.
\end{thm}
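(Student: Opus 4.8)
The statement to prove is Riemann's Extension Theorem: every analytic function on $U \setm A$ extends analytically over $U$, where $A$ is a closed analytic subset of codimension $\geq 2$. The plan is to reduce to a local, one-complex-variable statement and then apply the classical Riemann removable singularity theorem slicewise.

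First I would reduce to a local question: analyticity is a local property, so it suffices to extend $f$ across each point $a\in A$, and the extensions will automatically agree on overlaps by the identity theorem (since $U\setm A$ is connected near $a$, as $\codim A\geq 2$ forces $A$ to have real codimension $\geq 4$, hence cannot locally disconnect). So fix $a\in A$; after a translation assume $a=0$. The key geometric input is that, because $\codim_{\CC} A \geq 2$, a generic complex line through $0$ meets $A$ only at $0$ in a small neighbourhood. More precisely, I would choose coordinates $z=(z_1,w)$ with $w=(z_2,\ldots,z_n)$ such that, on a polydisc $\Delta^n=\Delta\times\Delta^{n-1}$, the slice $A\cap(\Delta\times\set{w})$ is a \emph{finite} set for each $w$, and in fact $A\cap(\partial\Delta\times\Delta^{n-1})=\varnothing$. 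This is possible: the projection of $A$ to the $w$-coordinates is, after a generic linear change of coordinates, finite-to-one onto its image near $0$ — a standard consequence of the local parametrization of analytic sets (Noether normalization / Weierstrass preparation), using that $\dim_{\CC}A = n-\codim A \leq n-2$, so we still have at least one "extra" variable $z_1$ to slice in. The hard part of the write-up is arranging these coordinates cleanly; I would cite \cite{GrauertRemmert:CAS} for the local structure of analytic sets rather than reprove it.

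With such coordinates fixed, for each $w\in\Delta^{n-1}$ the function $z_1\mapsto f(z_1,w)$ is holomorphic on $\Delta\setm (A\cap(\Delta\times\set{w}))$, i.e.\ holomorphic on a disc minus finitely many points, and it is \emph{bounded} near those points — here I would first note that $f$ is locally bounded near $A$. Local boundedness is itself not automatic from the statement as phrased (one usually includes it or derives it), so I would handle it by the same slicing: actually, the cleanest route avoids boundedness altogether. Instead, define
\begin{equation*}
    \hat f(z_1,w) = \frac{1}{2\ipi}\int_{\partial\Delta} \frac{f(\zeta,w)}{\zeta-z_1}\,\td\zeta
\end{equation*}
for $(z_1,w)\in\Delta\times\Delta^{n-1}$. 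Since $A$ is disjoint from $\partial\Delta\times\Delta^{n-1}$, the integrand is holomorphic in $(\zeta,w)$ on a neighbourhood of $\partial\Delta\times\Delta^{n-1}$, so $\hat f$ is holomorphic on the full polydisc $\Delta\times\Delta^{n-1}$ by differentiation under the integral sign and Morera/Cauchy in $z_1$. It remains to check $\hat f = f$ on $(\Delta\times\Delta^{n-1})\setm A$. For fixed $w$ with $A\cap(\Delta\times\set w)\ne\varnothing$, the Cauchy integral formula on the disc applied to the holomorphic-but-for-finitely-many-points function $\zeta\mapsto f(\zeta,w)$ does not immediately give $f$ back; but the set of $w$ for which the slice misses $A$ entirely is open and dense in $\Delta^{n-1}$ (its complement is the image of the codimension-$\geq2$ set $A$, hence meagre/proper analytic), and on that dense set the ordinary Cauchy formula gives $\hat f(\cdot,w)=f(\cdot,w)$. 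Since both $\hat f$ and $f$ are holomorphic on the connected open set $(\Delta\times\Delta^{n-1})\setm A$ and agree on a nonempty open subset, the identity theorem gives $\hat f=f$ there. Thus $\hat f$ is the desired holomorphic extension, and gluing over all $a\in A$ finishes the proof.

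The main obstacle I anticipate is purely expository: getting the slicing coordinates right requires invoking the local structure theory of analytic sets (that after a generic linear coordinate change the projection $A\to\CC^{n-1}$ forgetting $z_1$ is proper and finite near $0$, equivalently $A$ contains no piece of $\set{w=0}$), and ensuring $A$ stays away from $\partial\Delta\times\Delta^{n-1}$ after shrinking. All of this is standard and I would simply cite \cite{GrauertRemmert:CAS}; the analytic heart of the argument — the parameter Cauchy integral $\hat f$ and the identity-theorem matching — is then short and elementary, and crucially sidesteps any separate boundedness hypothesis.
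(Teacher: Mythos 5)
The paper does not prove this statement: it is quoted as Riemann's Extension Theorem with a citation to \cite[\S 7.1]{GrauertRemmert:CAS} and no argument of its own, so there is no ``paper's proof'' to compare against. Your proposal is, however, a correct rendition of the standard proof of the second Riemann extension theorem. The reduction to the local picture, the coordinate arrangement via local parametrization of analytic sets so that $A$ misses $\partial\Delta\times\Delta^{n-1}$ (giving a proper finite projection to the $w$-coordinates), the parametrized Cauchy integral $\hat f$, the observation that $\set{w : A\cap(\Delta\times\set w)=\varnothing}$ is the complement of a proper analytic subset of $\Delta^{n-1}$ and hence open dense, and the final identity-theorem matching on the connected set $(\Delta\times\Delta^{n-1})\setm A$ — all of this is sound, and your instinct to bypass any boundedness hypothesis by going directly through the Cauchy integral (rather than reducing to the first extension theorem) is exactly the clean route and is what makes the codimension-$\geq 2$ hypothesis do its work. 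The one place you are slightly casual is in asserting that the image of $A$ under projection is a proper analytic subset: this requires the projection $A\to\Delta^{n-1}$ to be proper, which is precisely what the arrangement $A\cap(\partial\Delta\times\Delta^{n-1})=\varnothing$ (together with $A$ closed) delivers after shrinking the polydisc — worth stating explicitly, since it is the hinge on which Remmert's proper mapping theorem turns.
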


Any irreducible component $\ell\subseteq L$ with codimension greater than one can therefore not arise as a singularity of $\II$.
Therefore, the variation of $\II$ around such $\ell$ is zero. In fact, this follows already from a merely topological fact, without any reference to a function: The fundamental groups
\begin{equation*}
    \pi_1(T\setm L)\cong \pi_1(T \setm \bigcup_{\ell\in\irrone{L}}\ell)
\end{equation*}
of the complements of the Landau variety $L$, and its codimension one part, are canonically isomorphic. This is reviewed in \cref{sec:fg-codim1}.

As a consequence, if $D$ has simple normal crossings, we only need to consider critical sets of strata that arise from at most $(n+1)$-fold intersections $D^I= \bigcap_{i\in I} D_i$, $\abs{I}\leq n+1$. All lower dimensional strata are entirely critical, $cS=S$, but their critical values $\pi(cS)$ have codimension greater than one and therefore they do not contribute to $\irrone{L}$.

\begin{rem}\label{rem:minimal-landau}
    As defined, the Landau variety $L=L(\mathfrak{S})$ depends on the choice of stratification. But local triviality of $D\rightarrow T$ is an open condition in $T$, hence it holds outside the intersection $L_{\text{min}}=\bigcap_{\mathfrak{S}} L(\mathfrak{S})$ over all possible Whitney stratifications $\mathfrak{S}$ of $D$. In our setup, where $D\subset Y$ are analytic, all such $\mathfrak{S}$ are refinements of a unique minimal Whitney stratification $\mathfrak{S}_{\text{min}}$ of $D$ \cite[Proposition~3.2 on p.~479]{Teissier:PolairesII}; hence $L_{\text{min}}=L(\mathfrak{S}_{\text{min}})$ gives a canonical definition of \emph{the} Landau variety of $D$. Whenever $D$ is transverse, the canonical stratification \eqref{eq:canonical-strat} is Whitney regular and minimal, hence it computes $L_{\text{min}}$.
\end{rem}

\subsection{Integrals in fibres}\label{sec:integrals}

Assume now that $D$ is a union of complex hypersurfaces in $Y$, and furthermore that the irreducible components $\irrone{D}=\irrone{A}\sqcup\irrone{B}$ of $D$ are partitioned into two finite sets $\irrone{A}=\{ A_1 , \ldots, A_r \}$ and $\irrone{B}=\{ B_1,\ldots, B_s \}$. In all our applications, $\pi\colon Y\rightarrow T$ and $D\subset Y$ will be \emph{algebraic}, but this additional structure will play no role. In \cref{sec:intro-integrals}, we considered integrals
\begin{equation*} 
    \II(t)=\int_{\sigma_t} \omega_t
\end{equation*}
with $\omega$ a holomorphic form on $Y\setm A$ of degree $n$, the complex dimension of the fibres $Y_t$. Then $\omega_t=\omega|_{Y_t\setm A_t}$ is a closed form, and furthermore zero when restricted to the subvariety $B_t$. Thus $[\omega_t]\in H^n_{\dR}(Y_t\setm A_t,B_t\setm A_t)$ defines a class in de Rham cohomology.\footnote{
    More generally, we could allow $\omega$ that are neither holomorphic, nor of degree $n$. What matters is that $\omega|_B=0$, $\omega$ is closed fibrewise ($\td(\omega_t)=0$), and further that in a holomorphic chart $(x,t)\colon Y\supset U\rightarrow \CC^{n+q}$ with $t=\pi(x,t)$, $\omega$ is smooth in $x$ and holomorphic in $t$.
} Our assumption of global holomorphicity of $\omega$ imposes that the family of integrands $\omega_t$ is single-valued; hence the analytic continuation of $\II(t)$ is determined completely by the monodromy of the integration chain $\sigma_t$. 
\begin{rem}
    In every Whitney stratification $\mathfrak{S}$ of $D=A\cup B$, the irreducible components $\irrone{D}$ are in bijection with the strata $S\in\mathfrak{S}$ of codimension one. Therefore, each $D_i=\overline{S}\in\irrone{D}$ is a union of strata by \cref{defn:strat}~(2). It follows that the subvarieties $A\subseteq D$ and $B\subseteq D$ are also unions of strata.
\end{rem}
Take a Whitney stratification $\mathfrak{S}$ of $D$, with its Landau variety $L=L(\pi,\mathfrak{S})$. A continuous trivialization $\phi\colon \pi^{-1}(U)\cong Y_{t_0}\times U$ over $t_0\in U\subseteq T\setm L$ from \cref{thm:Thom} preserves all strata. Since $A$ and $B$ are unions of strata, $\phi$ restricts to trivializations of the subvariety $\pi^{-1}(U)\cap A\cong A_{t_0}\times U$ and similarly for $B$. Hence the homeomorphism $\phi$ restricts to a compatible pair
\begin{equation}\label{eq:pair-of-bundles}\begin{gathered}\xymatrix{
    (Y\setm A) \cap \pi^{-1}(U) \ar[rr]^{\phi|_{Y\setm A}} & & (Y_{t_0}\setm A_{t_0}) \times U \\
    (B\setm A) \cap \pi^{-1}(U) \ar@{^{(}->}[u] \ar[rr]^{\phi|_{B\setm A}} & & (B_{t_0}\setm A_{t_0}) \times U \ar@{^{(}->}[u]
}\end{gathered}\end{equation}
of trivializations of the two fibre bundles $(Y\setm A)|_{\pi^{-1}(T\setm L)}$ and $(B\setm A)|_{\pi^{-1}(T\setm L)}$. This is what we mean by a \textbf{pair of fibre bundles} under \textbf{(L)} in the introduction \cref{sec:intro-mon}. Taking homology of this pair, it follows that also the groups $H_{\bullet}(Y\setm A,B)_t$ are locally trivial (a local system) over $T\setm L$. Therefore, a relative cycle $\sigma_{t_0}$ extends to a multivalued section $\sigma_t$ and hence determines the integral $\II(t)$ as a multivalued function on $T\setm L$. The analyticity of $\II(t)$ is shown in \cite[Theorem~56]{Brown:PeriodsFeynmanIntegrals}.

\section{Picard-Lefschetz theory}\label{sec:PL}

Throughout this section, we consider a proper holomorphic submersion $\pi\colon Y\rightarrow T$ and a subset $D=\bigcup_i D_i\subset Y$ that is the union of transverse smooth complex hypersurfaces $\irrone{D}=\irrone{A}\sqcup\irrone{B}=\set{A_1,\ldots,A_r,B_1,\ldots,B_s}$. We denote by $L\subset T$ the Landau variety of the canonical stratification $\mathfrak{S}$ of $D$.
\begin{defn}\label{def:small-simple}
	A \textbf{small loop} is a closed curve $\gamma\colon[0,1]\rightarrow T\setm L$ such that there exists a holomorphic chart $t\colon U\rightarrow \CC^k$ of $T$ with $U\cap L=\set{t_1=0}$ and $\gamma(\tau)=(e^{2\ipi\tau},0,\ldots,0)$ is the counter-clockwise loop in the $t_1$-plane.
	A \textbf{simple loop} is a loop which is homotopic to the conjugate $\inv{\xi}\conc\eta \conc \xi $ of a small loop $\eta$ by a path $\xi$ (\cref{fig:simple-loops}).
\end{defn}
\begin{figure}
    \centering
    \includegraphics[scale=.333]{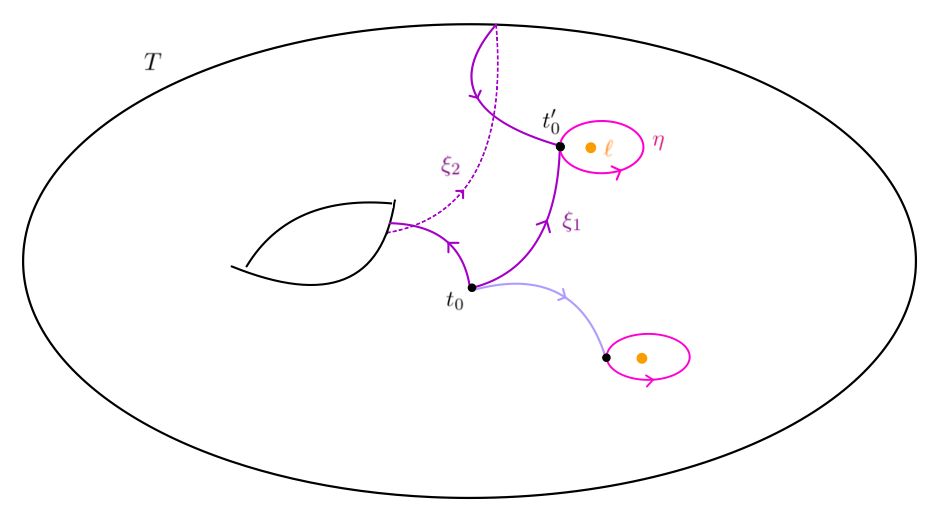}
    \caption{Simple loops around Landau components.}%
    \label{fig:simple-loops}%
\end{figure}
Note that simple loops become contractible under the inclusion $T\setm L\rightarrow T$. In fact, simple loops generate the entire kernel of the corresponding surjection
\begin{equation}\label{eq:fg-surjection}
    \pi_1(T\setm L,t_0) \twoheadrightarrow \pi_1(T,t_0)
\end{equation}
of fundamental groups of $T\setm L$, for any base point $t_0\in T\setm L$ (see \cref{sec:fg-codim1}).
Picard-Lefschetz theory determines the variation along small loops, but says nothing about loops that remain non-trivial in $\pi_1(T,t_0)$.\footnote{
	For this reason, \cite{Arnold:singularities2} and \cite{Pham:Singularities} assume that $T$ is simply connected.}
The latter emerge from the global topology of $T$, whereas the former arise from the degeneration of the fibres of $\pi$ over a critical point $t_c\in L$, and are in this sense local in $T$.
\begin{eg}
  In \cref{fig:simple-loops} the loops $\xi_1^{-1}\conc \eta\conc \xi_1$ and $\xi_2^{-1}\conc\eta \conc \xi_2$ are two non-homotopic simple loops. The loop $\xi_1^{-1}\conc \xi_2$ is not simple, because it is non-trivial in the fundamental group of the torus $T\cong \Sphere^1\times \Sphere^1$.
\end{eg}

Let $\irrone{L}$ denote the set of irreducible components $\ell\subseteq L$ of complex codimension one in $T$.
Since we are only concerned with small loops, we can replace $T$ with a disk $\oBallD=\set{\abs{t_1}<2}\subset\CC$ of complex dimension one in a suitable coordinate $t_1$. This disk embeds transverse to some component $\ell\in\irrone{L}$ such that $\oBallD\cap L=\set{t_c}$ is a single point of $\ell$. By \cref{def:small-simple}, $t_c$ is a smooth point of $\ell$, so that, in particular, $t_c\notin \ell\cap\ell'$ for any other component $\ell\neq \ell'\in\irrone{L}$.

Since $\pi$ is a proper submersion and hence $Y\rightarrow T$ a fibre bundle, this localization in $T$ also implies that we can assume that $Y\cong X\times T$ is trivial as a smooth manifold, with $X$ a smooth compact manifold and $\pi(x,t)=t$.

In fact, in this section we will only consider variations around Landau singularities $\ell\in\irrone{L}$ that emerge from fibrewise isolated critical points: We assume that over the point $t_c \in \ell$, the critical sets $cS \cap \pi^{-1}(t_c)$ consist of isolated points, for each stratum $S\in\mathfrak{S}$. In \cref{ss:loc} we show that this leads to a further localization, this time in $X$, to neighbourhoods of precisely those isolated points in the critical fibre. After this second localization, we may assume that $Y\cong X\times T$, with $\pi(x,t)=t$, also as a complex manifold.

\subsection{Monodromy and variation}
Let $\gamma\colon [0,1]\rightarrow T\setm L$ be a closed path based at $\gamma(0)=\gamma(1)=t_0$. Since $\pi \rest{\inv{\pi}(T \setm L)}$ is a pair of fibre bundles \eqref{eq:pair-of-bundles}, the pullback bundle $\gamma^* \pi$ over the contractible interval $[0,1]$ is trivial.\footnote{Glue together local trivializations over a sufficiently small subdivision of $\gamma$.} Hence one can find a continuous family $g_{\tau}\colon Y_{t_0} \rightarrow Y_{\gamma(\tau)}$ of homeomorphisms, which furthermore restrict to homeomorphisms $g_{\tau}|_S\colon S_{t_0}\rightarrow S_{\gamma(\tau)}$ of the fibres of every stratum $S\in\mathfrak{S}$ of $A\cup B$. The induced homeomorphisms
\begin{equation*}
(Y_{t_0}\setm A_{t_0},B_{t_0}\setm A_{t_0})
\longrightarrow 
(Y_{\gamma(\tau)}\setm A_{\gamma(\tau)},B_{\gamma(\tau)} \setm A_{\gamma(\tau)})    
\end{equation*}
of pairs are well-defined up to homotopy, and depend (up to homotopy) only on the homotopy class of $\gamma$.\footnote{For a homotopy $h\colon [0,1]^2\rightarrow T\setm L$ between paths $\gamma(t)=h(0,t)$ and $\eta(t)=h(1,t)$, the pullback bundle $h^*\pi$ on $[0,1]^2$ is also trivial ($[0,1]^2$ is simply connected). Any global trivialization of $h^*\pi$ provides a homotopy between $\gamma^* \pi=h^*\pi|_{\set{0}\times[0,1]}$ and $\eta^*\pi$.}
Hence on the level of the homology groups
\begin{equation*}
    H_\bullet(t) = H_\bullet( Y_t\setm A_t, B_t),
\end{equation*}
the pushforward of $g_\tau$ (restricted to $Y_{t_0}\setm A_{t_0}$) is independent of the choice of the trivialization $g$ of $\gamma^*\pi$. We denote this pushforward of $g_1$ by
\begin{equation*}
    \gamma_*\colon H_\bullet(t_0) \longrightarrow H_\bullet(t_0).
\end{equation*}
This map depends only on the homotopy class of $\gamma$. The assignment $\gamma \mapsto \gamma_*$ defines an action of the fundamental group $\pi_1(T \setm L,t_0)$ on the homology group $H_{\bullet}(t_0)$, called the \textbf{monodromy representation}.
\begin{defn}
    The \textbf{variation} operators on the homology group of a fibre measure the difference between the monodromy and the identity:
\begin{equation}\label{eq:Var}
    \Var_\gamma\colon H_\bullet(t_0) \longrightarrow H_\bullet(t_0),
    \quad h \longmapsto \gamma_* h - h.
\end{equation}
\end{defn}
\begin{rem}\label{rem:ambiso}
    If the fibres $Y_t=X$ are constant ($Y=X\times T$), then a trivialization $g$ of $\gamma^*\pi$ provides an \emph{ambient isotopy} $g_1$ of $(A \cup B)_{t_0}$ in $X$ that preserves each stratum individually.
    An ambient isotopy of a subspace $M\subseteq X$ is a homeomorphism $\phi\colon M\rightarrow M$ that can be extended to a continuous family $(g_{\tau})_{0\leq\tau\leq 1}$ of homeomorphisms $g_\tau\colon X \to X$, such that $g_0=\mathrm{id}$ and $g_1|_M=\phi$.
\end{rem}
 
Since we assume that $A\cup B$ has at worst simple normal crossing singularities, we can in fact find an ambient isotopy in the \emph{smooth} category (provided that the path $\gamma$ is smooth). This follows from \cref{prop:smoothom}, and means that the trivialization $g\colon X\times[0,1]\rightarrow X$ can be chosen to be smooth and such that each $g_{\tau}\colon X\rightarrow X$ is a diffeomorphism. Such an isotopy is also called a \emph{diffeotopy}  \cite{hirsch:dt}.

\begin{eg}\label{eg:pureA-nm=11}
    Consider projective space $X=\PP^1$ over the base $T=\CC$, with $B=\varnothing$ and remove $A=A_1\cup A_2$ where $A_1=\set{x=\infty}$ and $A_2=\set{x^2=t}$. All fibres $X\setm A_t=\CC\setm\set{\sqrt{t},-\sqrt{t}}$ over $t\neq 0$ are isomorphic via $x\mapsto x/\sqrt{t}$ to $X\setm A_{t=1}=\CC\setm\set{1,-1}$. This trivialization lifts the loop $t=\gamma(\tau)=e^{2\ipi\tau}$ around the Landau variety $L=\set{0}$ to the diffeotopy $g_{\tau}(x)=x\cdot e^{\ipi\tau}$. The latter ends in the antipode $g_1(x)=-x$, which swaps the punctures $\pm 1$ in the fibre over $t_0=1$. Let $\sigma_{\pm}$ be two loops around $\pm 1$. They generate $H_1(t_0)\cong\ZZ^2$ and the monodromy $\gamma_{*}$ swaps them, $\gamma_{*} (\sigma_{\pm})=\sigma_{\mp}$, see \cref{fig:quadr-global-isotopy}. Hence, the variation is
    \begin{equation*}
        \Var_{\gamma} (\sigma_+)=-\Var_{\gamma}(\sigma_-)
        = \sigma_{-}-\sigma_+.
    \end{equation*}
\end{eg}

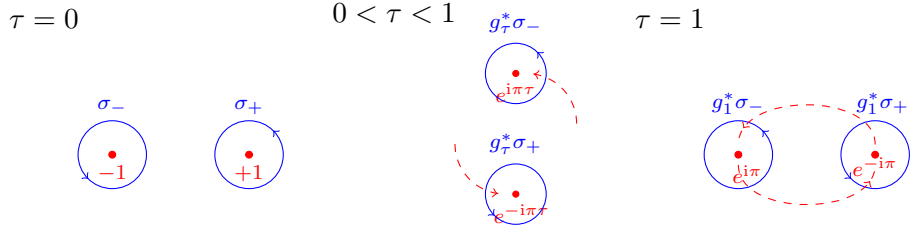
\begin{figure}
    \centering
 \begin{tikzpicture}[scale=.9]
\node[] (A) at (-2,2) {$\tau=0$};
\node[] (L) at (-1,0) {};
\node[] (R) at (1,0) {};
\node[] (O) at (0,1) {};
\node[] (U) at (0,-1) {};
\filldraw[red] (L) circle (0.05) node[below] {$\scriptstyle{-1}$};
\filldraw[red] (R) circle (0.05) node[below] {$\scriptstyle{+1}$};
\draw[blue,decoration={markings, mark=at position 0.625 with {\arrow{>}}},postaction={decorate}] (L) circle [radius=.5] node[above,yshift=.4cm] {$\scriptstyle{\sigma_-}$};
\draw[blue,decoration={markings, mark=at position 0.125 with {\arrow{>}}},postaction={decorate}] (R) circle [radius=.5] node[above,yshift=.4cm] {$\scriptstyle{\sigma_+}$};
\end{tikzpicture}
\quad
\begin{tikzpicture}[scale=.8]
\node[] (A) at (-2,2) {$0<\tau<1$};
\node[] (L) at (-1,0) {};
\node[] (R) at (1,0) {};
\node[] (O) at (0,1) {};
\node[] (U) at (0,-1) {};
\filldraw[red] (U) circle (0.05) node[below,xshift=.08cm] {$\scriptstyle{e^{-\ipi\tau}}$};
\filldraw[red] (O) circle (0.05) node[below] {$\scriptstyle{e^{\ipi\tau}}$};
\draw[blue,decoration={markings, mark=at position 0.625 with {\arrow{>}}},postaction={decorate}] (U) circle [radius=.5] node[above,yshift=.4cm] {$\scriptstyle{g_\tau^* \sigma_+ }$};
\draw[blue,decoration={markings, mark=at position 0.125 with {\arrow{>}}},postaction={decorate}] (O) circle [radius=.5] node[above,yshift=.4cm] {$\scriptstyle{g_\tau^* \sigma_-  }$};
\draw[red,dashed,decoration={markings, mark=at position 0.9 with {\arrow{>}}},postaction={decorate}] (L) to[out=-90,in=180] (U); 
\draw[red,dashed,decoration={markings, mark=at position 0.9 with {\arrow{>}}},postaction={decorate}] (R) to[out=90,in=0] (O); 
\end{tikzpicture}
\quad
\begin{tikzpicture}[scale=.9]
\node[] (A) at (-2,2) {$\tau=1$};
\node[] (L) at (-1,0) {};
\node[] (R) at (1,0) {};
\node[] (O) at (0,1) {};
\node[] (U) at (0,-1) {};
\filldraw[red] (R) circle (0.05) node[below,yshift=.1cm] {$\scriptstyle{e^{-\ipi}}$};
\filldraw[red] (L) circle (0.05) node[below,xshift=.12cm] {$\scriptstyle{e^{\ipi}}$};
\draw[blue,decoration={markings, mark=at position 0.625 with {\arrow{>}}},postaction={decorate}] (R) circle [radius=.5] node[above,yshift=.4cm,xshift=.1cm] {$\scriptstyle{g_1^* \sigma_+}$};
\draw[blue,decoration={markings, mark=at position 0.125 with {\arrow{>}}},postaction={decorate}] (L) circle [radius=.5] node[above,yshift=.4cm] {$\scriptstyle{g_1^* \sigma_- }$};
\draw[red,dashed,decoration={markings, mark=at position 0.9 with {\arrow{>}}},postaction={decorate}] (L) to[out=-90,in=-90] (R); 
\draw[red,dashed,decoration={markings, mark=at position 0.9 with {\arrow{>}}},postaction={decorate}] (R) to[out=90,in=90] (L); 
\end{tikzpicture}
    \caption{Isotopy for $A=\set{x^2=t}\subset\CC$ by rotation swaps the loops $\sigma_{\pm}$ (\cref{eg:pureA-nm=11}).}%
    \label{fig:quadr-global-isotopy}%
\end{figure}

\begin{eg} \label{eg:AB-nm=12}
    Consider $X=\PP^1$, $A=\set{x=t,\infty}$ and $B=\set{x=0,2}$ over the loop $t=\gamma(\tau)=e^{2\ipi \tau}$ (see \cref{fig:W-mono-lin}). The trivialization $(X\setm A,B)_{\gamma(\tau)}\cong (\CC\setm\set{1},\set{0,2})$ given by $g_{\tau}(x)=x e^{2\ipi\tau}$ yields the isotopy $g_{\gamma}(x)=x$. Let $t_0=1$. Choose a path $\sigma$ in $\CC \setm \{1\}$ from $0$ to $2$, and let $\delta\set{1}$ as above. Then $H(t_0)\cong \ZZ^2$ is generated by (the classes of) $\sigma$ and $\delta\set{1}$, and the monodromy $\gamma_*$ maps $\delta\set{1}$ to itself and $\sigma$ to $\sigma -\delta\set{1}$. Thus,
    \[
    \Var_\gamma (\delta\set{1})= 0, \qquad \Var_\gamma (\sigma)= - \delta\set{1}.
    \]
\end{eg}

\subsection{Localization} \label{ss:loc}
A crucial property of the variation operators is that both their domain and codomain can be localized near the critical points over $\ell$. 

In order to define a local variation operation, we need to show that if we are able to localize the critical sets $cS$ of all strata inside an open submanifold $W\subseteq X$, then the isotopy $g_\gamma$ can be chosen to be the identity outside $W$. Note that for fibrewise isolated critical points such a $W$ always exists by Milnor's theorem \cite{Milnor:SingComplex}---take for instance the union of small open balls around each critical point. If $\cup_{S \in \mathfrak{S}}cS$ is not isolated (fibrewise), but still smooth, then we can construct $W$ by taking small tubular neighbourhoods of the critical sets in a fibre.

\begin{lem}\label{lem:localization}
Let $W$ be an open submanifold of $X$ such that $\partial W$ is smooth and transverse to $A\cup B$, and such that $W$ contains the critical sets of all strata of $A_t\cup B_t$, for all $t$ in a small disk near $t_c \in L$. Then the ambient isotopy $g_\gamma\colon (X\setm A_{t},B_{t} \setm A_t) \to (X\setm A_{t},B_{t} \setm A_t)$ is homotopic to a map that is the identity outside of $W$.
\end{lem}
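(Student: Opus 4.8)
The plan is to build the required isotopy by first constructing it only on $W$ (via the usual Picard--Lefschetz localization machinery already available in our setup), then extending it by the identity to all of $X$, and finally arguing that this extension is well defined and homotopic to the global isotopy $g_\gamma$. The key point that makes the gluing possible is that the boundary $\partial W$ is assumed smooth and transverse to $A\cup B$, so that $X\setm W$, with the stratification induced from $A\cup B$, is itself a manifold-with-corners carrying a Whitney regular stratified structure compatible with the one on $X$.

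First I would apply \cref{prop:smoothom} to the restricted bundle over $X\setm \mathrm{int}(W)$. Over $X\setm W$ the map $\pi$ restricted to every stratum is a submersion, because by hypothesis all critical sets $cS$ are contained in $W$; moreover $\partial W$ is transverse to $A\cup B$, so the arrangement $(A\cup B)\cup \partial W$ is still a transverse arrangement of smooth closed submanifolds on the manifold-with-boundary $X\setm \mathrm{int}(W)$. Hence \cref{prop:smoothom} (in its version for manifolds with boundary, which is what the vector-field-lifting argument in its proof actually gives) produces a smooth stratified trivialization of $\pi$ over a neighbourhood of $t_c$, restricted to $X\setm \mathrm{int}(W)$, that preserves each $A_i$, each $B_j$, and also $\partial W$. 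Pulling back along the loop $\gamma$ yields a diffeotopy $h_\gamma$ of $(X\setm W)$ that preserves $(A\cup B)_t$ stratum-by-stratum and restricts to a diffeotopy of $\partial W$. Independently, the local Picard--Lefschetz analysis over $W$ (Milnor's fibration theorem together with the local trivializations near each isolated critical point, or tubular neighbourhoods of the smooth critical set) provides a diffeotopy $k_\gamma$ of $W$ with the same stratum-preserving property, agreeing with $h_\gamma$ on the common boundary $\partial W$ --- or, if one prefers, one takes $k_\gamma$ to be \emph{any} lift over $W$ and then corrects it so that it matches $h_\gamma$ on $\partial W$, which is possible because the monodromy of the boundary bundle is trivial after shrinking the disk (it is a product since $\partial W\times T$ carries no critical points). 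Gluing $h_\gamma$ and $k_\gamma$ along $\partial W$ gives a global diffeotopy, call it $g'_\gamma$, of $(X\setm A_t, B_t\setm A_t)$.

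Next I would observe that $g'_\gamma$ is a perfectly valid choice of trivialization of $\gamma^*\pi$ in the sense of \cref{rem:ambiso}, so by the homotopy-invariance already established (the pushforward $\gamma_*$ on homology is independent of the trivialization, and any two trivializations of $\gamma^*\pi$ are homotopic because $\gamma^*\pi$ is a trivial bundle over the contractible base $[0,1]$), $g'_\gamma$ is homotopic as a map of pairs to $g_\gamma$. Finally, I would modify $g'_\gamma$ to be literally the identity outside $W$: on the collar-neighbourhood side this is achieved by the standard trick of reparametrizing the flow-time functions so that the lifted flows over $X\setm W$ are run for time zero near the outer boundary of a collar of $\partial W$ in $X\setm W$; concretely, the diffeotopy $h_\gamma$ produced by \cref{prop:smoothom} can be multiplied, in the time parameter, by a cutoff function supported near $\partial W$, and since $X\setm W$ has no critical fibres this reparametrized isotopy is still a valid trivialization over that piece and is the identity away from a collar. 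Gluing this ``pushed-in'' isotopy with $k_\gamma$ on $W$ gives the desired representative equal to $\id$ on $X\setm W$ and homotopic to $g_\gamma$.

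The main obstacle I expect is the gluing step: ensuring that the locally constructed isotopies on $W$ and on $X\setm W$ can be made to agree \emph{smoothly} along the hypersurface-with-corners $\partial W$ while both remaining stratum-preserving. The cleanest route is to avoid constructing two isotopies and matching them, and instead to run the vector-field argument of \cref{prop:smoothom} globally on $X$ but with lifted vector fields that are additionally required to be tangent to $\partial W$ and to \emph{vanish identically on $X\setm W$} --- the latter is possible precisely because $W\supseteq \bigcup_S cS$ means $\pi|_{X\setm W}$ and all its stratum-restrictions are submersions, so over $X\setm W$ the ``lift'' can be taken to be the zero vector field, which trivially integrates to the identity flow. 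A partition of unity subordinate to $\{W, X\setm \overline{W'}\}$ for a slightly smaller $W'\Subset W$ then glues these local lifts, and the transversality of $\partial W$ to $A\cup B$ guarantees that the resulting global vector field is still tangent to every $A_i$ and $B_j$ near $\partial W$, so its flow preserves the pair $(X\setm A_t, B_t\setm A_t)$ and is the identity on $X\setm W$ by construction. This directly yields an isotopy of the required form, and its homotopy-equivalence to $g_\gamma$ is then immediate from the trivial-bundle argument above.
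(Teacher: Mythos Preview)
Your proposal has a genuine gap, and it appears in the same form in both the first approach and the ``cleanest route.'' You claim that on $X\setm W$ the lifted vector field can be taken to be zero (equivalently, to be the trivial horizontal lift $(0,\partial/\partial t)$ in $TX\oplus TT$), or that the isotopy there can be ``run for time zero'' via a cutoff in the flow parameter. But the absence of critical points on $X\setm W$ only says that $\pi$ restricted to each stratum is a submersion---so stratum-tangent lifts of $\partial/\partial t$ \emph{exist}---not that the trivial lift is one of them. The divisors $(A_i)_t,(B_j)_t$ genuinely move with $t$ on $X\setm W$, so $(0,\partial/\partial t)$ is not tangent to the hypersurfaces $A_i,B_j\subset X\times T$. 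Gluing a stratum-tangent lift on $W$ with the trivial lift on $X\setm\overline{W'}$ by a partition of unity produces, in the overlap region, a convex combination of a tangent and a non-tangent vector; its flow does not carry $(A\cup B)_{t_0}$ to itself, so the time-one map is not even a map of the pair $(X\setm A_{t_0},B_{t_0})$. The transversality of $\partial W$ to $A\cup B$ is a purely fibrewise statement and says nothing about $t$-dependence, so it cannot repair this.

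The paper's argument differs at exactly this point: it does not try to make the lift vanish. It first fixes one \emph{global} stratified isotopy $g_\tau$ with velocity field $V$ on $I\times X$ (after enlarging the arrangement by $\partial W$ and a slightly smaller $\partial W'$, so that $g_\gamma$ preserves $X\setm W'$). The key input is then that $g_\gamma|_{X\setm W'}$ is homotopic to the identity, because the stratified bundle over $X\setm W'$ extends as a trivial bundle over the \emph{full} disk including $t_c$, so $\gamma$ contracts there. That homotopy supplies a family $V_\lambda$ of velocity fields on $I\times(X\setm W')$ interpolating between $V|_{X\setm W'}$ and $\partial/\partial\tau$; gluing with $V$ via a partition of unity yields $F_\lambda=\rho_1 V+\rho_2 V_\lambda$ and hence a one-parameter family $f_\lambda$ of ambient isotopies from $f_0=g_\gamma$ to an $f_1$ that is the identity outside $W$. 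The idea missing from your proposal is precisely this homotopy step: one must \emph{deform} the arrangement-tracking isotopy on $X\setm W'$ to the identity using the contraction of $\gamma$ over the full disk, not declare it to be the identity outright.
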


\begin{proof}
This is a special instance of the \emph{isotopy extension theorem} \cite[\S 8]{hirsch:dt}. 

Let $C$ denote the union of all critical sets $cS$ in the fiber over $t_c \in L$. Using a collar neighbourhood of $\partial W$ we can construct a closed submanifold $W'\subset W$ that contains $C$ and whose boundary $\partial W'$ is also transverse to all strata of $A\cup B$. This transversality implies that adding $\partial W$ and $\partial W'$ to the arrangement $A \cup B$ does not alter the associated Landau variety. From \cref{thm:Thom} and \cref{prop:smoothom} we obtain thus a smooth stratum preserving trivialization $h$ that gives rise to (see \cref{rem:ambiso}) a smooth ambient isotopy $g_\gamma\colon (X\setm A_{t},B_{t} \setm A_t) \to (X\setm A_{t},B_{t} \setm A_t)$. The crucial point here is that this map fixes $\partial W'$ (globally). 

Let $g_\tau$ be a realization of $g_\gamma$, that is, a family of (stratum preserving) diffeomorphisms $g_\tau \colon X \to X$ and subsets $S_\tau \subseteq X$ with $S_0=S_1=A_t \cup B_t \cup \partial W \cup \partial W'$, such that $g_0=\id_X$, $g_1=g_\gamma$ and $g_\tau|_{S_0}\colon S_0 \cong S_\tau$. The \emph{track} of $g$ is the diffeomorphism 
\begin{equation*}
G \colon I \times X \longrightarrow I \times X, \quad (\tau,x) \longmapsto (\tau ,g_\tau(x)).
\end{equation*}
Let $V$ be the \emph{velocity field} of $G$, the vector field on $I \times X$, tangential to the curves $\tau \mapsto (\tau, g_\tau(x))$. Identifying $T(\RR \times X)$ with $T(\RR)\oplus T(X)$ it has the form $V(\tau,x)=\frac{\partial}{\partial \tau} +  v(\tau,x)$ where $v(\tau,x)$ is a vector field on $X$. 
In this way every ambient isotopy gives rise to a time-dependent vector field on $X$. The reverse statement is also true if the time-dependent vector field is compactly supported. In our case $X$ is compact, so this condition is automatically satisfied.  
Moreover, observe that $V$ is tangential to the set $S=\{ (\tau,x)\mid x \in S_\tau \} \subseteq I \times X$, hence its flow (which is $G$) leaves $S$ invariant.

Since $C$ is contained in $W'$, the ambient isotopy $ g_\gamma$ will be homotopic to the identity map on its complement $X\setm W'$. Let $ V_\lambda(\tau,x)$ be the associated homotopy of vector fields on $I \times (X\setm W')$ (between $V_0(\tau,x)=V(\tau,x)|_{x\in X\setm W'}$ and the constant vector field $V_1(\tau,x)=\frac{\partial}{\partial \tau}$, whose flow induces the identity map). 
Choose a partition of unity $\{\rho_1,\rho_2\}$, subordinate to the open cover $U_1 \cup U_2=X$ with $U_1=W$ and $U_2=X\setm W'$, and define a vector field $F_\lambda$ on $I\times X$ by $F_\lambda(\tau,x)= \rho_1(x) V(\tau,x) + \rho_2(x) V_\lambda(\tau,x)$. Let $f_\lambda$ be the ambient isotopy induced by $F_\lambda$. Then
\begin{itemize}
    \item $f_\lambda|_{S_0} \colon S_0 \cong S_1$ for each $\lambda \in [0,1]$,
    \item $f_\lambda$ is a homotopy between $f_1$ and $f_0=g_\gamma$,
    \item $f_1$ is equal to $g_\gamma$ on $W'$,
    \item $f_1$ is the identity outside of $W$.
\end{itemize} 
\end{proof}

\begin{eg}\label{eg:pureA-nm=11-ambient}
    The isotopy $g_{\gamma}(x)=-x$ from \cref{eg:pureA-nm=11} is not the identity outside of $W=\set{\abs{x}<1}\subset\CC$. Pick any smooth function $\rho\colon \RR\rightarrow [0,1]$ with $\rho(r)=1$ for $r\geq 1$ and $\rho(r)=0$ for $r\leq 1/2$. Then the diffeomorphisms $\phi_{\tau}(x)=x\cdot e^{\ipi\tau\rho(\abs{x})}$ restrict to $x\mapsto x\cdot e^{\ipi\tau}$ outside $W$, and to the identity inside the smaller ball $W'=\set{\abs{x}<1/2}$. In particular, $\phi_{\tau}$ fixes the stratum $S=\set{x^2=t}\subseteq W'$ and thereby defines a diffeotopy of the pair $(\CC,S_t)$. This family $\phi_{\tau}$ is smooth in $\tau$, so $\phi_1$ is homotopic to $\phi_0=\id$. It follows that $g_{\gamma}\simeq f$ is homotopic the composition $f(x)=\phi_1 g_{\gamma} (x)=x \cdot e^{\ipi(1-\rho(\abs{x}))}$ which has the desired property that $f|_{W^{\setcompl}}=\id$ (\cref{fig:ambiso}).
\end{eg}
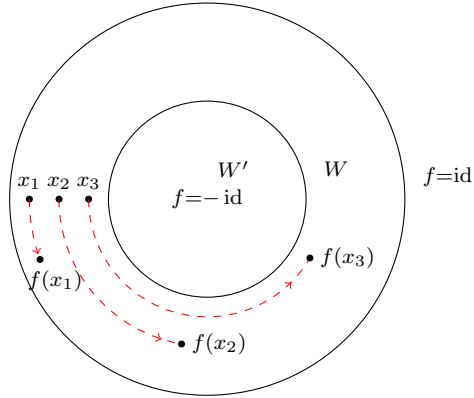
\begin{figure}
    \centering
  \begin{tikzpicture}[scale=1.3]
\node[] (m) at (0,0) {$\scriptstyle{ f= -\id}$};
\node[] (x1) at (-1.8,0) {};
\node[] (x2) at (-1.5,0) {};
\node[] (x3) at (-1.2,0) {};
\filldraw[] (x1) circle (0.03) node[above] {$\scriptstyle{x_1}$};
\filldraw[] (x2) circle (0.03) node[above] {$\scriptstyle{x_2}$};
\filldraw[] (x3) circle (0.03) node[above] {$\scriptstyle{x_3}$};
\draw (m) circle [radius=1] node[right,yshift=.4cm] {$\scriptstyle{ W' }$};
\draw (m) circle [radius=2] node[right,xshift=1.4cm,yshift=.4cm] {$\scriptstyle{ W }$} node[right,xshift=2.7cm,yshift=.3cm] {$\scriptstyle{ f=\id }$};
\draw[red,dashed,decoration={markings, mark=at position 0.9 with {\arrow{>}}},postaction={decorate}] (x1) arc[radius = 1.8cm, start angle= 180, end angle= 200] node[at end] (y1) {};
\filldraw[] (y1) circle (0.03) node[below,xshift=.2cm] {$\scriptstyle{f(x_1)}$};
\draw[red,dashed,decoration={markings, mark=at position 0.9 with {\arrow{>}}},postaction={decorate}] (x2) arc[radius = 1.5cm, start angle= 180, end angle= 260] node[at end] (y2) {};
\filldraw[] (y2) circle (0.03) node[right] {$\scriptstyle{f(x_2)}$};
\draw[red,dashed,decoration={markings, mark=at position 0.9 with {\arrow{>}}},postaction={decorate}] (x3) arc[radius = 1.2cm, start angle= 180, end angle= 330] node[at end] (y3) {};
\filldraw[] (y3) circle (0.03) node[right] {$\scriptstyle{f(x_3)}$};
\end{tikzpicture}
    \caption{\Cref{eg:pureA-nm=11-ambient} of an ambient isotopy $f=\phi_1 g_\gamma$ that is the identity outside of $W$ and homotopic to $g_\gamma(x)=-x$. Here $S=\{ x^2=t \}\subset \CC$ and $\phi_1|_{W'} = \id$, $\phi_1|_{W^\setcompl} = -\id$.}%
    \label{fig:ambiso}%
\end{figure}

By \cref{lem:localization}, the variation of a class $h\in H_{\bullet}(t_0)$ can be represented by a cycle contained in $W\setm A_{t_0}$ with boundary in $W \cap B_{t_0}$, and furthermore this variation $\Var_{\gamma} h$ depends only on how $h$ intersects the closed ball $\overline{W}$.
To make this precise, let $W^{\setcompl}=X\setm W$ denote the complement of the ball and consider the following 
map, induced by an inclusion of pairs:
\begin{equation}\label{eq:var-loc-dom-iso}
    H_\bullet ( \overline{W} \setm A_t, \partial W \cup B_t ) \longrightarrow H_\bullet (X\setm A_t, W^{\setcompl} \cup B_t).
\end{equation}
\begin{lem}\label{lem:var-loc-dom-iso}
    The map \eqref{eq:var-loc-dom-iso} is an isomorphism.
\end{lem}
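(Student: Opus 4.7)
The plan is to apply the excision theorem to the inclusion of pairs
\[
    (\overline{W} \setm A_t,\, (\partial W \cup B_t) \setm A_t) \hookrightarrow (X \setm A_t,\, (W^{\setcompl} \cup B_t) \setm A_t).
\]
The obvious candidate for the open set to excise is $U = X \setm \overline{W}$, but a direct application of excision fails: its closure $\overline{U} = W^{\setcompl}$ meets $\partial W$ on the \emph{boundary} of the subspace $W^{\setcompl} \cup B_t$ rather than its interior. I would circumvent this by thickening $W^{\setcompl}$ into an open collar neighborhood that deformation retracts back while preserving $B_t$ and avoiding $A_t$.

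The first step is to construct a compatible tubular neighborhood of $\partial W$. Because $\partial W$ is smooth and transverse to each hypersurface $A_i$ and $B_j$, the vector-field-and-partition-of-unity argument from the proof of \cref{prop:smoothom} produces a smooth vector field on a neighborhood of $\partial W$ that is transverse to $\partial W$ and tangent to every $A_i$ and $B_j$. Its flow yields an open tubular neighborhood $T$ of $\partial W$ together with a retraction $\pi_T \colon T \to \partial W$ sending $A_i \cap T$ into $A_i \cap \partial W$ and $B_j \cap T$ into $B_j \cap \partial W$ for every $i,j$. Set $T^- = T \cap \overline{W}$ and $W' = (X \setm \overline{W}) \cup T$, an open neighborhood of $W^{\setcompl}$ in $X$.

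Extended by the identity outside $T^-$, the retraction $\pi_T$ provides a deformation retraction of $W'$ onto $W^{\setcompl}$ that preserves $B_t$ setwise and stays away from $A_t$. The five-lemma applied to the long exact sequences of the pairs then shows that the inclusion $(X \setm A_t,\, W^{\setcompl} \cup B_t) \hookrightarrow (X \setm A_t,\, W' \cup B_t)$ is a homology equivalence. Since $W'$ is open in $X$, $W' \setm A_t$ is open in $X \setm A_t$, so $\overline{U} = W^{\setcompl} \setm A_t \subseteq W' \setm A_t$ lies in the interior of $(W' \cup B_t) \setm A_t$ and strict excision applies:
\[
    H_\bullet(X \setm A_t,\, W' \cup B_t) \cong H_\bullet(\overline{W} \setm A_t,\, T^- \cup \partial W \cup (B_t \cap \overline{W})).
\]

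Contracting $T^-$ onto $\partial W$ along $\pi_T$ gives a final deformation retraction identifying this pair with $(\overline{W} \setm A_t,\, \partial W \cup (B_t \cap \overline{W}))$. Composing the three isomorphisms yields the inverse of the map in the lemma. The only delicate point is the construction of the compatible tubular neighborhood in the first step, whose existence relies essentially on the transversality of $\partial W$ with $A \cup B$; the remaining steps are standard homotopy invariance and excision.
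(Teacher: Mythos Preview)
Your argument is correct and follows essentially the same route as the paper's proof: construct a collar of $\partial W$ compatible with $A$ and $B$ (using transversality), then combine excision with a deformation retraction along the collar. The only cosmetic difference is the order of operations---the paper places its collar $U\cong\partial W\times[0,1)$ inside $W^{\setcompl}$, excises the complement of $W'=W\cup U$ first, and then retracts $W'$ onto $\overline{W}$; you instead take a two-sided tube, first thicken the subspace $W^{\setcompl}\cup B_t$ to $W'\cup B_t$ by homotopy invariance, then excise $X\setminus\overline{W}$, and finally retract $T^-$ onto $\partial W$. The geometric content is identical.
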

\begin{proof}
    Extend the inclusion $\partial W\hookrightarrow W^{\setcompl}$ of the boundary to some collar neighbourhood $u\colon\partial W \times [0,1)\cong U \subset W^{\setcompl}$.
    \begin{figure}
        \centering
        \includegraphics[scale=0.23]{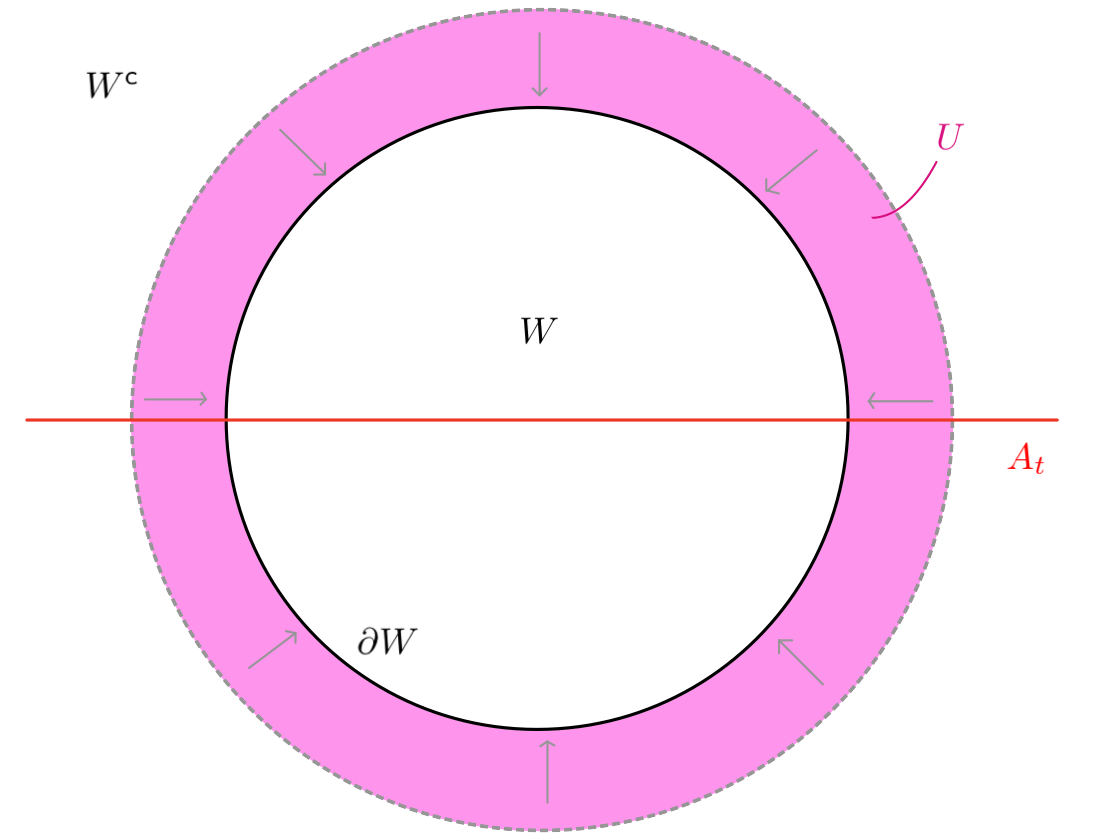}%
        \caption{Collar neighbourhood $U\cong [0,1) \times \partial W \subset W^\setcompl$ used for the excision argument in the proof of \cref{lem:var-loc-dom-iso}.}%
        \label{fig:excision1}%
    \end{figure}
    Adding the collar $U$, we can thicken $W$ into $W'=W\cup U$ and excise $W^{\setcompl}\setm (W'\cup A_t)$ from the pair $(X\setm A_t, W^{\setcompl} \cup  B_t\setm A_t)$, to replace it with the pair $(W'\setm A_t,U\cup B_t\setm A_t)$; see \cref{fig:excision1}. Since $\partial W\cup A_t\cup B_t$ intersect transversely, we can choose $u$ in such a way that it preserves $A$ and $B$ (see \cref{sec:relative-residues}). For the corresponding projection $r\colon U\rightarrow \partial W$, this means $r(U\cap A_t)=\partial W\cap A_t$ and $r(U\cap B_t)=\partial W\cap B_t$. Note that $r$ fixes $\partial W$ pointwise, so we can extend $r$ to a map $r\colon W'\rightarrow \overline{W}$ by setting $r|_W=\id_W$ and this still preserves $A_t$ and $B_t$. This $r$ descends to deformation retractions $(U\cup B_t)\setm A_t\rightarrow (\partial W\cup B_t)\setm A_t$ and $W'\setm A_t\rightarrow \overline{W}\setm A_t$.
\end{proof}
\begin{defn}
The \emph{$W$-trace} or \emph{trace-in-the-ball} \cite{FFLP} is the homomorphism
\begin{equation}\label{eq:wtrace}
    W_*\colon H_\bullet(X\setm A_t,B_t) \longrightarrow H_\bullet(\overline{W}\setm A_t,\partial W \cup B_t)
\end{equation}
given by the map of pairs $H_{\bullet}(X\setm A_t,B_t)\rightarrow H_{\bullet}(X\setm A_t,W^{\setcompl}\cup B_t)$ and the isomorphism \eqref{eq:var-loc-dom-iso}. We further write
$\iota\colon (W\setm A_t,W\cap B_t\setm A_t)\rightarrow (X\setm A_t,B_t\setm A_t)$ for the natural inclusion of pairs, and denote its push-forward $\iota_*$.
\end{defn}
\begin{prop}\label{thm:loc-factorization}
    Suppose that $g_{\gamma}$ is the identity map outside of $W$. Then $g_{\gamma}$ canonically determines a homomorphism\footnote{Unlike $\Var_{\gamma}$, the map \eqref{eq:varlocal} may depend on the choice of $g_{\gamma}$ (not just $\gamma$ alone): if two realizations $g_{\gamma}$ and $g'_{\gamma}$ with $g_{\gamma}|_{W^{\setcompl}}=g_{\gamma}'|_{W^{\setcompl}}=\id$ are homotopic, but the homotopy does not fix $W^{\setcompl}$, then the corresponding variations $\var_{\gamma}$ could differ by elements in the kernel of $\iota_*$.}
    \begin{equation}\label{eq:varlocal}
        \var_\gamma \colon H_\bullet ( \overline{W} \setm A_t, \partial W \cup B_t ) \longrightarrow H_\bullet (W \setm A_t, B_t)
    \end{equation}
    such that the following diagram for $\Var_{\gamma}=g_{\gamma*}-\id$ commutes:
\begin{equation}\begin{gathered}\xymatrix{
    H_\bullet(X\setm A_t,B_t) \ar[d]^{W_{\ast}} \ar[rr]^{\Var_{\gamma}} & & H_\bullet(X \setm A_t,B_t) \\
    H_{\bullet}(\overline{W} \setm A_t,\partial W\cup B_t) \ar[rr]^{\var_{\gamma}} & & H_\bullet (W \setm A_{t},B_{t}) \ar[u]_{\iota_{\ast}} \\
}\end{gathered}\end{equation}
\end{prop}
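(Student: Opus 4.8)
The plan is to construct $\var_\gamma$ explicitly from the localized isotopy $g_\gamma$ and then verify commutativity of the square by a diagram chase. First I would observe that since $g_\gamma$ is the identity on $W^{\setcompl}=X\setm W$, it restricts to an ambient isotopy of the pair $(W\setm A_t, W\cap B_t\setm A_t)$, hence induces a pushforward $(g_\gamma)_*$ on $H_\bullet(W\setm A_t, B_t)$, and I would package $\var_\gamma$ as the composite
\begin{equation*}
    H_\bullet(\overline{W}\setm A_t,\partial W\cup B_t)
    \xrightarrow{\ (g_\gamma)_*-\id\ }
    H_\bullet(W\setm A_t,B_t),
\end{equation*}
where the target is reached because the \emph{difference} $(g_\gamma)_* - \id$ of two maps that agree on a neighbourhood of $\partial W$ kills the boundary contribution: a relative cycle in $\overline{W}$ with boundary on $\partial W\cup B_t$ is moved by $g_\gamma$ to another such cycle with the \emph{same} part of the boundary on $\partial W$ (since $g_\gamma=\id$ there), so their difference is a genuine relative cycle of $(W\setm A_t, B_t)$, i.e.\ an absolute chain as far as $\partial W$ is concerned. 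Making this "difference lands in the smaller group" argument precise is the conceptual core of the construction; it is the standard mechanism behind variation maps (cf.\ \cite{FFLP,Pham:Singularities}) and I would phrase it at the level of chains, choosing representatives supported in $\overline{W}$.

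Next I would chase the square. Start with a class $h\in H_\bullet(X\setm A_t,B_t)$ represented by a cycle $c$. Going down-then-right: $W_*h$ is obtained by first pushing $h$ into $H_\bullet(X\setm A_t, W^{\setcompl}\cup B_t)$ and then using the isomorphism \eqref{eq:var-loc-dom-iso} of \cref{lem:var-loc-dom-iso}; concretely, after the deformation retraction $r$ built in that proof, $W_*h$ is represented by $r_\#(c)\cap\overline{W}$, a relative cycle of $(\overline{W}\setm A_t,\partial W\cup B_t)$. Applying $\var_\gamma$ gives $(g_\gamma)_\#(r_\#c) - r_\#c$ inside $\overline{W}$, and then $\iota_*$ views this as a class in $H_\bullet(X\setm A_t,B_t)$. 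Going right-then-down: $\Var_\gamma h = (g_\gamma)_\# c - c$. The two agree because $g_\gamma$ is the identity outside $W$, so $(g_\gamma)_\# c$ and $c$ differ only over $W$, and the part of $c$ outside $W$ can be retracted into $\overline W$ by $r$ without changing the homology class; thus $(g_\gamma)_\# c - c$ is homologous, rel $B_t$, to $\iota_\#\big((g_\gamma)_\#(r_\# c) - r_\# c\big)$. I would make this precise by writing $c = c_{\text{in}} + c_{\text{out}}$ with $c_{\text{out}}$ supported in $W^{\setcompl}$ (fixed by $g_\gamma$), so $(g_\gamma)_\#c - c = (g_\gamma)_\# c_{\text{in}} - c_{\text{in}}$, and then comparing $c_{\text{in}}$ with $r_\# c$ up to the homology used to define $W_*$.

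The canonicity footnote's caveat tells me the delicate point: $\var_\gamma$ genuinely depends on the chosen realization $g_\gamma$ (not merely on $\gamma$), so I must \emph{not} try to prove independence; I only need that the displayed $g_\gamma$ determines \emph{some} such $\var_\gamma$ making the square commute. The main obstacle is therefore not any deep topology but bookkeeping: verifying at chain level that the retraction $r$ from the proof of \cref{lem:var-loc-dom-iso} is compatible with $g_\gamma$ (both preserve $A$ and $B$, both fix $\partial W$ pointwise) so that the two composites in the square are equal on the nose up to a controlled boundary, and that the decomposition $c=c_{\text{in}}+c_{\text{out}}$ can be chosen with $c_{\text{out}}$ transverse to $\partial W$ (possible because $\partial W$ is transverse to $A\cup B$). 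Once the chain-level identities $(g_\gamma)_\#\circ r_\# \simeq r_\#\circ (g_\gamma)_\#$ on the relevant cycles and $\iota_\#\circ r_\#$ is homologous to the inclusion are established, commutativity of the square is immediate, and well-definedness of $\var_\gamma$ on homology (independence of the cycle representative within $\overline W$) follows from functoriality of $(g_\gamma)_\#-\id$ together with the observation that $(g_\gamma)_\#$ sends boundaries to boundaries and the $\partial W$-boundary is untouched.
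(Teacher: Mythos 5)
Your construction of $\var_\gamma$ is the right one and your commutativity argument is conceptually sound, but your proof is organized differently from the paper's and has one technical flaw in the decomposition step.

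The paper does not explicitly build $\var_\gamma$ first and then chase; instead it verifies two factorizations separately. For the $W_*$-side it uses the long exact sequence of the triple $B_t\setm A_t\subset(B_t\cup W^{\setcompl})\setm A_t\subset X\setm A_t$ to identify $\ker W_*$ with the image of $H_\bullet((B_t\cup W^{\setcompl})\setm A_t,B_t)$, excises and retracts to identify that group with $H_\bullet(W^{\setcompl}\setm A_t,B_t)$, and concludes $\Var_\gamma$ annihilates $\ker W_*$ because $g_\gamma|_{W^{\setcompl}}=\id$. For the $\iota_*$-side it represents $h$ by a chain $x+y$ subordinate to the \emph{open} cover $X=W'\cup(X\setm\overline W)$ (with $W'\supsetneq\overline W$ the collar-thickened ball from \cref{lem:var-loc-dom-iso}), observes $g_{\gamma*}y=y$, and uses the retraction-compatibility $W'\to\overline W\to\overline W\setm U\subseteq W$ to land in $H_\bullet(W\setm A_t,B_t)$. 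Your single chain-level chase in effect merges both halves, and your explicit definition of $\var_\gamma$ as $(g_{\gamma*}-\id)$ followed by the retraction isomorphism $H_\bullet(\overline W\setm A_t,B_t)\cong H_\bullet(W\setm A_t,B_t)$ makes the ``canonically determined'' claim concrete in a way the paper leaves somewhat implicit; that is a genuine strength of your approach.

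The flaw is your decomposition $c=c_{\mathrm{in}}+c_{\mathrm{out}}$ with $c_{\mathrm{out}}$ ``supported in $W^{\setcompl}$''. You cannot achieve this by barycentric subdivision against the pair $\overline W,\,W^{\setcompl}$: these are closed, their interiors $W$ and $X\setm\overline W$ miss $\partial W$, so a simplex crossing $\partial W$ can end up in neither piece. The correct move is exactly the paper's: subdivide against the open cover $W'\cup(X\setm\overline W)$, write $c=x+y$ with $x$ in $W'\setm A_t$ and $y$ in $X\setm(\overline W\cup A_t)$, note $g_{\gamma\#}y=y$ (since $X\setm\overline W\subseteq W^{\setcompl}$), and then use that $H_\bullet(W\setm A_t,B_t)\to H_\bullet(W'\setm A_t,B_t)$ is an isomorphism. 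A related imprecision: $g_\gamma$ is the identity on $\partial W$ and on $X\setm W$, but not on any open neighbourhood of $\partial W$ inside $W$; your phrase ``agree on a neighbourhood of $\partial W$'' overstates what is available. Fortunately the argument only needs $g_{\gamma\#}a=a$ for chains $a$ supported on $\partial W$ itself, so the conclusion survives, but the phrasing should be corrected. With these two adjustments your proof goes through.
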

\begin{proof}
    The triple
    $B_t \setm A_t \subset  (B_t\cup W^{\setcompl}) \setm A_t \subset X \setm A_t$
    gives a long exact sequence that identifies the kernel of $W_*$ with the image of
    \begin{equation*}
        j_*\colon H_{\bullet}((B_t\cup W^{\setcompl}) \setm A_t,B_t) \longrightarrow H_{\bullet}(X\setm A_t,B_t).
    \end{equation*}
    Analogously to the proof of \cref{lem:var-loc-dom-iso}, we can excise all of $\overline{W}\cap B_t\setm A_t$ from the domain of $j_*$, except for a layer $U\cap B_t\setm A_t\cong[0,1)\times(\partial W\cap B_t\setm A_t)$ in a collar neighbourhood of $\partial W$ inside $W$ (\cref{fig:excision2}).
\begin{figure}
    \centering
    \includegraphics[scale=.23]{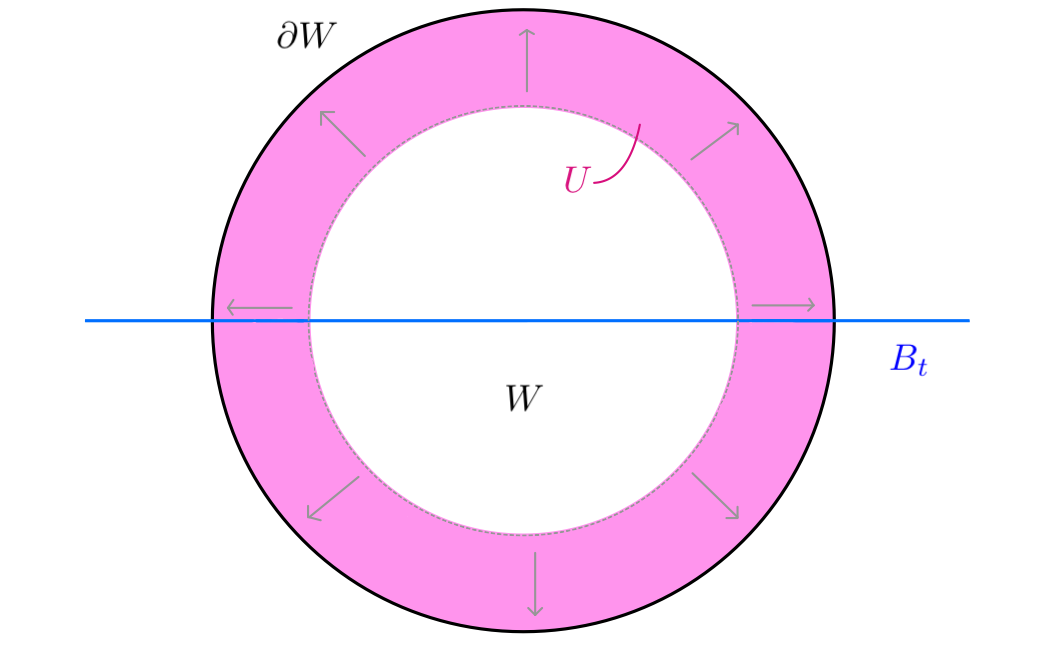}%
    \caption{Collar neighbourhood $U\cong[0,1)\times \partial W \subset \overline W$ used for the excision argument in the proof of \cref{thm:loc-factorization}.}%
    \label{fig:excision2}%
\end{figure}
    The deformation retraction 
    \begin{equation*}
        U \cap B_t \setm A_t \longrightarrow \partial W\cap B_t\setm A_t
    \end{equation*}
    then identifies the domain of $j_*$ with $H_{\bullet}(W^{\setcompl}\setm A_t,B_t)$. Every $h\in\ker W_*$ can thus be written as $h=j_* \sigma$ where $\sigma\in H_{\bullet}(W^{\setcompl}\setm A_t,B_t)$ has support inside $W^{\setcompl}$. This has variation
    \begin{equation*}
        \Var_{\gamma} h = (g_{\gamma*}-\id) j_* \sigma = j_* (g_{\gamma}|_{W^{\setcompl}\setm A_t}-\id)_*\sigma = j_* 0 \sigma = 0,
    \end{equation*}
    and we conclude that indeed, $\Var_{\gamma}$ factors through $W_*$. Now consider again the thicker $W'\supset W$ from \cref{lem:var-loc-dom-iso}. Since $X=W'\cup (X\setm \overline{W})$ is an open cover, we can represent every $h\in H_{\bullet}(X\setm A_t,B_t)$ by a chain $h=x+y$ with $x$ and $y$ supported in $W'\setm A_t$ and $X\setm(\overline{W}\cup A_t)$, respectively. Since $g_{\gamma*} y=y$, we see that $\Var_{\gamma} h=g_{\gamma*}x - x$ takes values in $H_{\bullet}(W'\setm A_t,B_t)$. But $H_{\bullet}(W\setm A_t,B_t)\rightarrow H_{\bullet}(W'\setm A_t,B_t)$ is an isomorphism, because the deformation retracts $W'\rightarrow \overline{W} \rightarrow \overline{W}\setm U\subseteq W$ (from \cref{lem:var-loc-dom-iso} and the above collar $U$ inside $W$) are compatible with $A_t$ and $B_t$. This proves the factorization of $\Var_{\gamma}$ through $\iota_*$.
\end{proof}

Here it is important to note that $\var_\gamma$ is not simply a localized version of $\Var_\gamma$; it does not carry the same information. The crucial point is that the latter \emph{factorizes} by means of $W_*$ and $\var_\gamma$, not that local and global variations are necessarily equal. See \cref{eg:pureA-nm=11}/\cref{fig:W-mono-quad} and \cref{eg:AB-nm=12}/\cref{fig:W-mono-lin} for examples where the two are indeed different. These highlight a general fact that holds for the \textit{simple pinch singularities} that we consider in the next section: For linear pinches the variation always vanishes in $H_\bullet(\overline{W}\setm A,\partial W \cup B)$, for quadratic pinches the variation vanishes if $n-m$ is odd.

\begin{figure}
   \centering%
\begin{tikzpicture}[scale=1]
\node[] (m) at (0,0) {};
\node[] (t1) at (0,1) {};
\node[] (t2) at (0,-1) {};
\draw[black,decoration={markings, mark=at position 0.5 with {\arrow{<}}},postaction={decorate}] (-2,0) -- (-1.5,0);
\draw[black,decoration={markings, mark=at position 0.5 with {\arrow{<}}},postaction={decorate}] (1.5,0) -- (2,0) node[right,xshift=-.3cm,yshift=0.2cm] {$\scriptstyle{h}$};
\draw[olive!15!green,decoration={markings, mark=at position 0.5 with {\arrow{<}}},postaction={decorate}] (-1.5,0) -- (1.5,0) node[left,xshift=0cm,yshift=0.2cm] {$\scriptstyle{W_*h}$};
\draw[blue] (m) circle [radius=1.5] node[above,xshift=1.5cm,yshift=1cm] {$\scriptstyle{ \partial W }$};
\filldraw[red] (t1) circle (0.03) node[below] {};
\filldraw[red] (t2) circle (0.03) node[above] {};
\draw[red,dashed,decoration={markings, mark=at position 1 with {\arrow{>}}},postaction={decorate}] (t1) arc[radius = 1cm, start angle= 90, end angle= 150] (t2) {};
\draw[red,dashed,decoration={markings, mark=at position 1 with {\arrow{>}}},postaction={decorate}] (t2) arc[radius = 1cm, start angle= 270, end angle= 330] (t1) {};
\end{tikzpicture}
\quad 
\raisebox{1cm}{$g_{\gamma_*}(h)=$}
\begin{tikzpicture}[scale=.7]
\node[] (m) at (0,0) {};
\node[] (t1) at (0,1) {};
\node[] (t2) at (0,-1) {};
\draw[black,decoration={markings, mark=at position 0.5 with {\arrow{<}}},postaction={decorate}] (-2,0) -- (-1.5,0);
\draw[black,decoration={markings, mark=at position 0.5 with {\arrow{<}}},postaction={decorate}] (1.5,0) -- (2,0);
\draw[olive!15!green,decoration={markings, mark=at position 0.5 with {\arrow{<}}},postaction={decorate}] plot [smooth] coordinates {(-1.5,0) (-1.3,-0.1) (-1,-.5) (-.5,-1) (0,-1.3) (.3,-1) (.5,-.5) (m) (-.5,.5) (-.3,1) (0,1.3) (.5,1) (1,.5) (1.3,0.1) (1.5,0)};
\draw[blue] (m) circle [radius=1.5];
\filldraw[red] (t1) circle (0.03) node[below] {};
\filldraw[red] (t2) circle (0.03) node[above] {};
\end{tikzpicture}
\raisebox{1cm}{$\cong $ }
\begin{tikzpicture}[scale=.7]
\node[] (m) at (0,0) {};
\node[] (t1) at (0,1) {};
\node[] (t2) at (0,-1) {};
\draw[black,decoration={markings, mark=at position 0.5 with {\arrow{<}}},postaction={decorate}] (-2,0) -- (-1.5,0);
\draw[black,decoration={markings, mark=at position 0.5 with {\arrow{<}}},postaction={decorate}] (1.5,0) -- (1.8,0);
\draw[olive!15!green,decoration={markings, mark=at position 0.5 with {\arrow{<}}},postaction={decorate}] (-1.5,0) -- (1.5,0);
\draw[blue] (m) circle [radius=1.5];
\filldraw[red] (t1) circle (0.03) node[below] {};
\filldraw[red] (t2) circle (0.03) node[above] {};
\draw[olive!15!green,decoration={markings, mark=at position 1 with {\arrow{>}}},postaction={decorate}] (.4,.9) arc[radius = .4cm, start angle= 0, end angle= 360];
\draw[olive!15!green,decoration={markings, mark=at position 1 with {\arrow{<}}},postaction={decorate}] (.4,-.9) arc[radius = .4cm, start angle= 0, end angle= 360];
\end{tikzpicture}

\vspace{.3cm}
\hspace{3.7cm}
\raisebox{1cm}{$g_{\gamma_*}(\color{olive!15!green} W_*h \color{black})= \ $ }
\begin{tikzpicture}[scale=.7]
\node[] (m) at (0,0) {};
\node[] (t1) at (0,1) {};
\node[] (t2) at (0,-1) {};
\draw[olive!15!green,decoration={markings, mark=at position 0.5 with {\arrow{>}}},postaction={decorate}] (-1.5,0) --  (1.5,0);
\draw[blue] (m) circle [radius=1.5];
\filldraw[red] (t1) circle (0.03) node[below] {};
\filldraw[red] (t2) circle (0.03) node[above] {};
\end{tikzpicture}
\raisebox{1cm}{ $\quad \cong \ $ }
\begin{tikzpicture}[scale=.7]
\node[] (m) at (0,0) {};
\node[] (t1) at (0,1) {};
\node[] (t2) at (0,-1) {};
\draw[olive!15!green,decoration={markings, mark=at position 0.5 with {\arrow{<}}},postaction={decorate}] (-1.5,0) -- (1.5,0);
\draw[blue] (m) circle [radius=1.5];
\filldraw[red] (t1) circle (0.03) node[below] {};
\filldraw[red] (t2) circle (0.03) node[above] {};
\draw[olive!15!green,decoration={markings, mark=at position 1 with {\arrow{>}}},postaction={decorate}] (.4,.9) arc[radius = .4cm, start angle= 0, end angle= 360];
\draw[olive!15!green,decoration={markings, mark=at position 1 with {\arrow{<}}},postaction={decorate}] (.4,-.9) arc[radius = .4cm, start angle= 0, end angle= 360];
\end{tikzpicture}
    \caption{Global vs.\ local monodromy: For a quadratic pinch $A=\{x^2=t\}$ and $\gamma$ a small loop around $t=0$ the variation $\Var_\gamma$ of $h\in H_\bullet(X\setm A)$ is given by the vanishing cycle $\nu \in H_\bullet(W\setm A)$ (top right). The variation of $W_*h$ is $-2W_*h$ which is homologous to $\nu$ in $H_\bullet(\overline{W}\setm A,\partial W)$, but not in $H_\bullet(\overline{W}\setm A)$ (bottom right).}%
    \label{fig:W-mono-quad}%
\end{figure}
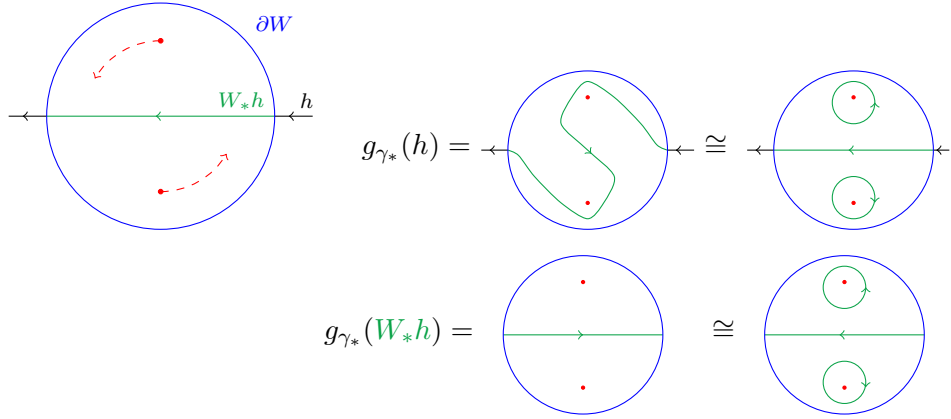

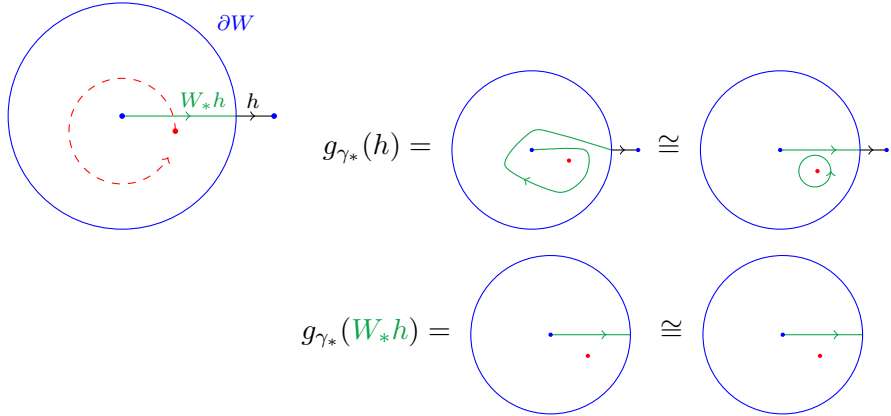
\begin{figure}
    \centering%
    \begin{tikzpicture}[scale=1]
\node[] (m) at (0,0) {};
\node[] (t1) at (.7,-.2) {};
\draw[black,decoration={markings, mark=at position 0.5 with {\arrow{>}}},postaction={decorate}] (1.5,0) -- (2,0) node[right,xshift=-.5cm,yshift=0.2cm] {$\scriptstyle{h}$};
\draw[olive!15!green,decoration={markings, mark=at position 0.6 with {\arrow{>}}},postaction={decorate}] (0,0) -- (1.5,0) node[left,xshift=0cm,yshift=0.2cm] {$\scriptstyle{W_*h}$};
\draw[blue] (m) circle [radius=1.5] node[above,xshift=1.5cm,yshift=1cm] {$\scriptstyle{ \partial W }$};
\filldraw[blue] (m) circle (0.03) node[below] {};
\filldraw[blue] (2,0) circle (0.03) node[below] {};
\filldraw[red] (t1) circle (0.03) node[above] {};
\draw[red,dashed,decoration={markings, mark=at position 1 with {\arrow{>}}},postaction={decorate}] (t1) arc[radius = .7cm, start angle= 0, end angle= 330] {};
\end{tikzpicture}
\quad 
\raisebox{1cm}{$g_{\gamma_*}(h)=$ }
\begin{tikzpicture}[scale=.7]
\node[] (m) at (0,0) {};
\node[] (t1) at (.7,-.2) {};
\draw[black,decoration={markings, mark=at position 0.5 with {\arrow{>}}},postaction={decorate}] (1.5,0) -- (2,0);
\draw[olive!15!green,decoration={markings, mark=at position 0.5 with {\arrow{>}}},postaction={decorate}] plot [smooth] coordinates {(m) (1,0) (1,-.5) (.5,-.8) (-.5,-.3) (0,.35) (.5,.3) (1.5,0) };
\draw[blue] (m) circle [radius=1.5];
\filldraw[blue] (m) circle (0.03) node[below] {};
\filldraw[blue] (2,0) circle (0.03) node[below] {};
\filldraw[red] (t1) circle (0.03) node[above] {};
\end{tikzpicture}
\raisebox{1cm}{$\cong $ }
\begin{tikzpicture}[scale=.7]
\node[] (m) at (0,0) {};
\node[] (t1) at (.7,-.4) {};
\draw[black,decoration={markings, mark=at position 0.5 with {\arrow{>}}},postaction={decorate}] (1.5,0) -- (2,0);
\draw[olive!15!green,decoration={markings, mark=at position 0.7 with {\arrow{>}}},postaction={decorate}] (0,0) -- (1.5,0);
\draw[olive!15!green,decoration={markings, mark=at position 1 with {\arrow{>}}},postaction={decorate}] (.95,-.4) arc[radius = .3cm, start angle= 0, end angle= 360];
\draw[blue] (m) circle [radius=1.5];
\filldraw[blue] (m) circle (0.03) node[below] {};
\filldraw[blue] (2,0) circle (0.03) node[below] {};
\filldraw[red] (t1) circle (0.03) node[above] {};
\end{tikzpicture}

\vspace{.3cm}
\hspace{3.3cm}
\raisebox{1cm}{$g_{\gamma_*}(\color{olive!15!green} W_*h \color{black})=  $ }
\begin{tikzpicture}[scale=.7]
\node[] (m) at (0,0) {};
\node[] (t1) at (.7,-.4) {};
\draw[olive!15!green,decoration={markings, mark=at position 0.7 with {\arrow{>}}},postaction={decorate}] (0,0) -- (1.5,0);
\draw[blue] (m) circle [radius=1.5];
\filldraw[blue] (m) circle (0.03) node[below] {};
\filldraw[red] (t1) circle (0.03) node[above] {};
\end{tikzpicture}
\raisebox{1cm}{ $\ \cong  $ }
\begin{tikzpicture}[scale=.7]
\node[] (m) at (0,0) {};
\node[] (t1) at (.7,-.4) {};
\draw[olive!15!green,decoration={markings, mark=at position 0.7 with {\arrow{>}}},postaction={decorate}] (0,0) -- (1.5,0);
\draw[blue] (m) circle [radius=1.5];
\filldraw[blue] (m) circle (0.03) node[below] {};
\filldraw[red] (t1) circle (0.03) node[above] {};
\end{tikzpicture}
    \caption{Global vs.\ local monodromy: For a linear pinch $A=\{x=t\}$, $B=\{x=0,2\}$ and $\gamma$ a small loop around $t=0$ the variation $\Var_\gamma$ of $h\in H_\bullet(X\setm A,B)$ is given by the vanishing cycle $\nu \in H_\bullet(W\setm A)$ (top right). However, the variation of $W_*h$ in $H_\bullet(\overline{W}\setm A,\partial W \cup B)$ vanishes (bottom right).}
    \label{fig:W-mono-lin}%
\end{figure}

\begin{rem}
All of the above remains valid in the special cases $A=\varnothing$ or $B=\varnothing$. Furthermore, it holds also if we replace $X\setm A_{t_0}$ by other subsets of $X$ such as intersections, unions and complements of elements of $\irrone{A}$ and $\irrone{B}$. 
\end{rem}

\subsection{Simple pinch singularities} \label{sec:simplepinchsings}
The Picard-Lefschetz formulas from \cite{FFLP,Pham:Singularities} apply only where the degeneration of $A \cup B$ over $L$ is sufficiently benign. They require that the neighbourhoods of the critical points can be modelled by hypersurface arrangements of the following types (we comment on generalizations in \cref{sec:arbitrary-simple-pinch}).
\begin{defn}\label{defn:pinches}
Consider a critical point $p\in cS$ of the stratified map $\pi$, on some stratum $S\subseteq D_1\cap\ldots\cap D_{\PLm}$ with complex codimension $\PLm\leq n+1$. Then $p$ is called a \textbf{linear} ($m=n+1$) or \textbf{quadratic} ($m\leq n$) \textbf{simple pinch}\footnote{This is the terminology from \cite{Pham:Singularities}. In \cite{FFLP}, a simple pinch is called \emph{zero pinch}.} if there exist holomorphic local coordinates $t=(t_1,\ldots,t_q)$ at $\pi(p)\in T$ and $y=(x,t)\colon U\rightarrow \CC^{n+q}$ at $p\in U\subseteq Y$, trivializing $\pi(x,t)=t$, such that the hypersurfaces $D_i\cap U\cong S_i\cap y(U)$ take the form
\begin{equation}\label{eq:localcoords}
\begin{split}
    S_{i} & = \{ x_i =0 \}, \quad\text{for}\quad i=1, \ldots, \PLm-1 , \\ 
    S_{\PLm} &= \{ x_1 + \ldots + x_{\PLm-1} + x_{\PLm}^2 + \ldots + x_n^2=t_1\}.
\end{split}
\end{equation}
\end{defn}
These hypersurfaces $\set{S_1,\ldots,S_{\PLm}}$ are smooth and intersect transversely.
There are no critical points $c(S^I)=\varnothing$ on the intersections $S^I=\bigcap_{i\in I} S_i$ of a proper subset $I\subset \set{1,\ldots,\PLm}$.
The only critical stratum is the intersection of all $\PLm$ hypersurfaces. Its critical set
\begin{equation*}\label{eq:simple-pinch-crit}\tag{$\sharp$}
    c(S^{1,\ldots,\PLm}) = \set{x=0} \cap \set{t_1=0} = \set{(0,\ldots,0,t_2,\ldots,t_q)}
\end{equation*}
is smooth and projects isomorphically onto the Landau variety $\set{t_1=0}$; each fibre with $t_1=0$ has a unique critical point with coordinates $(0,t)$.

For a linear simple pinch, the fibres of $S^{1,\ldots,n+1}=c S^{1,\ldots,n+1}$ are empty over $t_1\neq 0$, and consist of a single point (the critical point) over $t_1=0$.

For a quadratic simple pinch, each fibre of the stratum $S^{1,\ldots,\PLm}$, over $t$ with $t_1\neq 0$, contains a real sphere 
\begin{equation*}
    \Sphere^{n-m} = \set{u\in\RR^{n-m+1}\colon \norm{u}=1},
\end{equation*}
embedded as $u\mapsto (0,\ldots,0,u\cdot \sqrt{t_1},t)$. As $t_1\rightarrow 0$, these spheres shrink and eventually collapse, over $t_1=0$, to a point (the critical point). They are therefore called \emph{vanishing spheres}.
\begin{defn}\label{defn:simplecomponent}
A \textbf{simple critical value} is a point $t\in L$ such that all critical points in the fibre $\pi^{-1}(t)$ are linear or quadratic simple pinches.

A \textbf{simple pinch component} of $L$ is a component $\ell\in \irrone{L}$ such that a generic point $t\in \ell$ is simple.
\end{defn}
Suppose $t_c\in L$ is a simple critical value, and denote $P=\pi^{-1}(t_c)\cap \bigcup_{S\in\mathfrak{S}} cS$ the corresponding simple pinches. Every $p\in P$ is an isolated critical point in the compact fibre $Y_{t_c}$, hence $P$ is finite. Pick a chart $y_p\colon U_p\rightarrow \CC^{n+q}$ of the form \eqref{eq:localcoords} for each $p$, centred at $y_p(p)=(0,0)$. By rescaling these coordinates, and shrinking the neighbourhoods $U_p$, we can ensure that (\cref{fig:localization}):
\begin{itemize}
    \item $U_p$ and $U_{p'}$ are disjoint when $p\neq p'$,
    \item the chart $y_p(U_p)$ contains the closure $\overline{W}\times\set{0}$ of the open ball
    \begin{equation*}
        W=\set{\norm{x}^2=\abs{x_1}^2+\ldots+\abs{x_n}^2<1} \subset \CC^n.
    \end{equation*}
    \item the only hypersurfaces $D_i\in\irrone{D}$ that intersect $U_p$ are those that contain $p$ (that is, those that correspond to the $S_i$).
\end{itemize}

\begin{figure}
    \centering
    \includegraphics[width=0.9\textwidth]{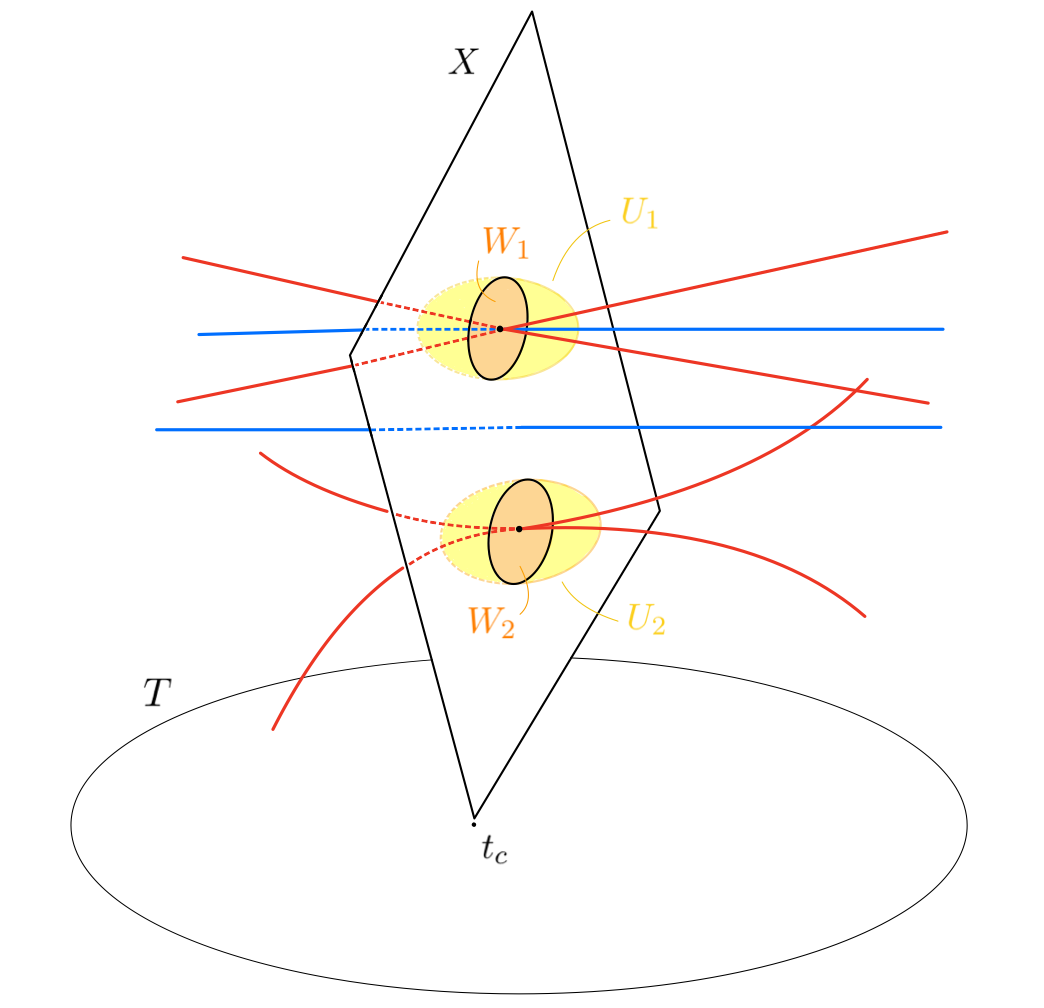}%
    \caption{Global and fibrewise localization.}%
    \label{fig:localization}%
\end{figure}

Consider now the corresponding open balls $W_p=y_p^{-1}(W\times\set{0}) \subset Y_{t_c}$ in the fibre over $t_c$. Observe that the unit sphere $\partial W=\Sphere^{2n-1}$ lies transverse to the arrangement \eqref{eq:localcoords}. It follows that the closure of the disjoint union
\begin{equation*}
    W_{P} \defas \bigsqcup_{p\in P} W_p \;\subset\; Y_{t_c}
\end{equation*}
embeds as a compact submanifold, whose boundary intersects $\irrone{D}$ transversely. Furthermore, $W_{P}\supset P$ contains all critical points over $t_c$. In fact, since $P$ is finite and the simple pinches \eqref{eq:simple-pinch-crit} depend continuously on $t$, we conclude that every local trivialization $\pi^{-1}(V)\cong X\times V$ of $Y$ at $t_c\in V\subseteq T$ can be restricted to an open $V'\subseteq V$, still containing $t_c$, such that $W_{P}\times V'$ contains all critical points over $V'$ (\cref{fig:localization}).

\begin{cor}\label{cor:sum-of-varp}
    Let $t_c \in L$ a simple critical value, and $[\gamma]\in\pi_1(T\setm L,t_0)$ a class that can be represented by a small loop near $t_c$. Then the variation along $\gamma$ decomposes into a sum
\begin{equation}\label{eq:sum-of-varp}
    \Var_{\gamma} = \sum_{p\in P} \Var_p
    \quad\text{where}\quad
    \Var_p = \iota_{p*} \var_p W_{p*}.
\end{equation}
\end{cor}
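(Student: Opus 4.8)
The plan is to reduce the statement to repeated application of the factorization in \cref{thm:loc-factorization}, applied with $W=W_P$ the disjoint union of the local balls, and then to decompose the resulting local variation across the connected components $W_p$. First I would note that, by the localization discussion preceding the corollary, the ambient isotopy $g_{\gamma}$ realizing the monodromy of a small loop $\gamma$ near $t_c$ can be chosen (via \cref{lem:localization}, applied to $W=W_P$, whose boundary $\bigsqcup_{p}\partial W_p$ is smooth and transverse to $A\cup B$ by construction) to be the identity outside $W_P$. This is legitimate precisely because $W_P\times V'$ contains all critical points over a neighbourhood $V'$ of $t_c$, which is the hypothesis of \cref{lem:localization}. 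With such a $g_\gamma$ in hand, \cref{thm:loc-factorization} gives a commuting square
\begin{equation*}
\Var_{\gamma} = \iota_{*}\circ \var_{\gamma}\circ (W_P)_{*},
\end{equation*}
where $\iota\colon (W_P\setm A_t, W_P\cap B_t)\to (X\setm A_t,B_t)$ and $(W_P)_*\colon H_\bullet(X\setm A_t,B_t)\to H_\bullet(\overline{W_P}\setm A_t,\partial W_P\cup B_t)$.

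The second step is to observe that, since $W_P=\bigsqcup_{p\in P}W_p$ is a \emph{disjoint} union, all three homology groups involved split as direct sums over $p$: for an open (or compact) manifold that is a disjoint union of pieces, relative homology is the direct sum of the relative homologies of the pieces (the pairs $(\overline{W_p}\setm A_t,\partial W_p\cup B_t)$ and $(W_p\setm A_t, W_p\cap B_t)$ are pairwise disjoint, so excision/additivity applies). Under this splitting the trace map decomposes as $(W_P)_* = \bigoplus_{p} W_{p*}$ — this is immediate from the definition of the $W$-trace in \eqref{eq:wtrace} together with the fact that each local excision argument in \cref{lem:var-loc-dom-iso} can be carried out on each $W_p$ separately. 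Likewise $\iota_* = \bigoplus_p \iota_{p*}$, since $\iota$ restricted to the $p$-th summand is exactly the inclusion $\iota_p$. Finally, because $g_\gamma$ is the identity outside $W_P$ and each $W_p$ is a connected component of $W_P$, the isotopy preserves each $W_p$ (it cannot move a point of $W_p$ across the gap into $W_{p'}$), so the induced local variation $\var_\gamma$ is block-diagonal with respect to the decomposition, i.e. $\var_\gamma = \bigoplus_p \var_p$ where $\var_p$ is precisely the local variation operator \eqref{eq:varlocal} attached to the chart $y_p$ and the ball $W_p$. Composing the three block-diagonal maps yields
\begin{equation*}
\Var_{\gamma} = \iota_{*}\,\var_{\gamma}\,(W_P)_{*} = \sum_{p\in P}\iota_{p*}\,\var_p\,W_{p*} = \sum_{p\in P}\Var_p,
\end{equation*}
which is the claim.

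The routine part is the bookkeeping of the direct-sum decompositions; the only genuine point requiring care — and the main potential obstacle — is the claim that a single ambient isotopy $g_\gamma$ can be arranged to be supported in $W_P$ \emph{while simultaneously} respecting the product structure near each $p$ so that it restricts, component by component, to the model isotopy underlying $\var_p$. This is where one must invoke that $W_P$ has been built from charts of the standard form \eqref{eq:localcoords} with the unit sphere transverse to the arrangement, so that \cref{lem:localization} (an isotopy-extension argument) applies with $W=W_P$ and the resulting $\var_\gamma$ on the $p$-th summand agrees with the $\var_p$ defined intrinsically from the local model. One should also remark that the decomposition is independent of the choice of $g_\gamma$: although the individual $\var_p$ may depend on $g_\gamma$ (as warned in the footnote to \cref{thm:loc-factorization}), the total $\Var_\gamma$ does not, and the equality \eqref{eq:sum-of-varp} holds for any admissible choice. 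I would close by noting that the hypothesis ``$t_c$ simple'' is used only to guarantee that $P$ is finite and that each $W_p$ can be taken of the standard model shape; the formula itself is purely a consequence of disjointness plus \cref{thm:loc-factorization}.
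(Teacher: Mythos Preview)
Your proposal is correct and follows essentially the same approach as the paper: apply \cref{lem:localization} with $W=W_P$ to obtain an isotopy supported in $W_P$, invoke \cref{thm:loc-factorization} for the factorization $\Var_\gamma=\iota_*\var_\gamma (W_P)_*$, and then use the disjoint-union splitting of homology together with $g_\gamma(W_p)=W_p$ to see that $\var_\gamma$ is block-diagonal. One minor remark: your ``main potential obstacle'' is slightly overstated---there is no pre-existing ``model isotopy'' that $g_\gamma|_{W_p}$ must match; rather, $\var_p$ is \emph{defined} as the restriction of $\id-g_{\gamma*}$ to the $p$-th summand (the paper makes this explicit by introducing $g_p$ with $g_p|_{W_p}=g_\gamma|_{W_p}$ and $g_p=\id$ elsewhere), so no compatibility check is needed beyond the block-diagonality you already established.
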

\begin{proof}
Apply \cref{lem:localization} to $W_P$ to find an ambient isotopy $g_{\gamma}$ of $D_{t_0}$ inside $X$, so that $g_{\gamma}$ is the identity outside $W_P$. By continuity, $g_{\gamma}(W_p)=W_p$ maps each ball to itself. Therefore, in the splitting
\begin{equation*}\xymatrix{
    \bigoplus_{p\in P} H_{\bullet}(\overline{W_p}\setm A,\partial W_p\cup B) \ar@{=}[r] \ar[d] & H_{\bullet}(\overline{W_P}\setm A,\partial W_P\cup B) \ar[d]^{\var_{\gamma}=\id-g_{\gamma*}} \\
    \bigoplus_{p\in P} H_{\bullet}(W_p\setm A, B) \ar@{=}[r] & H_{\bullet}(W_P\setm A, B)
}\end{equation*}
from the disjoint union $\overline{W_P}=\bigsqcup_p \overline{W_p}$, the map $\var_{\gamma}$ from \cref{thm:loc-factorization}, viewed as a matrix (arrow on the left), is diagonal: $\var_{\gamma}=\bigoplus_p \var_p$. The entry $\var_p$ is the restriction of $\id-g_{\gamma*}$ to $W_p$ and thus equal to the factorization of the variation $\Var_p = \id-g_{p*}$ induced by the homeomorphism $g_p\colon X\rightarrow X$ defined by $g_p(x)=g_{\gamma}(x)$ for $x\in W_p$ and $g_p(x)=x$ otherwise.
\end{proof}

For a simple pinch $p\in cS$ on a stratum $S$ of codimension $\PLm$, let $J\subseteq\irrone{A}$ and $K\subseteq\irrone{B}$ denote the $\PLm=\abs{J}+\abs{K}$ hypersurfaces in $A$ and $B$, respectively, that contain the pinch: $p\in D^{J\sqcup K}$. We chose $W_p$ so small that the pairs
\begin{align*}
    (W_p\setm A,B) &= (W_p\setm A_J,B_K) \quad\text{and} \\
    (\overline{W_p}\setm A,\partial W_p\cup B) &= (\overline{W_p}\setm A_J,\partial W_p\cup B_K)
\end{align*}
only meet precisely those hypersurfaces.\footnote{Here we write, as for the homology groups, simply $(W\setm A,B)$ for the pair $(W\setm A,W\cap B\setm A)$ etc.\ (the second element of a pair is always to be viewed as a subspace of the first).} We have thus reduced the computation of the variation $\Var_{\gamma}$, at least for small loops around Landau singularities $\ell\subseteq L$ with a simple critical value $t_c\in\ell$, to a calculation of $\var_p$ for the arrangement \eqref{eq:localcoords} within the unit ball $W\subset\CC^n$. The coordinates $y_p$ translate $J\sqcup K$ into a bipartition of $\set{S_1,\ldots,S_{\PLm}}$.

\subsection{Leray's residue and (co)boundary maps} \label{s:leray}
To state the relative Picard-Lefschetz theorem, we employ two operations on relative homology groups: the \textbf{partial boundary} $\partial_i$ and the \textbf{relative coboundary} $\fibSphere_i$. These were constructed by Leray in \cite{Leray:CauchyIII} and a review is provided in \cref{sec:relative-residues}.

In summary, we assume that $\irrone{D}=(D_1,D_2,\ldots)$ is a finite transverse family of smooth complex hypersurfaces $D_i \subset X$ inside a complex manifold $X$. For any index set $I$, we denote the corresponding union and intersection by
\begin{equation*}
    D_I = \bigcup_{i\in I} D_i
    \qquad\text{and}\qquad
    D^I = \bigcap_{i\in I} D_i.
\end{equation*}
Then for disjoint index sets $I\cap J=\varnothing$, and any $i\in J$, Leray's partial boundary comes from the usual boundary map in relative homology, but only keeps the piece that lies in $D_i\subseteq D_J$. This partial boundary is denoted
\begin{equation*}
    \partial_i\colon H_{\bullet}(X\setm D_I,D_J)
    \longrightarrow H_{\bullet-1}(D_i\setm D_I,D_{J-i})
\end{equation*}
and it fits into a long exact sequence of the form
\begin{equation}\label{eq:exact_boundary_seq}
    \cdots \rightarrow
    H_{\bullet}(X\setm D_I,D_{J-i}) \rightarrow
    H_{\bullet}(X\setm D_I,D_J) \xrightarrow{\partial_i}
    H_{\bullet-1}(D_i\setm D_I,D_{J-i}) \rightarrow \cdots
\end{equation}
where the other two maps are the natural inclusions of pairs. 

Leray's relative coboundary takes a chain in $D_i$ (for some $i\in I$) and fibres it in circles through a tubular neighbourhood around $D_i$. This yields a homomorphism
\begin{equation*}
\fibSphere_i \colon H_\bullet(D_i\setm D_{I-i},D_J) \longrightarrow H_{\bullet + 1 }(X\setm D_I,D_J)
\end{equation*}
which also fits into a long exact sequence
\begin{equation*}
    \cdots \rightarrow
    H_{\bullet}(X\setm D_{I-i},D_J) \xrightarrow{\varpi_i}
    H_{\bullet-2}(D_i\setm D_{I-i},D_J) \xrightarrow{\fibSphere_i}
    H_{\bullet-1}(X\setm D_I,D_J) \rightarrow \cdots
\end{equation*}
where the map $\varpi_i$ intersects a transverse chain with $D_i$.
For \emph{multiple} indices $K=\set{i_1,\ldots,i_r}\subseteq J$, we denote the corresponding \emph{iterated} boundaries and coboundaries as
\begin{align*}
    \partial_K &= \partial_{i_1}\!\cdots\partial_{i_r} \colon H_{\bullet}(X\setm D_I,D_{J}) \longrightarrow H_{\bullet-r}(D^K\setm D_I,D_{J-K})
    \quad\text{and} \\
    \fibSphere_K &= \fibSphere_{i_1}\!\cdots \fibSphere_{i_r} \colon H_{\bullet}(D^K\setm D_{I},D_{J-K})\longrightarrow H_{\bullet+r}(X\setm D_{I+K}, D_{J-K}).
\end{align*}
The ordering of these compositions only affects the overall sign, since for $i\neq j$, the maps $\partial_{i}\partial_j=-\partial_j\partial_i$ and $\fibSphere_i\fibSphere_j=-\fibSphere_j\fibSphere_i$ and $\partial_i\fibSphere_j=-\fibSphere_j\partial_i$ anticommute.
\begin{rem}
    On the level of differential forms (de Rham cohomology), the Leray coboundary is dual to Leray's residue map\footnote{In the category of mixed Hodge structures, a Tate twist factor $(-1)=H_{\dR}^1(\CC^{\times})=\CC$ appears on the right, to account for the weight of the factor $2\ipi$ in \eqref{eq:residue-theorem}.}
\begin{equation*}
    \Res_i \colon H^\bullet_{\dR}(X \setm D_I,D_J) \longrightarrow H_{\dR}^{\bullet - 1 }(D_i\setm D_{I-i},D_J).
\end{equation*}
The latter generalizes the Cauchy residue of a function: for any chain $\sigma$ in $D_i\setm D_{I-i}$ with boundary in $D_J$, and any closed smooth form $\omega\in\Omega^{\bullet}(X\setm D_I)$ that vanishes on $D_J$, their (co)homology classes $[\sigma]$ and $[\omega]$ integrate to
\begin{equation}\label{eq:residue-theorem}
    \int_{\fibSphere_i [\sigma]} [\omega] = 2\ipi \cdot \int_{[\sigma]} \Res_i [\omega].
\end{equation}
\end{rem}

\subsection{Vanishing chains} \label{ss:vanishers}
 
The Picard-Lefschetz theorem describes the variation $\Var_p$ at a linear or quadratic simple pinch, in terms of certain homology classes of the arrangement \eqref{eq:localcoords}. The example in \cref{ss:Li1} illustrates these classes in dimension $n=1$; illuminating figures for higher dimensions can be found in \cite[Fig.~V.2]{Pham:Singularities} and \cite[Figs.~I.14--16]{Vassiliev:AppliedPL}.
 
From now on we employ the localization discussed in the context of small and simple loops (\cref{def:small-simple}), that is, we assume $T=\CC$ and $\ell=\set{0}$.
 
\begin{defn}\label{defn:van-cyc}
    Let $p\in cS$ denote a linear or quadratic simple pinch on a stratum $S\subseteq A^J\cap B^K$ of codimension $\PLm$. Let $A_j\in\irrone{A}$ ($j\in J$) and $B_k\in\irrone{B}$ ($k \in K$) denote the hypersurfaces containing $S$. Recall that, in coordinates of the form \eqref{eq:localcoords}, these define a bipartition $J\sqcup K=\set{1,\ldots,\PLm}$. Let $0<t<1$.
\begin{enumerate}
    \item The \textbf{vanishing cell} $\vcell \in H_n(W_p,A\cup B)_t \cong H_n(W,S_{1,\ldots,\PLm})_{t}$ is the class of the compact region in \emph{real} coordinates $x\in\RR^n$, bounded by all $\PLm$ hypersurfaces $S_{i}$, in the fibre over $t$:
    \begin{equation*}\quad\quad\quad
        \set{x_1,\ldots,x_{\PLm-1} \geq 0} \cap  \set{x_{\PLm}^2+\ldots+x_n^2\leq t-x_1-\ldots-x_{\PLm-1}} \subset \RR^n.
    \end{equation*}
    \item The \textbf{vanishing cycle} $\vcyc\in H_n(W_p\setm A,B)_t \cong H_n(W\setm S_J,S_K)_t$ is obtained from the vanishing cell by taking first the boundary in, and then the Leray coboundary around, each hypersurface $A_j$:
    \begin{equation}\label{eq:van-cycle}
        \vcyc=
        \Big(\prod_{j\in J}\delta_{A_j}\partial_{A_j}\Big)\:
        \vcell
        .
    \end{equation}
    \item The \textbf{dual vanishing cycle} $\dvcyc\in H_n(W_p\setm B,A)_t \cong H_n(W\setm S_K,S_J)_t$ is obtained in the same way, but swapping the roles of $A$ and $B$:\footnote{The sign $(-1)^{\abs{K}}$ is included here to simplify the signs in the Picard-Lefschetz theorem.}
    \begin{equation*}
        \dvcyc=(-1)^{\abs{K}} \cdot \Big(\prod_{k\in K}\delta_{B_k}\partial_{B_k}\Big)\: \vcell
        .
    \end{equation*}
    \item Their images under the inclusion $\iota_p\colon W_p\hookrightarrow Y_t$ are denoted
    \begin{equation*}\quad\quad\quad\quad\quad
        \vcyc_p = \iota_{p*} \vcyc \in H_n(Y\setm A,B)_t
        \quad\text{and}\quad
        \dvcyc_p = \iota_{p*} \dvcyc \in H_n(Y\setm B,A)_t.
    \end{equation*}
\end{enumerate}
\end{defn}
\begin{rem}\label{rem:vcyc-ordering}
    The operators $\fibSphere_j \partial_j$ in \eqref{eq:van-cycle} are even, hence $\vcyc$ and $\dvcyc$ depend only on the orientation of $\vcell$, but not on any implicit ordering of $J$ or $K$. In fact, for \emph{any} ordering $J=\set{j_1,\ldots,j_r}$, we have $\vcyc=(-1)^{\abs{J}(\abs{J}-1)/2} \cdot \fibSphere_J \partial_J \vcell$ in terms of $\fibSphere_J=\fibSphere_{j_1}\!\cdots\fibSphere_{j_r}$ and $\partial_J=\partial_{j_1}\!\cdots\partial_{j_r}$.
\end{rem}
The vanishing cell $\vcell$ varies continuously with $t\in(0,1)$ and it literally vanishes as $t\rightarrow 0$; hence the name. 
For a linear pinch ($\PLm=n+1$), $\vcell$ is the simplex bounded by $x_i\geq 0$ (from $S_i$ for $i\leq n$) and $x_1+\ldots+x_n\leq t$ from
\begin{equation*}
    S_{n+1}=\{ x_1 + \ldots + x_n=t \}.
\end{equation*}
For a quadratic pinch ($\PLm\leq n$), $\vcell$ is a family of $n-\PLm+1$ dimensional disks (coordinates $x_{\PLm},\ldots,x_n$) over an $\PLm-1$ dimensional simplex $\set{x_1+\ldots +x_{\PLm-1}\leq t, x_i\geq 0}$, with the disks collapsing to a point over $x_1+\ldots+x_{\PLm-1}=t$. The maximal boundary of this cell,
\begin{equation*}
    \partial_{1,\ldots,\PLm}
    \vcell = \pm[\Sphere^{n-\PLm}] 
    \in
    H_{n-m}(W\cap S^{1,\ldots,\PLm}_t),
\end{equation*}
is the fundamental class of the \emph{vanishing sphere}. This name refers to the embedded spheres $\Sphere^{n-m}\hookrightarrow S^{1,\ldots,\PLm}_t\subset\CC^n$, $u \mapsto (0,\ldots,0,u\cdot \sqrt{t})$.
See \cite[Fig.~V.2]{Pham:Singularities} and \cite[Figs.~I.14--16]{Vassiliev:AppliedPL} for figures of these vanishing chains for $n\leq 3$ and $m\leq 4$.

\begin{rem}
    We treat the zero-dimensional sphere $\Sphere^0=\set{1,-1} \subseteq \RR$ as an oriented manifold with the orientation induced from the interval $(-1,1)\subseteq \RR$. So its fundamental class is $[\Sphere^0]=[\set{1}]-[\set{-1}] \in H_0(\Sphere^0)$.
\end{rem}
The vanishing cell $\vcell$ is defined only up to a sign (orientation). There is no canonical choice, because the natural orientations on $\RR^n$ are not compatible between different parametrizations \eqref{eq:localcoords}; e.g.\ swap $x_1\leftrightarrow x_2$. This sign is the only ambiguity, because the vanishing cell is a generator of the cyclic group
\begin{align*}
    \vcell & \in  H_n (W_p, A\cup B)_t \cong \ZZ.
\end{align*}
This isomorphism is reviewed in \cref{prop:PL-isos-JKnon0}. The (dual) vanishing cycles $\vcyc$ and $\dvcyc$ also depend on the orientation of $\vcell$, but nothing more (\cref{rem:vcyc-ordering}).
Hence the sign ambiguity cancels in the quadratic expression $\is{\dvcyc}{h}\cdot \vcyc$ in the Picard-Lefschetz formula.

\begin{lem}\label{lem:vcell-monodromy}
    The monodromy of the vanishing cell of a simple pinch, along a simple loop $\gamma(\tau)=t\cdot e^{2\ipi\tau}$ based at $t\in(0,1)$, is
    \begin{equation*}
        \gamma_{*} \vcell = (-1)^{n-\PLm+1}\cdot \vcell.
    \end{equation*}
\end{lem}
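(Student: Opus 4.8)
The plan is to compute $\gamma_*\vcell$ by exhibiting an explicit trivialization of the local model \eqref{eq:localcoords} over the loop $\gamma$, reading off the monodromy homeomorphism it produces, and observing that this homeomorphism fixes the vanishing cell set-theoretically while scaling its orientation by $(-1)^{n-\PLm+1}$. Since the monodromy on homology is independent of the choice of trivialization, any convenient one suffices. I would take the rescaling $y_i=x_i/t_1$ for $i=1,\ldots,\PLm-1$ and $y_j=x_j/\sqrt{t_1}$ for $j=\PLm,\ldots,n$, which for $t_1\neq 0$ is a biholomorphism carrying the arrangement \eqref{eq:localcoords} to the fixed model $\set{y_i=0}$ ($i<\PLm$), $\set{y_1+\cdots+y_{\PLm-1}+y_\PLm^2+\cdots+y_n^2=1}$, independent of $t_1$. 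Along $\gamma$, with the branch of $\sqrt{\gamma(\tau)/t}=e^{\ipi\tau}$ that equals $1$ at $\tau=0$, this exhibits the monodromy as the family $g_\tau(x)=(x_1 e^{2\ipi\tau},\ldots,x_{\PLm-1}e^{2\ipi\tau},x_\PLm e^{\ipi\tau},\ldots,x_n e^{\ipi\tau})$. Because the scaling factors all have modulus $1$, each $g_\tau$ preserves the ball $W$ and its boundary; it fixes each $S_i$ with $i<\PLm$, and it maps $S_\PLm$ over $t$ onto $S_\PLm$ over $\gamma(\tau)$, since the defining quadric scales by $e^{2\ipi\tau}$, matching $t_1\mapsto t_1 e^{2\ipi\tau}$. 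Hence $g_\tau$ is a stratum-preserving trivialization of the local bundle over the loop, so $\gamma_*=g_{1*}$ on $H_n(W_p,A\cup B)_t$.

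Next I would evaluate at $\tau=1$: since $e^{2\ipi}=1$ and $e^{\ipi}=-1$, the monodromy homeomorphism is the linear map $g_1(x)=(x_1,\ldots,x_{\PLm-1},-x_\PLm,\ldots,-x_n)$. It preserves the vanishing cell $\vcell=\set{x_1,\ldots,x_{\PLm-1}\geq 0}\cap\set{x_\PLm^2+\cdots+x_n^2\leq t-x_1-\cdots-x_{\PLm-1}}$ as a subset of $\RR^n$, because this region depends on $x_\PLm,\ldots,x_n$ only through $x_\PLm^2+\cdots+x_n^2$. On the real slice $\RR^n\subset\CC^n$ that carries $\vcell$ and whose orientation orients it, $g_1$ restricts to a diagonal map with $n-\PLm+1$ entries equal to $-1$, hence of real determinant $(-1)^{n-\PLm+1}$; therefore it reverses the orientation of $\vcell$ by exactly this sign and $\gamma_*\vcell=g_{1*}\vcell=(-1)^{n-\PLm+1}\vcell$.

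I do not anticipate a genuine obstacle; the care needed is only in checking that $g_\tau$ preserves the compact ball $\overline{W}$ and the entire arrangement (both forced by the unit-modulus factors) and in remembering that $\vcell$ is oriented via the \emph{real} coordinate space $\RR^n$, so that the relevant sign is the real determinant $(-1)^{n-\PLm+1}$ rather than the (unit) complex one. As sanity checks, for a linear pinch ($\PLm=n+1$) the formula gives trivial monodromy of $\vcell$, matching the trivialization $x\mapsto x$ used in \cref{ss:Li1} and \cref{eg:AB-nm=12}, while for the quadratic pinch with $n=\PLm=1$ it gives $\gamma_*\vcell=-\vcell$, matching the rescaling $x\mapsto -x$ in \cref{eg:pureA-nm=11} that reverses the segment $[-\sqrt t,\sqrt t]$.
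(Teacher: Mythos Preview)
Your proof is correct and follows essentially the same approach as the paper: the paper uses the identical trivialization $\phi_\tau(x)=(x_1 e^{2\ipi\tau},\ldots,x_{\PLm-1}e^{2\ipi\tau},x_\PLm e^{\ipi\tau},\ldots,x_n e^{\ipi\tau})$ and concludes by observing that $\phi_1(x)=(x_1,\ldots,x_{\PLm-1},-x_\PLm,\ldots,-x_n)$ acts by $(-1)^{n-\PLm+1}$ on the orientation $\td x_1\wedge\cdots\wedge\td x_n$. Your write-up is more detailed (the motivation via rescaling, the checks that $g_\tau$ preserves $W$ and the strata, and the sanity checks against the earlier examples), but the underlying argument is the same.
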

\begin{proof}
    The trivialization $\phi_{\tau}(x)=(x_1 e^{2\ipi\tau},\ldots,x_{\PLm-1} e^{2\ipi\tau},x_{\PLm} e^{\ipi\tau},\ldots,x_n e^{\ipi\tau})$ identifies the fibres over $\gamma$. It lifts the vanishing cell to a section $\phi_{\tau*}(\vcell)\in H_n(W,S_{1,\ldots,\PLm})_{\gamma(\tau)}$.
    After one revolution, $\phi_{1*}$ acts by $(-1)^{n-\PLm+1}$ on the orientation $\td x_1\wedge\ldots\wedge \td x_n$, because $\phi_1(x)=(x_1,\ldots,x_{\PLm-1},-x_{\PLm},\ldots,-x_n)$.
\end{proof}
This simple computation determines the variation of the vanishing cycles, because variation commutes with (co)boundary maps:\footnote{This follows from \eqref{eq:[(co)boundary,pushforwad]} in \cref{sec:coboundary-comm-functor}, because $\Var_{\gamma}=g_{\gamma*}-\id$ is determined by a push-forward under a (stratified) diffeomorphism $g_{\gamma}$.}
\begin{equation}\label{eq:variation-vcyc}
    \Var_{\gamma} \vcyc_{p} = \left[(-1)^{n-\PLm+1}-1\right]\cdot \vcyc_{p}
    =\begin{cases}
        0 & \text{if $n-m$ odd, and} \\
        -2 \vcyc_{p} & \text{if $n-m$ even.} \\
    \end{cases}
\end{equation}
Beware this does not imply that $\Var_{\gamma}=0$ for $n-m$ odd. Consider the map of pairs $(W_p\setm A,B)\rightarrow (\overline{W}_p\setm A,\partial W_p\cup B)$, which induces
\begin{equation}\label{eq:W*i*}
    W_*\iota_*\colon H_n(W_p\setm A,B) \longrightarrow H_n(\overline{W}_p\setm A,\partial W_p\cup B).
\end{equation}
We will see later that, for odd $n-m\geq 1$, this map is zero. Hence $\Var_{\gamma}\vcyc_p=0$ simply reflects the fact that the $W$-trace of $\vcyc_{p}$ is zero; we learn nothing about $\var_{p}$ in \eqref{eq:sum-of-varp}. The linear pinch $n-m=-1$ is more subtle.

\subsection{Picard-Lefschetz theorem} \label{ss:PLthm}
Several variants of the Picard-Lefschetz theorem can be found for example in \cite{FFLP,Lamotke:TopVarLef,Lamotke:HomIsoSing}.
The following relative version for hypersurface arrangements $D=A\cup B$ is stated in \cite{Pham:Singularities}, albeit without a complete proof, and with an inaccurate interpretation for linear pinches.

\begin{thm}\label{thm:PL}
Let $p \in D$ be a linear or quadratic simple pinch over a smooth point $\pi(p)\in L$ of the Landau variety.
Then the local variation
\begin{equation*}
    \var_p\colon H_d(\overline{W}_p\setm A,\partial W_p\cup B)_t \longrightarrow H_d(W_p\setm A,B)_t
\end{equation*}
can be expressed in terms of the vanishing cycles $\vcyc\in H_n(W_p\setm A,B)_{t}$ and $\dvcyc\in H_n(W_p\setm B,A)_{t}$ from \cref{defn:van-cyc}:
\begin{itemize}
    \item In degree $d\neq n$, $\var_p=0$. 
    \item In degree $d=n$, there exist integers $N(h)\in\ZZ$ such that
\begin{equation} \label{eq:pic}
    \var_p(h) = N(h) \cdot \vcyc.
\end{equation}
    \item This identity holds for the intersection numbers
\begin{equation} \label{eq:lef}
    N(h) = (-1)^{(n+1)(n+2)/2} \cdot \is{\dvcyc}{h}.
\end{equation}
\end{itemize}
\end{thm}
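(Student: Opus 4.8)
The plan is to reduce the statement to a purely local homological computation inside the unit ball $W \subset \CC^n$ equipped with the model arrangement \eqref{eq:localcoords}, and then to carry out that computation by iterated Leray (co)boundary manipulations, reducing the case of a general simple pinch to the classical Picard-Lefschetz formula for a quadric. First I would invoke the localization results already in place: by \cref{thm:loc-factorization} and \cref{cor:sum-of-varp}, everything happens in $W_p$, and by the choice of chart \eqref{eq:localcoords} we may assume $A = A_J$, $B = B_K$ with $J \sqcup K = \set{1,\ldots,\PLm}$. The vanishing of $\var_p$ in degrees $d \neq n$ should follow from the computation of all the relevant relative homology groups of the arrangement at a simple pinch (this is exactly what \cref{sec:homology-groups} provides, generalizing \cite{FFLP}): in degrees $\neq n$ the group $H_d(\overline{W}_p\setm A,\partial W_p\cup B)$ either vanishes or maps to zero under $\var_p$ because the monodromy $g_{\gamma}$ is isotopic to the identity there, so there is nothing to vary.

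The heart of the matter is degree $d = n$. Here I would first treat the ``pure quadratic'' case $J = K = \varnothing$ (so $\PLm = 0$, a single quadric $S_1 = \set{x_1^2 + \ldots + x_n^2 = t_1}$ after relabelling), where the classical Picard-Lefschetz theorem applies: $\var_p(h) = \pm \is{\dvcyc}{h}\,\vcyc$ with $\vcyc = \dvcyc$ the fundamental class of the vanishing sphere and the sign $(-1)^{(n+1)(n+2)/2}$ extracted by the standard intersection-form computation (done self-containedly in \cref{sec:homology-groups}). Then I would bootstrap to general $\PLm$: the key structural fact is that $\var_p$ commutes with the Leray coboundary maps $\fibSphere_{A_j}$ and the partial boundary maps $\partial_{B_k}$ (because $\Var_{\gamma}=g_{\gamma*}-\id$ is induced by a stratum-preserving diffeomorphism, cf.\ the footnote after \cref{lem:vcell-monodromy} and \cref{sec:coboundary-comm-functor}), together with the long exact sequences \eqref{eq:exact_boundary_seq} and its dual. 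Applying $\partial_{A_j}$ for $j \in J$ peels off the $A$-hypersurfaces one at a time, landing us in $H_n$ of the arrangement on $A^J$, which is a quadric relative to the $B$'s; dually, $\fibSphere_{B_k}$ for $k \in K$ peels off the $B$-hypersurfaces. After all $\PLm$ steps one reaches the pure quadric case, where the formula is known, and then $\fibSphere_{A_j}\partial_{A_j}$ (resp.\ the dual operators) transport the identity back up, carrying $\vcell$ to $\vcyc$ and $\dvcell$ to $\dvcyc$ by \cref{defn:van-cyc}. The intersection pairing $\is{\dvcyc}{h}$ is compatible with these transports by the adjunction/duality properties of $\is{\cdot}{\cdot}$ recorded in \cref{sec:verdier+intersection} (the swap $A \leftrightarrow B$ and the behaviour of $\is{\cdot}{\cdot}$ under $\fibSphere$ and $\partial$), which is precisely why the answer is expressed symmetrically in terms of $\vcyc$ and $\dvcyc$.

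The linear pinch $\PLm = n+1$ deserves separate care and I would handle it after the quadratic induction, since there is no quadric base case ($n - \PLm + 1 = 0$, the ``vanishing sphere'' is $\Sphere^0$). Here I would compute directly: the relevant groups $H_n(\overline{W}_p\setm A, \partial W_p \cup B)$ and $H_n(W_p\setm A,B)$ are computed in \cref{sec:homology-groups} and are small (cyclic or trivial depending on whether both $J$ and $K$ are nonempty), and \cref{lem:vcell-monodromy} with $\PLm = n+1$ gives $\gamma_* \vcell = -\vcell$ — but, crucially, as emphasized after \eqref{eq:variation-vcyc}, this does not make $\var_p$ vanish when $J,K \neq \varnothing$; one must track the $W$-trace carefully. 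The formula $\var_p(h) = N(h)\vcyc$ with $N(h) = (-1)^{(n+1)(n+2)/2}\is{\dvcyc}{h}$ then follows by the same peeling argument, now terminating at $n = \PLm - 1$, i.e.\ a $0$-dimensional transverse intersection where the ``vanishing sphere'' $\Sphere^0$ and its intersection index are elementary.

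\textbf{Main obstacle.} The delicate point is the bookkeeping of signs and orientations through the iterated $\fibSphere$-$\partial$ reductions — matching the sign $(-1)^{(n+1)(n+2)/2}$ produced by the quadric base case against the signs introduced by each coboundary step, by the reordering conventions of \cref{rem:vcyc-ordering}, by the orientation of $\Sphere^0$, and by the duality swap in the definition of $\is{\cdot}{\cdot}$ — and, relatedly, verifying rigorously that $\var_p$ genuinely commutes with the partial (co)boundaries at the chain level (not merely up to the image of some boundary). The conceptual content is straightforward once the framework of \cref{sec:relative-residues,sec:verdier+intersection,sec:homology-groups} is granted; the work is in getting every sign right and in the (somewhat subtle) linear-pinch $W$-trace analysis, where the naive expectation from the $S$-matrix literature is actually wrong.
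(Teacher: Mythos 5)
Your proposal matches the paper's proof architecture: localization, degree-shifting vanishing from the homology computations, commutativity of $\var$ with Leray (co)boundaries (justified by the smooth, stratum-preserving isotopy), reduction to the classical Picard--Lefschetz formula for a quadric, and separate treatment of the linear pinch via explicit base cases. Two details to tighten when formalizing: your ``$\PLm=0$'' base case should be the \emph{complete intersection} $S^{1,\ldots,\PLm}$ (the paper's Theorem~\ref{thm:PL-IJK} with $I=\set{1,\ldots,\PLm}$, $J=K=\varnothing$), and your peeling operators ``$\partial_{A_j}$'' do not type-check directly on $H_n(W_p\setm A,B)$---the paper makes the peeling precise precisely by generalizing to tripartitions $I\sqcup J\sqcup K$ (where $\partial_i$ and $\fibSphere_i$ move indices between $I$ and $K$ or $J$) and then propagating validity of the formula along these maps via the commutative diagram \eqref{eq:var-functorial} and the injectivity/surjectivity computations of Proposition~\ref{prop:PL-isos-JKnon0}. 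Your plan implicitly needs this generalization; making it explicit is what lets the intersection-pairing duality of \cref{sec:verdier+intersection} carry the sign bookkeeping through each step.
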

\begin{rem}\label{rem:lin-pinch-types}
    The theorem covers linear and quadratic pinches simultaneously, but they behave slightly differently. At a quadratic pinch, the variation map $\var_p$ is always non-zero. At a linear pinch $p$, we distinguish three cases:
    \begin{itemize}
        \item If $p\in A\setm B$ (``$A$ type''), then $\vcyc_p=0$ and thus $\var_p=0$.
        \item If $p\in B\setm A$ (``$B$ type''), then $\dvcyc_p=0$ and thus $\var_p=0$.
        \item If $p\in A\cap B$ (``mixed type''), then $\var_p \neq 0$.
    \end{itemize}
    At an $A$-type linear pinch, \eqref{eq:pic} holds for \emph{any} choice of $N(h)$. For all other simple pinches, the integer $N(h)$ in \eqref{eq:lef} is the \emph{unique} solution to \eqref{eq:pic}.
\end{rem}
For a trace $h=W_{p*}\, \sigma$ of some cycle $\sigma \in H_n(Y\setm A,B)_{t}$, the intersection \eqref{eq:lef} inside the ball $W_p$ can also be thought of as taking place in the fibre $Y_t$,
\begin{equation*} 
    \is{\dvcyc}{W_{p*}\, \sigma}
    = \is{\dvcyc_p}{\sigma},
\end{equation*}
because $\partial W_p$ is transverse to $D$. The local formulas globalize with \eqref{eq:sum-of-varp}:
\begin{cor}
    For a small loop $\gamma$ around a simple pinch component $\ell\in\irrone{L}$, the variation $\Var_{\gamma}$ on $H_n(Y\setm A,B)_{t}$ can be written as:
\begin{equation}\label{eq:sum-of-varp-PL}
    \Var_{\gamma} (\sigma) = (-1)^{(n+1)(n+2)/2} \cdot \sum_{p} \is{\dvcyc_p}{\sigma} \cdot \vcyc_p.
\end{equation}
\end{cor}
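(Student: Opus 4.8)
The plan is to combine the additive decomposition of the variation over the simple pinches in the critical fibre (\cref{cor:sum-of-varp}) with the local Picard-Lefschetz formula (\cref{thm:PL}), so that the corollary becomes a short bookkeeping step. First, I would choose the point $t_c\in\ell$ that $\gamma$ winds around to be generic: since $\ell$ is a simple pinch component, \cref{defn:simplecomponent} then guarantees that $t_c$ is a simple critical value, i.e.\ every critical point in the compact fibre $\pi^{-1}(t_c)$ is a linear or quadratic simple pinch, and there are only finitely many of them, say $p\in P$. Invoking the localization set-up of \cref{sec:simplepinchsings} (the balls $W_p$, the trace maps $W_{p*}$, the inclusion push-forwards $\iota_{p*}$), \cref{cor:sum-of-varp} gives
$\Var_{\gamma}=\sum_{p\in P}\Var_p$ with $\Var_p=\iota_{p*}\,\var_p\,W_{p*}$. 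Any two generic choices of $t_c$ on $\ell$ yield conjugate small loops, so this is exactly what is meant by a small loop around the component $\ell$.

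Next I would apply the local theorem pinch by pinch to a fixed class $\sigma\in H_n(Y\setm A,B)_t$. For each $p\in P$ set $h\defas W_{p*}\sigma\in H_n(\overline{W}_p\setm A,\partial W_p\cup B)_t$. We are in degree $d=n$, so \cref{thm:PL} applies and gives $\var_p(h)=N(h)\cdot\vcyc$ with $N(h)=(-1)^{(n+1)(n+2)/2}\is{\dvcyc}{h}$. Pushing forward along $\iota_p$ and recalling $\vcyc_p=\iota_{p*}\vcyc$ turns this into $\Var_p(\sigma)=(-1)^{(n+1)(n+2)/2}\is{\dvcyc}{W_{p*}\sigma}\cdot\vcyc_p$. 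Since $\partial W_p$ is transverse to $D$, the relative intersection number computed inside $W_p$ equals the one computed in the ambient fibre, $\is{\dvcyc}{W_{p*}\sigma}=\is{\dvcyc_p}{\sigma}$ (the compatibility of relative intersection numbers with traces and push-forwards from \cref{sec:verdier+intersection}). Summing over $p\in P$ yields \eqref{eq:sum-of-varp-PL}.

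The main subtlety to handle carefully is precisely this last compatibility of intersection numbers with the trace $W_{p*}$ and the push-forward $\iota_{p*}$; everything else is formal. A secondary point worth a sentence is that, although the footnote to \cref{thm:loc-factorization} warns that $\var_p$ may depend on the chosen realization $g_{\gamma}$ up to the kernel of $\iota_{p*}$, the composite $\Var_p=\iota_{p*}\,\var_p\,W_{p*}$ is genuinely well-defined and equals the $p$-summand of the true variation, so no ambiguity propagates into the formula; the same observation explains why the irrelevance of $N(h)$ at an $A$-type linear pinch (where $\vcyc=0$) is harmless.
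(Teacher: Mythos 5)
Your proof is correct and matches the paper's reasoning: the paper obtains the corollary by combining the decomposition $\Var_{\gamma}=\sum_{p}\Var_p$ from \cref{cor:sum-of-varp}, the local formula of \cref{thm:PL}, and the compatibility $\is{\dvcyc}{W_{p*}\sigma}=\is{\dvcyc_p}{\sigma}$ stated immediately before the corollary (justified, as you say, by the transversality of $\partial W_p$ to $D$). Your additional remark about the well-definedness of the composite $\Var_p=\iota_{p*}\,\var_p\,W_{p*}$ despite the footnote's caveat on $\var_p$ is a correct and welcome clarification, but it does not change the route.
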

\begin{rem}\label{rem:is-partialB}
    Let $A_j$ ($j\in J$) and $B_k$ ($k\in K$) denote the hypersurfaces that contain the pinch $p$. Iterating $\is{\fibSphere\dvcyc}{h}=(-1)^d\is{\dvcyc}{\partial h}$ from \eqref{eq:intersection-delta-partial}, with $d$ the degree of $h$, the intersection number reduces to
    \begin{equation*}
        \is{\dvcyc}{h}
        = (-1)^{n\abs{K}+\abs{K}(\abs{K}+1)/2}\cdot \is{\partial_K \vcell}{\partial_K h}.
    \end{equation*}
    The coefficient $N(h)$ can thus be computed on the manifold $\overline{M}=\overline{W_p}\cap B^K$ with its intersection pairing
    $
        H_{n-\abs{K}}(M,A') \times H_{n-\abs{K}}(\overline{M}\setm A',\partial M) \rightarrow \ZZ
    $
    for $A'=B^K\cap A_J$, after taking all $B_k$-boundaries of $\vcell$ and $h$.
\end{rem}

For the proof of \cref{thm:PL}, we fix a pinch $p$ (henceforth suppressing the subscript $_p$) and consider every disjoint triple $I\sqcup J\sqcup K\subseteq\set{1,\ldots,\PLm}$ of the hypersurfaces $S_1,\ldots,S_{\PLm}\subset\CC^n$ that contain $p$, as in \eqref{eq:localcoords}. The corresponding divisors $S_J\cap S^I\cap W$ and $S_K\cap S^I\cap W$ inside the manifold $W\cap S^I\subset \CC^n$ define pairs of fibre bundles with associated variations
\begin{equation}\label{eq:var-IJK}
    \var\colon H_{\bullet}(\overline{W}\cap S^I\setm S_J,\partial W\cup S_K)_t \longrightarrow H_{\bullet}(W\cap S^I\setm S_J,S_K)_t.
\end{equation}
\begin{rem}
    The configuration of a strict subset $I\sqcup J\sqcup K\subset \set{1,\ldots,\PLm}$ has no critical points and thus $\var=0$.
    Henceforth, we consider only the case of tripartitions $I\sqcup J\sqcup K=\set{1,\ldots,\PLm}$ of all $\PLm$ hypersurfaces.
\end{rem}

More precisely, we define the variations \eqref{eq:var-IJK} by choosing a \emph{single} isotopy $g_{\gamma}$ as in \cref{lem:localization}. We then obtain $\var=g'_*-\id$ via \cref{thm:loc-factorization}, using the corresponding restrictions $g'=g_{\gamma}|_{S^I\setm S_J}$ of $g_{\gamma}$.
Thus all variations arise from the same isotopy, and no confusion should arise from denoting all of them with the same symbol $\var$.

Furthermore, since (co)boundaries commute with push-forward, see \eqref{eq:[(co)boundary,pushforwad]}, the following diagram commutes for all $i\in I$:\footnote{This commutativity from \cref{sec:coboundary-comm-functor} requires that $\var$ is computed using a \emph{smooth} isotopy $g_{\gamma}$, hence our insistence on the smooth isotopy theorem (\cref{prop:smoothom}).}
\begin{equation}\label{eq:var-functorial}\xymatrixcolsep{12mm}\begin{gathered}\xymatrix{
    H_{\bullet}(\overline{W}\cap S^{I-i}\setm S_J,\partial W\cup S_{K+i})_t \ar[r]^-{\var} \ar[d]^{\partial_i} & H_{\bullet}(W\cap S^{I-i}\setm S_J,S_{K+i})_t \ar[d]^{\partial_i} \\
    H_{\bullet-1}(\overline{W}\cap S^I\setm S_J,\partial W\cup S_K)_t \ar[r]^-{\var} \ar[d]^{\fibSphere_i} & H_{\bullet-1}(W\cap S^I\setm S_J,S_K)_t \ar[d]^{\fibSphere_i} \\
    H_{\bullet}(\overline{W}\cap S^{I-i}\setm S_{J+i},\partial W\cup S_K)_t \ar[r]^-{\var} & H_{\bullet}(W\cap S^{I-i}\setm S_{J-i},S_K)_t \\
}\end{gathered}\end{equation}

Following \cite{FFLP}, the strategy to prove \cref{thm:PL} is to exploit these diagrams to reduce the calculation to a few special cases. For a quadratic pinch, we can always reduce to $J=K=\varnothing$ (\cref{sec:quadratic-pinch}). For a linear pinch (\cref{sec:linear-pinch}), we have to distinguish three different types, as in \cref{rem:lin-pinch-types}.

To spell out these reductions, we extend \cref{defn:van-cyc} and denote, for all $I\sqcup J\sqcup K=\set{1,\ldots,\PLm}$, the respective (dual) vanishing cycles as
\begin{equation*}\begin{aligned}
    \vcyc  &= \Big(\prod_{j\in J} \fibSphere_j \partial_j\Big) \partial_I \vcell
          & &\in H_{n-\abs{I}}(W\cap S^I\setm S_J,S_K)
    \quad\text{and}\\
    \dvcyc &= (-1)^{\abs{K}}\cdot \Big(\prod_{k\in K} \fibSphere_k \partial_k\Big) \partial_I \vcell
          & &\in H_{n-\abs{I}}(W\cap S^I\setm S_K,S_J)
    .
\end{aligned}\end{equation*}
Then \cref{thm:PL} arises as the special case $I=\varnothing$ of
\begin{thm}\label{thm:PL-IJK}
For every tripartition $(I,J,K)$, the variation \eqref{eq:var-IJK} is zero in all degrees other than $n-\abs{I}$. For a class $h$ in degree $n-\abs{I}$,
\begin{equation}\label{eq:PL-IJK}
    \var h =
    (-1)^{(n-\abs{I}+1)(n-\abs{I}+2)/2}
    \cdot \is{\dvcyc}{h} \cdot \vcyc.
\end{equation}
\end{thm}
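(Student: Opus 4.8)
\emph{Strategy.} Following \cite{FFLP}, the plan is to reduce \cref{thm:PL-IJK} for an arbitrary tripartition $I\sqcup J\sqcup K=\set{1,\ldots,\PLm}$ to a handful of base cases by exploiting the commuting diagram \eqref{eq:var-functorial} (which commutes precisely because all the variations are computed from one and the same \emph{smooth} isotopy $g_\gamma$). The homological input is supplied by \cref{sec:homology-groups}: for every tripartition the groups occurring in \eqref{eq:var-IJK} vanish outside degree $n-\abs I$, which already gives the first bullet; in degree $n-\abs I$ the source $H_{n-\abs I}(\overline W\cap S^I\setm S_J,\partial W\cup S_K)_t$ and, when nonzero, the target $H_{n-\abs I}(W\cap S^I\setm S_J,S_K)_t=\ZZ\cdot\vcyc$ are infinite cyclic; and for $i\in I$ the vertical maps $\partial_i,\fibSphere_i$ in \eqref{eq:var-functorial} are isomorphisms in that degree whenever the arrangements on both ends of the arrow are themselves non-degenerate simple pinches (this follows from \eqref{eq:exact_boundary_seq}, its $\fibSphere$-analogue, and the off-degree vanishing just quoted).

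\emph{The reduction.} Fix $(I,J,K)$ and suppose first $j\in J$. Diagram \eqref{eq:var-functorial} with $i=j$ and $I\rightsquigarrow I\cup\set j$ gives $\fibSphere_j\circ\var^{(I\cup\set j,J\setm j,K)}=\var^{(I,J,K)}\circ\fibSphere_j$; since $\fibSphere_j$ is an isomorphism $H_{n-\abs I-1}^{(I\cup\set j,J\setm j,K)}\xrightarrow{\sim}H_{n-\abs I}^{(I,J,K)}$ for both the $W$- and the $\overline W$-version, $\var^{(I,J,K)}$ is determined by $\var^{(I\cup\set j,J\setm j,K)}$. The vanishing data transfer compatibly: anticommutation of the Leray operators and \cref{rem:vcyc-ordering} give $\fibSphere_j\vcyc^{(I\cup\set j,J\setm j,K)}=\pm\vcyc^{(I,J,K)}$ and $\partial_j\dvcyc^{(I,J,K)}=\pm\dvcyc^{(I\cup\set j,J\setm j,K)}$, and the adjunction $\is{\partial_j\alpha}{\beta}=\pm\is{\alpha}{\fibSphere_j\beta}$ from \eqref{eq:intersection-delta-partial} rewrites the intersection number of the smaller configuration as that of $(I,J,K)$. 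Substituting into \eqref{eq:PL-IJK} for $(I\cup\set j,J\setm j,K)$ reproduces it for $(I,J,K)$ with the exponent passing from $(d-1)d/2$ to $d(d+1)/2$ for $d=n-\abs I+1$; wait—more precisely, from $\tfrac{(n-\abs I)(n-\abs I+1)}2$ to $\tfrac{(n-\abs I+1)(n-\abs I+2)}2$, which is exactly the accumulated change of sign. Symmetrically, if $k\in K$ one uses $\partial_k\circ\var^{(I,J,K)}=\var^{(I\cup\set k,J,K\setm k)}\circ\partial_k$ with $\partial_k$ an isomorphism in the relevant degrees, transferring the statement from the configuration in which $k$ has been absorbed into $I$.

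\emph{Base cases.} Iterating the above, a \emph{quadratic} pinch ($\PLm\leq n$) is reduced all the way to $(I,J,K)=(\set{1,\ldots,\PLm},\varnothing,\varnothing)$, where \eqref{eq:var-IJK} is the variation, relative to $\partial W$, of the affine quadric $S^{1,\ldots,\PLm}_t=\set{x_{\PLm}^2+\cdots+x_n^2=t}\subset\CC^{n-\PLm+1}$: this is the classical Picard--Lefschetz theorem for an ordinary double point, whose self-contained derivation—including the sign $(-1)^{(n-\PLm+1)(n-\PLm+2)/2}$ and the parity dependence on $n-\PLm$—is \cref{sec:homology-groups}. A \emph{linear} pinch ($\PLm=n+1$) cannot be reduced past $\abs I=n-1$, and one separates the three types of \cref{rem:lin-pinch-types}. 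In the \emph{mixed} case one keeps a single $j_0\in J$ and $k_0\in K$ outside $I$; since the $n+1$ hyperplanes of \eqref{eq:localcoords} are affinely equivalent to the faces of an $n$-simplex, one is left inside a holomorphic disk $W\cap S^I\cong\oBallD\subset\CC$ with two points $S_{j_0}=\set{x=0}$ and $S_{k_0}=\set{x=t}$ colliding as $t\to0$—exactly the configuration of \cref{ss:Li1}, where $\var$ sends the path from the $B$-point to $\partial W$ onto a small loop $\vcyc$ around $S_{j_0}$ with $\is{\dvcyc}{h}=\pm1$, i.e.\ \eqref{eq:PL-IJK} with $d=1$; transferring this upward with the index-moving isomorphisms settles all mixed linear tripartitions. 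In the \emph{pure $A$}-type ($K=\varnothing$) the source $H_n(\overline W\setm S_J,\partial W)_t$ is, by \cref{sec:homology-groups}, generated by iterated Leray (co)boundaries of $\vcell$, while \cref{lem:vcell-monodromy} gives $\gamma_*\vcell=(-1)^{n-\PLm+1}\vcell=\vcell$ for a linear pinch; as $\var$ commutes with $\partial_\bullet,\fibSphere_\bullet$ it therefore annihilates the whole group, consistently with $\vcyc=\pm\fibSphere_J\partial_{\set{1,\ldots,n+1}}\vcell=0$ (the full iterated boundary of the $n$-simplex over all $n+1$ of its faces vanishes). The \emph{pure $B$}-type is the image of this under the $A\leftrightarrow B$ duality of \cref{sec:verdier+intersection}, with $\dvcyc=0$. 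The case $I=\varnothing$ of the resulting statement is \cref{thm:PL}.

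\emph{Main obstacle.} The difficulty is not conceptual but a matter of bookkeeping: one must check that the signs generated by anticommuting the $\partial_i$ and $\fibSphere_i$, by the reordering in \cref{rem:vcyc-ordering}, by the adjunction \eqref{eq:intersection-delta-partial}, and by the step $d-1\mapsto d$ in the quadratic exponent all assemble into the single closed form $(-1)^{(n-\abs I+1)(n-\abs I+2)/2}$ for every intermediate tripartition; and that the isomorphism claims for $\partial_i,\fibSphere_i$ (and the identification of generators) provided by \cref{sec:homology-groups} genuinely cover all the intermediate configurations, including the degenerate linear ones in which the target of $\var$ can collapse to zero.
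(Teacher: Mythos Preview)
Your overall strategy coincides with the paper's: transfer \eqref{eq:PL-IJK} between tripartitions via the commuting squares \eqref{eq:var-functorial}, using the homology computations of \cref{sec:homology-groups}, and then verify base cases (classical Picard--Lefschetz for the quadratic pinch at $J=K=\varnothing$; the one-dimensional picture for the mixed linear pinch at $\abs J=\abs K=1$). One minor overstatement: the vertical maps $\fibSphere_i,\partial_i$ are \emph{not} always isomorphisms---for a linear pinch the ``first coboundary'' with $J=\varnothing$ is in fact zero, and for a quadratic pinch it is merely surjective in the domain (dually, the ``last boundary'' with $K=\set{k}$ is merely injective in the codomain). The paper formalises the transfer as \cref{cor:PL-IJK-reduction}, which only needs surjectivity in the domain \emph{or} injectivity in the codomain; this is what actually carries the induction through, so your reduction survives once stated correctly.

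There is, however, a genuine gap in your treatment of the pure $A$-type linear pinch ($K=\varnothing$). You assert that the \emph{domain} $H_n(\overline W\setm S_J,\partial W)_t$ is generated by iterated Leray (co)boundaries of $\vcell$, and then invoke $\gamma_*\vcell=\vcell$ together with commutativity of $\var$ with $\partial_\bullet,\fibSphere_\bullet$. But $\vcell\in H_n(W,S_{1,\ldots,n+1})_t$, and every composite of $\fibSphere_\bullet,\partial_\bullet$ applied to it lands in \emph{codomain}-type groups $H_\bullet(W\cap S^I\setm S_J,S_K)$; there is no such expression producing a generator of $H_n(\overline W\setm S_J,\partial W)$. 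Indeed for a linear pinch the only natural route (via $\fibSphere_J\partial_J$) factors through $H_{-1}(S^{1,\ldots,n+1})=0$ and yields $\vcyc=0$, so you obtain no information about $\var$ on the actual domain generator, which is a Poincar\'e--Lefschetz dual class. The paper's argument is different: it reduces (using the domain isomorphisms $\fibSphere_{J-j}$, valid while $\abs J\geq 2$) down to $\abs J=1$, $\abs I=n$, where $W\cap S^I=\set{O_j}$ is a single corner of the simplex; there both domain and codomain are $H_0(\set{O_j})\cong\ZZ$ with the canonical augmentation generator, which is manifestly monodromy-invariant, hence $\var=0$. The pure $B$-type is handled symmetrically via the codomain isomorphisms $\partial_{K-k}$, not by invoking the $A\leftrightarrow B$ duality of \cref{sec:verdier+intersection}.
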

Note that, for this formula to make any sense, it has to be compatible with the commutative diagram \eqref{eq:var-functorial}. This can be verified easily:
\begin{lem}
    Let $\tvar\colon H_{\bullet}(\overline{W}\cap S^I\setm S_J,\partial W\cup S_K)_t \longrightarrow H_{\bullet}(W\cap S^I\setm S_J,S_K)_t$ denote the map defined by the right-hand side of \eqref{eq:PL-IJK}. Then the diagram \eqref{eq:var-functorial}, with $\tvar$ in place of $\var$, is commutative.
\end{lem}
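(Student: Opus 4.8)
The plan is to verify the two squares in \eqref{eq:var-functorial} — the one with vertical maps $\partial_i$ and the one with vertical maps $\fibSphere_i$ — separately, by pushing the relevant Leray operator through the explicit formula \eqref{eq:PL-IJK} that defines $\tvar$. Since $\tvar(h)$ is a scalar multiple of a vanishing cycle and $\partial_i$, $\fibSphere_i$ are homomorphisms, the whole verification rests on three facts that are already available: the anticommutation relations $\partial_i\partial_j=-\partial_j\partial_i$, $\fibSphere_i\fibSphere_j=-\fibSphere_j\fibSphere_i$ and $\partial_i\fibSphere_j=-\fibSphere_j\partial_i$ for $i\neq j$ (so that each even operator $\fibSphere_j\partial_j$ commutes with $\partial_i$ and $\fibSphere_i$); the adjunction $\is{\fibSphere_i\xi}{h}=(-1)^{\deg h}\is{\xi}{\partial_i h}$ of \eqref{eq:intersection-delta-partial}, together with its companion $\is{\partial_i\xi}{h}=\pm\is{\xi}{\fibSphere_i h}$ coming from the $A\leftrightarrow B$ symmetry of the intersection pairing (\cref{sec:verdier+intersection}); and the ordering-independence up to sign of $\vcyc$, $\dvcyc$ from \cref{rem:vcyc-ordering}. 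Off the single degree in which \eqref{eq:PL-IJK} is non-trivial both composites vanish, so only one degree has to be inspected.

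For the square with $\partial_i$ ($i\in I$): starting from $h$ in the top-left group, I would expand $\vcyc$ and $\dvcyc$ for the tripartition $(I-i,J,K+i)$ using the extended form of \cref{defn:van-cyc}, slide the index $i$ past the even operators $\fibSphere_j\partial_j$ and $\fibSphere_k\partial_k$, and use $\partial_I=\pm\partial_i\partial_{I-i}$ to obtain $\partial_i\vcyc^{(I-i,J,K+i)}=\pm\vcyc^{(I,J,K)}$ and $\dvcyc^{(I-i,J,K+i)}=\mp\fibSphere_i\dvcyc^{(I,J,K)}$, the extra minus sign coming from the discrepancy $(-1)^{\abs{K+i}}=-(-1)^{\abs{K}}$ built into \cref{defn:van-cyc}. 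The adjunction then rewrites $\is{\dvcyc^{(I-i,J,K+i)}}{h}$ as $\pm\is{\dvcyc^{(I,J,K)}}{\partial_i h}$, and both sides of the square become a sign times $\is{\dvcyc^{(I,J,K)}}{\partial_i h}\cdot\vcyc^{(I,J,K)}$; it remains only to match the two scalar signs. The square with $\fibSphere_i$ is entirely analogous: here $\fibSphere_i\vcyc^{(I,J,K)}=\pm\vcyc^{(I-i,J+i,K)}$ and $\partial_i\dvcyc^{(I-i,J+i,K)}=\pm\dvcyc^{(I,J,K)}$, and the companion adjunction moves $\fibSphere_i$ off of $h$; I would carry out the first square in detail and invoke the $A\leftrightarrow B$ swap ($J\leftrightarrow K$, $\vcyc\leftrightarrow\dvcyc$, $\partial\leftrightarrow\fibSphere$) for the second.

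The only real work is this final sign reconciliation, and it is the step I expect to be most error-prone. The clean piece is the prefactor of \eqref{eq:PL-IJK}: writing $m=n-\abs{I}$, replacing $I$ by $I-i$ changes the exponent $(m+1)(m+2)/2$ by $(m+2)(m+3)/2-(m+1)(m+2)/2=m+2$, so the prefactor picks up $(-1)^{m+2}$. One then checks that the signs accumulated from the three facts above — the $(-1)^{\deg h}$ of the adjunction (with $\deg h=n-\abs{I}+1$ in the $\partial_i$-square and $n-\abs{I}$ in the $\fibSphere_i$-square), the $(-1)^{\abs{K}}$-flip of $\dvcyc$ in the first square, and the ordering sign relating $\partial_i\partial_{I-i}$ to $\partial_I$ — compensate exactly. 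Reassuringly the ordering sign appears squared and cancels (consistently with \cref{rem:vcyc-ordering}), so the whole verification collapses to the parity identity $(m+2)+m\equiv 0\pmod 2$ in both squares, which is automatic. This computation also doubles as a sanity check on the sign conventions of \cref{defn:van-cyc} and \eqref{eq:PL-IJK}.
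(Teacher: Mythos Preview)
Your proposal is correct and matches the paper's own argument essentially line for line: both proofs verify the two squares by relating the vanishing cycles of the two tripartitions via $\fibSphere_i$ and $\partial_i$ (the paper writes $\vcyc'=\epsilon\fibSphere_i\vcyc$, $\dvcyc=\epsilon\partial_i\dvcyc'$ for the $\fibSphere_i$-square and $\vcyc=\epsilon\partial_i\vcyc''$, $\dvcyc''=-\epsilon\fibSphere_i\dvcyc$ for the $\partial_i$-square), then invoke \eqref{eq:intersection-delta-partial} and compare prefactors. The only cosmetic difference is that the paper spells out the $\fibSphere_i$-square and says ``in the same way'' for the $\partial_i$-square, while you do the opposite; your observation that the ordering sign $\epsilon$ squares out is exactly how the paper's computation closes as well.
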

\begin{proof}
Consider the bottom square of \eqref{eq:var-functorial}, relating the variations of the triples $(I,J,K)$ and $(I',J',K')=(I-i,J+i,K)$. Let $\vcyc'$ and $\dvcyc'$ denote the vanishing cycles for $(I',J',K')$. Note that $\vcyc'=\epsilon\fibSphere_i \vcyc$ and $\dvcyc=\epsilon\partial_i\dvcyc'$, where the sign $\epsilon=\pm 1$ is determined by the order of $I$ such that $\partial_I=\epsilon \partial_i\partial_{I-i}$.
Thus,
\begin{align*}
    \fibSphere_i(\tvar h) &= (-1)^{(d+1)(d+2)/2}\cdot\is{\dvcyc}{h} \cdot \fibSphere_i \vcyc \\
    &= (-1)^{(d+2)(d+3)/2} \cdot \is{\dvcyc'}{\fibSphere_i h} \cdot \vcyc'
    = \tvar(\fibSphere_i h)
\end{align*}
follows from $\is{\dvcyc}{h}=\epsilon\is{\dvcyc'}{\fibSphere_i}\cdot(-1)^d$ by \eqref{eq:intersection-delta-partial}, where $d=n-\abs{I}$ denotes the degree of $h$ and $d+1=n-\abs{I'}$ is the degree of $\fibSphere_i h$. In the same way, one checks the upper square, where the vanishing cycles $\vcyc''$ and $\dvcyc''$ of the triple $(I-i,J,K+i)$ are related by $\vcyc=\epsilon\partial_i \vcyc''$ and $\dvcyc''=-\epsilon \fibSphere_i\dvcyc$.
\end{proof}
The proof of \cref{thm:PL-IJK} simplifies due to the following observation: if a vertical map ($\partial_i$ or $\fibSphere_i$) of the diagram \eqref{eq:var-functorial} is surjective in the domain (left column), or injective in the codomain (right column), then the validity of \eqref{eq:PL-IJK} for some partition implies that \eqref{eq:PL-IJK} also holds for a related partition.
\begin{cor}\label{cor:PL-IJK-reduction}
Subject to injectivity or surjectivity, each of the four vertical maps $\partial_i,\fibSphere_i$ in the diagram \eqref{eq:var-functorial} gives rise to an implication:
\begin{equation*}\xymatrixcolsep{20mm}\xymatrix{
    \fbox{\parbox{23mm}{\centering \eqref{eq:PL-IJK} holds for\\$(I-i,J+i,K)$}}
    \ar@{=>}@/^8mm/[r]^{\parbox{3cm}{\centering $\fibSphere_i$ injective\\ in codomain}}
    &
    \fbox{\parbox{23mm}{\centering \eqref{eq:PL-IJK} holds for $(I,J,K)$}}
    \ar@{=>}@/^8mm/[r]^{\parbox{3cm}{\centering $\partial_i$ injective\\ in codomain}} \ar@{=>}@/^8mm/[l]^{\parbox{3cm}{\centering $\fibSphere_i$ surjective\\ in domain}}
    &
    \fbox{\parbox{23mm}{\centering \eqref{eq:PL-IJK} holds for\\$(I-i,J,K+i)$}}
    \ar@{=>}@/^8mm/[l]^{\parbox{3cm}{\centering $\partial_i$ surjective\\ in domain}}
}\end{equation*}
\end{cor}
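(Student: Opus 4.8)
The proof will be a short diagram chase resting on two facts that are already available: the diagram \eqref{eq:var-functorial} commutes (the variation operators commute with the Leray maps $\partial_i$ and $\fibSphere_i$), and the same diagram commutes with $\tvar$ in place of $\var$, by the preceding lemma. Throughout I write $\var_{(I,J,K)}$ and $\tvar_{(I,J,K)}$ for the horizontal arrow of the corresponding row of \eqref{eq:var-functorial}, and I unpack ``\eqref{eq:PL-IJK} holds for $(I,J,K)$'' as the conjunction of two assertions (cf.\ \cref{thm:PL-IJK}): that $\var_{(I,J,K)}$ vanishes in every degree $\neq n-\abs{I}$, and that $\var_{(I,J,K)}$ agrees with $\tvar_{(I,J,K)}$ in degree $n-\abs{I}$. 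Each of the four implications in the statement corresponds to one of the two squares of \eqref{eq:var-functorial}: the upper square has vertical map $\partial_i$ and relates the rows $(I-i,J,K+i)$ and $(I,J,K)$, while the lower square has vertical map $\fibSphere_i$ and relates $(I,J,K)$ and $(I-i,J+i,K)$.

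The generic step is as follows. Fix one of the two squares, let $\psi\in\set{\partial_i,\fibSphere_i}$ be its vertical map, and call $(S)$ and $(T)$ its source and target rows. Then $\psi$ shifts degree by $\pm 1$ in such a way that it carries the ``middle degree'' $n-\abs{I}$ of $(S)$ to the middle degree of $(T)$ (the change of $\abs{I}$ between the two rows is exactly compensated by the $\pm1$ shift of $\partial_i$ resp.\ $\fibSphere_i$), and the relations $\psi\var_{(S)}=\var_{(T)}\psi$, $\psi\tvar_{(S)}=\tvar_{(T)}\psi$ hold on both the domain and the codomain columns. If ``\eqref{eq:PL-IJK} holds for $(S)$'' and the domain-column $\psi$ is surjective, then for a class $h$ in the domain of $\var_{(T)}$ choose $h=\psi h'$: if $h$ lies in the middle degree of $(T)$ then $h'$ lies in the middle degree of $(S)$, whence $\var_{(T)}h=\psi\var_{(S)}h'=\psi\tvar_{(S)}h'=\tvar_{(T)}h$; if $h$ lies in any other degree then $h'$ lies outside the middle degree of $(S)$, so $\var_{(S)}h'=0$ and $\var_{(T)}h=\psi 0=0$. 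Dually, if ``\eqref{eq:PL-IJK} holds for $(T)$'' and the codomain-column $\psi$ is injective, then on the source row $\psi\var_{(S)}=\var_{(T)}\psi=\tvar_{(T)}\psi=\psi\tvar_{(S)}$; since the codomain-column $\psi$ also carries every degree other than the middle degree of $(S)$ to a degree other than that of $(T)$, on which $\var_{(T)}$ vanishes, cancelling the injective $\psi$ yields simultaneously the vanishing of $\var_{(S)}$ off its middle degree and the identity $\var_{(S)}=\tvar_{(S)}$ on it.

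Applying the generic step to the upper square ($\psi=\partial_i$, source $(I-i,J,K+i)$, target $(I,J,K)$) produces the two middle arrows of the statement, and applying it to the lower square ($\psi=\fibSphere_i$, source $(I,J,K)$, target $(I-i,J+i,K)$) produces the two outer arrows. I do not anticipate a genuine obstacle here: all of the substance is already packed into the commutativity of the $\tvar$-diagram (the sign bookkeeping of the preceding lemma) and into the Leray long exact sequences \eqref{eq:exact_boundary_seq}, which is where the surjectivity and injectivity hypotheses are actually verified when the corollary is applied. The corollary itself is pure diagram chasing; the only point demanding a little care is bookkeeping ``the middle degree'' consistently across rows, i.e.\ observing that the $\pm1$ degree shift of $\partial_i$ or $\fibSphere_i$ exactly offsets the change of $\abs{I}$.
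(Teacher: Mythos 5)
Your proof is correct and follows essentially the same route as the paper: the paper compresses the argument to the observation that one may put $\var-\tvar$ on the horizontal arrows of \eqref{eq:var-functorial} to obtain a commutative diagram (by the preceding lemma), after which the claim is a one-line diagram chase since the validity of \eqref{eq:PL-IJK} is equivalent to $\var-\tvar=0$ on both rows. Your more expansive version — splitting ``$\var-\tvar=0$'' into its middle-degree and off-middle pieces, and tracking how the $\pm1$ degree shift of $\partial_i$ or $\fibSphere_i$ compensates the change in $\abs I$ — is equivalent, since $\tvar$ is identically zero off the middle degree (the pairing $\is{\dvcyc}{\,\cdot\,}$ vanishes there), so $\var-\tvar=0$ in every degree is exactly the conjunction you wrote down.
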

\begin{proof}
By the previous lemma, we can put the difference $\var-\tvar$ on the horizontal arrows in \eqref{eq:var-functorial} and obtain a commutative diagram. The validity of \eqref{eq:PL-IJK} is equivalent to $\var-\tvar=0$.
\end{proof}
In most cases, the (co)boundary maps are isomorphisms, and so the proof of \cref{thm:PL-IJK} reduces to a few boundary cases. 
For small $t\neq 0$, the groups
$
    H_{\bullet}(\overline{W}\cap S^I \setm S_J, \partial W \cup S_K)_t
$
and
$
    H_{\bullet}(W\cap S^I \setm S_J, S_K)_t
$
in the domain and codomain of the local variations \eqref{eq:var-IJK}, together with the (co)boundary maps between them, are computed in \cref{sec:homology-groups}. In particular, we find:
\begin{prop}\label{prop:PL-isos-JKnon0}
    For $K\neq\varnothing$, the codomain $H_{\bullet}(W\cap S^I \setm S_J, S_K)\cong \ZZ$ is concentrated in degree $n-\abs{I}$. The iterated (co)boundaries give isomorphisms
    \begin{equation*}
        \ZZ\cong H_n(W,S_{1,\ldots,\PLm})_t \xrightarrow[\delta_J \partial_{I\sqcup J}]{\cong} H_{n-\abs{I}}(W\cap S^I\setm S_J,S_K)_t.
    \end{equation*}
    For $J\neq\varnothing$, the domain $ H_{\bullet}(\overline{W}\cap S^I\setm S_J,\partial W\cup S_K)\cong\ZZ$ is concentrated in degree $n-\abs{I}$. The iterated (co)boundaries give isomorphisms
    \begin{equation*}
        H_{n-\abs{I}}(\overline{W}\cap S^I\setm S_J,\partial W\cup S_K)_t \xrightarrow[\delta_{I\sqcup K} \partial_K]{\cong} H_n(\overline{W}\setm S_{1,\ldots,\PLm},\partial W)_t\cong\ZZ.
    \end{equation*}
\end{prop}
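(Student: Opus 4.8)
The plan is to reduce everything to the explicit local model \eqref{eq:localcoords} for $S_1,\ldots,S_{\PLm}\subset W\subseteq\CC^n$, and then compute by peeling off hypersurfaces one at a time with Leray's long exact sequences. First I would observe that the two halves of the statement are exchanged by Verdier duality: the relative intersection pairing of \cref{sec:verdier+intersection} puts $H_d(W\cap S^I\setm S_J,S_K)_t$ and $H_{2(n-\abs{I})-d}(\overline W\cap S^I\setm S_K,\partial W\cup S_J)_t$ in perfect duality, and the adjunction $\is{\fibSphere_i\alpha}{\beta}=\pm\is{\alpha}{\partial_i\beta}$ from \eqref{eq:intersection-delta-partial} transposes the composite $\fibSphere_K\partial_{I\sqcup K}$ into $\pm\fibSphere_{I\sqcup K}\partial_K$. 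Hence the ``domain'' claim for a triple $(I,J,K)$ with $J\neq\varnothing$ is dual to the ``codomain'' claim for $(I,K,J)$, and it is enough to prove the codomain statement: for $K\neq\varnothing$, the group $H_\bullet(W\cap S^I\setm S_J,S_K)_t$ is $\ZZ$ concentrated in degree $n-\abs{I}$, with $\fibSphere_J\partial_{I\sqcup J}(\vcell)$ a generator.

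Next I would settle the base case $I=J=\varnothing$ (so $K=\set{1,\ldots,\PLm}$): that $H_\bullet(W,S_{1,\ldots,\PLm})_t\cong\ZZ$ in degree $n$, generated by the vanishing cell $\vcell$ of \cref{defn:van-cyc}. Since $W$ is contractible this is $\widetilde H_{\bullet-1}(S_{1,\ldots,\PLm}\cap W)$, which I would compute with the nerve lemma, resp.\ Mayer--Vietoris, for the cover by the $\PLm$ hypersurfaces. For a linear pinch the nerve is $\partial\Delta^n\simeq\Sphere^{n-1}$ and all multiple intersections are contractible, so $\widetilde H_\bullet\cong\widetilde H_\bullet(\Sphere^{n-1})$ at once. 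For a quadratic pinch the only non-contractible multiple intersection is the top one, which is the vanishing sphere $\Sphere^{n-\PLm}$; the spectral sequence of the cover then shows that this single extra class feeds exactly the $\ZZ$ in degree $n$ and nothing else. Either way $\vcell$, the compact real region bounded by the $S_i$, is a generator. The dual base case $H_\bullet(\overline W\setm S_{1,\ldots,\PLm},\partial W)_t\cong\ZZ$ in degree $n$ then follows by the duality above (or by an analogous direct computation of a complement of a quadric and hyperplanes inside a ball).

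With the base case in hand, I would induct to a general triple $(I,J,K)$ with $K\neq\varnothing$, stripping off hypersurfaces one at a time: Leray's partial-boundary sequence \eqref{eq:exact_boundary_seq} to move between consecutive values of $J$, and its coboundary analogue (with the Gysin map $\varpi_i$) to move between consecutive values of $I$. The point is that the hypothesis $K\neq\varnothing$ propagates through the induction and forces all of the groups occurring to vanish away from the relevant top degree; then each three-term sequence collapses, the connecting and Gysin maps vanish in that degree, and the operators $\partial_i$, $\fibSphere_i$ that appear become isomorphisms onto $\ZZ$. Since $\partial_i$ drops degree by one and $\fibSphere_i$ raises it by one, the composite $\fibSphere_J\partial_{I\sqcup J}$ carries $H_n(W,S_{1,\ldots,\PLm})_t$ isomorphically to $H_{n-\abs{I}}(W\cap S^I\setm S_J,S_K)_t$, taking $\vcell$ to a generator; in particular the target sits in degree $n-\abs{I}$ only.

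The main obstacle is exactly this inductive step: one really has to compute \emph{all} the groups $H_\bullet(W\cap S^I\setm S_J,S_K)_t$ and $H_\bullet(\overline W\cap S^I\setm S_J,\partial W\cup S_K)_t$ for the model \eqref{eq:localcoords}, in order to know that the intermediate groups are acyclic outside the top degree and that the Gysin maps $\varpi_i$ vanish there. This generalizes the boundary-free computation of \cite{FFLP} by carrying along the extra relative hypersurfaces; the genuinely delicate piece is the quadratic-pinch complement base case, where the transversality of $\partial W$ to the arrangement and the explicit quadric $S_{\PLm}$ are what make the argument go through. I would carry out this bookkeeping in full in \cref{sec:homology-groups}.
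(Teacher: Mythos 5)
Your overall strategy matches the paper's: it anchors the computation at a base case in the arrangement \eqref{eq:localcoords}, propagates it by induction through Leray's boundary and residue exact sequences (using $K\neq\varnothing$ to force the vanishing of the adjacent terms, so that $\partial_i$ and $\fibSphere_i$ become isomorphisms), and relates the domain and codomain statements by a duality exchanging $J$ and $K$ while turning $\fibSphere_K\partial_{I\sqcup K}$ into $\pm\fibSphere_{I\sqcup K}\partial_K$. This is exactly the content of \cref{lem:H(S^I-S_J.S_K)=0}, \cref{lem:codomain-Knon0-full}, and \cref{ss:homgroups-duality}. Two places where you deviate in technique. First, for the base case you propose computing $H_n(W,S_{1,\ldots,\PLm})$ via the nerve/Mayer--Vietoris spectral sequence of the cover $S_{1,\ldots,\PLm}=\bigcup_i S_i$. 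The paper instead picks the simpler slot $I=\set{1,\ldots,\PLm-1}$, $J=\varnothing$, $K=\set{\PLm}$, where the group is visibly the homology of a point (linear pinch) or the reduced homology of the vanishing sphere $\Sphere^{n-\PLm}$ (quadratic pinch); this avoids the spectral sequence altogether, and in particular avoids the delicate boundary case $\PLm=n$ where $\Sphere^0$ has two components and its homology sits in the same row as the nerve complex.

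Second, you propose deducing the domain statement from the codomain statement by a black-box application of \cref{thm:verdier-duality}. As the paper explicitly warns in the remark at the end of \cref{ss:homgroups-duality}, that theorem is only a perfect pairing after tensoring with $\QQ$, so this shortcut yields the result modulo torsion and does not by itself justify the claim $H_{n-\abs{I}}(\overline W\cap S^I\setm S_J,\partial W\cup S_K)\cong\ZZ$. The paper instead re-runs the same Leray induction on the domain side (with the roles of $\partial_i$ and $\fibSphere_i$ swapped), anchored at the Poincar\'e--Lefschetz dual fact that $H_\bullet(\overline W\cap S^I,\partial W)\cong H^{2n-2\abs{I}-\bullet}(W\cap S^I)\cong\ZZ$ by contractibility of $W\cap S^I$. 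Your closing paragraph, which says you would compute \emph{all} the groups on both sides, in effect already does this; but the earlier assertion that duality makes it ``enough to prove the codomain statement'' should be weakened, or you need an argument for torsion-freeness on the domain side independently of duality.
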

\begin{proof}
See \cref{lem:codomain-Knon0-full} and \cref{ss:homgroups-duality}, in particular item (2).
\end{proof}
This result applies to linear and quadratic simple pinches. Note that in any degree other than $n-\abs{I}$, the domain or codomain, and hence the map $\var$, is zero (the first claim of \cref{thm:PL-IJK}).
Furthermore, with \cref{cor:PL-IJK-reduction} we learn that in order to prove \cref{thm:PL-IJK}, it is sufficient to confirm \eqref{eq:PL-IJK} for partitions with $\abs{J},\abs{K}\leq 1$.

\subsection{Linear pinch}\label{sec:linear-pinch}
For $J=K=\varnothing$, the identity \eqref{eq:PL-IJK} holds trivially, because the intersection $S^I=S^{1,\ldots,n+1}=\varnothing$ is empty. The factorization through the trivial group $0=H_{-1}(\varnothing)\ni\partial_{1,\ldots,n+1}\vcell$ also shows that:
\begin{itemize}
    \item If $J=\varnothing$, then $\dvcyc=\pm \fibSphere_K \partial_{1,\ldots,n+1} \vcell = 0$.
    \item If $K=\varnothing$, then $\vcyc=\pm \fibSphere_J\partial_{1,\ldots,n+1} \vcell = 0$.
\end{itemize}
So in these cases, \cref{thm:PL-IJK} amounts to $\var=0$. Let us first confirm this vanishing when $J\sqcup K=\set{k}$ is a singleton. Since $S^I=\bigcap_{i\neq k} S_i=\set{O_k}$ is then a single point---a corner of the vanishing cell---the groups
\begin{align*}
    H_{0}(\set{O_k})_t
    &=
    H_{0}(\overline{W} \cap S^{I}, \partial W\cup S_k )_t
    = H_{0}(W \cap S^{I}, S_k)_t \\
    &=
    H_{0}(\overline{W} \cap S^{I}\setm S_k ,\partial W)_t \,
    = H_{0}(W \cap S^{I}\setm S_k)_t
\end{align*}
are all \emph{canonically} identified $H_{0}(\set{O_k})_t\xrightarrow{\cong} \ZZ$, for all $t\neq 0$, by the augmentation map. Thus the class of this corner has no variation, and it also generates the domain of $\var$. Therefore we see that indeed $\var=0$.

For $J=\varnothing$, this vanishing extends to all $K$ via the commutative diagram
\begin{equation*}\xymatrixcolsep{15mm}\xymatrix{
H_{n-\abs{I}}(\overline{W}\cap S^I,\partial W\cup S_K) \ar[r]^-{\var} \ar[d]^{\partial_{K-k}} & H_{n-\abs{I}}(W\cap S^I,S_K) \ar[d]^{\cong}_{\partial_{K-k}} \\
H_0(O_k) \ar[r]^{\var=0}  & H_0(O_k)
}\end{equation*}
for any $k\in K$, because $\partial_{K-k}$ is an isomorphism in the codomain (\cref{prop:PL-isos-JKnon0}). Similarly, for $K=\varnothing$ and $j\in J$, the isomorphism $\fibSphere_{J-j}$ in the domain shows that $\var=0$ follows from the previous singleton case $\abs{J}=1$.

So far, we have proved $\var=0$ for all linear pinches of ``$A$ type'' ($K=\varnothing$) or ``$B$ type'' ($J=\varnothing$), as referred to in \cref{rem:lin-pinch-types}.

The remaining possibility is ``mixed type'', where both sets, $J$ and $K$, are non-empty. In this region, all boundaries and coboundaries are isomorphisms (\cref{prop:PL-isos-JKnon0}). Hence it suffices to verify the variation formula \eqref{eq:PL-IJK} for a single partition with $\abs{J}=\abs{K}=1$.\footnote{The only exception is at $n=1$, where there are no (co)boundary maps in the mixed region. Hence the two partitions $(I,J,K)=(\varnothing,\set{1},\set{2})$ and $(\varnothing,\set{2},\set{1})$ have to be considered separately. The calculations are the same, only $0$ and $t$ swap roles in \cref{fig:mixedlin-classes}.}
We pick $I=\set{2,\ldots,n}$, $J=\set{1}$ and $K=\set{n+1}$.
Inside the unit disk $W\cap S^I\cong\oBallD=\set{\abs{x}<1}\subset \CC$, the edge $\partial_I\vcell=[O_{1},O_{n+1}]$ of the vanishing cell is represented by a path from the corner $O_{1}=\set{x=t}$ to $O_{n+1}=\set{x=0}$. The groups
\begin{align*}
    H_1(W\cap S^I\setm S_J,S_K)_t &= H_1(\oBallD\setm\set{0},\set{t}) \cong\ZZ \quad\text{and} \\
    H_1(W\cap S^I\setm S_K,S_J)_t &= H_1(\oBallD\setm\set{t},\set{0}) \cong\ZZ
\end{align*}
are generated by the counter-clockwise vanishing cycles $\vcyc=\fibSphere_1\partial_1 \partial_I\vcell$ and $\dvcyc=-\fibSphere_{n+1}\partial_{n+1}\partial_I\vcell$, respectively (see \cref{fig:mixedlin-classes}).
The domain of the variation,
\begin{equation*}
    H_1(\overline{W}\cap S^I\setm S_J,\partial W\cup S_K)_t = H_1(\cBallD\setm\set{0},\partial \oBallD\cup \set{t})\cong\ZZ,
\end{equation*}
is generated by any path $h$ from the boundary circle $\partial\oBallD$ to $t$.
At the intersection of $h$ and $\dvcyc$, their tangent spaces $Th\wedge T\dvcyc$ are negatively oriented compared to the standard orientation $\td \Re x\wedge \td \Im x$ of $\CC$. The intersection number is therefore $\is{\dvcyc}{h}=-1$, and \cref{thm:PL-IJK} amounts to
\begin{figure}
    \centering
\begin{tikzpicture}[scale=1.4]
\node[] (m) at (0,0) {};
\node[] (t) at (1,0.5) {};
\node[] (b) at (2,0) {};
\coordinate (c1) at (4,1);
\coordinate (xr) at (4.5,1.5);
\coordinate (xl) at (3.5,0.5);
\coordinate (y1) at (3.5,1.5);
\coordinate (c2) at (4,-1.5);
\coordinate (x2) at (5,-1.5);
\coordinate (y2) at (4,-.5);
\draw[olive!15!green,decoration={markings, mark=at position 0.5 with {\arrow{>}}},postaction={decorate}] (2,0) -- (1,0.5) node[below right,xshift=.3cm,yshift=-.4cm] {$\scriptstyle{h}$};
\draw[orange,decoration={markings, mark=at position 0.5 with {\arrow{>}}},postaction={decorate}] (1,0.5) -- (0,0) node[above,xshift=.5cm,yshift=.45cm] {$\scriptstyle{\partial_I \vcell}$};
\draw[red,decoration={markings, mark=at position 0.9 with {\arrow{>}}},postaction={decorate}] (m) circle [radius = 0.41cm] node[left,xshift=-.5cm] {$\scriptstyle{\vcyc}$};
\draw[red,decoration={markings, mark=at position 0.1 with {\arrow{>}}},postaction={decorate}] (t) circle [radius = 0.41cm] node[above,yshift=.5cm] {$\scriptstyle{\dvcyc}$};
\draw[blue] (m) circle [radius=2] node[right,xshift=2.8cm] {$\scriptstyle{ \partial W }$};
\filldraw[] (0,0) circle (0.03) node[right,yshift=-.07cm] {$\scriptstyle{0}$};
\filldraw[] (t) circle (0.03) node[above] {$\scriptstyle{t}$};
\draw[red,decoration={markings, mark=at position 1 with {\arrow{>}}},postaction={decorate}] (xl) -- (xr) node[at end,yshift=-.4cm] {$\scriptstyle{T \dvcyc}$};
\draw[olive!15!green,decoration={markings, mark=at position 1 with {\arrow{>}}},postaction={decorate}] (c1) -- (y1) node[at end,yshift=.2cm] {$\scriptstyle{T h}$};
\draw[olive!15!green,decoration={markings, mark=at position 1 with {\arrow{>}}},postaction={decorate}] (c2) -- (x2) node[at end,yshift=.2cm] {$\scriptstyle{ \Re x }$};
\draw[red,decoration={markings, mark=at position 1 with {\arrow{>}}},postaction={decorate}] (c2) -- (y2) node[at end,xshift=.3cm] {$\scriptstyle{ \Im x}$};
\end{tikzpicture}
    \caption{The homology classes of a mixed linear pinch in degree $n-\abs{I}=1$.}%
    \label{fig:mixedlin-classes}%
\end{figure}
\begin{equation*}
    \var h = (-1)^{2\cdot 3/2} \cdot \is{\dvcyc}{h} \cdot \vcyc = \vcyc.
\end{equation*}
This finishes the proof of \cref{thm:PL-IJK} for linear pinches, since this is indeed the correct variation as computed by an isotopy (illustrated in \cref{fig:lin-pinch-proof}).

\begin{figure}
\begin{tikzpicture}[scale=1]
\node[] (m) at (0,0) {};
\node[] (t) at (1,0.25) {};
\node[] (b) at (2,0) {};
\coordinate (c1) at (4,1);
\coordinate (xr) at (4.5,1.5);
\coordinate (xl) at (3.5,0.5);
\coordinate (y1) at (3.5,1.5);
\coordinate (c2) at (4,-1.5);
\coordinate (x2) at (5,-1.5);
\coordinate (y2) at (4,-.5);
\draw[olive!15!green,decoration={markings, mark=at position 0.5 with {\arrow{>}}},postaction={decorate}] (2,0) -- (1,0.25) node[below right,xshift=-2.6cm,yshift=-.2cm] {$\scriptstyle{\gamma_*h}$};
\draw[olive!15!green,decoration={markings, mark=at position 0.5 with {\arrow{>}}},postaction={decorate}] (t) arc[radius = 1cm, start angle= 0, end angle= 330];
\draw[blue] (m) circle [radius=2] node[right,xshift=2cm,yshift=.4cm] {$\scriptstyle{ \partial W }$};
\filldraw[] (0,0) circle (0.03) node[right,yshift=-.07cm] {$\scriptstyle{0}$};
\filldraw[] (.85,-0.25) circle (0.03) node[right] {$\scriptstyle{t}$};
\draw[black,densely dotted,decoration={markings, mark=at position 1 with {\arrow{>}}},postaction={decorate}] (.85,-.25) arc[radius = 1cm, start angle= 330, end angle= 355];
\end{tikzpicture}
\quad  \raisebox{2cm}{$\cong$} \qquad
\begin{tikzpicture}[scale=1]
\node[] (m) at (0,0) {};
\node[] (t) at (1,0.25) {};
\node[] (b) at (2,0) {};
\coordinate (c1) at (4,1);
\coordinate (xr) at (4.5,1.5);
\coordinate (xl) at (3.5,0.5);
\coordinate (y1) at (3.5,1.5);
\coordinate (c2) at (4,-1.5);
\coordinate (x2) at (5,-1.5);
\coordinate (y2) at (4,-.5);
\draw[olive!15!green,decoration={markings, mark=at position 0.5 with {\arrow{>}}},postaction={decorate}] (2,0) -- (1,0.25) node[below right,xshift=.2cm,yshift=-.2cm] {$\scriptstyle{h}$};
\draw[olive!15!green,decoration={markings, mark=at position 0.9 with {\arrow{>}}},postaction={decorate}] (m) circle [radius = 0.41cm] node[left,xshift=-.4cm] {$\scriptstyle{\vcyc}$};
\draw[blue] (m) circle [radius=2] node[right,xshift=2cm,yshift=.4cm] {$\scriptstyle{ \partial W }$};
\filldraw[] (0,0) circle (0.03) node[right,yshift=-.07cm] {$\scriptstyle{0}$};
\filldraw[] (t) circle (0.03) node[above] {$\scriptstyle{t}$};
\end{tikzpicture}
\caption{Variation in the proof of \cref{thm:PL-IJK} for linear pinches.}%
\label{fig:lin-pinch-proof}%
\end{figure}

\subsection{Quadratic pinch}\label{sec:quadratic-pinch}
For a quadratic pinch ($\PLm\leq n$), even more of the (co)boundary maps are isomorphisms, at least in the degree $n-\abs{I}$ of interest. In addition to \cref{prop:PL-isos-JKnon0}, we show in \cref{sec:homology-groups}:
\begin{itemize}
    \item \Cref{cor:Kzero-isos}: For $K=\varnothing$, the coboundary maps are isomorphisms in the codomain,
\begin{equation*}
    \fibSphere_J\colon H_{n-\PLm}(W\cap S^{1,\ldots,\PLm}) \xrightarrow{\cong} H_{n-\abs{I}}(W\cap S^I\setm S_J).
\end{equation*}
    \item \Cref{ss:homgroups-duality}, (3): For $J=\varnothing$, boundaries are isomorphisms in the domain,
\begin{equation*}
    \partial_K\colon H_{n-\abs{I}}(\overline{W}\cap S^I,\partial W\cup S_K) \xrightarrow{\cong} H_{n-\PLm}(\overline{W}\cap S^{1,\ldots,\PLm},\partial W) 
    .
\end{equation*}
\end{itemize}
Only two kinds of (co)boundary maps remain, that are not necessarily isomorphisms by the results above: the first coboundary ($J=\varnothing$, $i\in I$)
\begin{equation}\label{eq:first-coboundary}
    \fibSphere_i\colon H_{n-\abs{I}}(\overline{W}\cap S^I,\partial W\cup S_K)\longrightarrow H_{n-\abs{I}+1}(\overline{W}\cap S^{I-i}\setm S_i,\partial W\cup S_K)
\end{equation}
in the domain, and the last boundary ($K=\set{k}$) in the codomain:
\begin{equation}\label{eq:last-boundary}
    \partial_k\colon H_{n-\abs{I}}(W\cap S^I\setm S_J,S_k) \longrightarrow H_{n-\abs{I}-1}(W\cap S^{I+k}\setm S_J).
\end{equation}
\begin{lem}
    For a quadratic pinch, the first coboundaries \eqref{eq:first-coboundary} are surjective, and the last boundaries \eqref{eq:last-boundary} are injective.
\end{lem}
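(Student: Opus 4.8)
The plan is to reduce both assertions, by routine bookkeeping with Leray's long exact sequences (\cref{s:leray}) and the isomorphisms already established (\cref{prop:PL-isos-JKnon0}, \cref{cor:Kzero-isos}, and \cref{ss:homgroups-duality}), to two facts about the ``universal'' configuration $S^{1,\ldots,\PLm}$ --- the smooth affine quadric --- that hold precisely because the pinch is quadratic, $\PLm\leq n$: (i) the maximal boundary $\partial_{1,\ldots,\PLm}\colon H_n(W,S_{1,\ldots,\PLm})\to H_{n-\PLm}(W\cap S^{1,\ldots,\PLm})$ is injective, and (ii) the maximal coboundary $\fibSphere_{1,\ldots,\PLm}\colon H_{n-\PLm}(\overline W\cap S^{1,\ldots,\PLm},\partial W)\to H_n(\overline W\setm S_{1,\ldots,\PLm},\partial W)$ is surjective. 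Fact (i) is immediate from \cref{ss:vanishers}: $\partial_{1,\ldots,\PLm}\vcell=\pm[\Sphere^{n-\PLm}]$ is the fundamental class of the (nonempty, since $\PLm\leq n$) vanishing sphere, a primitive element of $H_{n-\PLm}(W\cap S^{1,\ldots,\PLm})\cong H_{n-\PLm}(\Sphere^{n-\PLm})$, while $\vcell$ generates $H_n(W,S_{1,\ldots,\PLm})\cong\ZZ$, so $\partial_{1,\ldots,\PLm}$ is injective. Fact (ii) is its Lefschetz-dual counterpart and is part of the explicit quadric computation in \cref{sec:homology-groups}. (For a linear pinch $\PLm=n+1$ both fail, since then $S^{1,\ldots,\PLm}=\varnothing$ --- this is the origin of the exceptional behaviour of linear pinches in \cref{rem:lin-pinch-types}.)

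For the last boundary \eqref{eq:last-boundary} we have $K=\set{k}$ and $I\sqcup J\sqcup\set{k}=\set{1,\ldots,\PLm}$. By \cref{prop:PL-isos-JKnon0} the source is infinite cyclic and $\fibSphere_J\partial_{I\sqcup J}$ carries the generator $\vcell$ of $H_n(W,S_{1,\ldots,\PLm})$ to a generator. Since $\partial_k$ anticommutes with each $\fibSphere_j$ ($j\in J$) and with the partial boundaries, $\partial_k\,\fibSphere_J\partial_{I\sqcup J}\vcell=\pm\fibSphere_J\partial_{1,\ldots,\PLm}\vcell$, where now $\fibSphere_J$ is the isomorphism $H_{n-\PLm}(W\cap S^{1,\ldots,\PLm})\xrightarrow{\cong}H_{n-\abs I-1}(W\cap S^{I+k}\setm S_J)$ of \cref{cor:Kzero-isos}. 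Thus $\partial_k$ sends a generator of its cyclic source to $\pm\fibSphere_J[\Sphere^{n-\PLm}]$, an element of infinite order by (i), so $\partial_k$ is injective.

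For the first coboundary \eqref{eq:first-coboundary} we have $J=\varnothing$, $i\in I$, and $I\sqcup K=\set{1,\ldots,\PLm}$; here one runs the mirror argument with $\overline W$ in place of $W$ and partial boundaries and coboundaries exchanged. By \cref{prop:PL-isos-JKnon0}, applied to the configuration $(I-i,\set{i},K)$, the codomain is infinite cyclic and $\fibSphere_{(I-i)\sqcup K}\partial_K$ is an isomorphism onto $H_n(\overline W\setm S_{1,\ldots,\PLm},\partial W)$; by \cref{ss:homgroups-duality} the partial boundary $\partial_K$ is an isomorphism from the domain $H_{n-\abs I}(\overline W\cap S^I,\partial W\cup S_K)$ onto $H_{n-\PLm}(\overline W\cap S^{1,\ldots,\PLm},\partial W)$. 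Pushing $\fibSphere_i$ through $\partial_K$ (they anticommute, as $i\notin K$) and merging coboundaries yields $\fibSphere_{(I-i)\sqcup K}\partial_K\circ\fibSphere_i=\pm\fibSphere_{1,\ldots,\PLm}\circ\partial_K$. Since $\partial_K$ is onto, the isomorphism $\fibSphere_{(I-i)\sqcup K}\partial_K$ carries the image of $\fibSphere_i$ onto the image of $\fibSphere_{1,\ldots,\PLm}$; hence $\fibSphere_i$ is surjective if and only if $\fibSphere_{1,\ldots,\PLm}$ is, which is (ii).

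The only substantive input is the pair (i)--(ii); everything else is the anticommutation calculus of $\partial_\bullet$ and $\fibSphere_\bullet$ combined with the structure of the homology groups already determined. The step that needs genuine care is the surjectivity (ii): unlike (i) it is not read off from a single formula, and establishing it amounts to the classical Picard--Lefschetz analysis of the quadric $S^{1,\ldots,\PLm}$ carried out in \cref{sec:homology-groups}, or equivalently to transporting (i) across the $A\leftrightarrow B$ duality of \cref{sec:verdier+intersection}.
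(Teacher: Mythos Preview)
Your argument is correct, but it takes a longer path than the paper's. You reduce everything to the ``universal'' full intersection $S^{1,\ldots,\PLm}$ and then invoke the two facts (i) and (ii) about the maximal boundary and coboundary; this works, and the anticommutation bookkeeping is right. The paper, by contrast, never leaves the local long exact sequence. For injectivity of \eqref{eq:last-boundary} it first reduces to $J=\varnothing$ via the isomorphism $\fibSphere_J$ in the codomain (\cref{cor:Kzero-isos}), and then observes that the term immediately to the left of $\partial_k$ in the boundary sequence is $H_{n-\abs{I}}(W\cap S^I)$, which vanishes because $W\cap S^I$ is contractible (\cref{lem:intersect-contractible}) and $n-\abs{I}>0$. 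Dually, for surjectivity of \eqref{eq:first-coboundary} it reduces to $K=\varnothing$ and notes that the term after $\fibSphere_i$ in the residue sequence is $H_{n-\abs{I}+1}(\overline{W}\cap S^{I-i},\partial W)\cong H^{n-\abs{I}+1}(W\cap S^{I-i})=0$ by Poincar\'{e}--Lefschetz duality and contractibility. So the paper's input is just contractibility of proper intersections, not the vanishing sphere or the full quadric computation.

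What each approach buys: the paper's proof is shorter and entirely self-contained---it needs no reference to the generator $\vcell$ or to fact~(ii), whose integral (as opposed to rational) surjectivity you correctly flag as the delicate step and which requires assembling pieces from the proof of item~(2) in \cref{ss:homgroups-duality}. Your approach, on the other hand, makes transparent \emph{why} the quadratic case differs from the linear one: both (i) and (ii) hinge on $S^{1,\ldots,\PLm}\simeq\Sphere^{n-\PLm}$ being nonempty, which is exactly the hypothesis $\PLm\leq n$. This is a nice conceptual point, even if the route to it is less economical.
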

\begin{proof}
    Since all boundaries $\partial_K$ are isomorphisms in the domain and $\fibSphere_i\partial_K=(-1)^{\abs{K}}\partial_K\fibSphere_i$, it suffices to prove surjectivity of \eqref{eq:first-coboundary} for $K=\varnothing$. Then the term after \eqref{eq:first-coboundary} in the long exact residue sequence is zero,
    \begin{equation*}
        H_{n-\abs{I}+1}(\overline{W}\cap S^{I-i},\partial W)
        \cong H^{n-\abs{I}+1}(W\cap S^{I-i})
        = 0,
    \end{equation*}
    by Poincar\'{e}-Lefschetz duality and contractibility of $W\cap S^{I-i}$ (\cref{lem:intersect-contractible}). Similarly, for the injectivity of \eqref{eq:last-boundary}, it suffices to consider $J=\varnothing$, since the coboundaries $\fibSphere_J$ are isomorphisms in the codomain and $\partial_k\fibSphere_J=(-1)^{\abs{J}}\fibSphere_J\partial_k$.
    Then the group to the left of \eqref{eq:last-boundary} in the boundary sequence of $S_k$ is $H_{n-\abs{I}}(W\cap S^I)=0$, because $W\cap S^I$ is contractible and $\abs{I}<m\leq n$.
\end{proof}
\begin{rem}
    For linear pinch, one finds instead that \eqref{eq:first-coboundary} and \eqref{eq:last-boundary} are zero, because $\partial_k = (-1)^{\abs{J}}\fibSphere_J \partial_k \fibSphere_J^{-1} = 0$ and $\fibSphere_i=(-1)^{\abs{K}}\partial_K^{-1}\fibSphere_i \partial_K=0$ factor through the trivial homology group of $S^{1,\ldots,n+1}=\varnothing$.
\end{rem}

We conclude that, for a quadratic pinch, all coboundary maps are surjective in the domain, and all boundary maps are injective in the codomain.
Using \cref{cor:PL-IJK-reduction}, the proof of \cref{thm:PL-IJK} thus reduces to the single case $J=K=\varnothing$ of the complete intersection $I=\set{1,\ldots,\PLm}$. The first $\PLm-1$ coordinates are zero on this intersection $S^I$ and can be forgotten, so that
\begin{equation*}
    S^I \cong \set{x_1^2+\ldots+x_{r+1}^2=t} \subset \CC^{r+1}
\end{equation*}
where $r=n-\PLm$. The vanishing cycles $\vcyc=\dvcyc=[\Sphere^{r}]$ are equal and given by the fundamental class of the sphere $\Sphere^{r}=\set{u\in \RR^{r+1}\colon\norm{u}=1}$, embedded as
\begin{equation*}
    \Sphere^{r} \hookrightarrow W\cap S^{I},
    \quad
    u \mapsto x=u\cdot \sqrt{t}.
\end{equation*}
This vanishing sphere parametrizes the real part of $S^I$, and it is a deformation retract (\cref{lem:vanishing-spheres}) such that $H_{\bullet}(W\cap S^I) \cong H_{\bullet}(\Sphere^{r})$. The claim \eqref{eq:PL-IJK},
\begin{equation*}
    \var h = (-1)^{(r+1)(r+2)/2} \cdot \is{\vcyc}{h}\cdot \vcyc,
\end{equation*}
is the classical Picard-Lefschetz formula. A proof can be found e.g.\ in \cite{Lamotke:TopVarLef,Lamotke:HomIsoSing}; we include a proof in \cref{sec:quadratic-pinch-JKzero} for convenience of the reader.

\begin{rem}
    As a corollary of the Picard-Lefschetz theorem and \eqref{eq:variation-vcyc}, 
    we can read off the well-known self-intersection number \cite{Fary:SelfSphere} of the real sphere $\Sphere^r$ inside the quadric $\set{x_1^2+\ldots+x_{r+1}^2=1}\subset \CC^{r+1}$:
    \begin{equation*}
        \is{[\Sphere^r]}{[\Sphere^r]} = (-1)^{(r+1)(r+2)/2}\cdot \left[(-1)^{r+1}-1\right]
        =\begin{cases}
            0 & \text{if $r$ odd, and} \\
            2\cdot (-1)^{r/2} & \text{if $r$ even.}\\
        \end{cases}
    \end{equation*}
\end{rem}

\subsection{Arbitrary simple pinch}\label{sec:arbitrary-simple-pinch}
The observations in \cref{sec:quadratic-pinch} generalize beyond quadratic pinches. Extending \cref{defn:pinches}, we say that a critical point $p\in cS$ on a stratum $S\subseteq D_1\cap\ldots\cap D_{\PLm}$ of codimension $\PLm$ is a \textbf{simple pinch} if there exist local coordinates $(x,t)$ near $p$ such that $\pi(x,t)=t$ and
\begin{equation}\label{eq:general-simple-pinch}\begin{aligned}
    D_1 &= \set{x_1=0},\quad\ldots,\quad D_{\PLm-1}=\set{x_{\PLm-1}=0},\\
    D_m &= \set{x_1+\ldots+x_{\PLm-1}+f(x_{\PLm},\ldots,x_n)=t}
\end{aligned}\end{equation}
for some polynomial $f\colon \CC^{r+1}\rightarrow \CC$ with an isolated critical point at the origin ($r=n-\PLm$). Milnor \cite{Milnor:SingComplex,MolinaSeade:MilnorRC} showed that the boundary $\partial W\cong\Sphere^{2n-1}$ of the ball $W=\set{\norm{x}<\epsilon}$, with $\epsilon$ small enough, intersects the arrangement $D_{1,\ldots,\PLm}$ transversely (hence the variation localizes) and moreover, the local fibres
\begin{equation}\label{eq:milnor-bouquet}
    S_t\cap W \simeq \Sphere^r\vee\ldots\vee\Sphere^r
\end{equation}
over sufficiently small $t\neq 0$ are homotopy equivalent to a bouquet (wedge sum) of finitely many spheres. The number of these spheres is called the \emph{Milnor number} $\nMil$. For a quadratic pinch, $\nMil=1$. For any proper subset $I\subsetneq\set{1,\ldots,\PLm}$, the arrangement $D_I$ has no critical point, $\nMil=0$, and thus
\begin{equation*}
    D^I_t\cap W \simeq \set{*}
\end{equation*}
is contractible, with the homology of a point. With this generalization of \cref{lem:intersect-contractible}, and \eqref{eq:milnor-bouquet} standing in for \cref{lem:vanishing-spheres}, the proofs in \cref{sec:homology-groups} still apply---we only need to replace $\ZZ$ by $\ZZ^{\nMil}$. In particular, for \emph{any} non-linear simple pinch ($\PLm\leq n$), it remains true that all boundary maps in the domain, and all coboundary maps in the codomain, are isomorphisms (in degree $n-\abs{I}$). Moreover, for any tripartition $I\sqcup J\sqcup K=\set{1,\ldots,\PLm}$, the maps
\begin{align*}
    H_{n-\abs{I}}(\overline{W}\cap D^I\setm D_J,\partial W\cup D_K)_t 
    &\stackrel{\fibSphere_{I\sqcup K}\partial_{K}}{\relbar\joinrel\relbar\joinrel\relbar\joinrel\relbar\joinrel\twoheadrightarrow} 
    H_n(\overline{W}\setm D_{1,\ldots,\PLm},\partial W)_t \cong\ZZ^{\nMil} \\
    \ZZ^{\nMil}\cong H_n(W, D_{1,\ldots,\PLm})_t 
    &\stackrel{\fibSphere_J\partial_{I\sqcup J}}{\lhook\joinrel\relbar\joinrel\relbar\joinrel\relbar\joinrel\relbar\joinrel\rightarrow} 
    H_{n-\abs{I}}(W\cap D^I\setm D_J,D_K)_t
\end{align*}
stay surjective in the domain, and injective in the codomain, of $\var_p$.
\begin{cor}\label{cor:varp-simple-factor-JK}
    The local variation of any simple pinch $p\in A^J\cap B^K$ of codimension $\PLm=\abs{J\sqcup K}$ factorizes through the boundary $\partial_K$ in the domain, and the coboundary $\fibSphere_J$ in the codomain:\footnote{For a linear pinch, these maps can vanish: $\fibSphere_J=0$ if $K=\varnothing$ ($A$ type) and $\partial_K=0$ if $J=\varnothing$ ($B$ type). Nevertheless, the factorization still holds, because $\var_p=0$ in these cases.}
\begin{equation}\label{eq:varp-simple-factor-JK}\begin{gathered}\xymatrixcolsep{17mm}\xymatrix{
    H_n(\overline{W}\setm A,\partial W\cup B)_t \ar[r]^-{\var_p} \ar@{->>}[d]^{\fibSphere_K\partial_K} & H_n(W\setm A,B)_t \\
    H_n(\overline{W}\setm D_{1,\ldots,\PLm},\partial W)_t \ar[d]^{\cong} & H_n(W,D_{1,\ldots,\PLm})_t \ar@{^{(}->}[u]^{\fibSphere_J\partial_J}  \\
    \ZZ^{\nMil} \ar[r] & \ZZ^{\nMil} \ar[u]^{\cong} \\
}\end{gathered}\end{equation}
\end{cor}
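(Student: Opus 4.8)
The plan is to assemble the commutative diagram \eqref{eq:varp-simple-factor-JK} out of three ingredients already in place: the functoriality diagram \eqref{eq:var-functorial}, the injectivity/surjectivity statements for the relevant (co)boundary maps established just above, and the identification of the extreme groups with $\ZZ^{\nMil}$ coming from \eqref{eq:milnor-bouquet} and the generalization of \cref{sec:homology-groups} discussed in \cref{sec:arbitrary-simple-pinch}. First I would set $I=\varnothing$ in the local variation \eqref{eq:var-IJK}, so that $\var_p$ is the top horizontal arrow, and recall that by \cref{thm:loc-factorization} it is computed from a single smooth ambient isotopy $g_\gamma$; hence \eqref{eq:var-functorial} applies, giving for each $j\in J$ and $k\in K$ the commutation of $\var_p$ with $\partial_k$ (lowering into $D_k$) in the domain and with $\fibSphere_j$ (tubing around $D_j$) in the codomain. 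Iterating these squares over all $k\in K$ and all $j\in J$ produces the two vertical composites $\fibSphere_K\partial_K$ on the left and $\fibSphere_J\partial_J$ on the right, and a commutative rectangle whose middle row is the variation map $H_n(\overline{W}\setm D_{1,\ldots,\PLm},\partial W)_t \to H_n(W,D_{1,\ldots,\PLm})_t$.

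Next I would justify the arrow types. For the left column: the generalized version of \cref{prop:PL-isos-JKnon0} (with $\ZZ$ replaced by $\ZZ^{\nMil}$) says that each boundary $\partial_k$ is an isomorphism in the domain except possibly the last one, and by the argument in the proof of the surjectivity lemma of \cref{sec:quadratic-pinch}—whose key input, the vanishing of $H_{\bullet}(\overline{W}\cap D^{I-i},\partial W)\cong H^{\bullet}(W\cap D^{I-i})=0$ on a proper intersection, holds verbatim for any non-linear simple pinch since $D^{I-i}_t\cap W\simeq\set{*}$—each such map is in fact surjective, so the composite $\fibSphere_K\partial_K$ is surjective. Dually, for the right column, all the $\fibSphere_j$ are isomorphisms in the codomain save possibly the first, which is injective because the preceding term in the residue sequence, $H_{n-\abs{I}-1}(W\cap D^{I+j})=0$, vanishes by contractibility of proper intersections together with $\abs{I}<m\leq n$. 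Hence $\fibSphere_J\partial_J$ is injective. The bottom isomorphisms $H_n(\overline{W}\setm D_{1,\ldots,\PLm},\partial W)_t\cong\ZZ^{\nMil}$ and $H_n(W,D_{1,\ldots,\PLm})_t\cong\ZZ^{\nMil}$ are exactly the generalized homology computations of \cref{sec:homology-groups}, where \eqref{eq:milnor-bouquet} replaces the deformation-retract-onto-a-sphere lemma, and the bottom horizontal arrow is the induced variation on $\ZZ^{\nMil}$; it makes the lower square commute by naturality of those identifications under $\var_p$ (again a consequence of \eqref{eq:[(co)boundary,pushforwad]}, since $\var_p=g_{\gamma*}-\id$). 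Finally, the footnote case of a linear pinch is handled separately: there $D^{1,\ldots,n+1}=\varnothing$, so $\fibSphere_J=0$ when $K=\varnothing$ and $\partial_K=0$ when $J=\varnothing$, but \cref{rem:lin-pinch-types} gives $\var_p=0$ in precisely those two cases, so the diagram trivially commutes; in the mixed linear case the relevant (co)boundaries are genuine isomorphisms by \cref{prop:PL-isos-JKnon0} and the argument above goes through unchanged.

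The main obstacle I anticipate is not any single estimate but bookkeeping: making sure the iterated application of the elementary squares in \eqref{eq:var-functorial} genuinely telescopes into the two vertical composites with the correct signs and source/target groups, and confirming that the surjectivity/injectivity inputs—stated in \cref{sec:quadratic-pinch} for quadratic pinches—really do carry over to the general polynomial $f$ of \eqref{eq:general-simple-pinch} using only the two facts $D^I_t\cap W\simeq\set{*}$ for $I\subsetneq\set{1,\ldots,\PLm}$ and $S_t\cap W\simeq\bigvee\Sphere^r$. Since both facts are Milnor's theorems cited in \cref{sec:arbitrary-simple-pinch}, and the exact-sequence arguments used them only through the induced vanishing $H^{\bullet}(W\cap D^{I})=0$ in the appropriate degrees and the concentration of $H_\bullet(S_t\cap W)$, the reduction is mechanical; I would simply remark that every appearance of $\ZZ$ in the proofs of \cref{sec:homology-groups,sec:quadratic-pinch} is to be replaced by $\ZZ^{\nMil}$ and that no other step of those arguments used linearity of $D_m$ or $\nMil=1$.
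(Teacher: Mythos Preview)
Your proposal is correct and matches the paper's approach exactly: the corollary is stated without separate proof, as an immediate consequence of the preceding paragraph, which asserts precisely that the arguments of \cref{sec:homology-groups} and \cref{sec:quadratic-pinch} go through verbatim with $\ZZ$ replaced by $\ZZ^{\nMil}$, yielding the surjectivity of $\fibSphere_K\partial_K$ in the domain and the injectivity of $\fibSphere_J\partial_J$ in the codomain. One small bookkeeping slip to correct: in the right column it is the last \emph{boundary} $\partial_j$ (and only when $K=\varnothing$) that is merely injective rather than an isomorphism, while all $\fibSphere_j$ stay isomorphisms in the codomain by (the generalization of) \cref{cor:Kzero-isos}; dually, on the left it is the first \emph{coboundary} that is merely surjective.
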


The fundamental classes $\vcyc_i=[\Sphere^r]\in\widetilde{H}_{n-r}(W\cap D^{1,\ldots,\PLm})_t \cong \ZZ^{\nMil}$ define a basis of the reduced homology of the bouquet \eqref{eq:milnor-bouquet}, to which corresponds a basis $\vcell_1,\ldots,\vcell_\nMil\in H_n(W,D)_t \cong\ZZ^{\nMil}$ of vanishing cells such that $\vcyc_i=\partial_{1,\ldots,\PLm} \vcell_i$. By Poincar\'{e}-Lefschetz duality, the intersections $h\mapsto\is{\vcell_k}{h}$ yield a basis of the dual of $H_n(\overline{W}\setm D_{1,\ldots,\PLm},\partial W)\cong\ZZ^{\nMil}$, and we can write
\begin{equation*}
    \var_p(h) = \sum_{i,j=1}^{\nMil} M_{ij} \cdot \is{\vcell_i}{\fibSphere_K\partial_K h} \cdot \fibSphere_J\partial_J \vcell_j
\end{equation*}
for some matrix $M\in M_{\nMil \times \nMil}(\ZZ)$. Hence, if we define $\dvcyc_j=\pm \fibSphere_K\partial_K \sum_i M_{ij}\vcell_i$ and $\vcyc_i=\pm \fibSphere_J\partial_J \vcell_i$, with suitable signs as in \cref{defn:van-cyc}, then we can write a Picard-Lefschetz formula for an arbitrary simple pinch in the form
\begin{equation}\label{eq:PL-any-simple-pinch}
    \var_p(h) = (-1)^{(n+1)(n+2)/2}\cdot\sum_{i=1}^{\nMil} \is{\dvcyc_i}{h}\cdot \vcyc_i.
\end{equation}
Note that the same formula \eqref{eq:sum-of-varp-PL} describes $\nMil$ quadratic pinches, and it can indeed be derived by considering a deformation of the singularity \cite[\S 6]{Lamotke:HomIsoSing}.

As in the quadratic case, the precise formula (the matrix $M$) is fully determined by the variation of the complete intersection $D^{1,\ldots,\PLm}_t=\set{f=t}$. The monodromy of such isolated hypersurface singularities is a vast subject; as an example let us only note Pham's contribution \cite{Pham:PLgeneralisees} of the highly structured case $f=x_m^{a_m}+\ldots+x_n^{a_n}$ with arbitrary integer exponents $a_i\geq 2$.

\section{The hierarchy principle}\label{sec:hierarchy}

In the Picard-Lefschetz theorem, the partition $\irrone{D}=\irrone{A}\sqcup\irrone{B}$ of the components of the divisor $D=A\cup B$ is important: vanishing cycles are coboundaries (tubes) around $A$ with boundary in $B$, whereas the dual vanishing cycles are coboundaries around $B$ with boundary in $A$. The \emph{hierarchy principle} refers to the vanishing of iterated variations for simple set-theoretic reasons, due to merely keeping track of which elements of $\irrone{A}$ and $\irrone{B}$ support a simple pinch.

\begin{defn}
    We say that a stratum $S$ of the canonical stratification of $D$ has \textbf{type} $(J,K)$ if $S$ is a connected component of
    \begin{equation*}\label{eq:stratum-type-JK}\tag{$\sharp$}
        A^J\cap B^K \setm \bigcup_{i\notin J\sqcup K} D_i.
    \end{equation*}
\end{defn}
The type of a stratum $S$ lists the components of $A$ and $B$ that contain $S$: $J=\set{j\colon S\subseteq A_j}$ and $K=\set{k\colon S\subseteq B_k}$. Since every point lies on a unique stratum, we also say that ``$p$ has type $(J,K)$'' if the stratum containing $p$ has this type, that is, if $p$ is an element of the set \eqref{eq:stratum-type-JK}.

\subsection{A preorder on simple pinches}\label{sec:pinch-preorder}
For every simple pinch $p$, we defined a local variation endomorphism $\Var_p = \iota_{p*} \var_p W_{p*}$ of $H_n(Y\setm A,B)_t$ in \cref{cor:sum-of-varp}, for $t\in T\setm L$ sufficiently close to the critical value $\pi(p)$. For a second simple pinch $p'$, the local variation $\Var_{p'}$ is defined canonically for $t'$ near $\pi(p')$. To make sense of the iterated variation $\Var_{p'}\circ\Var_p$, we need to identify
\begin{equation*}
    \eta_*\colon H_n(Y\setm A,B)_t \xrightarrow{\ \cong\ } H_n(Y\setm A,B)_{t'}
\end{equation*}
by parallel transport. Hence the iterated variation should more precisely be written as $\Var_{p'}\circ\Var_p=\eta_*^{-1}\Var_{p'} \eta_* \Var_p$ and it depends on the choice of a path $\eta$ in $T\setm L$ from $t$ to $t'$. However, the vanishing results derived below apply for \emph{any} choice of $\eta$, hence we write just $\Var_{p'}\circ \Var_p$.
\begin{figure}
    \centering
    \includegraphics[scale=0.25]{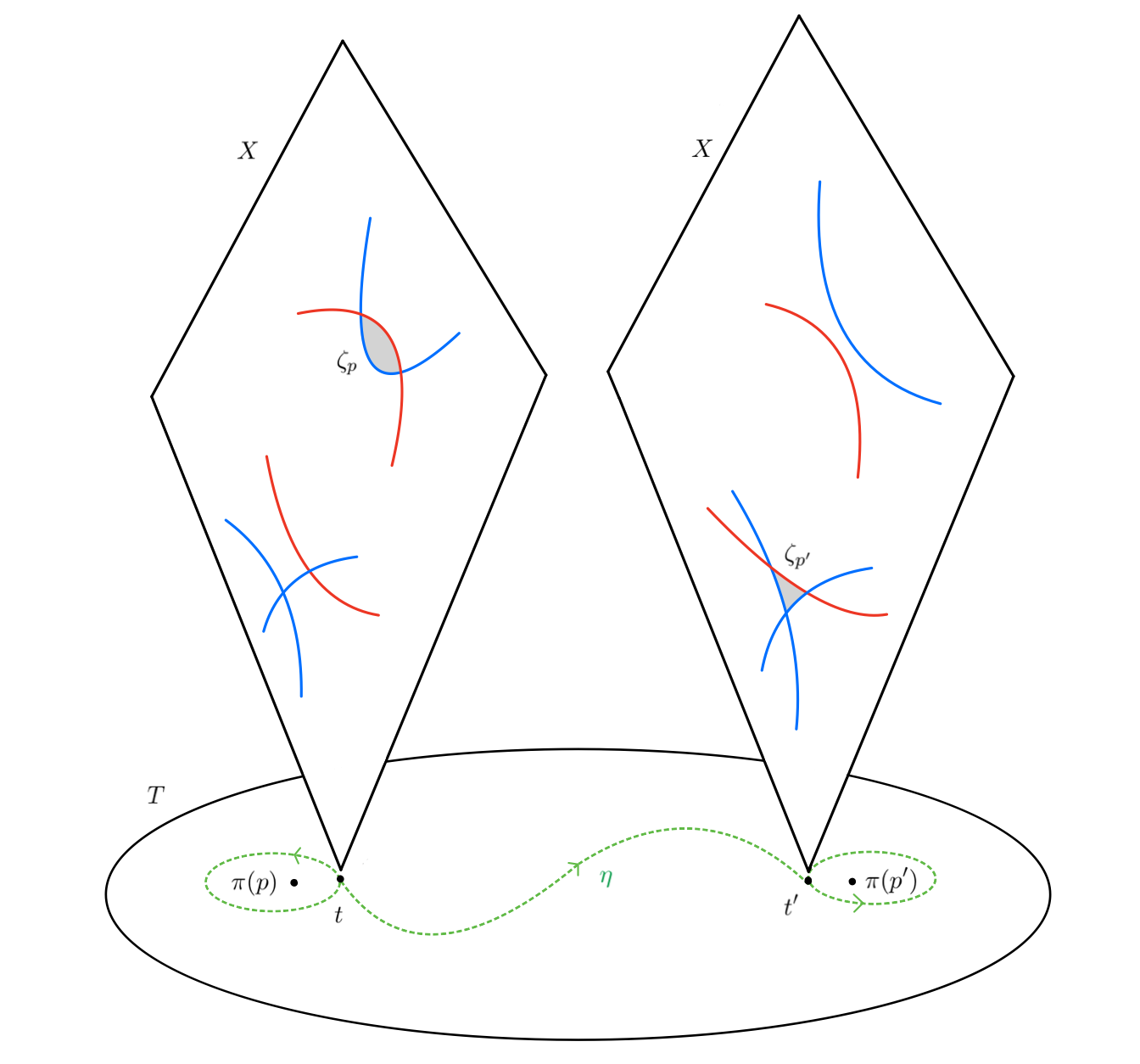}%
    \caption{Iterated variation around two critical values.}%
    \label{fig:double_var}%
\end{figure}

By the Picard-Lefschetz theorem, $\Var_p$ maps any class in $H_n(X\setm A,B)_t$ to a multiple of the vanishing cycle $\vcyc_p$. So if $\Var_{p'} (\eta_* \vcyc_p)=0$, it follows that the iterated variation $\Var_{p'} \circ \Var_p$ is zero.
Rewriting the intersection number
\begin{equation*}
    \is{\dvcyc_{p'}}{\vcyc_p} 
    = \pm \is{\partial_{K'}\vcell_{p'}}{\partial_{K'}\vcyc_p}
    = \pm \is{\partial_{J}\dvcyc_{p'}}{\partial_{J}\vcell_p}
\end{equation*}
as in \cref{rem:is-partialB},\footnote{More precisely, $\is{\dvcyc_{p'}}{\eta_*\vcyc_p} 
    = \pm \is{\iota_{p'*}\partial_{K'}\vcell_{p'}}{\eta_*\partial_{K'}\vcyc_p}
    = \pm \is{\partial_{J}\dvcyc_{p'}}{\eta_*\iota_{p*}\partial_{J}\vcell_p}$.} we see that such vanishing is implied by $\partial_{K'}\vcyc_p=0$ or $\partial_J \dvcyc_{p'}=0$. This observation leads directly to the hierarchy in \eqref{eq:pinch-hierarchy} below. However, intersection theory is not required to explain this phenomenon; what matters more fundamentally is the factorization of $\Var_p$---through $\partial_K$ in the domain, and through $\fibSphere_J$ in the codomain (\cref{thm:loc-factorization,cor:varp-simple-factor-JK}).

The cycle $\vcyc_p$ is supported in a small ball $W_p$ that intersects only those $A_j$ and $B_k$ where $j\in J$ and $k\in K$ are in the type $(J,K)$ of $p$. For any \emph{other} boundary component ($k\notin K$), we have $W_p\cap B_k=\varnothing$ and $\partial_{B_k} \vcyc_p=0$. After continuation to $t'$, the cycle $\eta_* \vcyc_p$ need not be contained in any small ball anymore---but it does stay away from $B_k$ and so $\partial_{B_k} \eta_* \vcyc_p=0$, for $k\notin K$. This follows from the diagram
\begin{equation*}\xymatrixcolsep{15mm}\xymatrix{
    H_n(Y\setm A,B)_t \ar[r]^-{\partial_{B_k}} \ar[d]^{\eta_*} & H_{n-1}(B_k\setm A, B')_t \ar[d]^{\eta_*} \\
    H_n(Y\setm A,B)_{t'} \ar[r]^-{\partial_{B_k}} & H_{n-1}(B_k\setm A,B')_{t'} \\
}\end{equation*}
where $B'=\bigcup_{i\neq k} B_i$. The diagram commutes (see \cref{sec:coboundary-comm-functor}), because an isotopy of the stratified set $D$ descends to compatible local trivializations of the fibre bundles $(Y\setm A,B)$ and $(B_k\setm A,B')$ over $T\setm L$, and furthermore the isotopy can be chosen to be smooth.

As pointed out in \cref{rem:is-partialB} and \cref{cor:varp-simple-factor-JK}, the variation $\Var_{p'}$ factors through the iterated boundary $\partial_{K'}$ of the second simple pinch $p'$ with type $(J',K')$. So if $K'\setm K$ is non-empty, and hence $\partial_{K'}\eta_*\vcyc_p = 0$ as explained above, it follows that $\Var_{p'} (\eta_* \vcyc_p )= 0$.
We conclude that
\begin{equation}\label{eq:itvar_B}
    \Var_{p'} \circ \Var_p = 0
    \quad \text{if $K'\nsubseteq K$}.
\end{equation}

Now consider some $j\in J$ and let $A'=\bigcup_{i\neq j} A_i$. As a coboundary around $A_j$, the vanishing cycle $\vcyc_p$ becomes zero under the inclusion $\kappa\colon Y\setm A\hookrightarrow Y\setm A'$, because the composition $\kappa_*\fibSphere_j=0$ vanishes as part of the residue sequence
\begin{equation*}
    H_{n-1}(A_j\setm A',B)_t \xrightarrow{\fibSphere_j} H_{n}(Y\setm A,B)_t \xrightarrow{\kappa_*} H_{n}(Y\setm A',B)_t.
\end{equation*}
As discussed above, such vanishing $\kappa_*\vcyc_p=0$ persists by parallel transport also at $t'$, whence $\kappa_*(\eta_*\vcyc_p)=0$. If $A_j$ does not contain $p'$, then $A_j$ does not intersect the ball $W_{p'}$ and so $W_{p'}\setm A=W_{p'}\setm A'$. Then the trace map $W_{p'*}$ factors through $\kappa_*$, and thus annihilates the vanishing cycle. It follows that
\begin{equation}\label{eq:itvar_A}
    \Var_{p'} \circ \Var_p = 0
    \quad \text{if $J'\nsupseteq J$}.
\end{equation}

\begin{defn}\label{def:simple-pinch-relation}
    We define a relation on the set of simple pinches as follows:
    For simple pinches $p$ and $p'$ with type $(J,K)$ and $(J',K')$, we set
    \begin{equation*}
        p'\CV p \quad\text{if}\quad\text{$J'\supseteq J$ and $K'\subseteq K$}.
    \end{equation*}
\end{defn}
This relation is transitive and reflexive (a preorder). We can then summarize the observations made above as
\begin{equation}\label{eq:pinch-hierarchy}
    \Var_{p'} \circ \Var_p = 0
    \quad \text{whenever}\quad p'\nCV p.
\end{equation}
\begin{rem}
    These conclusions about variations at simple pinches hold also in the presence of further, non-simple critical points.
\end{rem}

\subsection{Linear and quadratic refinements}\label{sec:hierarchy_refinements}
For a linear pinch $p$, the boundary $\partial_{K} \vcyc_p = \pm \iota_{p*}\fibSphere_J \partial_{1,\ldots,n+1} \vcell = 0$ of the vanishing cell (simplex) is empty.
In addition to the vanishing \eqref{eq:itvar_B} for $K'\setm K\neq\varnothing$, we therefore also have
\begin{equation} \label{eq:itvar_B_linear}
    \Var_{p'} \circ \Var_p = 0
    \quad \text{if}\quad \big(\text{$K'=K$ and $p$ linear}\big).
\end{equation}
Similarly, if $p'$ is a linear pinch, then $\partial_{J'}\dvcyc_{p'}=0$ extends the vanishing \eqref{eq:itvar_A} by the additional constraint
\begin{equation} \label{eq:itvar_A_linear}
    \Var_{p'} \circ \Var_p = 0
    \quad \text{if}\quad\big(\text{$J'=J$ and $p'$ linear}\big). 
\end{equation}
\begin{cor}\label{rem:simple-nilpotent}
    If all critical points are linear pinches, then $\Var_{p'} \circ \Var_p$ can only be non-zero if $J'\supset J$ and $K'\subset K$ are strict inclusions. Hence in this situation, the monodromy representation is unipotent: every sequence of $n$ or more variations is zero. We discuss this further in \cref{sec:Aomoto}.
\end{cor}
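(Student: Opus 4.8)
The plan is to deduce \cref{rem:simple-nilpotent} directly from the refined hierarchy for linear pinches, namely from \eqref{eq:pinch-hierarchy}, \eqref{eq:itvar_B_linear} and \eqref{eq:itvar_A_linear}, together with \cref{cor:sum-of-varp}. First I would unwind what ``all critical points are linear pinches'' means: every critical value $t_c\in L$ is a simple critical value in the sense of \cref{defn:simplecomponent} with only linear pinches in its fibre, so by \cref{cor:sum-of-varp} the variation $\Var_\gamma$ along any small loop is a finite sum $\sum_{p} \Var_p$ over linear pinches $p$, each of some type $(J_p,K_p)$. Consequently an arbitrary composition $\Var_{\gamma_r}\circ\cdots\circ\Var_{\gamma_1}$ expands, by multilinearity, into a sum of terms $\Var_{p_r}\circ\cdots\circ\Var_{p_1}$ with each $p_s$ a linear pinch. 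It therefore suffices to bound the length of nonzero compositions of \emph{local} variations at linear pinches.

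Next I would record the combinatorial consequence of \eqref{eq:pinch-hierarchy}, \eqref{eq:itvar_B_linear} and \eqref{eq:itvar_A_linear}: if $p_{s+1}$ and $p_s$ are linear pinches of types $(J_{s+1},K_{s+1})$ and $(J_s,K_s)$, then $\Var_{p_{s+1}}\circ\Var_{p_s}=0$ unless $J_{s+1}\supsetneq J_s$ and $K_{s+1}\subsetneq K_s$ are \emph{strict} inclusions. Indeed \eqref{eq:pinch-hierarchy} already forces $J_{s+1}\supseteq J_s$ and $K_{s+1}\subseteq K_s$; equality $J_{s+1}=J_s$ is excluded by \eqref{eq:itvar_A_linear} (since $p_{s+1}$ is linear) and equality $K_{s+1}=K_s$ is excluded by \eqref{eq:itvar_B_linear} (since $p_s$ is linear). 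Hence in any nonzero composition $\Var_{p_r}\circ\cdots\circ\Var_{p_1}$ the sets $K_1\supsetneq K_2\supsetneq\cdots\supsetneq K_r$ form a strictly decreasing chain of subsets of $\set{1,\ldots,r_{\irrone B}}$ (writing $r_{\irrone B}=\abs{\irrone B}$, $r_{\irrone A}=\abs{\irrone A}$), and likewise $J_1\subsetneq\cdots\subsetneq J_r$ is strictly increasing. A strict chain of subsets of a set with $s$ elements has length at most $s+1$; moreover each $p_s$ is linear, so $\abs{J_s}+\abs{K_s}=n+1$ is constant along the chain, which means each step increases $\abs{J_s}$ by exactly the amount it decreases $\abs{K_s}$. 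Since $\abs{J_s}$ can increase at most $r_{\irrone A}$ times and is in any case bounded above by $n+1$, a nonzero composition has at most $n+1$ factors; in particular any composition of $n+1$ or more local variations vanishes, and hence (by the multilinear expansion of the previous paragraph) so does any composition of $n+1$ or more of the operators $\Var_\gamma$. Thus the monodromy representation is unipotent: $(\gamma_*-\id)$ is nilpotent for every $\gamma$, and more strongly any product of that many variations is zero.

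The main obstacle is bookkeeping rather than conceptual: one must be careful that the multilinear expansion of an iterated $\Var_{\gamma}$ is legitimate, i.e.\ that the parallel-transport identifications $\eta_*$ interleaving the $\Var_{p_s}$ do not interfere with the vanishing statements. This is exactly the point already addressed in \cref{sec:pinch-preorder}: the relations \eqref{eq:itvar_B}, \eqref{eq:itvar_A} and their linear refinements were proved to hold for \emph{any} choice of connecting path $\eta$, because they follow from the support of $\vcyc_{p_s}$ (confined to the $B_k$ with $k\in K_s$) and the factorization of $\Var_{p_{s+1}}$ through $\partial_{K_{s+1}}$, both of which are preserved by parallel transport. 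So the composition $\Var_{p_r}\circ\cdots\circ\Var_{p_1}$ is zero as soon as two \emph{consecutive} types fail the strict-inclusion condition, regardless of the intervening paths, and the chain-length argument applies verbatim.
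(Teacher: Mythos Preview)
Your approach coincides with the paper's: the corollary is stated there without proof, as a direct consequence of \eqref{eq:pinch-hierarchy} together with the linear refinements \eqref{eq:itvar_B_linear} and \eqref{eq:itvar_A_linear}, and your derivation of the strict inclusions $J'\supsetneq J$, $K'\subsetneq K$ from those three statements is exactly right. The remark about parallel transport not interfering is also correctly handled, following \cref{sec:pinch-preorder}.

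The one weak spot is your length count, which as written is internally inconsistent (``a nonzero composition has at most $n+1$ factors'' followed immediately by ``any composition of $n+1$ or more local variations vanishes''). To obtain the sharp bound cleanly you should invoke \cref{rem:lin-pinch-types}: a linear pinch of pure $\irrone{A}$-type ($K_s=\varnothing$) or pure $\irrone{B}$-type ($J_s=\varnothing$) has $\Var_{p_s}=0$, so in a nonzero chain every $J_s$ and $K_s$ is nonempty. Combined with $\abs{J_s}+\abs{K_s}=n+1$ this gives $1\le\abs{J_s}\le n$, and the strict increase $\abs{J_1}<\cdots<\abs{J_r}$ forces $r\le n$. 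Hence any composition of \emph{more than} $n$ variations vanishes---which is precisely the bound used in \cref{sec:Aomoto}, where the $n$-fold variations \eqref{eq:aomoto-var-flag} are shown to be nonzero. (The phrase ``$n$ or more'' in the corollary statement itself thus appears to be an off-by-one slip.)
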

From \eqref{eq:variation-vcyc}, we also know that for a linear or quadratic simple pinch $p$ on a stratum with codimension $\PLm=\abs{J}+\abs{K}$, the repeated variation vanishes,
\begin{equation} \label{eq:itvar_n-m_odd}
   \Var_p\circ\Var_p = 0 \quad\text{if}\quad \text{$n-\PLm$ is odd.}
\end{equation}
This applies in particular to linear pinches ($\PLm=n+1$).
\begin{rem}
    We can incorporate the additional constraints \eqref{eq:itvar_B_linear}--\eqref{eq:itvar_n-m_odd} into \eqref{eq:pinch-hierarchy} by removing the relations $p'\CV p$ among those particular pairs of pinches. The resulting relation is still transitive, but not necessarily reflexive anymore.
\end{rem}

\subsection{A hierarchy on simple pinch components}\label{sec:hierarchy_spc}
Over every irreducible component $\ell\subseteq L$ of the Landau variety, we may have \emph{several} critical strata. Suppose that $\ell$ is a simple pinch component (\cref{defn:simplecomponent}), that is, all critical points over a generic point $t_c\in\ell$ are simple pinches.\footnote{We can allow here the more general notion \eqref{eq:general-simple-pinch} of simple pinch, not just the linear and quadratic ones from \cref{defn:pinches}.} Let $P_{\ell} \subset \pi^{-1}(t_c)$ denote these pinches.
Recall from \cref{cor:sum-of-varp} that 
\begin{equation*}
    \Var_{\ell} = \sum_{p \in P_{\ell}} \Var_p,
\end{equation*}
where we write $\Var_{\ell}$ for the variation $\Var_{\gamma}$ along a simple loop around $t_c\in \ell$. The operator $\Var_{\ell}$ does depend on the choice of $\gamma$, but this choice is irrelevant for the following vanishing results. Furthermore, we write $\Var_{\ell'}\circ\Var_\ell$ to abbreviate $\eta_*^{-1}\Var_{\ell'} \eta_* \Var_\ell$, as above.
\begin{defn}
    Define a binary relation $\CV$ on the set of simple pinch components of the Landau variety as follows:
    \begin{equation*}
       \ell' \CV \ell \quad \text{if there exist $p\in P_{\ell}$ and $p' \in P_{\ell'}$ such that $p'\CV p$.}
    \end{equation*}
\end{defn}
\begin{cor}\label{corr:simple-povar}
    For any pair $\ell,\ell'\subseteq L$ of simple pinch components, we have
    \begin{equation*}
        \Var_{\ell'} \circ \Var_{\ell} = 0 \quad\text{if}\quad \ell' \nCV \ell.
    \end{equation*}
\end{cor}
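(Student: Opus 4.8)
The plan is to deduce \cref{corr:simple-povar} from the pinch-level vanishing \eqref{eq:pinch-hierarchy} by linearity, using the decomposition of $\Var_\ell$ over the finitely many simple pinches over $\ell$. First I would recall from \cref{cor:sum-of-varp} that, after fixing the auxiliary path $\eta$ used to identify the fibres over $t_c\in\ell$ and $t_c'\in\ell'$, we have the literal identities
\begin{equation*}
    \Var_\ell = \sum_{p\in P_\ell}\Var_p, \qquad
    \Var_{\ell'} = \sum_{p'\in P_{\ell'}}\Var_{p'}
\end{equation*}
as endomorphisms of $H_n(Y\setm A,B)_{t_c}$ resp.\ $H_n(Y\setm A,B)_{t_c'}$. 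Composing and using bilinearity of composition gives
\begin{equation*}
    \Var_{\ell'}\circ\Var_\ell
    = \sum_{p\in P_\ell}\ \sum_{p'\in P_{\ell'}} \Var_{p'}\circ\Var_p .
\end{equation*}

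Next I would invoke the hypothesis $\ell'\nCV\ell$, which by definition says that there is \emph{no} pair $(p,p')\in P_\ell\times P_{\ell'}$ with $p'\CV p$; equivalently, $p'\nCV p$ for every such pair. By \eqref{eq:pinch-hierarchy}, each summand $\Var_{p'}\circ\Var_p$ therefore vanishes, so the whole double sum is zero. The only subtlety to spell out is that the iterated variations $\Var_{p'}\circ\Var_p$ in that double sum are the \emph{same} maps (for the same choice of $\eta$) as those appearing in the definition of $\Var_{\ell'}\circ\Var_\ell$: this is just the observation, already made in \cref{sec:pinch-preorder}, that $\Var_{\ell'}\circ\Var_\ell$ abbreviates $\eta_*^{-1}\Var_{\ell'}\eta_*\Var_\ell$, and that $\eta_*$ intertwines the decomposition over $P_{\ell'}$ termwise. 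Since \eqref{eq:pinch-hierarchy} holds for any choice of connecting path, the conclusion is independent of $\eta$, as asserted in the statement.

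There is essentially no obstacle here beyond bookkeeping; the genuine content sits upstream in \eqref{eq:itvar_B}, \eqref{eq:itvar_A} and the factorization \cref{cor:varp-simple-factor-JK}, which \cref{corr:simple-povar} merely repackages. The one point worth a sentence of care is that \cref{cor:sum-of-varp} was stated for simple \emph{critical values}, whereas a simple pinch \emph{component} $\ell$ only requires a \emph{generic} $t_c\in\ell$ to be simple; but since monodromy along a small loop depends only on the homotopy class and we are free to base the loop at such a generic $t_c$ (and similarly for $\ell'$), the decomposition applies and the argument goes through verbatim. Finally, one should note that if further, non-simple critical points happen to lie over $t_c$ or $t_c'$, they contribute additional summands to $\Var_\ell$ or $\Var_{\ell'}$; but \cref{corr:simple-povar} concerns only the case where $\ell,\ell'$ are simple pinch components, so $P_\ell$ and $P_{\ell'}$ exhaust all critical points over the chosen generic base points and the sums above are complete.
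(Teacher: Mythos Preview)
Your proposal is correct and follows essentially the same approach as the paper: decompose $\Var_\ell=\sum_{p\in P_\ell}\Var_p$ and $\Var_{\ell'}=\sum_{p'\in P_{\ell'}}\Var_{p'}$ via \cref{cor:sum-of-varp}, expand bilinearly, and kill every summand with \eqref{eq:pinch-hierarchy} since $\ell'\nCV\ell$ means $p'\nCV p$ for all pairs. The paper leaves this as an immediate corollary without a written proof; your remarks on the choice of $\eta$ and on basing the small loop at a generic $t_c\in\ell$ (so that all critical points are indeed simple pinches) are exactly the bookkeeping the paper tacitly assumes.
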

Due to the arbitrariness of the intermediate path $\eta$ in \cref{sec:pinch-preorder}, we have actually shown that for arbitrary paths $\gamma\in\pi_1(T\setm L,t_0)$,
\begin{equation*}
    \Var_{\ell'}\circ \Var_{\gamma} \circ \Var_{\ell} = 0
    \quad\text{if}\quad \ell'\nCV\ell.
\end{equation*}

\subsection{Generalization}\label{sec:hierarchy-general}
Above, we derived the hierarchy principle as a consequence of the explicit Picard-Lefschetz formula for (arbitrary) \emph{simple pinches}, as characterized by the local description \eqref{eq:general-simple-pinch}. What is ``simple'' about these critical points $p\in cS$ is not just that they are (fibrewise) isolated, but furthermore that the exclusion of any of the supporting hypersurfaces from the arrangement removes the critical point. We will demonstrate below that the latter property is sufficient to explain the hierarchy principle---without any recourse to the Picard-Lefschetz formula. This discussion will lead to a generalization of the hierarchy principle to ``non-simple'' critical points.

Consider some generic critical value $t_c\in \ell\subseteq L$. The corresponding critical points $P=\bigcup_{S\in\mathfrak{S}} cS\cap\pi^{-1}(t_c)$ need not be isolated; they form some closed subvariety $P\subset X$. In any case, we expect some form of the localization from \cref{ss:loc}: For every connected component $p\in\pi_0(P)$, suppose we can find a smooth manifold $W_p\subset X$ containing $p\subset W_p$, not intersecting any other components of $P$, and such that its boundary $\partial W_p$ is smooth and transverse to $D$. Then the proof of \cref{cor:sum-of-varp} still applies, so that
\begin{equation*}
    \Var_{\ell} = \sum_{p\in\pi_0(P)} \Var_p
    \quad\text{with}\quad
    \Var_p=\iota_{p*} \var_p W_{p*}
\end{equation*}
where $\var_p$ is defined just as in \cref{thm:loc-factorization}. For example, if $p\in\pi_0(P)$ is a smooth component, we can construct $W_p$ as a tubular neighbourhood of $p$. For an isolated point $p$, this reproduces the balls used in \cref{sec:simplepinchsings}.
\begin{defn}
    The \textbf{type} $(J,K)$ of a critical component $p\in\pi_0(P)$ is the set of those hypersurfaces $A_j$ ($j\in J$) and $B_k$ ($k\in K$) which intersect $p$.
\end{defn}
Note that if we choose $W_p$ small enough, then it will intersect only the hypersurfaces in the type of $p$: $W_p\cap A=W_p\cap A_J$ and $W_p\cap B=W_p\cap B_K$. Then the trace $W_{p*}$ factors through the inclusion $X\setm A\hookrightarrow X\setm A_J$, and $\iota_{p*}$ factors through the map of pairs $(X\setm A,B_K)\rightarrow (X\setm A,B)$:
\begin{equation}\label{eq:varp-factor-JK}\begin{gathered}\xymatrix{
    H_n(X\setm A,B)_t \ar@/_5pc/[dd]_{W_{p*}} \ar[d] \ar[r]^{\Var_p} & H_n(X\setm A,B)_t \\
    H_n(X\setm A_J,B)_t \ar[d] & H_n(X\setm A, B_K)_t \ar[u] \\
    H_n(\overline{W_p}\setm A_J, B_K) \ar[r]^{\var_p} & H_n(W_p\setm A_J,B_K) \ar[u] \ar@/_5pc/[uu]_{\iota_{p*}} \\
}\end{gathered}\end{equation}

The hierarchy constraints arise from hypersurfaces in the type that are indispensable to make $p$ critical. We call those hypersurfaces \emph{simple}.
\begin{defn}\label{def:simple-type}
    Given a critical component $p\in\pi_0(P)$ of type $(J,K)$, we call an element $i\in J\sqcup K$ and the corresponding hypersurface $D_i$ \textbf{simple} if the union $\bigcup_{k\neq i} D_k$ of all other components of $D$ has no critical points in $p$. We denote the simple elements of the type as $\simple{J}\subseteq J$ and $\simple{K}\subseteq K$.
\end{defn}
\begin{rem}\label{rem:simple-pinch-everything-simple}
    For a simple pinch, omitting \emph{any} hypersurface $D_1,\ldots,D_{\PLm}$ in \eqref{eq:general-simple-pinch} results in a non-critical arrangement. Therefore, the entire type is simple: $(\simple{J},\simple{K})=(J,K)$. In fact, this property characterizes simple pinches topologically: Every isolated critical point with $(\simple{J},\simple{K})=(J,K)$ admits local coordinates of the form \eqref{eq:general-simple-pinch}.
\end{rem}

Non-simple situations can arise even for isolated critical points, for example if one starts with a simple pinch $p$ and then enlarges the arrangement by an additional hyperplane that contains $p$.
\begin{eg}\label{eg:simple-nonsimple}
    Consider $Y=X\times T=\PP^1\times\CC$ with the transverse arrangement $D=D_1\cup D_2$ formed by the smooth hypersurfaces $D_1=\set{x^2=t}$ and $D_2=\set{x=0}$ in coordinates $[x:1]\in\PP^1$. The pair $(Y,D)$ has a single critical point, $c(D_1\cap D_2)=\set{(p,t_c)}$ over $t_c=0$, namely $p=[0:1]$.
    \begin{itemize}
        \item $D_1$ is simple: The remaining pair $(Y,D_2)\cong (\PP^1,\set{p})\times\CC$ is a (trivial) fibre bundle, without any critical points.
        \item $D_2$ is not simple: Even after removing $D_2$, the remaining pair $(Y,D_1)$ still has $c(D_1)=\set{(p,t_c)}$ as a critical point.
    \end{itemize}
\end{eg}
For simple pinches, the hierarchy follows from the factorization \eqref{eq:varp-simple-factor-JK} of $\var_p$ through each $\partial_k$ in the domain, and each $\fibSphere_j$ in the codomain. For general critical sets, such a factorization applies only to the simple indices. Given an index $k\in K$ or $j\in J$, consider the natural map of pairs
\begin{alignat*}{2}
    \rho_k\colon &\quad & \textstyle (X\setm A,\bigcup_{i\neq k} B_i) &\rightarrow(X\setm A,B),\\
    \kappa_j\colon & \quad & (X\setm A,B)  & \textstyle\rightarrow(X\setm \bigcup_{i\neq j}A_i,B).
\end{alignat*}
\begin{lem}\label{lem:simple-type-factorization}
    Let $p\in\pi_0(P)$ with type $(J,K)$ and simple part $(\simple{J},\simple{K})$. Then for every $k\in\simple{K}$, and for every $j\in\simple{J}$, we have
\begin{equation}\label{eq:Varp-JK-simple-0}
    \kappa_{j*}\Var_p = 0
    \qquad\text{and}\qquad
    \Var_p \rho_{k*} = 0
    .
\end{equation}
    Equivalently, the local variation at $p$ factors through the boundary map $\partial_k$ in the domain, and through the Leray coboundary $\fibSphere_j$ in the codomain.
\end{lem}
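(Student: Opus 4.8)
The key observation is that the equations $\kappa_{j*}\Var_p = 0$ and $\Var_p\rho_{k*} = 0$ are dual to each other under the $A\leftrightarrow B$ swap, so it suffices to prove one, say $\kappa_{j*}\Var_p = 0$ for $j\in\simple{J}$; the other follows by the symmetry exchanging $A_j\leftrightarrow B_k$, homology with the corresponding cohomology/duality, and reversing the loop (cf.\ \cref{sec:verdier+intersection}). So I would fix $j\in\simple{J}$ and work with the left-hand identity. The strategy is the same as the one used for the simple-pinch case in \cref{sec:pinch-preorder}: identify $\Var_p$ as an endomorphism that factorizes, through $W_{p*}$ and $\var_p$ and $\iota_{p*}$, via the localization diagram \eqref{eq:varp-factor-JK}, and then exploit the fact that $A_j$ being simple means the critical set $p$ survives removal of $A_j$.

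First I would set up the geometry. Choose $W_p$ small enough that it meets only the hypersurfaces in the type $(J,K)$ of $p$, so that $(W_p\setm A,B)=(W_p\setm A_J,B_K)$. Let $A' = \bigcup_{i\neq j}A_i$. Since $j\in\simple{J}$ is simple, the arrangement $\bigcup_{k\neq j}D_k = A'\cup B$ has \emph{no} critical points inside $p$; after shrinking the base to a small disk around $t_c$, Thom's first isotopy lemma (\cref{thm:Thom}) and the smooth version (\cref{prop:smoothom}) give a smooth ambient isotopy $g_\gamma'$ for the reduced arrangement $A'\cup B$ that is the identity on a neighbourhood of $p$ (indeed on all of $W_p$, by \cref{lem:localization} applied to a $W$ disjoint from $p$), i.e.\ $g'_\gamma$ can be taken trivial near $p$. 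The crucial point is then to compare $g_\gamma$ (the isotopy for $D = A\cup B$, supported in $\bigsqcup W_{p}$) with $g'_\gamma$, and to see that on $X\setm A$ the composite $\kappa\circ g_{\gamma}$ is homotopic, rel the relevant strata, to $\kappa\circ (\text{isotopy trivial near }p)$. Concretely: the inclusion $\kappa\colon X\setm A\hookrightarrow X\setm A'$ is $g'_\gamma$-equivariant up to homotopy on the complement of $W_p$, and on $W_p$ we have $\kappa(W_p\setm A) = \kappa(W_p\setm A')\subset X\setm A'$, so the push-forward $\kappa_*$ annihilates anything supported in $W_p$ that becomes a Leray coboundary around $A_j$.

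The cleanest way to package this is at the level of the factorization. By \eqref{eq:varp-factor-JK}, $\Var_p$ has image in $\iota_{p*}H_n(W_p\setm A_J,B_K)$, so $\kappa_{j*}\Var_p$ factors as $(\kappa_j\circ\iota_p)_* \var_p W_{p*}$. Now $\kappa_j\circ\iota_p$ is the composite $W_p\setm A_J\hookrightarrow W_p\setm A_{J-j}\hookrightarrow X\setm A'$; but $W_p\setm A_{J-j} = (W_p\setm A_{J-j})$, and since $A_j\cap W_p$ is a hypersurface whose complement-inclusion into $W_p\setm A_{J-j}$ is, after localization at the simple pinch, the inclusion whose composite with the Leray coboundary $\fibSphere_{A_j}$ is zero (the residue exact sequence), the target of $\var_p$ — which by \cref{cor:varp-simple-factor-JK}, or rather by the analogous statement we are about to derive as the second half of this very lemma — lands in the image of $\fibSphere_{A_j}$ whenever $A_j$ is simple. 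Here I would argue that simplicity of $A_j$ is precisely what forces $W_{p*}$ to factor through $H_n(X\setm A', B)\to H_n(X\setm A, B)$...

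I realize the logic risks circularity, so let me restructure: the honest route is to prove $\Var_p\rho_{k*}=0$ directly from the \emph{domain} side (where simplicity of $B_k$ lets us replace the isotopy for $D$ by one for $\bigcup_{i\neq k}D_i$, trivial near $p$, after adding $\partial W_p$ to the arrangement as in \cref{lem:localization}), namely: a class in the image of $\rho_{k*}$ is represented by a chain with boundary avoiding $B_k$; the isotopy $g_\gamma$ restricted to the reduced arrangement $\bigcup_{i\neq k}D_i$ near $p$ is trivial, and since the chain's support near $p$ only sees $\bigcup_{i\neq k}D_i$ (as $B_k$ doesn't bound it), $g_\gamma$ acts trivially on it; hence $\var_p\rho_{k*}W_{p*}=0$, so $\Var_p\rho_{k*}=0$. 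Then $\kappa_{j*}\Var_p=0$ follows by the $A\leftrightarrow B$ duality of \cref{sec:verdier+intersection}. The equivalent reformulation — factorization of $\var_p$ through $\partial_k$ (resp.\ $\fibSphere_j$) — is then immediate from the long exact sequences \eqref{eq:exact_boundary_seq} (resp.\ the residue sequence), since $\Var_p\rho_{k*}=0$ says $\Var_p$ kills the image of the inclusion, which is the kernel of $\partial_k$, so $\Var_p$ descends along $\partial_k$; dually for $\fibSphere_j$.

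The main obstacle will be making precise the claim that ``the isotopy $g_\gamma$, when restricted to classes whose support near $p$ only meets the reduced arrangement, acts as the trivial isotopy for that reduced arrangement.'' This requires choosing $g_\gamma$ carefully — using \cref{lem:localization} with $\partial W_p$ (and a smaller $\partial W_p'$) added to $D$, so that $g_\gamma$ simultaneously trivializes the reduced arrangement near $p$ — and then an excision argument of the type in the proof of \cref{thm:loc-factorization} to split a representative of $\rho_{k*}h$ into a piece supported away from $W_p$ (on which $\Var$ vanishes) plus a piece supported in $W_p$ but disjoint from $B_k$ (on which the reduced isotopy, hence $g_\gamma$, is trivial). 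I expect this excision/support bookkeeping to be the delicate part; everything else is formal manipulation of long exact sequences and the established duality.
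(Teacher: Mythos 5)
Your high-level strategy matches the paper's: simplicity of $B_k$ means the reduced arrangement $A\cup B'$ (with $B_k$ removed) has no critical points inside $W_p$, so the pair $(W_p\setm A, B')$ extends to a fibre bundle over a neighbourhood of $t_c$ and therefore has zero variation. However, your implementation has a genuine gap in the key step. You argue that a class in the image of $\rho_{k*}$ ``is represented by a chain with boundary avoiding $B_k$,'' that ``the chain's support near $p$ only sees $\bigcup_{i\neq k}D_i$,'' and that ``$g_\gamma$ acts trivially on it.'' None of these is correct as stated: (i) having boundary in $B'$ does not prevent a chain from passing through $B_k$ in its interior; (ii) even a chain avoiding $B_k$ would not be fixed by $g_\gamma$, since $g_\gamma$ is the isotopy for the \emph{full} arrangement and necessarily deforms everything near $p$ as $t$ circles $t_c$; and (iii) your claim that ``$g_\gamma$ restricted to the reduced arrangement near $p$ is trivial'' is false --- what is trivial near $p$ is \emph{some other} trivialization of the reduced pair, not $g_\gamma$ itself. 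What is true, and what the paper actually uses, is the weaker statement at the level of homology: $g_{\gamma*}-\id$ vanishes on $H_n(X\setm A, B')$, because any two trivializations of a fibre bundle over a disk induce the same (trivial) monodromy, and the reduced pair is such a bundle (this is exactly where $k\in\simple{K}$ enters, via \cref{def:simple-type}). Then, since $g_\gamma$ preserves all strata of $A\cup B$ and hence also $B'$, push-forward along $\rho_k$ intertwines the two variations, giving $\Var_p\rho_{k*}=\rho_{k*}\cdot 0=0$. Finally, the paper proves $\kappa_{j*}\Var_p=0$ by the completely parallel mirror argument using the residue exact sequence; your proposal to deduce it from $A\leftrightarrow B$ duality (\cref{sec:verdier+intersection}) would work, but it brings in heavier machinery (the intersection pairing) for a statement that the symmetric direct argument delivers just as quickly.
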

\begin{proof}
    Let $A'=\bigcup_{i\neq j} A_i$ and $B'=\bigcup_{i\neq k} B_i$ denote the arrangements $A$ and $B$, with one of the simple indices omitted. By \cref{def:simple-type}, the corresponding pairs
    $(W_p\setm A',B)$ and $(W_p\setm A, B')$ are locally trivial even over the critical value $\set{t_c}=\pi(P)$. Hence they have trivial monodromy and zero variation.
    
    
In the domain, this implies that the local variation $\Var_p$ vanishes on the image of the push-forward of the map $\rho\colon(X\setm A,B')\rightarrow(X\setm A,B)$. By the long exact boundary sequence \eqref{eq:exact_boundary_seq}, this image equals the kernel of $\partial_k$, thus $\Var_p$ factors through $\partial_k$:
    
    \begin{equation*}\xymatrix{
        H_n(X\setm A,B')_t\ar[r]^{\rho_*} \ar[d]^{\Var_p} & H_n(X\setm A,B)_t \ar[d]^{\Var_p} \ar[r]^-{\partial_k} & H_{n-1}(B_k\setm A,B')_t \ar@{-->}[dl] \\
        0 \ar[r]^{\rho_*} & H_n(X\setm A,B)_t & \\
    }\end{equation*}
    
    In the codomain, the zero variation of $(W_p\setm A',B)$ implies that $\Var_p$ takes values in the kernel of $\kappa_*$, where $\kappa\colon (X\setm A,B)\rightarrow (X\setm A',B)$:
    \begin{equation*}\xymatrix{
        & H_n(X\setm A,B)_t \ar@{-->}[dl] \ar[d]^{\Var_p} \ar[r]^{\kappa_*} & H_n(X\setm A',B)_t \ar[d]^{\Var_p} \\
        H_{n-1}(A_j\setm A',B)_t \ar[r]^-{\fibSphere_j} & H_n(X\setm A,B)_t \ar[r]^{\kappa_*} & 0 \\
    }\end{equation*}
    By the long exact residue sequence, the kernel of $\kappa_*$ is the image of $\fibSphere_j$.
\end{proof}
\begin{figure}
\scalebox{.9}{
\begin{tikzpicture}[scale=1]
\node[] (m) at (0,0) {};
\node[] (t1) at (1,0) {};
\node[] (t2) at (-1,0) {};
\draw[olive!15!green,decoration={markings, mark=at position 0.5 with {\arrow{>}}},postaction={decorate}] (.75,1.299) -- (.75,-1.299) node[left,xshift=0cm,yshift=0.2cm] {$\scriptstyle{h}$};
\draw[blue] (m) circle [radius=1.5] node[right,xshift=1.5cm,yshift=.2cm] {$\scriptstyle{ \partial W }$};
\filldraw[blue] (0,0) circle (0.03) node[right,yshift=-.07cm] {$\scriptstyle{0}$};
\filldraw[red] (t1) circle (0.03) node[below] {$\scriptstyle{\sqrt{t}}$};
\filldraw[red] (t2) circle (0.03) node[above] {$\scriptstyle{-\sqrt{t}}$};
\draw[red,dashed,decoration={markings, mark=at position 1 with {\arrow{>}}},postaction={decorate}] (t1) arc[radius = 1.5cm, start angle= 0, end angle= 20] (t2) {};
\draw[red,dashed,decoration={markings, mark=at position 1 with {\arrow{>}}},postaction={decorate}] (t2) arc[radius = 1.5cm, start angle= 180, end angle= 195] (t1) {};
\end{tikzpicture}
\raisebox{1.3cm}{$\overset{g_*}{\longmapsto}\ $ }
\begin{tikzpicture}[scale=1]
\node[] (m) at (0,0) {};
\node[] (t1) at (1,0) {};
\node[] (t2) at (-1,0) {};
\draw [olive!15!green,decoration={markings, mark=at position 0.5 with {\arrow{>}}},postaction={decorate}] plot [smooth] coordinates {(.75,1.299) (-.75,.7) (-1.3,0) (-.75,-.7) (-.2,0.2) (.75,.7) (1.2,0) (.75,-1.299)} node[left,xshift=0cm,yshift=0.2cm] {$\scriptstyle{g_*h}$};
\draw[blue] (m) circle [radius=1.5] node[right,xshift=1.5cm,yshift=.2cm] {$\scriptstyle{ \partial W }$};
\filldraw[blue] (0,0) circle (0.03) node[right,yshift=-.07cm] {$\scriptstyle{0}$};
\filldraw[red] (t1) circle (0.03) node[below] {};
\filldraw[red] (t2) circle (0.03) node[below] {};
\end{tikzpicture}

\vspace{.5cm}
\raisebox{1.3cm}{$ \Rightarrow  \Var_p h=$\raisebox{-.5cm}{
\begin{tikzpicture}[scale=1]
\node[] (m) at (0,0) {};
\node[] (t1) at (.5,0) {};
\node[] (t2) at (-.5,0) {};
\draw[olive!15!green,decoration={markings, mark=at position 0.9 with {\arrow{<}}},postaction={decorate}] (t1) circle [radius = 0.3cm];
\draw[olive!15!green,decoration={markings, mark=at position 0.1 with {\arrow{>}}},postaction={decorate}] (t2) circle [radius=0.3cm];
\filldraw[blue] (0,0) circle (0.03) node[below,yshift=-.07cm] {$\scriptstyle{D_2}$};
\filldraw[red] (t1) circle (0.03) node[below,yshift=-.23cm] {$\scriptstyle{D_1}$};
\filldraw[red] (t2) circle (0.03) node[below,yshift=-.23cm] {$\scriptstyle{D_1}$};
\end{tikzpicture}}
}}

\vspace{.7cm}

\scalebox{.9}{
\begin{tikzpicture}[scale=1]
\node[] (m) at (0,0) {};
\node[] (t1) at (1,0) {};
\node[] (t2) at (-1,0) {};
\draw[olive!15!green,decoration={markings, mark=at position 0.5 with {\arrow{>}}},postaction={decorate}] (0,1.5) -- (1,0) node[left,xshift=-.2cm,yshift=0.2cm] {$\scriptstyle{h}$};
\draw[blue] (m) circle [radius=1.5] node[right,xshift=1.5cm,yshift=.2cm] {$\scriptstyle{ \partial W }$};
\filldraw[red] (0,0) circle (0.03) node[right,yshift=-.07cm] {$\scriptstyle{0}$};
\filldraw[blue] (t1) circle (0.03) node[below] {$\scriptstyle{\sqrt{t}}$};
\filldraw[blue] (t2) circle (0.03) node[above] {$\scriptstyle{-\sqrt{t}}$};
\draw[blue,dashed,decoration={markings, mark=at position 1 with {\arrow{>}}},postaction={decorate}] (t1) arc[radius = 1.5cm, start angle= 0, end angle= 20] (t2) {};
\draw[blue,dashed,decoration={markings, mark=at position 1 with {\arrow{>}}},postaction={decorate}] (t2) arc[radius = 1.5cm, start angle= 180, end angle= 195] (t1) {};
\end{tikzpicture}
\raisebox{1.3cm}{$\overset{g_*}{\longmapsto}\ $ }
\begin{tikzpicture}[scale=1]
\node[] (m) at (0,0) {};
\node[] (t1) at (1,0) {};
\node[] (t2) at (-1,0) {};
\draw[blue] (m) circle [radius=1.5] node[right,xshift=1.5cm,yshift=.2cm] {$\scriptstyle{ \partial W }$};
\filldraw[red] (0,0) circle (0.03) node[right,yshift=-.07cm] {$\scriptstyle{0}$};
\draw[olive!15!green,decoration={markings, mark=at position 0.5 with {\arrow{>}}},postaction={decorate}] (0,1.5) -- (-1,0) node[above,xshift=-0.1cm,yshift=0.2cm] {$\scriptstyle{g_*h}$};
\filldraw[blue] (t1) circle (0.03) node[below] {};
\filldraw[blue] (t2) circle (0.03) node[above] {};
\end{tikzpicture}
\raisebox{1.3cm}{$ \Rightarrow \Var_p h=$\raisebox{-.5cm}{
\begin{tikzpicture}[scale=1]
\node[] (m) at (0,0) {};
\node[] (t1) at (.5,0) {};
\node[] (t2) at (-.5,0) {};
\draw [olive!15!green,decoration={markings, mark=at position 0.5 with {\arrow{>}}},postaction={decorate}] plot [smooth] coordinates {(.5,0) (0,.2) (-.5,0)};
\filldraw[red] (0,0) circle (0.03) node[below,yshift=-.07cm] {$\scriptstyle{D_2}$};
\filldraw[blue] (t1) circle (0.03) node[below,yshift=-.23cm] {$\scriptstyle{D_1}$};
\filldraw[blue] (t2) circle (0.03) node[below,yshift=-.23cm] {$\scriptstyle{D_1}$};
\end{tikzpicture}}}}
    \caption{Non-vanishing variations at a non-simple isolated singularity (\cref{eg:simple-nonsimple-var}).}%
    \label{fig:simple-nonsimple}%
\end{figure}
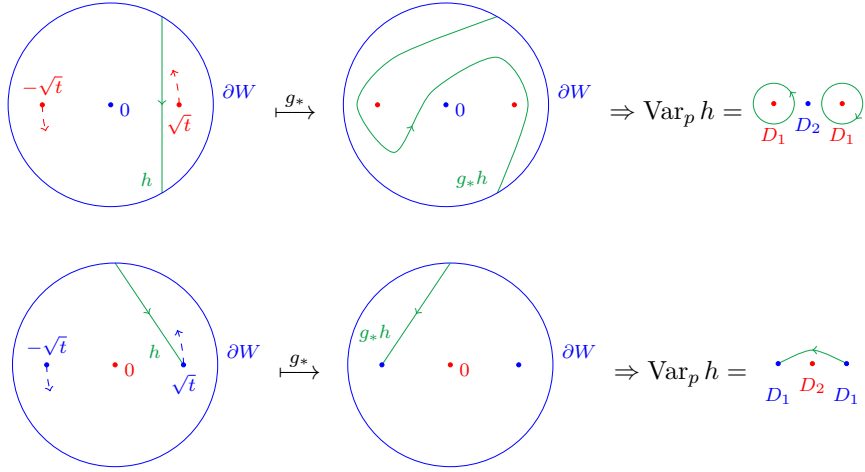

\begin{eg}\label{eg:simple-nonsimple-var}
    Consider \cref{eg:simple-nonsimple} with the partition $(A,B)=(D_1,D_2)$. As illustrated in \cref{fig:simple-nonsimple}, the variation $\Var_p h$ is a coboundary around $D_1$ as expected ($1\in\simple{J}$). But it is non-zero, despite $\partial_2 h=0$, so $\Var_p$ does not factor through $\partial_2$ in the domain. The case $(A,B)=(D_2,D_1)$ demonstrates that the variation $\Var_p h$ need not be a coboundary around the non-simple hypersurface $D_2$.
\end{eg}

\begin{rem}
    If there are two simple types $j,j'\in\simple{J}$, then \cref{lem:simple-type-factorization} states that $\vcyc=\Var_p h$ can be written as a coboundary $\vcyc=\fibSphere_j \alpha$ around $A_j$, and also as a coboundary $\vcyc=\fibSphere_{j'} \alpha'$ around $A_{j'}$. This does \emph{not} necessarily imply that $\vcyc=\fibSphere_j\fibSphere_{j'} \alpha''$ is an iterated coboundary \cite{Yuzhakov:CoConLeray}. Similarly, for two simple types $k,k'\in\simple{K}$, it is not necessarily the case that $\Var_p h$ depends only on the iterated boundary $\partial_k\partial_{k'} h$.
\end{rem}

\begin{defn}\label{def:simple_compatible}
    Suppose that $p\subset \pi^{-1}(t_c)$ and $p'\subset\pi^{-1}(t_c')$ are connected components of the critical sets in two fibres. Let $(J,K),(J',K')$ denote the types of $p,p'$ and $(\simple{J},\simple{K}),(\simple{J'},\simple{K}')$ their simple parts. We then declare $p'$ and $p$ to be \emph{compatible},
    \begin{equation*}
        p' \CV p,
        \qquad\text{if}\qquad
        \simple{K'} \subseteq K
        \quad\text{and}\quad
        \simple{J} \subseteq J'.
    \end{equation*}
\end{defn}
For simple pinches, this relation is just \cref{def:simple-pinch-relation}, due to \cref{rem:simple-pinch-everything-simple}.
\begin{cor}\label{corr:povar_upstairs}
    If $p'\nCV p$, then $\Var_{p'}\circ\Var_p = 0$.
\end{cor}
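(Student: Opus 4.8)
The plan is to mirror the argument already carried out for simple pinches in \cref{sec:pinch-preorder}, but now using \cref{lem:simple-type-factorization} in place of the factorization \eqref{eq:varp-simple-factor-JK} of the Picard--Lefschetz formula. Concretely, I would first fix a path $\eta$ in $T\setm L$ from $t_c$ to $t_c'$, so that the iterated variation $\Var_{p'}\circ\Var_p=\eta_*^{-1}\Var_{p'}\eta_*\Var_p$ is well defined; all vanishing statements below will hold for any such $\eta$. Since $p'\nCV p$, by \cref{def:simple_compatible} at least one of the two conditions fails: either there is $k\in\simple{K'}$ with $k\notin K$, or there is $j\in\simple{J}$ with $j\notin J'$. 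I would treat these two cases separately, each reducing to a single ``support'' observation composed with \cref{lem:simple-type-factorization}.

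In the first case, pick $k\in\simple{K'}\setm K$ and set $B'=\bigcup_{i\neq k}B_i$. Because $W_p$ was chosen so small that $W_p\cap B=W_p\cap B_K$ and $k\notin K$, we have $W_p\cap B_k=\varnothing$; hence $\partial_{B_k}$ annihilates everything in the image of $W_{p*}$, i.e. $\partial_{B_k}\Var_p=0$. This persists after parallel transport: the commuting square
\begin{equation*}\xymatrixcolsep{15mm}\xymatrix{
    H_n(Y\setm A,B)_{t_c} \ar[r]^-{\partial_{B_k}} \ar[d]^{\eta_*} & H_{n-1}(B_k\setm A,B')_{t_c} \ar[d]^{\eta_*} \\
    H_n(Y\setm A,B)_{t_c'} \ar[r]^-{\partial_{B_k}} & H_{n-1}(B_k\setm A,B')_{t_c'}
}\end{equation*}
commutes (see \cref{sec:coboundary-comm-functor}, using that the isotopy realizing $\eta_*$ can be taken smooth and stratum-preserving), so $\partial_{B_k}\eta_*\Var_p=0$. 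On the other hand $k\in\simple{K'}$, so \cref{lem:simple-type-factorization} applied to $p'$ gives $\Var_{p'}\rho_{k*}=0$, i.e. $\Var_{p'}$ kills the image of $\rho_{k*}$, which by the long exact sequence \eqref{eq:exact_boundary_seq} is exactly $\ker\partial_{B_k}$. Hence $\Var_{p'}$ annihilates the image of $\eta_*\Var_p$, and $\Var_{p'}\circ\Var_p=0$.

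In the second case, pick $j\in\simple{J}\setm J'$ and set $A'=\bigcup_{i\neq j}A_i$. Since $j\in\simple{J}$, \cref{lem:simple-type-factorization} applied to $p$ yields $\kappa_{j*}\Var_p=0$, where $\kappa_j\colon(Y\setm A,B)\to(Y\setm A',B)$; thus $\Var_p$ takes values in $\ker\kappa_{j*}$. By the long exact residue sequence this kernel is $\image\fibSphere_j$, and $\kappa_{j*}$ being induced by an inclusion commutes with parallel transport (same argument as above), so $\kappa_{j*}\eta_*\Var_p=\eta_*\kappa_{j*}\Var_p=0$, i.e. $\eta_*\Var_p$ still lands in $\ker(\kappa_{j*})$ at $t_c'$. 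Now $j\notin J'$ means $W_{p'}\cap A_j=\varnothing$, so $W_{p'}\setm A=W_{p'}\setm A'$ and the trace $W_{p'*}$ factors through $\kappa_{j*}$; therefore $W_{p'*}$, and hence $\Var_{p'}=\iota_{p'*}\var_{p'}W_{p'*}$, annihilates $\ker\kappa_{j*}$, giving $\Var_{p'}\circ\Var_p=0$ again. I do not expect a serious obstacle here: the only delicate point is making sure the commuting squares for $\partial_{B_k}$ and $\kappa_{j*}$ with $\eta_*$ really hold, which is precisely the content of the functoriality of (co)boundaries under smooth stratified diffeomorphisms recorded in \cref{sec:coboundary-comm-functor}; everything else is a direct transcription of \cref{sec:pinch-preorder} with \eqref{eq:varp-simple-factor-JK} replaced by \cref{lem:simple-type-factorization}.
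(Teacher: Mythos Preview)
The proposal is correct and follows essentially the same approach as the paper: both combine the localization/factorization of $\Var_p$ through $(J,K)$ from \eqref{eq:varp-factor-JK} with the vanishing \eqref{eq:Varp-JK-simple-0} from \cref{lem:simple-type-factorization}, splitting into the two cases $k\in\simple{K'}\setminus K$ and $j\in\simple{J}\setminus J'$. Your version is a faithful elaboration of the paper's one-line proof, making explicit the exact-sequence identifications $\ker\partial_{B_k}=\operatorname{im}\rho_{k*}$ and $\ker\kappa_{j*}=\operatorname{im}\fibSphere_j$ and the compatibility with parallel transport, whereas the paper simply states that ``$\iota_{p*}$ factors through $\rho_{k*}$'' (and dually for $W_{p'*}$ through $\kappa_{j*}$).
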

\begin{proof}
    Combine the factorization \eqref{eq:varp-factor-JK} with the vanishing \eqref{eq:Varp-JK-simple-0}. For example, if $k\in\simple{K}'\setm K$, then $\iota_{p*}$ factors through $\rho_{k*}$, which is annihilated by $\Var_{p'}$.
\end{proof}

We can now formulate the general hierarchy principle. It enforces the vanishing of certain iterated variations, and it follows from the previous constraints on $\Var_{p'}\circ\Var_p = 0$, together with the localization $\Var_{\ell} = \sum_{p} \Var_p$ discussed at the beginning of \cref{sec:hierarchy-general}.
\begin{defn} \label{def:rel_landau_components}
    Let $\irrone{L}$ denote the set of Landau singularities, that is, the set of irreducible components of the Landau variety that have complex codimension one. Define a binary relation $\CV$ on $\irrone{L}$ as follows: For $\ell\in\irrone{L}$, let $P_{\ell}\subset\pi^{-1}(t_c)$ denote the critical set in a fibre over a generic point $t_c\in\ell$. Then set 
    $\ell'\CV\ell$ if and only if there exist $p\in P_{\ell}$ and $p' \in P_{\ell'}$ such that $p'\CV p$.
\end{defn}
\begin{cor}\label{corr:povar}
    For any pair $\ell,\ell'\in\irrone{L}$ of Landau singularities, we have
    \begin{equation*}
        \Var_{\ell'} \circ \Var_{\ell} = 0 \quad\text{if}\quad \ell' \nCV \ell.
    \end{equation*}
\end{cor}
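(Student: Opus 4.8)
The plan is to reduce the statement to the pinch-level vanishing \cref{corr:povar_upstairs} via the additive decomposition of the variation operators established at the beginning of \cref{sec:hierarchy-general}.

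First recall that for $\ell\in\irrone{L}$ and a generic critical value $t_c\in\ell$, applying the localization of \cref{cor:sum-of-varp} componentwise to neighbourhoods $W_p$ of the connected components $p\in\pi_0(P_\ell)$ of the critical set $P_\ell\subset\pi^{-1}(t_c)$ yields an honest identity of endomorphisms
\begin{equation*}
    \Var_{\ell} = \sum_{p\in\pi_0(P_\ell)} \Var_p
    \quad\text{in}\quad
    \End H_n(Y\setm A,B)_{t_c},
    \qquad
    \Var_{\ell'} = \sum_{p'\in\pi_0(P_{\ell'})} \Var_{p'}
    \quad\text{near } t_c'\in\ell'.
\end{equation*}
Fix a path $\eta$ in $T\setm L$ from $t_c$ to $t_c'$. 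As in \cref{sec:pinch-preorder}, $\Var_{\ell'}\circ\Var_{\ell}$ abbreviates $\eta_*^{-1}\,\Var_{\ell'}\,\eta_*\,\Var_{\ell}$, and each shorthand $\Var_{p'}\circ\Var_p$ is understood with the \emph{same} transport $\eta$ implicit. Since $\eta_*$ and each $\Var_p,\Var_{p'}$ are group homomorphisms, composition distributes over the finite sums:
\begin{equation*}
    \Var_{\ell'}\circ\Var_{\ell}
    = \eta_*^{-1}\Big(\sum_{p'}\Var_{p'}\Big)\eta_*\Big(\sum_{p}\Var_p\Big)
    = \sum_{p'}\sum_{p} \big(\eta_*^{-1}\Var_{p'}\eta_*\Var_p\big)
    = \sum_{p'}\sum_{p} \Var_{p'}\circ\Var_p,
\end{equation*}
the sums running over $p\in\pi_0(P_\ell)$, $p'\in\pi_0(P_{\ell'})$.

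Now suppose $\ell'\nCV\ell$. By \cref{def:rel_landau_components} this means there is no pair $p\in P_\ell$, $p'\in P_{\ell'}$ with $p'\CV p$; equivalently $p'\nCV p$ for every such pair. Hence \cref{corr:povar_upstairs} forces $\Var_{p'}\circ\Var_p=0$ for every term of the double sum above, so $\Var_{\ell'}\circ\Var_{\ell}=0$. As already remarked after \cref{corr:simple-povar}, since the vanishing of each $\Var_{p'}\circ\Var_p$ holds for an arbitrary intermediate path, the same computation with any loop $\gamma\in\pi_1(T\setm L,t_0)$ inserted between the two variations gives $\Var_{\ell'}\circ\Var_\gamma\circ\Var_\ell=0$ as well.

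The corollary is thus a purely formal consequence of additivity once \cref{corr:povar_upstairs} is in hand; the only point needing care — the "main obstacle", such as it is — is the bookkeeping of base points and intermediate paths: one must verify that $\Var_\ell=\sum_p\Var_p$ is a genuine identity of endomorphisms of the \emph{same} group $H_n(Y\setm A,B)_{t_c}$ (which it is, by the localization discussion generalizing \cref{cor:sum-of-varp}), and that the shorthand in \cref{corr:povar_upstairs} and that in \cref{corr:povar} refer to the same transport $\eta$. Both are already arranged, so nothing further is required.
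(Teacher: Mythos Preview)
Your argument is correct and matches the paper's approach exactly: the corollary is stated without proof because it follows immediately from the decomposition $\Var_\ell=\sum_p\Var_p$ (beginning of \cref{sec:hierarchy-general}) combined with \cref{corr:povar_upstairs}, precisely as you spell out. One cosmetic point: the endomorphisms act on $H_n(Y\setm A,B)_t$ for $t$ a nearby regular value, not at the critical value $t_c\in L$ itself.
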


\section{Polylogarithms}
\label{sec:polylogs}

In this section, we illustrate the hierarchy principle with polylogarithms. The most generic situation, Aomoto polylogarithms, is the simplest---all critical points are linear simple pinches. In \cref{sec:dilog}, we also discuss the hierarchy for the dilogarithm, which involves non-isolated critical sets.

\subsection{Aomoto polylogarithms}\label{sec:Aomoto}
A collection $Q=Q_0\cup\ldots\cup Q_n$ of $n+1$ hyperplanes in complex projective space $\PP^n$ is called a simplex. We say that it is non-degenerate if the intersection $Q_0\cap\ldots\cap Q_n$ is empty. 

To a pair of non-degenerate simplices $Q$ and $R$, we associate:
\begin{itemize}
\item
a differential form $\omega_Q$ with simple poles on each $Q_i$ that generates $H_{\dR}^n(\PP^n\setm Q)\cong\CC$. In terms of linear equations $f_i=0$ for $Q_i$,
\begin{equation}\label{eq:aomoto-integrand}
	\omega_Q = \td \log \left(\frac{f_1}{f_0} \right) \wedge \ldots \wedge \td \log \left(\frac{f_n}{f_0}\right).
\end{equation}
\item
a generator $\sigma_R\in H_n(\PP^n,R)\cong \ZZ$ corresponding to the real standard simplex $\set{\sum_i u_i=1}\subset\RR_{\geq0}^{n+1}$, embedded as $\set{[u_0:\cdots:u_n]\colon u_i \geq 0}\subset\PP^n$ with boundary in $R$, in coordinates $u$ where $R_i=\set{u_i=0}$.
\end{itemize}
Suppose that $\sigma_R$ can be deformed to avoid $Q$, so that $\sigma_R$ has a preimage $\sigma_R'\in H_n(\PP^n\setm Q,R)$ under the inclusion of pairs $(\PP^n\setm Q,R\setm Q)\rightarrow (\PP^n,R)$.

The \textit{Aomoto polylogarithm} \cite[\S2]{Aomoto:AdditionAbelHyper} of weight $n$ is the integral
\begin{equation*}\label{eq:aomoto}
	\Lambda_n(Q,R) = \int_{\sigma_R'} \omega_Q.
\end{equation*}
It depends on the choice of $\sigma_R'$, hence $\Lambda_n$ is a multivalued function on the space $T=(\PP^n)^{2n+2}$ of pairs $t=(Q_0,\ldots,Q_n,R_0,\ldots,R_n)$ of simplices. We identify hyperplanes $Q_i\subset X=\PP^n$ with points $[Q_{0i}:\cdots:Q_{ni}]$ in the dual $\PP^n$, and we can represent points $t\in T$ as $(n+1) \times (2n+2)$ matrices with column vectors $Q_i=(Q_{0i},\ldots,Q_{ni})^\Transpose,R_j=(R_{0j},\ldots,R_{nj})^\Transpose$.
We obtain an arrangement $A\cup B$ in the total space $Y=X\times T$, with the components
\begin{align*}
	\irrone{A} 
	& = \set{A_0,\ldots,A_n}, \quad A_i= \big\{ x_0 Q_{0i}+\ldots+x_n Q_{ni}=0 \big\} 
	\subset Y , \\
	\irrone{B} 
	& = \set{B_0,\ldots,B_n}, \quad B_j= \big\{ x_0 R_{0j} + \ldots +x_n R_{nj}  =0 \big\} 
	\subset Y.
\end{align*}
\begin{lem}
The divisor $A\cup B$ has simple normal crossings.
\end{lem}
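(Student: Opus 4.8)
The plan is to verify the two defining conditions of a simple normal crossings divisor directly in coordinates: each component $A_i$ and $B_j$ is a smooth hypersurface of $Y=X\times T$, and every intersection of a subfamily of these hypersurfaces is transverse in the sense of \cref{def:transverse}. Since the total space $Y$ is a product and $\pi(x,t)=t$, it suffices to work fibrewise and then note that the structure is uniform in $t$; but in fact the cleanest argument is to regard $A_i$ and $B_j$ as hypersurfaces in $Y$ cut out by the bilinear equations $f_i(x,Q_i)=\sum_\alpha x_\alpha Q_{\alpha i}$ and $g_j(x,R_j)=\sum_\alpha x_\alpha R_{\alpha j}$. First I would check smoothness: on the chart of $Y$ where (say) $x_0\neq 0$ and $Q_{0i}\neq 0$, the equation $f_i=0$ has nonvanishing differential because $\partial f_i/\partial Q_{0i}=x_0\neq 0$; the analogous statement holds on any chart, so each $A_i$ (and symmetrically each $B_j$) is a smooth hypersurface. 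The key point is that the extra parameters in $T$ make the incidence equations automatically nondegenerate.

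Next I would establish transversality of an arbitrary subfamily. Fix a point $y=(x,t)\in Y$ lying on hypersurfaces indexed by $I\subseteq\set{0,\ldots,n}$ (from $\irrone A$) and $I'\subseteq\set{0,\ldots,n}$ (from $\irrone B$); write $D^{I,I'}$ for their common intersection. I must produce a chart in which each hypersurface is a coordinate slice. The heart of the matter is to show that the differentials $\set{\td f_i}_{i\in I}\cup\set{\td g_j}_{j\in I'}$ are linearly independent at $y$ as elements of $T_y^*Y$. This follows because $\td f_i$ depends on the $Q_i$-block of coordinates of $T$, and these blocks are disjoint for distinct $i$; likewise for the $\td g_j$. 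Concretely, $\td f_i = \sum_\alpha Q_{\alpha i}\,\td x_\alpha + \sum_\alpha x_\alpha\,\td Q_{\alpha i}$, and since $x\neq 0$ (it is a point of $\PP^n$, so some $x_\alpha\neq0$), the part $\sum_\alpha x_\alpha\,\td Q_{\alpha i}$ is nonzero and lives in the span of the $\td Q_{\alpha i}$ for that fixed $i$ only. Because the coordinate blocks $\set{Q_{\bullet i}}$, $\set{R_{\bullet j}}$ for different $i,j$ are pairwise disjoint, no nontrivial linear combination $\sum_{i\in I}\lambda_i\,\td f_i+\sum_{j\in I'}\mu_j\,\td g_j$ can vanish unless all $\lambda_i x_\alpha=0$ and $\mu_j x_\alpha=0$, forcing all coefficients to be zero. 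Hence the $|I|+|I'|$ differentials are independent; extending $f_i,g_j$ to a local coordinate system by the submersion theorem (adding enough of the ambient coordinates) gives exactly the chart required by \cref{def:transverse}, with each $D_i\cap U=\set{x_{(i)}=0}$ for the corresponding new coordinate.

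Finally I would observe that this argument is insensitive to which simplex conditions are imposed: non-degeneracy of $Q$ and of $R$ is used only to cut $L$ down (and to ensure $\omega_Q$, $\sigma_R$ are well defined), not for the normal crossings property, which holds for \emph{all} $t\in T$, degenerate or not. I expect the main obstacle to be purely bookkeeping: setting up the charts carefully so that distinct hypersurfaces through $y$ really do get distinct slice coordinates, i.e.\ handling the case where several $A_i$ (or an $A_i$ and a $B_j$) pass through the same point $y$ — but the disjointness of the parameter blocks $\set{Q_{\bullet i}}$ resolves this cleanly, so no genuine difficulty arises. One should also remark that because $\pi$ restricted to each $D^{I,I'}$ need not be a submersion, the normal crossings statement here is only about the geometry of $A\cup B$ inside $Y$; the failure of submersivity is exactly what produces the Landau variety, analysed subsequently.
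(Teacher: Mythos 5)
Your proposal is correct and follows essentially the same route as the paper: the decisive observation in both is that the parameter blocks $\set{Q_{\bullet i}}$ and $\set{R_{\bullet j}}$ for distinct indices are disjoint factors of $T$, so the differentials $\td f_i$, $\td g_j$ are automatically independent, and the paper simply packages this as an explicit adapted chart rather than as an independence statement. The only thing to tidy up is the expression $\partial f_i/\partial Q_{0i}=x_0$ on the chart where $Q_{0i}\neq 0$: if one normalizes $Q_{0i}=1$ this is no longer a free coordinate, so one must instead use $\partial f_i/\partial Q_{ri}=x_r$ for some free index $r$ with $x_r\neq 0$ (whose existence, as in the paper, follows because $(p,t)\in A_i$ forbids $Q_i$ from collapsing onto the one nonzero $p_k$) — or work on the homogeneous cone and descend.
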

\begin{proof}
	At every point $[p_0:\cdots:p_n]\in X$ we can find $k$ such that $p_k\neq 0$. If $(p,t)\in A_i$, there must be an index $r\neq k$ such that $Q_{ri}\neq 0$, for otherwise, the equation of $A_i$ would collapse at $p$ to $p_k Q_{ki}=0$ and imply $Q_i=0\notin \PP^n$. So after relabelling, we may assume that $p_0=Q_{1i}=1$.
	It follows that at $p$, and hence in a small neighbourhood, the functions
	\begin{equation*}
		z^{(i)}
		=(z_1^{(i)},\ldots,z_n^{(i)})
		=(
		Q_{0i}+p_1+p_2 Q_{2i}+\ldots+p_n Q_{ni},
		Q_{2i},\ldots,Q_{ni}
		)
	\end{equation*}
	form a coordinate chart of the $\PP^n$ that parametrizes the hyperplane $Q_i$. This way, we can locally choose coordinates $(z^{(0)},\ldots,z^{(2n+1)})$ on the product $T=(\PP^n)^{2n+2}$ such that $A_i=\set{z_1^{(i)}=0}$ and $B_j=\set{z_1^{(n+1+j)}=0}$.
\end{proof}
The strata $S$ of the canonical stratification of $A\cup B$ are indexed by pairs $I,J\subseteq\set{0,\ldots,n}$ so that $\overline{S}=A^I\cap B^J=\bigcap_{i\in I}A_i \cap \bigcap_{j\in J}B_j$.\footnote{The actual strata are the subsets $S=A^I\cap B^J\setm(\bigcup_{i\notin I}A_i\cup\bigcup_{j\notin J}B_j)$.} It follows from considerations similar to \cite[\S IV.4.4]{Pham:Singularities} that the Landau variety can be computed from the closures of the strata:
\begin{equation*}
    L=\bigcup_{S\in\mathfrak{S}} \pi(cS)
    = \bigcup_{S\in \mathfrak{S}} \pi(c\overline{S})
    = \bigcup_{I,J} \pi(c(A^I\cap B^J)).
\end{equation*}
Let $t_{IJ}=(Q_{i\in I},R_{j\in J})$ denote the submatrix of $t$, consisting of the columns in $I$ and $J$. Write $V_{IJ}\subset \CC^{n+1}$ for the vector space spanned by these columns.
\begin{lem}\label{lem:aomoto-critical}
    A point $(x,t)$ of the submanifold $A^I\cap B^J\subset Y$ is critical precisely when the columns $t_{IJ}=(Q_{i\in I},R_{j\in J})$ of $t$ are linearly independent.
\end{lem}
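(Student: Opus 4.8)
The assertion is local on $Y$, and at a chosen point $(x,t)\in M\defas A^I\cap B^J$ it is a pointwise rank statement about the differential $\td(\pi\rest{M})_{(x,t)}\colon T_{(x,t)}M\to T_tT$: the point is critical iff this map fails to be surjective. The plan is to compute this rank by passing to the conormal space of $M$. First I would invoke the previous lemma: since $A\cup B$ has simple normal crossings, the local defining functions $g_i(x,t)=\sum_k x_kQ_{ki}$ ($i\in I$) and $h_j(x,t)=\sum_k x_kR_{kj}$ ($j\in J$) meet $M$ transversally, so $M$ is a smooth submanifold of codimension $\abs I+\abs J$ whose conormal at $(x,t)$ is spanned by the covectors $\td g_i,\td h_j$; these $\abs I+\abs J$ covectors are linearly independent because the $\td Q_i$-part of $\td g_i$ (resp.\ the $\td R_j$-part of $\td h_j$) is the nonzero covector ``$x$'' sitting in pairwise distinct $\PP^n$-factors of $T=(\PP^n)^{2n+2}$.

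The heart of the argument is a dualization. The map $\pi\rest{M}$ is a submersion at $(x,t)$ exactly when $T_{(x,t)}M+\ker\td\pi_{(x,t)}=T_{(x,t)}Y$, equivalently when the conormal space of $M$ meets the annihilator of the fibre direction $\ker\td\pi_{(x,t)}=T_xX\oplus 0$ only in the zero covector. A combination $\omega=\sum_{i}\lambda_i\td g_i+\sum_{j}\mu_j\td h_j$ annihilates the fibre direction precisely when its $\td x$-component vanishes; at $(x,t)$ that component is the linear form $v\longmapsto\langle v,w\rangle$ on the fibre tangent space $T_x\PP^n=\CC^{n+1}/\CC x$, where $w=\sum_{i}\lambda_iQ_i+\sum_{j}\mu_jR_j$. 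This form is well defined on the quotient because every $g_i,h_j$ vanishes on $\CC x$ (as $(x,t)\in M$), and by non-degeneracy of the pairing it is the zero form iff $w=0$ in $\CC^{n+1}$. Thus the surjectivity of $\td(\pi\rest{M})_{(x,t)}$, hence the criticality of $(x,t)$, is controlled entirely by whether the columns $t_{IJ}=(Q_{i\in I},R_{j\in J})$ are in linearly general position; carrying this bookkeeping through (with the attention to the direction described below) pins down exactly when $(x,t)$ is critical, and then $L=\bigcup_{I,J}\pi\big(c(A^I\cap B^J)\big)$ presents the Landau variety of the Aomoto polylogarithm as a union of rank loci of the submatrices $t_{IJ}$, in the spirit of \cite[\S IV.4.4]{Pham:Singularities}.

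The step demanding care — and the main obstacle — is this dualization: one must be scrupulous about the projective fibre $T_x\PP^n=\CC^{n+1}/\CC x$ (checking that the $\td x$-components of $\td g_i,\td h_j$ genuinely descend to the quotient, using $g_i(x,t)=h_j(x,t)=0$, and that a nonzero $w$ still induces a nonzero form there), and above all about which rank condition on $t_{IJ}$ corresponds to surjectivity of the differential and which to its failure, so that no dimension is spuriously gained or lost when comparing $\operatorname{rank}\td(\pi\rest{M})_{(x,t)}$ against $\dim T$. With transversality supplied by the previous lemma, the remaining computation is routine bilinear algebra, uniform in $\abs I+\abs J$ and without case distinctions.
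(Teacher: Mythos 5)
Your conormal dualization is a genuine alternative to the paper's argument and, once completed, arguably a cleaner one. The paper computes the fibre dimension $\dim S_t=n-\dim V_{IJ}$ by identifying $(A^I\cap B^J)_t$ with $\PP(V_{IJ}^{\perp})$, notes that $\ker\td(\pi|_S)=TS_t$, and compares $\dim S-\dim S_t=\dim T-k+\dim V_{IJ}$ with $\dim T$. Your route avoids the fibre-dimension computation altogether: $\pi|_M$ is a submersion at $(x,t)$ iff $N^*_{(x,t)}M\cap\pi^*T^*_tT=0$, and the key identity --- that $\omega=\sum_i\lambda_i\td g_i+\sum_j\mu_j\td h_j$ restricts to zero on $T_xX$ iff $w=\sum_i\lambda_iQ_i+\sum_j\mu_jR_j=0$, using $w(x)=0$ on $M$ to descend $w$ through the isomorphism $\set{w\colon w(x)=0}\cong(\CC^{n+1}/\CC x)^{\vee}$ --- is exactly right.

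But you stop precisely where the content of the lemma lies. You flag ``which rank condition on $t_{IJ}$ corresponds to surjectivity of the differential and which to its failure'' as ``the main obstacle'' and then do not resolve it; that orientation \emph{is} the lemma, and a proof cannot defer it to ``routine bilinear algebra.'' Carried through, your own computation gives: a nonzero $\omega\in N^*M$ lies in $\pi^*T^*_tT$ iff some nontrivial $(\lambda,\mu)$ satisfies $\sum_i\lambda_iQ_i+\sum_j\mu_jR_j=0$, i.e.\ iff the columns of $t_{IJ}$ are linearly \emph{dependent}; hence $(x,t)$ is critical iff $t_{IJ}$ has dependent columns. This is the opposite of the statement as printed --- but it agrees with the paper's own proof (``we have a submersion precisely when $\dim V_{IJ}=k$'') and with the subsequent corollary $L=\bigcup_{I,J}\set{\rk{t_{IJ}}<\min\set{\abs{I}+\abs{J},n+1}}$, so ``independent'' in the lemma's wording is a misprint for ``dependent.'' State your conclusion plainly; once you do, the discrepancy with the printed statement should be noted rather than papered over with hedging.
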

\begin{proof}
    The submanifold $S=A^I\cap B^J$ has complex dimension $\dim T+n-k$ with $k=\abs{I}+\abs{J}$. The fibre at $t$ is the projectivization $S_t=\PP(V_{IJ}^{\perp})\subset\PP^n$ of
    \begin{equation*}
        V_{IJ}^{\perp}
        =\bigcap_{v\in V_{IJ}}\set{y\in\CC^{n+1}\colon x_0v_0+\ldots +x_n v_n=0}
        \subset \CC^{n+1},
    \end{equation*}
    the annihilator of $V_{IJ}$. Therefore, at any $t\in T$, the fibre $S_t=A^I\cap B^J\cap \pi^{-1}(t)$ is a complex manifold with complex dimension
    \begin{equation}\label{eq:aomoto-fibredim}
        \dim S_t =  \dim V_{IJ}^{\perp} - 1 = n-\dim V_{IJ}.
    \end{equation}
    The kernel of the differential $\td (\pi|_S)$ is the intersection (in $TY=TX\oplus TT$) of the vectors that are tangent to $S$ ($TS$) and vertical ($TX$). In other words, $\ker\td(\pi|_S)=T(S_t)$ is the tangent bundle of the fibre. Hence the image of $\td(\pi|_S)$ has dimension $\dim S-(n-\dim V_{IJ})=\dim T-(k-\dim V_{IJ})$ and we have a submersion precisely when $\dim V_{IJ}=k$.
\end{proof}
The fibre of $A^I\cap B^J$ over $t$ has dimension $n-\dim V_{IJ}$, see \eqref{eq:aomoto-fibredim}. In particular, the fibre is empty, unless $\dim V_{IJ}\leq n$.
\begin{cor}
    The Landau variety of $\pi\colon (Y,A\cup B)\rightarrow T$ is
\begin{equation*}
    L = \bigcup_{I,J} \big\{\rk {t_{IJ}}<\min\set{\abs{I}+\abs{J},n+1}\big\}.
\end{equation*}
\end{cor}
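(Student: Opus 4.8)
The statement follows by combining \cref{lem:aomoto-critical} with the dimension formula \eqref{eq:aomoto-fibredim} and the expression
\begin{equation*}
    L = \bigcup_{I,J} \pi\big(c(A^I\cap B^J)\big)
\end{equation*}
already established above. The plan is to analyze, for each fixed pair $I,J\subseteq\set{0,\ldots,n}$ and $k=\abs I+\abs J$, what the set $\pi(c(A^I\cap B^J))\subseteq T$ actually is, and to show it equals $\set{\rk {t_{IJ}} < \min\set{k,n+1}}$.

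First I would dispose of the case $k\geq n+2$. Since the columns $t_{IJ}$ live in $\CC^{n+1}$, their rank is always $\leq n+1 = \min\set{k,n+1}$, so the condition $\rk {t_{IJ}}<\min\set{k,n+1}$ holds only when the $k$ columns fail to span $\CC^{n+1}$. By \eqref{eq:aomoto-fibredim}, when $\rk {t_{IJ}}=n+1$ the fibre $S_t$ is empty, hence contributes nothing; when $\rk {t_{IJ}}\leq n$ the fibre is non-empty, and by \cref{lem:aomoto-critical} \emph{every} point $(x,t)$ of $S$ over such $t$ is critical (linear independence of $k>n+1$ vectors in $\CC^{n+1}$ is impossible, so the submersion condition fails everywhere the fibre is non-empty). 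Therefore $\pi(c(A^I\cap B^J))=\set{t\colon \rk {t_{IJ}}\leq n}=\set{\rk {t_{IJ}}<\min\set{k,n+1}}$, as claimed.

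Next the case $k\leq n+1$. Here $\min\set{k,n+1}=k$, so I must show $\pi(c(A^I\cap B^J))=\set{\rk {t_{IJ}}<k}$. By \cref{lem:aomoto-critical}, a point $(x,t)\in S=A^I\cap B^J$ is critical iff $\rk {t_{IJ}}<k$. So the only point is that $\pi(c(A^I\cap B^J))$, the \emph{image} under $\pi$, really is the whole set $\set{\rk {t_{IJ}}<k}$ and not something smaller: for this I note that whenever $\rk {t_{IJ}}<k$ we have $\rk {t_{IJ}}\leq k-1\leq n$, so by \eqref{eq:aomoto-fibredim} the fibre $S_t=\PP(V_{IJ}^\perp)$ has dimension $n-\rk{t_{IJ}}\geq 0$ and is thus non-empty; picking any $x\in S_t$ gives a critical point over $t$. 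Hence every $t$ with $\rk {t_{IJ}}<k$ lies in $\pi(c(A^I\cap B^J))$, and conversely by \cref{lem:aomoto-critical} any $t$ in the image satisfies $\rk{t_{IJ}}<k$. This proves $\pi(c(A^I\cap B^J))=\set{\rk {t_{IJ}}<k}$.

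Finally, taking the union over all $I,J$ and observing that the two cases above can be uniformly written as $\set{\rk {t_{IJ}}<\min\set{\abs I+\abs J,n+1}}$ yields the corollary. There is no real obstacle here; the only mild subtlety is the bookkeeping that distinguishes $k\leq n+1$ (where criticality is a proper condition on $S$ and one must check the fibre is non-empty to realise the image) from $k\geq n+2$ (where the whole non-empty part of $S$ is critical), but both reduce immediately to rank conditions via \eqref{eq:aomoto-fibredim} and \cref{lem:aomoto-critical}. One should also remark, for completeness, that components of $L$ of complex codimension $>1$ may occur in this union but are irrelevant to $\irrone{L}$; this does not affect the set-theoretic identity stated.
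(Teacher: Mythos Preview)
Your proof is correct and follows exactly the approach the paper intends: the corollary is stated without proof because it is immediate from \cref{lem:aomoto-critical} together with the fibre-dimension formula \eqref{eq:aomoto-fibredim}, and you have simply spelled out the two cases $k\leq n+1$ and $k\geq n+2$ that make this precise. The one nontrivial check---that when the rank condition holds the fibre $S_t$ is actually nonempty so that a critical point exists over $t$---is handled correctly via \eqref{eq:aomoto-fibredim}.
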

For $k=\abs{I}+\abs{J}\leq n+1$, the subvariety $\set{\rk {t_{IJ}}<k}\subset T$ has codimension $n+1-(k-1)$: instead of $n+1$ arbitrary entries, some column is a linear combination of the other $k-1$ columns. Similarly, for $k\geq n+1$ the codimension of the subvariety $\set{\rk {t_{IJ}}<n+1}\subset T$ is $k-n$.
\begin{cor}
    The components of $L$ with codimension one in $T$ are in bijection with pairs $I,J\subseteq\set{0,\ldots,n}$ so that $\abs{I}+\abs{J}=n+1$. Their form is
\begin{align*}
   \ell_{IJ} =\big\{\det (t_{IJ}) =0  \big\}
   \subset T.
\end{align*}
\end{cor}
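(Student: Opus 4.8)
The plan is to combine the description
$L=\bigcup_{I,J}\{\rk {t_{IJ}}<\min\{\abs I+\abs J,\,n+1\}\}$
from the previous corollary with the codimension bookkeeping already recorded, and then to settle irreducibility and distinctness of the surviving hypersurfaces. First I would note that each locus $Z_{IJ}=\{\rk {t_{IJ}}<\min\{\abs I+\abs J,\,n+1\}\}\subset T$ is a (multiprojective) determinantal variety and is irreducible: the projection of $T=(\PP^n)^{2n+2}$ onto the $\abs I+\abs J$ factors indexed by $I\sqcup J$ is a smooth morphism with irreducible fibres, and the locus of tuples of points in $\PP^n$ whose span is deficient by a prescribed amount is irreducible, being swept out by the ambient subspace (it is the image of a projective bundle over a Grassmannian, resp.\ over $(\PP^n)^\vee$). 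Hence the irreducible components of $L$ are exactly the maximal members of the finite family $\{Z_{IJ}\}$, and by the codimension formula above---$\codim Z_{IJ}=n+2-(\abs I+\abs J)$ if $\abs I+\abs J\le n+1$, and $\codim Z_{IJ}=(\abs I+\abs J)-n$ if $\abs I+\abs J\ge n+1$---one has $\codim Z_{IJ}=1$ precisely when $\abs I+\abs J=n+1$ and $\codim Z_{IJ}\ge 2$ otherwise.

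For a pair with $\abs I+\abs J=n+1$, the block $t_{IJ}$ is a square $(n+1)\times(n+1)$ submatrix of the parameter matrix, so $Z_{IJ}=\{\det(t_{IJ})=0\}=\ell_{IJ}$. Since the determinant of a square matrix of indeterminates is an irreducible form, $\det(t_{IJ})$ is an irreducible multihomogeneous polynomial in the $n+1$ projective factors on which it depends, so its (well-defined) zero locus $\ell_{IJ}$ is an irreducible hypersurface of $T$. It is a genuine component of $L$: any $Z_{I'J'}$ properly containing it would then have codimension one, hence $\abs{I'}+\abs{J'}=n+1$, and both being irreducible of the same dimension, $\ell_{IJ}=Z_{I'J'}=\ell_{I'J'}$. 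Together with the previous paragraph this identifies $\irrone L=\{\ell_{IJ}:\abs I+\abs J=n+1\}$.

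Finally I would check that $(I,J)\mapsto\ell_{IJ}$ is injective. If $(I,J)\neq(I',J')$ then, as the $A$-columns and the $B$-columns occupy disjoint blocks of $t$, the set of columns forming $t_{IJ}$ differs from the one forming $t_{I'J'}$; so some column $c$ enters one square block but not the other, and $\det(t_{IJ})$ involves the coordinates of $c$ while $\det(t_{I'J'})$ does not. The two irreducible forms are therefore not proportional, whence $\ell_{IJ}\neq\ell_{I'J'}$. This yields the asserted bijection, with the explicit shape $\ell_{IJ}=\{\det(t_{IJ})=0\}$.

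The main obstacle is the commutative-algebra bookkeeping around the $Z_{IJ}$: establishing that they are irreducible of the stated codimensions (so that no codimension-$\geq 2$ determinantal piece hides a codimension-one component of $L$) and that the square loci $\ell_{IJ}$ are pairwise distinct and not nested. Beyond that, the geometric content of the statement is already contained in the two preceding corollaries.
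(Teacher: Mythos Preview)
Your proof is correct and follows the same route as the paper: the codimension counts for the determinantal loci $Z_{IJ}$ (stated just before the corollary) single out $\abs{I}+\abs{J}=n+1$ as the only case yielding codimension one, and then $\ell_{IJ}=\{\det(t_{IJ})=0\}$. The paper treats the corollary as immediate from those counts and does not spell out the irreducibility of $\det(t_{IJ})$ or the injectivity of $(I,J)\mapsto\ell_{IJ}$; your argument fills in precisely these details, which is appropriate but not a different strategy.
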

In fact, $L$ has no irreducible components of higher codimension. For if $k=\abs{I}+\abs{J}\leq n$, then $\set{\rk{t_{IJ}}<k}\subset\ell_{I'J'}$ for any $I'\supseteq I$, $J'\supseteq J$ such that $\abs{I'}+\abs{J'}=n+1$. Similarly, if $k>n+1$, then any $I'\subseteq I$, $J'\subseteq J$ with $\abs{I'}+\abs{J'}=n+1$ provide $\set{\rk{t_{IJ}}<n+1}\subset\ell_{I'J'}$. It follows that
\begin{equation}\label{eq:aomoto-Landau}
    L= \bigcup_{I,J\colon \abs{I}+\abs{J}=n+1} \ell_{IJ}.
\end{equation}

At a generic point of $\ell_{IJ}$, we have $\rk {t_{IJ}}=n$ and the one-dimensional kernel of $t_{IJ}$ defines the unique critical point $x$ in the fibre (\cref{lem:aomoto-critical}). It is a linear pinch of type $(I,J)$.
The vanishing cell $\vcell_{IJ}$ is a real $n$-simplex bounded by the hyperplanes $Q_i$ ($i\in I$) and $R_j$ ($j\in J$).

The cases where $I=\varnothing$ or $J=\varnothing$ have zero variation (\cref{sec:linear-pinch}) and the corresponding loci $\ell_{\varnothing,0\ldots n}$ and $\ell_{0\ldots n,\varnothing}$ in $T$ parametrize where $R$, respectively $Q$, become degenerate.
\begin{prop}\label{lem:aomoto-var}
Let $\varnothing\neq I,J \subseteq\set{0,\ldots,n}$ with $\abs{I}+\abs{J}=n+1$. Then with the intersection number $N=\is{\partial_J\vcell_{IJ}}{\partial_J \sigma_R'}\in \ZZ$, we have
\begin{equation}\label{eq:aomoto-var}
  \Var_{\ell_{IJ}} \Lambda_n(Q,R) = \pm N (2\ipi)^{|I|} \Lambda_{n-|I|}(\partial_I Q, R_J \cap Q^I).
\end{equation}
\end{prop}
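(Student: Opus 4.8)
The plan is to apply the relative Picard-Lefschetz formula \eqref{eq:sum-of-varp-PL} at the linear pinch over a generic point of $\ell_{IJ}$, and then identify the resulting ``intersection number times vanishing cycle'' expression with a lower-weight Aomoto polylogarithm by iterating Leray's residue theorem \eqref{eq:residue-theorem}. First I would note that, by the corollaries in \cref{sec:Aomoto}, a generic $t\in\ell_{IJ}$ has $\rk{t_{IJ}}=n$, so the columns in $I\sqcup J$ span a codimension-one subspace of $\CC^{n+1}$; the one-dimensional annihilator then determines a unique critical point $p$ in the fibre, and it is a linear pinch of type $(I,J)$ (codimension $\abs{I}+\abs{J}=n+1$). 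Since $I,J\neq\varnothing$, this is a \emph{mixed} linear pinch, so by \cref{rem:lin-pinch-types} the variation $\var_p$ is non-zero, and \eqref{eq:sum-of-varp-PL} gives
\begin{equation*}
    \Var_{\ell_{IJ}}[\sigma_R'] = (-1)^{(n+1)(n+2)/2}\cdot \is{\dvcyc_p}{\sigma_R'}\cdot \vcyc_p.
\end{equation*}
By \cref{rem:is-partialB}, $\is{\dvcyc_p}{\sigma_R'} = \pm\is{\partial_J \vcell_{IJ}}{\partial_J \sigma_R'} = \pm N$, where $\partial_J$ is the iterated partial boundary onto the $B_j$ with $j\in J$, computed on the submanifold $\overline{W_p}\cap B^J$ with its intersection pairing. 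This explains the factor $N$ and the overall sign ambiguity.

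Next I would integrate. By definition of the variation of the parameter integral (see \cref{sec:intro-integrals}), $\Var_{\ell_{IJ}}\Lambda_n(Q,R) = \int_{\Var_{\ell_{IJ}}[\sigma_R']}\omega_Q = \pm N\int_{\vcyc_p}\omega_Q$. Now $\vcyc_p = \pm\,(\prod_{i\in I}\fibSphere_{A_i}\partial_{A_i})\,\vcell_{IJ}$ is an iterated Leray coboundary around the hyperplanes $A_i$, $i\in I$, of a chain supported on $Q^I = \bigcap_{i\in I}Q_i$. Applying the residue theorem \eqref{eq:residue-theorem} once for each $i\in I$ (the relevant forms are closed and vanish on $B$, so the hypotheses hold), each coboundary contributes a factor $2\ipi$ and replaces $\omega_Q$ by its iterated residue:
\begin{equation*}
    \int_{\vcyc_p}\omega_Q = \pm (2\ipi)^{\abs{I}} \int_{\partial_I \vcell_{IJ}} \Res_I \omega_Q,
\end{equation*}
where $\Res_I = \prod_{i\in I}\Res_{A_i}$ and $\partial_I\vcell_{IJ}$ is the face of the simplex $\vcell_{IJ}$ obtained by taking all the $A_i$-boundaries, living on $Q^I$ with boundary in the remaining hyperplanes $R_j\cap Q^I$ ($j\in J$). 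The main computational step is then to identify: (a) that $\partial_I\vcell_{IJ}$, as a relative cycle in $Q^I\cong\PP^{n-\abs{I}}$ (which has dimension $n-\abs{I}$ since $\rk t_{IJ}=n$ and the annihilator of $V_I$ has projective dimension $n-\abs{I}$), is precisely the standard simplex $\sigma_{R_J\cap Q^I}'$ for the $\abs{J}=n-\abs{I}+1$ hyperplanes $R_j\cap Q^I$ — here one uses that the partial boundary of a linear simplex onto its bounding hyperplanes is again the standard simplex of the restricted hyperplane arrangement; and (b) that $\Res_I\omega_Q$ equals (up to sign) the Aomoto integrand $\omega_{\partial_I Q}$ on $Q^I$ for the restricted simplex $\partial_I Q = (Q_k\cap Q^I)_{k\notin I}$ — this is the standard computation of iterated residues of a $\mathrm{dlog}$-form \eqref{eq:aomoto-integrand}, where each $\Res_{A_i}$ of $\td\log(f_i/f_0)\wedge(\cdots)$ strips off that factor and restricts the rest to $\set{f_i=0}$. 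Combining, $\int_{\partial_I\vcell_{IJ}}\Res_I\omega_Q = \pm\Lambda_{n-\abs{I}}(\partial_I Q, R_J\cap Q^I)$, which yields \eqref{eq:aomoto-var}.

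The routine obstacles are bookkeeping of signs (orientations of $\vcell_{IJ}$, the sign conventions in $\fibSphere_i\partial_i$, and the global $(-1)^{(n+1)(n+2)/2}$), but these are all absorbed into the ``$\pm$''. The genuine content, and the step I expect to require the most care, is (a)/(b): checking that after taking $\abs{I}$ residues one lands back in exactly the Aomoto setup on $Q^I$ — i.e.\ that $\partial_I Q$ is again a non-degenerate simplex in $Q^I$ (non-degeneracy of $Q$ forces $\bigcap_{k\notin I}(Q_k\cap Q^I) = Q_0\cap\cdots\cap Q_n=\varnothing$), that $R_J\cap Q^I$ is non-degenerate in $Q^I$ for generic $t\in\ell_{IJ}$ (which holds since $\rk t_{IJ}=n$ forces the $R_j\cap Q^I$, $j\in J$, to be $n-\abs{I}+1$ hyperplanes in general position), and that the preimage chain $\sigma_R'$ can indeed be deformed appropriately so that the right-hand side $\Lambda_{n-\abs{I}}$ is well-defined. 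Once these identifications are in place, the formula follows by assembling the residue factors.
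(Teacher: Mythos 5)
Your proof is correct and follows essentially the same route as the paper's: apply the (relative) Picard--Lefschetz formula at the linear pinch to get $\Var_{\ell_{IJ}}[\sigma_R']=\pm N\,\vcyc_p$, integrate, use the residue theorem \eqref{eq:residue-theorem} $\abs{I}$ times to extract $(2\ipi)^{\abs{I}}$, and identify $\Res_I\omega_Q=\pm\omega_{\partial_I Q}$ and $\partial_I\vcell_{IJ}$ with the lower-weight Aomoto integrand and simplex. The paper's proof is more terse (it does not spell out the non-degeneracy checks in your step~(a)/(b) or the identification of $\partial_I\vcell_{IJ}$ with $\sigma'_{R_J\cap Q^I}$), but the skeleton is identical.
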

\begin{proof}
The iterated boundary $\partial_I \vcell_{IJ}$ is a simplex in $Q^I$ bounded by $R_J=\bigcup_{j\in J}R_j$. By the Picard-Lefschetz and residue formulas \cref{eq:pic,eq:residue-theorem},
\begin{equation*}
  \Var_{\ell_{IJ}} \Lambda_n(Q,R) = \pm N \int_{\delta_I \partial_I \vcell_{IJ}} \omega_Q = \pm N \cdot (2\ipi)^{|I|} \int_{\partial_I \vcell_{IJ}} \Res_I \omega_Q
\end{equation*}
is an Aomoto polylogarithm of lower weight. Namely, $\Res_I$ denotes the iteration of all residues $\Res_{A_i}$ with $i\in I$, and we see from \eqref{eq:aomoto-integrand} that $\Res_{A_i} \omega_Q=(-1)^{i-1}\omega_{\partial_i Q}$.
\end{proof}

\Cref{rem:simple-nilpotent} states that iterated variations $\Var_{\ell_{I'J'}} \circ \Var_{\ell_{IJ}}$ are zero, unless $I' \supset I$ and $J' \subset J$ are strict inclusions. Together with the vanishing of $\Var_{\ell_{IJ}}$ when $I=\varnothing$ or $J=\varnothing$, it follows that any iteration $\Var_{\ell_{I_k J_k}}\circ\cdots\circ\Var_{\ell_{I_1J_1}}$ of more than $n$ variations of $\Lambda_n$ is zero. This is also clear from \eqref{eq:aomoto-var}.

Thus, the maximal iterated variations are $n$-fold, and correspond to an increasing sequence $I_k=\set{\sigma(0),\ldots,\sigma(k-1)}$ and a decreasing sequence $J_k=\set{\tau(k),\ldots,\tau(n)}$. Such sequences encode permutations $\sigma,\tau$ of $\set{0,\ldots,n}$.
\begin{lem}
    The $n$-fold iterated variations for all permutations $\sigma,\tau$ are
\begin{equation}\label{eq:aomoto-var-flag}
    \Var_{\ell_{\sigma(0)\ldots\sigma(n-1),\tau(n)}} \circ \cdots \circ \Var_{\ell_{\sigma(0),\tau(1)\ldots\tau(n)}} \Lambda_n = \pm (2\ipi)^{n}.
\end{equation}
\end{lem}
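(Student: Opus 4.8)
The plan is to iterate the weight-lowering formula \eqref{eq:aomoto-var} of \cref{lem:aomoto-var}, tracking the combinatorics of the index sets along the maximal chain. First I would observe that the $k$-th variation in \eqref{eq:aomoto-var-flag} is $\Var_{\ell_{I_k J_k}}$ with $I_k=\set{\sigma(0),\ldots,\sigma(k-1)}$ and $J_k=\set{\tau(k),\ldots,\tau(n)}$, so that $I_k\setm I_{k-1}=\set{\sigma(k-1)}$ is a single new $\irrone{A}$-index added at each step, while the $\irrone{B}$-index $\tau(k)$ is dropped. By \eqref{eq:aomoto-var} with $\abs{I_k}=k$ and $\abs{I_k}-\abs{I_{k-1}}=1$, applying $\Var_{\ell_{I_k J_k}}$ to an Aomoto polylogarithm of weight $n-(k-1)$ produces a factor $\pm N_k (2\ipi)$ times an Aomoto polylogarithm of weight $n-k$, where $N_k=\is{\partial_{J_k}\vcell_{I_kJ_k}}{\partial_{J_k}\sigma'}$ for the integration chain $\sigma'$ of the intermediate integral.

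Telescoping over $k=1,\ldots,n$ then gives
\begin{equation*}
    \Var_{\ell_{I_nJ_n}}\circ\cdots\circ\Var_{\ell_{I_1J_1}}\Lambda_n
    = \pm\Big(\prod_{k=1}^n N_k\Big)\cdot (2\ipi)^n\cdot \Lambda_0,
\end{equation*}
and since a weight-zero Aomoto polylogarithm is the integral of the constant function $1$ over a $0$-simplex, $\Lambda_0=1$. So the content of the lemma is exactly that each intermediate intersection number $N_k$ equals $\pm 1$. To see this, I would argue inductively: after the first $k-1$ variations, the relevant geometry lives in $Q^{I_k}\cong\PP^{n-k+1}$ (an intersection of $k-1$ of the original hyperplanes, hence itself a projective space by non-degeneracy), the new integrand is $\Res_{I_k}\omega_Q=\pm\omega_{\partial_{I_k}Q}$ (iterating $\Res_{A_i}\omega_Q=(-1)^{i-1}\omega_{\partial_iQ}$ from the proof of \cref{lem:aomoto-var}), and the new chain is the iterated boundary $\partial_{I_k}\vcell_{I_kJ_k}$, which is a genuine $(n-k+1)$-simplex bounded by the hyperplanes $R_{\tau(k)},\ldots,R_{\tau(n)}$ restricted to $Q^{I_k}$. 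The pairing $N_k=\is{\partial_{J_k}\vcell_{I_kJ_k}}{\partial_{J_k}\sigma'}$ reduces, by \cref{rem:is-partialB}, to an intersection pairing on the zero-dimensional manifold $Q^{I_k}\cap B^{J_k}$: after taking \emph{all} boundaries, both classes become the fundamental class of a single point (the unique critical point of type $(I_k,J_k)$, whose existence and uniqueness at a generic point of $\ell_{I_kJ_k}$ was established just before \cref{lem:aomoto-var}), so their intersection number is $\pm 1$.

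The step I expect to be the main obstacle is making the inductive bookkeeping precise — in particular, verifying that at each stage the \emph{new} integration chain really is (a $\partial_{J}$-boundary of) a standard non-degenerate simplex in the lower-dimensional projective space, so that \cref{lem:aomoto-var} applies again with the same shape, and that the deformation $\sigma_R'$ avoiding $Q$ at weight $n$ descends compatibly to the required deformed chains at each lower weight. The signs are genuinely irrelevant (the statement only claims $\pm(2\ipi)^n$), so I would not track them; the real work is the recursion setup and the reduction of each $N_k$ to the intersection number of a point with itself in a zero-manifold, which is $\pm 1$ by inspection.
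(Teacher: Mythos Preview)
Your strategy—iterate \cref{lem:aomoto-var} and show each $N_k=\pm 1$ by reducing to the class of a point—is the paper's. Two corrections are needed, though neither is fatal.

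First, there is a systematic off-by-one error. After $k-1$ variations the ambient space is $Q^{I_{k-1}}\cong\PP^{n-k+1}$ (the intersection of $k-1$ hyperplanes, as you yourself say), not $Q^{I_k}$; correspondingly the integrand is $\Res_{I_{k-1}}\omega_Q$ and the chain is $\partial_{I_{k-1}}\vcell_{I_{k-1}J_{k-1}}$. With this fix, the zero-dimensional locus carrying the pairing is $Q^{I_{k-1}}\cap R^{J_k}$, of codimension $(k-1)+(n-k+1)=n$ in $\PP^n$, hence a single point. The set $Q^{I_k}\cap R^{J_k}$ you wrote has codimension $n+1$ and is \emph{empty} in a generic fibre.

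Second, that point is not ``the critical point of type $(I_k,J_k)$''. The critical point lies in $Q^{I_k}\cap R^{J_k}$ and exists only in the fibre over $\ell_{I_kJ_k}$; the intersection number is computed in a nearby \emph{non}-critical fibre, where that set is empty. The correct identification is that $Q^{I_{k-1}}\cap R^{J_k}$ is the \emph{shared corner} of the two simplices $\vcell_{I_{k-1}J_{k-1}}$ and $\vcell_{I_kJ_k}$. Both $\partial_{J_k}$-boundaries are, up to sign, its fundamental class, so indeed $N_k=\pm 1$.

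The paper sidesteps the recursive bookkeeping you flag as the main obstacle. Rather than descending to a lower-weight $\Lambda_{n-k+1}$ and re-invoking \cref{lem:aomoto-var}, it stays in $\PP^n$ and computes $\is{\dvcyc_{p_{k+1}}}{\vcyc_{p_k}}$ directly, using \eqref{eq:intersection-delta-partial} to trade the coboundary $\fibSphere_{I_k}$ for $\partial_{I_k}$ on the other side and reduce to the self-intersection at the corner $Q^{I_k}\cap R^{J_{k+1}}$. This also disposes of your concern about the lift: since $\abs{J_1}=n$, the first intersection number is a pairing of two classes in $H_0(R^{J_1})\cong\ZZ$ and depends only on the orientation of $\sigma_R$, not on the choice of deformation $\sigma_R'$.
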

\begin{proof}
    Iterating \cref{lem:aomoto-var} shows that the iterated variation is an integral multiple of $(2\ipi)^n$. It remains to show that all intersection numbers are $\pm 1$.
    
    For the first intersection, note that the $n$-fold boundaries $\partial_{J_1}$ of $\sigma_R'$ and the first vanishing cell $\vcell_{I_1 J_1}$ are, up to sign, the fundamental class of the (0-dimensional) corner $R^{J_1}$ that both simplices share. In particular, in this case where $\abs{J}=n$, the number $N=\pm 1$ in \eqref{eq:aomoto-var} only depends on the orientation of $\sigma_R$, but not on the choice of lift $\sigma_R'$.
    
    The subsequent intersections are between the vanishing cycles $\pm\fibSphere_{I_k}\partial_{I_k}\vcell_{I_k J_k}$ of the $k$'th variation, and the $(k+1)$'th dual vanishing cycle. Again,
\begin{equation*}
    \is{\partial_{J_{k+1}}\vcell_{I_{k+1}J_{k+1}}}{\partial_{J_{k+1}}\fibSphere_{I_k}\partial_{I_k}\vcell_{I_kJ_k}}
    =\pm\is{\partial_{J_{k+1}\sqcup I_K}\vcell_{I_{k+1}J_{k+1}}}{\partial_{J_{k+1}\sqcup I_K}\vcell_{I_kJ_k}}
\end{equation*}
    is the self-intersection of the shared corner $Q^{I_k}\cap R^{J_{k+1}}$ of both simplices.
\end{proof}
From their differential equations, obtained in \cite[\S2]{Aomoto:AdditionAbelHyper}, it follows that Aomoto polylogarithms are iterated integrals of the differential forms
\begin{equation*}
    \alpha_{IJ}=\td \log \big(\det (t_{IJ})\big).
\end{equation*}
This means that for any base point $t_0\in T\setm L$, there exists a linear combination $W$ of words (tensors) in $\alpha_{IJ}$'s, such that
\begin{equation}\label{eq:aomoto-II}
    \Lambda_n(t)=\Lambda_n(t_0) + \int_{t_0}^t W.
\end{equation}
Here the integral of a word is defined in terms of a path $\gamma$ from $t_0$ to $t$, as
\begin{equation*}
    \int_{t_0}^t \alpha_k \otimes \cdots \otimes \alpha_1
    =  \int_0^1 \gamma^*\alpha_k(\lambda_k) \int_0^{\lambda_k} \gamma^*\alpha_{k-1}(\lambda_{k-1}) \,\cdots \int_0^{\lambda_2} \gamma^*\alpha_1(\lambda_1).
\end{equation*}

In the representation \eqref{eq:aomoto-II}, the variations around $\ell_{IJ}$ can be computed using path concatenation with simple loops. The vanishing of any iteration of more than $n$ variations, dictated by the hierarchy principle as explained above, shows that the words in $W$ must have lengths $k\leq n$.
\begin{cor}
    The part of $W$ including all words with length $n$ is
\begin{equation}\label{eq:aomoto-symbol}
	\pm \sum_{\sigma,\tau} (\sgn \sigma)(\sgn \tau)\,
	\alpha_{\sigma(0)\ldots\sigma(n-1),\tau(n)} \otimes\cdots\otimes \alpha_{\sigma(0),\tau(1)\ldots\tau(n)}.
\end{equation}
\end{cor}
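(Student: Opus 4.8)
The plan is to identify the length-$n$ tensors in the symbol $W$ by computing all $n$-fold iterated variations in two ways: once via the iterated-integral (Chen) calculus applied to \eqref{eq:aomoto-II}, and once via the geometric Picard-Lefschetz result \eqref{eq:aomoto-var-flag}. First I would recall the standard fact about the monodromy of iterated integrals: if $\Lambda_n(t) = \Lambda_n(t_0) + \int_{t_0}^t W$ with $W = \sum_{\text{words } w} c_w\, w$ a linear combination of tensors $w = \alpha_{i_k}\otimes\cdots\otimes\alpha_{i_1}$ in the closed logarithmic forms $\alpha_{IJ} = \td\log\det(t_{IJ})$, then the variation $\Var_{\ell}$ around a simple loop encircling a component $\ell \in \irrone{L}$ strips off the innermost letter: concretely, $\Var_{\ell_{IJ}}$ applied to $\int_{t_0}^t \alpha_{i_k}\otimes\cdots\otimes\alpha_{i_1}$ equals $2\ipi$ times $\int_{t_0}^t \alpha_{i_k}\otimes\cdots\otimes\alpha_{i_2}$ if $\alpha_{i_1} = \alpha_{IJ}$ (i.e.\ the loop winds once around the divisor $\set{\det t_{I J}=0}$), and $0$ otherwise, since the $\alpha_{IJ}$ are precisely the $\td\log$ forms along the codimension-one components of $L$. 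Iterating, an $n$-fold variation $\Var_{\ell_{I_nJ_n}}\circ\cdots\circ\Var_{\ell_{I_1J_1}}$ picks out exactly the coefficient of the length-$n$ word $\alpha_{I_nJ_n}\otimes\cdots\otimes\alpha_{I_1J_1}$, multiplied by $(2\ipi)^n$, plus contributions from longer words; but by \cref{rem:simple-nilpotent} and the discussion after \cref{lem:aomoto-var} there are no words longer than $n$, so the length-$n$ coefficient is read off cleanly.

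Next I would compare with the geometric side. By \eqref{eq:pinch-hierarchy} and the strict-inclusion refinement \cref{rem:simple-nilpotent}, a nonzero $n$-fold iterated variation forces the index data to be a complete flag: $I_1\subsetneq I_2\subsetneq\cdots\subsetneq I_n$ with $\abs{I_k}=k$, hence $I_k = \set{\sigma(0),\ldots,\sigma(k-1)}$ for a permutation $\sigma$ of $\set{0,\ldots,n}$, and dually $J_1\supsetneq\cdots\supsetneq J_n$ with $J_k = \set{\tau(k),\ldots,\tau(n)}$ for a permutation $\tau$; moreover $I_k\sqcup J_k = \set{0,\ldots,n}$ at each step forces the two flags to be ``complementary'' in the sense that $J_k$ is the complement of $I_k$. \Cref{eq:aomoto-var-flag} then states that for each admissible pair $(\sigma,\tau)$ this iterated variation equals $\pm(2\ipi)^n$. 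Equating the two computations, the coefficient $c_w$ of the word $w = \alpha_{\sigma(0)\ldots\sigma(n-1),\tau(n)}\otimes\cdots\otimes\alpha_{\sigma(0),\tau(1)\ldots\tau(n)}$ is $\pm 1$, and all other length-$n$ words (those not of complete-flag shape) have coefficient $0$ by the hierarchy vanishing.

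The remaining point, and the one requiring the most care, is pinning down the signs: showing that $c_w = (\sgn\sigma)(\sgn\tau)$ up to a single overall sign independent of $(\sigma,\tau)$. The hard part will be tracking the sign of the intersection numbers $N$ in \eqref{eq:aomoto-var} through the iteration, together with the orientation sign $(-1)^{(n+1)(n+2)/2}$ in \eqref{eq:pic} and the sign conventions in the definition \eqref{eq:van-cycle} of the vanishing cycles. The key observation is that each intersection number appearing in \eqref{eq:aomoto-var-flag} is a self-intersection of a shared corner $Q^{I_k}\cap R^{J_{k+1}}$, and that transposing two adjacent elements in either flag (which changes $\sigma$ or $\tau$ by a transposition) corresponds to swapping two coordinates $x_a\leftrightarrow x_b$ in the local model \eqref{eq:localcoords}, flipping the orientation of the relevant vanishing cell and hence introducing exactly one sign. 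Since adjacent transpositions generate the symmetric group and each contributes a factor $-1$, the coefficient is $(\sgn\sigma)(\sgn\tau)$ times a $(\sigma,\tau)$-independent constant; fixing that constant on the identity permutations finishes the proof. I would carry out the sign bookkeeping using \cref{rem:vcyc-ordering} (which isolates the dependence on orientation of $\vcell$) and \cref{rem:is-partialB} (which reduces each intersection to a pairing on the corner manifold), rather than by brute force.
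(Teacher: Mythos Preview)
Your first two paragraphs are correct and match the paper: the iterated-integral calculus reads off length-$n$ coefficients as $n$-fold variations, the hierarchy forces the complete-flag shape, and \eqref{eq:aomoto-var-flag} pins each coefficient to $\pm 1$.

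The divergence is in the sign argument. You propose to track signs through the Picard-Lefschetz machinery, arguing that an adjacent transposition in one of the flags ``corresponds to swapping two coordinates $x_a\leftrightarrow x_b$ in the local model, flipping the orientation of the relevant vanishing cell.'' This is not quite right as stated: an adjacent transposition in $\sigma$ changes one of the intermediate sets $I_k$ and hence moves you to a \emph{different} critical point with its own vanishing cell, not merely a reoriented version of the same cell. The orientation ambiguity of each individual $\vcell$ cancels in the quadratic Picard-Lefschetz expression, so there is no local sign to flip; what you would actually need to compare are intersection numbers $\is{\dvcyc_{p'}}{\vcyc_p}$ between vanishing cycles at distinct pinches, and relating those across a transposition requires a global argument anyway.

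The paper bypasses all of this with a one-line observation: $\Lambda_n(Q,R)$ is \emph{globally odd} under swapping any two columns $Q_i\leftrightarrow Q_j$ (because $\omega_Q$ is a wedge of $\td\log$'s) and likewise under $R_i\leftrightarrow R_j$ (because the orientation of $\sigma_R$ flips). Such a swap permutes the length-$n$ words in $W$ by the corresponding transposition of indices, so the coefficients must transform by the same sign. Since transpositions generate the symmetric group, the relative signs are forced to be $(\sgn\sigma)(\sgn\tau)$, with the remaining overall sign absorbed into the orientation choice for $\sigma_R$. This global symmetry argument is both shorter and avoids the delicate bookkeeping you anticipate; I recommend replacing your third paragraph with it.
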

\begin{proof}
By \eqref{eq:aomoto-var-flag}, the words for all pairs of permutations of $\set{0,\ldots,n}$ must have the coefficient $+1$ or $-1$. Since $\Lambda_n(Q,R)$ is odd under swapping two columns in $Q$ or $R$, the relative signs are determined by the signs of the permutations. The overall sign depends on the orientation of $R$.
\end{proof}
The expression \eqref{eq:aomoto-symbol} is is called the \emph{symbol} of $\Lambda_n$ in \cite[(B.13)]{ArkaniHamedYuan:SphericalPlanesQuadrics} and was computed, in a slightly different form, already in \cite[Proposition~\S 2.2]{Aomoto:AdditionAbelHyper}.

\subsection{Dilogarithm}\label{sec:dilog}
The dilogarithm is defined by $\Li{2}(t)=\sum_{n=1}^{\infty} t^n/n^2$ for $\abs{t}<1$. It extends to a multivalued holomorphic function on $\CC\setm\set{0,1}$, for example via the integral representation
\begin{equation}\label{eq:Li2}
    \Li{2}(t) = \int_\sigma \omega_t = \int_{[0,1]^2} \frac{t\td x_1 \td x_2}{1-tx_1x_2}.
\end{equation}
To interpret this integral as in \cref{sec:integrals}, we can compactify the fibres to $X=(\PP^1)^2$ and set $Y=X\times T$ over the parameter space $T=\PP^1$. In coordinates $([x_1:1],[x_2:1],[t:1])\in Y$, the divisor $D=A\cup B$ consists of:\footnote{In homogeneous coordinates $([x_1:y_1],[x_2:y_2],[t:s])$, the equations for $A$ are $A_i=\set{y_i=0}$ and $A_3=\set{tx_1x_2=sy_1y_2}$.}
\begin{itemize}
\item two hyperplanes $A_{1,2}$ and a quadric $A_3$ where the integrand has poles:
\begin{equation*}
    A_i = \set{x_i=\infty}
    \quad\text{and}\quad
    A_3 = \set{t x_1 x_2=1}
\end{equation*}
\item four hyperplanes that bound the integration domain $[0,1]^2\subset X$:
\begin{equation*}
    B_i=\set{x_i=0}
    \quad\text{and}\quad
    B_{2+i}=\set{x_i=1}
\end{equation*}
\end{itemize}
\begin{figure}
  \includegraphics[width=\textwidth]{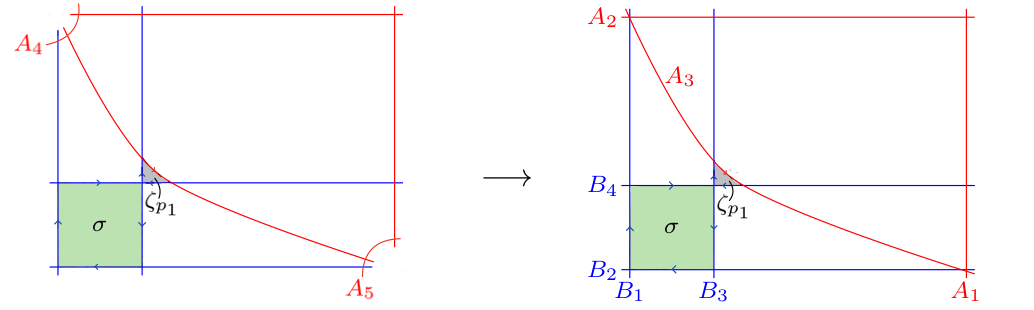}
  \caption{Blowup $X\rightarrow (\PP^1)^2$ and divisors $A$ (red) and $B$ (blue) for the dilogarithm \eqref{eq:Li2}.}%
  \label{fig:Li2-divisors}%
\end{figure}

When $\abs{t}<1$, the square $[0,1]^2$ does not intersect $A$, hence it defines a homology class $\sigma\in H_2(Y\setm A,B)_t$ in those fibres. However, the projection
\begin{equation*}
    \pi\colon Y=(\PP^1)^3 \longrightarrow T=\PP^1
\end{equation*}
onto the third factor and the subset $D\subset Y$ are not immediately amenable to Picard-Lefschetz theory, due to two problems:
\begin{enumerate}
    \item The divisor $A\cup B$ is not normal crossing, e.g.\ $B_1\cap A_2 \subset A_3$.
    \item The critical points are not always isolated, e.g.\ the entire stratum $B_1\cap A_3\setm (A_2\cup B_2\cup B_4)=\set{x_1=0}\times \PP^1\setm \set{x_2=0,1,\infty} \times \set{t=\infty}$ is contained in a single fibre, so each point on this stratum is critical.
\end{enumerate}

To solve the first problem, we separate the triple intersections by replacing $(\PP^1)^2$ with the blow up $X\rightarrow (\PP^1)^2$ of the points $(0,\infty),(\infty,0)\in (\PP^1)^2$, see \cref{fig:Li2-divisors}. We add the exceptional divisors $A_{4,5}$ to the arrangement $\irrone A$, so that
\begin{equation*}
    X \setm \ti A_t \cong (\PP^1)^2 \setm A_t
    \quad\text{and}\quad
    \ti B \setm \ti A\cong B\setm A
\end{equation*}
where $\ti A$ is the total transform of $A$ and $\ti B$ the strict transform of $B$. Having replaced the pair $( (\PP^1)^3\setm A,B\setm A)$ with $(Y\setm \ti{A},\ti{B}\setm \ti{A})$, we have achieved that $\ti{D}=\ti{A}\cup\ti{B}$ has normal crossings. Therefore, we can use the canonical stratification of $\ti{D}$ to compute the Landau variety. We find $L=\set{0,1,\infty}$:
\begin{itemize}
    \item Over $t=0$, the quadric $A_3$ (we omit $\sim$ in the following) degenerates into a union of two lines, $\set{x_1=\infty}\cup \set{x_2=\infty}$. Their intersection point $p_0=(\infty,\infty,0)\in Y$ would define a quadratic simple pinch,\footnote{Write $x_1=u+\iu v$ and $x_2=u-\iu v$ to bring $A_3=\set{x_1x_2=1/t}$ into the form \eqref{eq:localcoords}} were it not for the fact that these two lines are precisely $A_1\cup A_2$. Therefore, the critical set over $t=0$ is not just the point $p_0$, but rather the entire union
    \begin{equation*}
        P_0 \cong (A_1\cup A_2)\cap A_3 = (A_1\cup A_2) \cap \pi^{-1}(0).
    \end{equation*}
    \item Similarly, the critical set over $t=\infty$ is the union
    \begin{equation*}
        P_{\infty}\cong(B_1\cup B_2)\cap A_3 = (B_1\cup B_2) \cap \pi^{-1}(\infty).
    \end{equation*}
    \item Over $t=1$, we have linear simple pinch $p_1=(1,1,1)$ at the triple intersection $B_3 \cap B_4 \cap A_3=\set{p_1}$, with vanishing cell $\vcell_{p_1}$ (see \cref{fig:Li2-divisors}).
\end{itemize}
\begin{figure}
    \centering
    \includegraphics[width=0.95\textwidth]{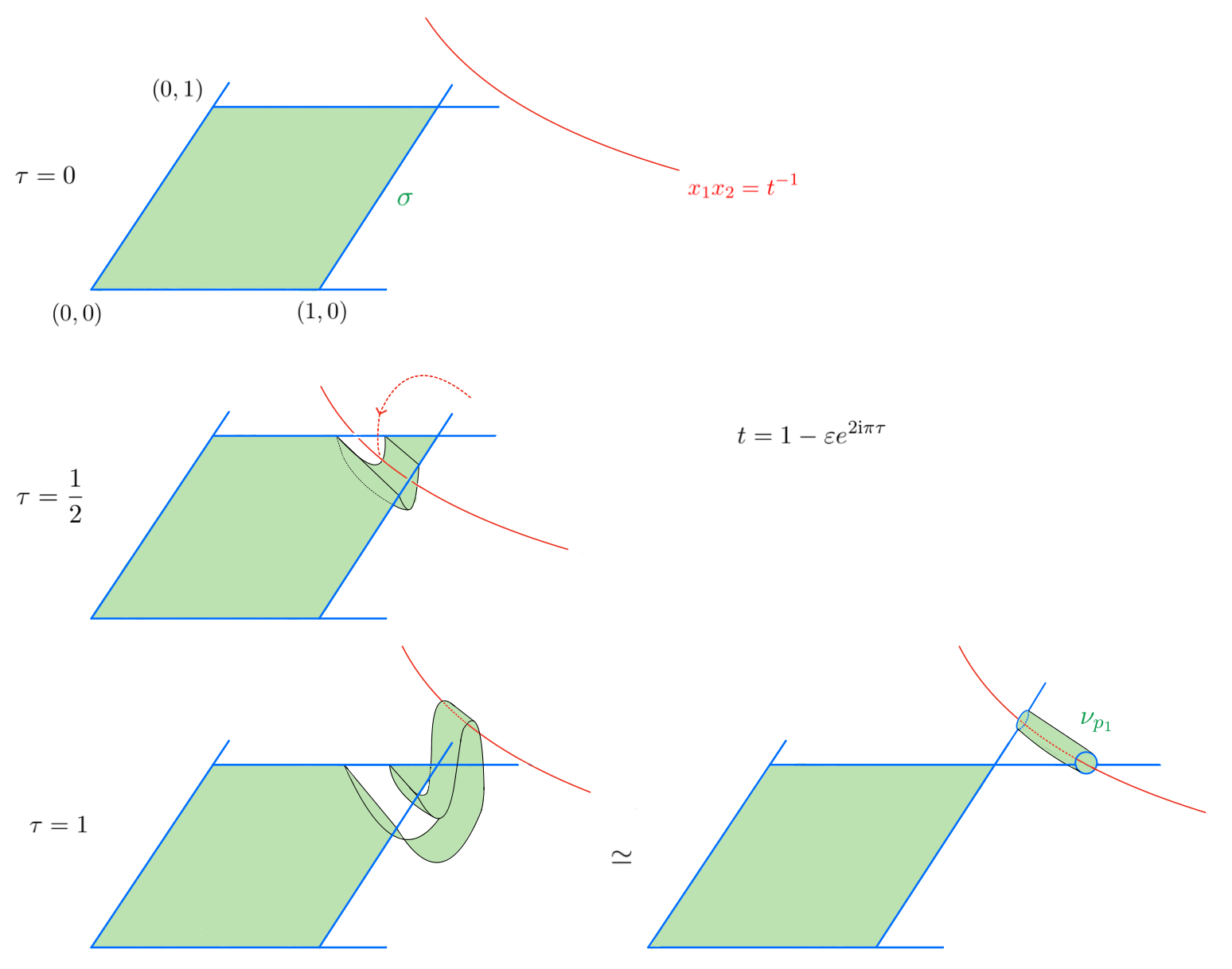}%
    \caption{As $t$ encircles the critical value 1, the relative cycle $\sigma=[0,1]^2$ is deformed to avoid the quadric $\{ x_1 x_2 = \inv t \}$. After a full circle, the quadric has ``cut out a strip'' from $\sigma$, the resulting cycle is now homologous to $\sigma + \vcyc_{p_1}$ (cf.\ \cref{fig:loga}).}%
    \label{fig:dilog-wurst}%
\end{figure}

The Picard-Lefschetz theorem applies only to the linear pinch at $t=1$ (the vanishing cycle is a cylinder, see \cref{fig:dilog-wurst}), but not to the non-isolated critical sets over $t=0,\infty$. However, we can still conclude constraints, e.g.
\begin{lem}\label{lem:li2-var.var1=0}
    $\Var_1 \circ \Var_0 = \Var_1\circ\Var_1=\Var_1 \circ \Var_{\infty} = 0$.
\end{lem}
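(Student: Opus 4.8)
The plan is to derive all three vanishing statements from the general hierarchy principle (\cref{corr:povar}), by identifying the types of the critical sets over $t=0,1,\infty$ and checking the relevant (non-)compatibility conditions. First I would record the data: the linear pinch $p_1=(1,1,1)$ over $\ell=\set{1}$ has type $(J,K)=(\set{3},\set{3,4})$ with respect to the partition $\irrone{A}\ni A_3$ and $\irrone{B}\ni B_3,B_4$ (and, being a linear simple pinch, its whole type is simple, $\simple{J}=J$, $\simple{K}=K$). The critical set $P_0\cong (A_1\cup A_2)\cap A_3$ over $t=0$ splits into connected components, each of which is contained in some $A_i$'s but meets \emph{no} element of $\irrone{B}$ (it lies in $\set{x_1=\infty}\cup\set{x_2=\infty}$, away from $B_1,\ldots,B_4$); hence every component $p\in\pi_0(P_0)$ has type $(J',K')$ with $K'=\varnothing$. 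Likewise, the components of $P_{\infty}\cong(B_1\cup B_2)\cap A_3$ have type with $J'$ containing only $A$-indices among $\set{1,2,3}$; the key point is to check whether any such component can have $\simple{J'}\subseteq J=\set{3}$, i.e.\ whether $A_3$ alone is a simple supporting hypersurface there, and whether $A_1$ or $A_2$ is non-simple.

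Next I would run the compatibility test of \cref{def:simple_compatible}. For $\Var_1\circ\Var_0$: here $p$ ranges over $\pi_0(P_0)$ with type $(J',\varnothing)$ and $p'=p_1$ with simple part $(\set{3},\set{3,4})$. Compatibility $p'\CV p$ would require $\simple{K'}=\set{3,4}\subseteq K'=\varnothing$, which is impossible; hence $\ell=\set{1}\nCV\set{0}$ and $\Var_1\circ\Var_0=0$ by \cref{corr:povar}. For $\Var_1\circ\Var_1$: this is the self-iteration at the linear pinch $p_1$, covered directly by \eqref{eq:itvar_n-m_odd} (since $\PLm=n+1=3$, we have $n-\PLm=-1$ odd, so $\Var_{p_1}\circ\Var_{p_1}=0$), or equivalently by \eqref{eq:itvar_B_linear}--\eqref{eq:itvar_A_linear} applied with $p=p'=p_1$ (linear, $K'=K$). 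For $\Var_1\circ\Var_{\infty}$: I would argue that for each component $p\in\pi_0(P_{\infty})$, compatibility $p_1\CV p$ requires $\set{3,4}=\simple{K'}_{p_1}\subseteq K_p$, but $K_p$—the $\irrone{B}$-indices meeting $p$—can contain at most $\set{1,2}$ (the components of $P_\infty$ lie on $B_1\cup B_2$, not on $B_3,B_4$); so $\set{3,4}\not\subseteq K_p$ and again $\set{1}\nCV\set{\infty}$, giving $\Var_1\circ\Var_{\infty}=0$.

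The main obstacle is the bookkeeping over $t=0$ and $t=\infty$: since these critical sets are \emph{non-isolated}, one must legitimately invoke the generalized localization $\Var_{\ell}=\sum_{p\in\pi_0(P)}\Var_p$ from \cref{sec:hierarchy-general}, which requires producing, for each connected component $p$ of $P_0$ (resp.\ $P_\infty$), a neighbourhood $W_p$ with smooth boundary transverse to $\ti D$ and disjoint from the other components. The components of $P_0$ and $P_\infty$ are smooth (they are the strata $\set{x_i=\infty}\cap A_3$ minus lower strata, or their closures inside $\pi^{-1}(0)$); after the blow up $X\to(\PP^1)^2$ these intersections are transverse, so one can take tubular neighbourhoods, as in the example following \cref{def:rel_landau_components}. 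Once the decomposition is justified, the type computation is immediate from the explicit equations, and the conclusion follows because in each case the simple $\irrone{B}$-part of $p_1$, namely $\set{3,4}$, simply cannot be contained in the $\irrone{B}$-type of any component over $0$ or $\infty$. A secondary subtlety worth a sentence is that $\Var_1\circ\Var_{\infty}$ and $\Var_1\circ\Var_0$ implicitly involve a parallel-transport path $\eta$ between the base points near the two critical values, but—as stressed in \cref{sec:pinch-preorder}—the vanishing holds for \emph{every} choice of $\eta$, so no care about the path is needed.
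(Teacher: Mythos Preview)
Your argument for $\Var_1\circ\Var_1=0$ is fine, but the proofs of $\Var_1\circ\Var_0=0$ and $\Var_1\circ\Var_\infty=0$ have a genuine gap: the $\irrone{B}$--types you assign to $P_0$ and $P_\infty$ are incorrect, and once corrected, \cref{corr:povar} no longer applies.

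The critical set $P_0=(A_1\cup A_2)\cap\pi^{-1}(0)$ is a single \emph{connected} curve (the two lines meet at $(\infty,\infty)$, which was not blown up), and it \emph{does} meet $B_3=\set{x_1=1}$ and $B_4=\set{x_2=1}$: the line $A_2=\set{x_2=\infty}$ passes through $(1,\infty)\in B_3$, and $A_1=\set{x_1=\infty}$ through $(\infty,1)\in B_4$. Hence the type of $P_0$ is $(\set{A_1,\ldots,A_5},\set{B_3,B_4})$, not $(\cdot,\varnothing)$. Similarly $P_\infty=(B_1\cup B_2)\cap\pi^{-1}(\infty)$ is connected and meets $B_3$ (at $(1,0)$) and $B_4$ (at $(0,1)$), so $K_{P_\infty}\supseteq\set{B_3,B_4}$. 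With the correct types, the compatibility test of \cref{def:simple_compatible} yields $p_1\CV P_0$ and $p_1\CV P_\infty$: both $\simple{K}_{p_1}=\set{B_3,B_4}\subseteq K_{P_0}$ and $\simple{J}_{P_0}=\set{A_3}\subseteq J_{p_1}=\set{A_3}$ hold (and likewise at $\infty$). So \cref{corr:povar} is silent here, and the linear--pinch refinements of \cref{sec:hierarchy_refinements} do not help either, since $P_0$ and $P_\infty$ are not simple pinches.

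The paper therefore does not argue via type compatibility. It combines two factorizations at a finer level. Since $A_3$ is the unique simple index for $P_0$, \cref{lem:simple-type-factorization} forces $\Var_0 h=\fibSphere_{A_3}\mu$ with $\mu\in H_1(A_3\setm(A_4\cup A_5),B_3\cup B_4)_t$. Since $p_1$ is a simple linear pinch with $K=\set{B_3,B_4}$, \cref{cor:varp-simple-factor-JK} (equivalently \cref{rem:is-partialB}) makes $\Var_1$ factor through the \emph{iterated} boundary $\partial_{B_3}\partial_{B_4}$ in the domain. Anticommuting $\partial_{B_3}\partial_{B_4}$ past $\fibSphere_{A_3}$ sends $\mu$ to $H_{-1}(A_3\cap B_3\cap B_4)_t=0$ (equivalently, $A_3\cap B_3\cap B_4\cap\pi^{-1}(t)=\varnothing$ for $t\neq 1$), whence $\Var_1\circ\Var_0=0$; the case $\Var_1\circ\Var_\infty$ is analogous. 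The decisive input is this degree drop from $1$ to $-1$ after one coboundary and two boundaries; the abstract relation $\simple{K}'\subseteq K$ cannot detect it.
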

\begin{proof}
Consider the terminology from \cref{sec:hierarchy-general}. The critical set $P_0$ over $t=0$ has type $(\set{A_1,A_2,A_3,A_4,A_5},\set{B_3,B_4})$, because it intersects all divisors except $B_{1,2}$. However, the simple type is just $(\set{A_3},\varnothing)$: removing $A_3$ from the arrangement makes all critical points disappear, whereas removing any other divisor still leaves critical points behind (especially the quadratic pinch point $p_0\in P_0$ of $A_3$ on its own). So even without an explicit description of $\Var_0$, we conclude from \cref{lem:simple-type-factorization} that it is a coboundary $\fibSphere_{A_3} \mu$ of some
\begin{equation}\label{eq:li2-H1(A3)}
    \mu\in H_1(A_3\setm (A_4\cup A_5),B_3\cup B_4)_t.
\end{equation}
Since $p_1$ is a linear pinch, the variation at $t=1$ factors through the iterated boundary in $B_3$ and $B_4$ (\cref{rem:is-partialB}). But $A_3\cap B_3\cap B_4\cap \pi^{-1}(t)=\varnothing$ for all $t\neq 1$, so $\partial_{B_3}\partial_{B_4} \mu \in H_{-1}(A_3\cap B_3\cap B_4)_t=0$ and thus $\Var_1\circ\Var_0=0$.\footnote{It is trivial that $H_{-1}=0$ (negative degree), but the argument as given here shows $H_{\bullet}(A_3\cap B_3\cap B_4)_t=0$ in all degrees and hence could generalize to other applications.} A very similar argument shows that $\Var_1\circ\Var_{\infty}=0$. Furthermore, $\Var_1\circ\Var_1=0$ holds for every linear pinch (\cref{sec:hierarchy_refinements}).
\end{proof}

Of course, the monodromy of the dilogarithm is well known: with respect to the loops $\gamma_0(\tau) = e^{2\ipi\tau}/2$ and $\gamma_1(\tau)=1-\gamma_0(\tau)$ based at $t_0=1/2$ that generate $\pi_1(T\setm L,t_0)=\pi_1(\CC\setm\set{0,1},t_0)$, one has
\begin{equation*}
    \Var_0 \Li{2}(t) = 0,\quad
    \Var_1 \Li{2}(t) = -2\ipi \log (t),\quad
    \Var_0 \log (t)=2\ipi.
\end{equation*}
From this, it follows easily that $\Var_1\circ \Var_{\ell} \Li{2}(t)=0$ for $\ell=0,1,\infty$. The point of \cref{lem:li2-var.var1=0} is that such constraints can be derived from purely topological considerations.
\begin{rem}\label{rem:li2-residuethm}
Consider the cells $\sigma$ and $\vcell_{p_1}$ with the orientation $\td x_1\wedge \td x_2$, then their intersection in the corner $B_3\cap B_4=\set{(1,1)}$ gives
\begin{equation*}
\Var_1\sigma=\is{\dvcyc_{p_1}}{\sigma} \vcyc_{p_1}=-\is{\partial_{B_3B_4}\vcell_{p_1}}{\partial_{B_3B_4}\sigma}\vcyc_{p_1}=-\vcyc_{p_1}
\end{equation*}
by \cref{rem:is-partialB}.
So with the additional input $\omega_t=\td \log(1-tx_1x_2)\wedge \td \log (x_1)$ of the actual integrand, Picard-Lefschetz reproduces $\Var_1 \Li{2}(t)=-2\ipi\log(t)$ through the residue theorem \eqref{eq:residue-theorem}:
\begin{equation*}
    \int_{\vcyc_{p_1}} \omega_t = 2\ipi \int_{\partial_{A_3} \vcell_{p_1}} \Res_{A_3} \omega_t
    =2\ipi \int_{(1/t,1)}^{(1,1/t)} \td \log (x_1)
    =2\ipi\log(t).
\end{equation*}
\end{rem}

Despite the non-isolated critical set $P_0$ over $t=0$, the variation $\Var_0$ behaves like the variation of the quadratic simple pinch $p_0\in A_3$, as if the hyperplanes $A_1\cup A_2$ would not be part of the divisor $D$. Namely, for the quadratic pinch, we would expect that $\Var_0 h = N(h)\vcyc_{p_0}$ is an integer multiple of the vanishing cycle $\vcyc_{p_0}=\fibSphere_{A_3}\eta$, where $\eta\in H_1(A_3\setm (A_4\cup A_5),B_3\cup B_4)_t$ is the class of the vanishing sphere $\Sphere^1$, embedded as the closed loop
\begin{equation*}
    \eta(\tau)=(x_1(\tau),x_2(\tau))=\big(e^{2\ipi\tau}/\sqrt{t},e^{-2\ipi\tau}/\sqrt{t}\big).
\end{equation*}
The vanishing cycle $\vcyc_{p_0}\in H_2(Y\setm A,B)_t$ is the fundamental class of an embedded torus $\Sphere^1\!\times\Sphere^1\hookrightarrow (\PP^1)^3\setm A$, which we can parametrize explicitly as
\begin{equation*}
    (x_1(\tau,\rho),x_2(\tau,\rho)) = \big(  (1+e^{2\ipi\rho}/2) e^{2\ipi \tau}/\sqrt{t},\,e^{-2\ipi\tau}/\sqrt{t}\big).
\end{equation*}
For an isolated quadratic pinch, we also would have $\Var_0\circ\Var_0=0$ by \eqref{eq:itvar_n-m_odd}.
\begin{lem}\label{lem:Li2-Var0}
    For the dilogarithm homology $H_{\bullet}(Y\setm A,B)_t$, the variation $\Var_0$ is an integer multiple of $\vcyc_{p_0}$, and $\Var_0\circ\Var_0=0$.
\end{lem}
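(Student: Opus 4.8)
The plan is to reduce the statement about the non-isolated critical set $P_0$ to a computation at the quadratic simple pinch $p_0 \in A_3$, using the simple-type factorization from \cref{lem:simple-type-factorization} together with the localization $\Var_0 = \sum_{p\in\pi_0(P_0)}\Var_p$ from \cref{sec:hierarchy-general}. The key observation, already used in the proof of \cref{lem:li2-var.var1=0}, is that $P_0$ is connected (it equals $(A_1\cup A_2)\cap A_3$ in the fibre over $t=0$, which is connected), with simple type $(\set{A_3},\varnothing)$. Hence \cref{lem:simple-type-factorization} gives that $\Var_0 = \fibSphere_{A_3}\circ(\text{something})$ factors through the Leray coboundary around $A_3$ in the codomain, and moreover $\Var_0$ factors through the iterated boundary $\partial_{B}$ in the domain (taken over the empty set of simple $B$-components, which is vacuous, so this gives no constraint from the $B$ side). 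Concretely, $\Var_0 h = \fibSphere_{A_3}\,\mu(h)$ for some $\mu(h) \in H_1(A_3\setm(A_4\cup A_5),B_3\cup B_4)_t$ depending linearly on $h$.

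First I would show that this target group $H_1(A_3\setm(A_4\cup A_5),B_3\cup B_4)_t$ is infinite cyclic, generated by the class $\eta$ of the vanishing sphere $\Sphere^1$ parametrized by $\eta(\tau)=(e^{2\ipi\tau}/\sqrt t, e^{-2\ipi\tau}/\sqrt t)$. For $t\ne 0$, the quadric $A_3 = \set{tx_1x_2=1}$ is isomorphic to $\CC^\times$ via $x_1\mapsto tx_1$ (with $x_2 = 1/(tx_1)$); under this identification $A_4,A_5$ are two further punctures and $B_3,B_4$ are two marked points. A direct computation—or the general homology calculations of \cref{sec:homology-groups} applied to the local quadratic model—shows the relevant relative $H_1$ is $\ZZ\cdot\eta$ once one is localized near $p_0$; globally on $A_3$ one checks that the extra punctures and the relative locus do not enlarge the group in the degree that matters, because the map $W_{p_0*}$ and the localization argument confine the variation to the neighbourhood of $p_0$. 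Then $\Var_0 h = \fibSphere_{A_3}(N(h)\cdot\eta) = N(h)\cdot\vcyc_{p_0}$ with $N(h)\in\ZZ$, which is the first claim.

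For the second claim, $\Var_0\circ\Var_0 = 0$, I would iterate the factorization: since $\Var_0 h$ lies in the image of $\fibSphere_{A_3}$, hence in the kernel of the inclusion $\kappa\colon Y\setm A \hookrightarrow Y\setm\bigcup_{i}A_i$ that removes $A_3$ (by exactness of the residue sequence, $\kappa_*\fibSphere_{A_3}=0$ as recalled in \cref{sec:pinch-preorder}). Now $\Var_0$, viewed again via the localization near $P_0$, factors through the trace $W_{P_0*}$, which in turn factors through $\kappa_*$ because removing $A_3$ does not change the neighbourhood $W_{P_0}\setm A = W_{P_0}\setm A'$ of the critical set (here $A'$ is $A$ with $A_3$ deleted, and $W_{P_0}$ meets no other $A_i$ once we recall $P_0\subset A_3$ up to the exceptional divisors, which we can also push out of $W_{P_0}$ by shrinking—this needs a small transversality check since $A_4,A_5$ do meet $A_3$; one works instead with the factorization through $\rho$-type maps as in \eqref{eq:varp-factor-JK}, using only that $A_3$ is the simple component). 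Combining, $\Var_0(\Var_0 h) = \Var_0(\fibSphere_{A_3}\mu(h))$ and the $W$-trace of $\fibSphere_{A_3}\mu(h)$ vanishes, so $\Var_0\circ\Var_0 = 0$. Alternatively, and more cleanly, one invokes \eqref{eq:variation-vcyc}/\eqref{eq:itvar_n-m_odd}: the underlying local model at $p_0$ is a quadratic pinch with $n-m = 2-2 = 0$ even, which would give $\Var\vcyc = -2\vcyc$ rather than $0$—so this shortcut does \emph{not} apply directly, and the vanishing genuinely comes from the \emph{relative} homology being killed by the trace map \eqref{eq:W*i*}; I would therefore rely on the factorization/kernel argument rather than the parity.

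The main obstacle will be the homology computation identifying $H_1(A_3\setm(A_4\cup A_5),B_3\cup B_4)_t$ and showing $\Var_0$ indeed lands in $\ZZ\cdot\eta$ rather than a larger group: the presence of the exceptional divisors $A_4,A_5$ (which do intersect $A_3$) means $A_3\setm(A_4\cup A_5)$ has more punctures than the bare quadratic local model, so one must argue that the localization near $p_0$—more precisely, the factorization of $\Var_0$ through $W_{P_0*}$ composed with $\var_{P_0}$—confines the image to the class supported in a small neighbourhood of $p_0$, where the group is cyclic. This is exactly the kind of localization already established in \cref{ss:loc} and used throughout \cref{sec:hierarchy-general}, so the argument is available; it just requires care in checking that $\partial W_{P_0}$ can be chosen transverse to all of $A_3,A_4,A_5,B_3,B_4$ simultaneously, which follows from the blow-up having made $\ti D$ normal crossing.
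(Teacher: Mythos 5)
Your proposal identifies the right structural ingredients (localization, simple‐type factorization through $\fibSphere_{A_3}$), but both halves have genuine gaps that the paper's argument resolves differently.

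\textbf{First claim.} The target group $H_1(A_3\setm(A_4\cup A_5),B_3\cup B_4)_t$ has rank $2$, not $1$: $A_{3t}\cong\PP^1$ punctured at two points ($A_4,A_5$) relative to two further points ($B_3,B_4$) is generated by the loop $\eta$ \emph{and} the path $\partial_{A_3}\vcell_{p_1}$. You flag this, but your proposed fix---``localize near $p_0$ where the group is cyclic''---does not work, because the critical set over $t=0$ is the \emph{whole} non-isolated set $P_0\cong(A_1\cup A_2)\cap A_3$, so the localization ball $W_{P_0}$ is a tubular neighbourhood of $P_0$ (not of $p_0$), and within it the relevant group already has rank $2$. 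The paper instead writes $\Var_0 h=N(h)\vcyc_{p_0}+M(h)\vcyc_{p_1}$ and then kills $M(h)$ by a partial-boundary argument: $H_1(B_3\setm(A_2\cup A_3),B_2\cup B_4)_t$ has trivial variation at $t=0$ (a linear pinch of pure type $(\set{A_2,A_3},\varnothing)$), so $0=\Var_0\partial_{B_3}h=\partial_{B_3}\Var_0 h=M(h)\alpha$ with $\alpha=\partial_{B_3}\vcyc_{p_1}$ a nonzero circle. This extra step is not present in your sketch and cannot be replaced by localization.

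\textbf{Second claim.} Your argument that the $W$-trace annihilates $\fibSphere_{A_3}\mu(h)$ hinges on ``removing $A_3$ does not change the neighbourhood $W_{P_0}\setm A$,'' but this is false: $P_0$ lies \emph{in} $A_3$ (indeed $A_{3,t=0}=A_1\cup A_2$), so for $t$ near $0$ the smooth quadric $A_{3t}$ necessarily meets $W_{P_0}$, and $W_{P_0}\setm A\neq W_{P_0}\setm A'$. Lemma~\ref{lem:simple-type-factorization} gives $\kappa_{A_3*}\Var_0=0$ (factorization in the \emph{codomain}), but provides no constraint in the domain since $\simple{K}=\varnothing$; so $\Var_0$ need not vanish on the image of $\fibSphere_{A_3}$ for formal reasons. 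You correctly observe that the parity shortcut~\eqref{eq:itvar_n-m_odd} fails here ($n-m$ is even). The paper's actual argument is to evaluate the period $\int_{\vcyc_{p_0}}\omega_t=(2\ipi)^2$ via the residue theorem: since this is constant in $t$ and $\Var_0\vcyc_{p_0}=N(\vcyc_{p_0})\vcyc_{p_0}$ by the first claim, one deduces $N(\vcyc_{p_0})=0$.
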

\begin{proof}
    As a quadric in $\PP^1\!\times\PP^1$, we have $A_3\cap \pi^{-1}(t) \cong \PP^1$, so the homology group in \eqref{eq:li2-H1(A3)} is the group of $\PP^1$ punctured at two points, relative to two other points. This group has rank 2, generated by a path (in $A_3$) from $B_4$ to $B_3$ ($\partial_{A_3} \vcell_{p_1}$) and a closed loop ($\eta$). Therefore,
    \begin{equation*}
        \Var_0 h=N(h) \vcyc_{p_0}+ M(h) \vcyc_{p_1}.
    \end{equation*}
    To see that $M(h)=0$, note that $H_1(B_3\setm (A_2\cup A_3),B_2\cup B_4)_t$ has zero variation at $t=0$ (linear pinch of type $(\set{A_2,A_3},\varnothing)$). So $0=\Var_0 \partial_{B_3}h=\partial_{B_3}\Var_0 h=M(h) \alpha$, where the circle $\alpha$ at the $B_3$-end of the cylinder $\vcyc_{p_1}$ is non-zero in $H_1(B_3\setm (A_2\cup A_3),B_2\cup B_4)_t$. One way to see that $\Var_0 \vcyc_{p_0}=0$ is to compute $\int_{\vcyc_{p_0}}\omega_t = (2\ipi)^2$ with the residue theorem (see \cref{rem:li2-residuethm}).
\end{proof}
\begin{rem}
The homology classes $\nu_{p_0}$, $\nu_{p_1}$, and $\sigma$ constitute a basis of $H_2(Y\setm A,B)_t \cong \ZZ^3$ at every $t\neq 0,1,\infty$.
\end{rem}

\section{Feynman integrals} \label{sec:feynman}

We apply the results of \cref{sec:stratsandlandau,sec:PL,sec:hierarchy} to the study of Feynman integrals in the parametric representation \cite[\S 1]{Brown:FeynmanAmplitudesGalois}. Such integrals are associated to a graph $G$ (assumed to be connected) with $n+1$ edges $E(G)$, and each edge $e\in E(G)$ labels a coordinate $x_e$ (called Schwinger/Feynman parameter) of $X=\PP^n=\PP(\CC^{E_G})$. The real simplex
$\sigma=\set{[x_0:\cdots:x_n]\colon x_e>0} \subset \PP^n$
defines a relative cycle $\sigma \in H_n(X,B)$ with boundary in the divisor
\begin{equation*}
    B=B_0\cup\ldots\cup B_n,
    \quad\text{where}\quad
    B_e=\set{x_e=0}.
\end{equation*}
The first and second \emph{Symanzik polynomials} \cite{BognerWeinzierl:GraphPolynomials} of a graph, $\Upol\in\ZZ[x]$ and $\Fpol\in\ZZ[x,t]$, are homogeneous in $x$ of degrees $h_1(G)$ and $h_1(G)+1$, respectively.\footnote{In the notation of \cite{Brown:FeynmanAmplitudesGalois}, $\Upol=\Psi_G$ and $\Fpol=\Xi_G$.} Here $h_1(G)=\rk{ H_1(G)}$ is called the \emph{loop number} of $G$, and it equals $|E(G)|-|V(G)|+1$ in terms of the number $|V(G)|$ of vertices. We denote the hypersurfaces defined by the Symanzik polynomials as
\begin{equation*}
    A=A_1 \cup A_2,
    \quad\text{where}\quad
    A_1=\set{\Upol=0}
    \quad\text{and}\quad
    A_2=\set{\Fpol=0}.
\end{equation*}
The parameters $t$ in $\Fpol$ have interpretations in physics as masses and scalar products of momenta of particles. These parameters range over a complex variety $T$. We set $Y=X\times T$ and denote $\pi\colon Y\rightarrow T$ the canonical projection.
\begin{defn}\label{def:L(G)}
    The Landau variety $L_G$ of a Feynman graph $G$ is the minimal (\cref{rem:minimal-landau}) Landau variety (\cref{def:landau-variety}) of $D=A\cup B \subset Y$.
\end{defn}
Our definition follows \cite{Brown:PeriodsFeynmanIntegrals}, but other authors \cite{MizeraTelen:LandauDisc,HMSV:SeqDiscOnShell} use the phrase ``Landau variety of $G$'' for the subset $L'\subseteq L_G$ that arises as the Landau variety of the arrangement $A_2\cup B$ without $A_1$ (so-called \emph{first type singularities}).\footnote{In dimensions $D<2d$, the singularities of $\II(t)$ reduce from $L$ to just $L'$, because the form \eqref{eq:FI-integrand} has then no pole on $A_1$.}

As explained in \cref{sec:integrals}, the Landau variety $L_G$ is an upper bound on the singularities of integrals $\II(t) = \int_{\sigma} \omega$, where
\begin{equation}\label{eq:FI-integrand}
    \omega = \frac{P(x,t)\, \Omega}{\Upol(x)^{D/2-d}\Fpol(x,t)^d}
    \in H_{\dR}^n(Y\setm A,B)
\end{equation}
is a rational differential $n$-form with poles contained in $A_1\cup A_2$. The exponents in the denominator are called \textit{superficial degree of convergence} $d$ and \emph{dimension of space-time} $D$. We assume that $D/2$ and $d$ are integers, so that $\omega$ is rational and in particular single-valued. The $n$-form
\begin{equation*}
    \Omega=\sum_{e}(-1)^{e} x_e \bigwedge_{k\neq e} \td x_k
\end{equation*}
is the projectivized Lebesgue measure, and $P(x,t)$ is a polynomial, homogeneous in $x$ of degree $d-E(G)+h_1(G) D/2$, so that $\omega$ is homogeneous of degree zero. The precise form of $P$ is irrelevant for us; it depends on the nature of the particles and interaction vertices encoded by the graph.

To analyse the singularities of an integral $\II(t)=\int_{\sigma}\omega_t$ with the topological method (\cref{sec:integrals}), we must in the first step deform the simplex $\sigma$ away from $A$, in order to define a class in $H_n(Y\setm A,B)_t$. The resulting integral is then necessarily convergent, and it makes sense to discuss its singularities. We do not consider divergent Feynman integrals; those need to be renormalized first \cite{BrownKreimer:AnglesScales}.

However, convergence is not sufficient---indeed, it often happens that the boundary of $\sigma$ intersects $A$, so that $\sigma$ cannot define a class in $H_n(Y\setm A,B)_t$.
This problem is solved by replacing the fibre $\PP^n$ with an iterated blowup $\rho\colon X\rightarrow \PP^n$. We will illustrate this construction explicitly in several examples (\cref{ssec:massivesunrise,ssec:masslesstriangle}) and refer to \cite{Brown:FeynmanAmplitudesGalois} for details.\footnote{These blowups were first described by Boyling \cite[\S 3]{Boyling:HomParaFeyn}, but it appears that this work was largely ignored or unknown. They were rediscovered in \cite{BlochEsnaultKreimer:MotivesGraphPolynomials} and generalized in \cite{Brown:FeynmanAmplitudesGalois}.} In summary:
\begin{itemize}
    \item The centres of the blowups are linear subspaces $\bigcap_{e \in \gamma} B_e$ indexed by certain subgraphs $\gamma\subset E(G)$, called the \emph{motic subgraphs}.
    \item The total transform $\ti{B}=\rho^{-1}(B)$ is a simple normal crossings divisor. Its components are (strict transforms of) the original $B_e$, plus the exceptional divisors (denoted $B_{\gamma}$, one for each motic subgraph).
    \item In $X$, the closure $\ti{\sigma}=\overline{\rho^{-1}(\sigma\setm B)}$ of the preimage of the interior of $\sigma$, is a manifold with corners, called the \emph{Feynman polytope}.\footnote{This is usually not the same as the ``Feynman polytope'' $\mathbf{S}_G$ from \cite{HamedHillmanMizera:FeynmanPolytopes}: the face $\ti{\sigma}\cap B_{\gamma}$ of $\ti{\sigma}$ has codimension one for \emph{every} motic $\gamma$, but the face of $\mathbf{S}_G$ corresponding to $\gamma$ can have higher codimension, e.g.\ if $\gamma$ is not biconnected.}
\end{itemize}
This construction achieves that, for generic parameters $t$ and convergent integrands \eqref{eq:FI-integrand}, the Feynman polytope and the pullback of $\omega$ define (co)homology classes
\begin{equation*}
    [\ti{\sigma}] \in H_n(Y\setm \ti{A},\ti{B})_t
    \qquad\text{and}\qquad
    [\rho^{\ast}\omega] \in H^n_{\dR}(Y\setm \ti{A},\ti{B})_t.
\end{equation*}
Here $\ti{A} \subset Y=X\times T$ denotes the strict transform of $A$. We can now indeed apply \cref{sec:integrals} to study the monodromy of $\II(t)=\int_{\ti{\sigma}}\rho^{\ast} \omega$, purely in terms of the homology groups (called the \emph{Feynman motive})
\begin{equation}\label{eq:Mot(G)}
    H_n(Y\setm \ti{A}, \ti{B})_t
    = H_n(X\setm \ti{A}_t,\ti{B}).
\end{equation}

To apply our machinery from \cref{sec:stratsandlandau,sec:PL} straightforwardly, we require that $\ti{D}=\ti{A}\cup\ti{B}$ is transverse. Although $\ti{B}$ is transverse on its own, this property is easily lost once we add $\ti{A}$ to the arrangement. Hence, further blowups become necessary, like for the dilogarithm (\cref{sec:dilog}). There is no general recipe for this step. We will discuss examples where $\ti{A}_2$ is smooth but not transverse to $\ti{B}$ (\cref{ssec:massivesunrise,sec:icecream}).

We begin this section with a discussion of the simplest case, massive one loop Feynman integrals. Here, no blowups whatsoever are needed, and all critical points are simple pinches. Hence we can apply the results of \cref{ss:PLthm,sec:linear-pinch,sec:quadratic-pinch} and find strong hierarchies according to \cref{sec:pinch-preorder,sec:hierarchy_refinements,sec:hierarchy_spc}.

After that, we consider two rather degenerate examples: the massive sunrise and the massless triangle (\cref{fig:triangle+sunrise}). We discuss the construction of their Feynman motives and demonstrate that several Landau singularities arise from \emph{non-isolated critical points} (\cref{sec:hierarchy-general}), and that often the simple type is only $(\set{A_2},\varnothing)$. As a consequence, a more subtle analysis is needed to derive hierarchy constraints, and we give examples of such.

In \cref{sec:icecream} we study the ice cream cone graph and we prove hierarchy constraints in this notoriously subtle case. Note that these subtleties are a general phenomenon rather than an exception, as was already noted in \cite{Boyling:HomParaFeyn}. We suggest a modification of the Feynman motive by additional blowups, to resolve non-transverse intersections (\cref{rem:mod_motive}).

\subsection{Massive one loop integrals}\label{ssec:oneloop}
We consider the geometry associated to generic (fully massive) 1-loop integrals with $n+1$ external and internal edges. The case $n=1$ was discussed in \cref{eg:bubble}.

The parameters $t=(m,s)\in T=\CC^{(n+1)(n+2)/2}$ consist of $n+1$ masses $m_i\in\CC$ of internal edges, and $n(n+1)/2$ invariants $s_{ij}=s_{ji}=-(p_i+\ldots+p_{j-1})^2\in\CC$ formed from the incoming momenta $p_i\in\CC^D$ at each vertex.\footnote{We treat the invariants $s_{ij}$ as unconstrained parameters, that is, we assume that $D\geq n$ so that there are no relations among the $s_{ij}$ other than $s_{ii}=0$ and $s_{ij}=s_{ji}$.} We can identify $T$ with the space of all symmetric $(n+1)\times(n+1)$ matrices $M$, with entries
\begin{equation*}
    M_{ij} = \begin{cases} m_i^2 & \text{ if } i=j, \\
     \frac{m_i^2+m_j^2+s_{ij}}{2} & \text{ if } i\neq j.
    \end{cases} 
\end{equation*}
The two Symanzik polynomials are $\Upol=x_1+ \ldots + x_{n+1}$ and $\Fpol=x^{\Transpose} M x$. They define a hyperplane $A_1=\set{\Upol=0}$ and a quadric $A_2=\set{\Fpol=0}$. Together with the $(n+1)$ coordinate hyperplanes $B_i=\set{x_i=0}$, we obtain a simple normal crossings divisor $A\cup B\subset \PP^{n} \times T$.

The quadric $A_{2 t}=A_1\cap\pi^{-1}(t) \subset \PP^{n}$ is smooth precisely when $\det M \neq 0$. The critical values of $\pi|_{A_2}$ form the \emph{leading Landau singularity of the first type},
\begin{equation*}
    \ell_G^{\Fpol}= \set{\det M=0}\subset T.
\end{equation*}
Over a generic point $t$ of $\ell_G^{\Fpol}$, the rank of $M$ is $n$, and the one-dimensional kernel $\CC p$ of $M$ defines the unique singular point $[p]\in\PP^n$ of $A_{2 t}$. This point is a quadratic simple pinch, and the corresponding vanishing cycle is an $\Sphere^1$-bundle over the vanishing sphere $\cong \Sphere^{n-1} \subset A_{2 t}$ (cf.\ \cref{ss:vanishers}).

The \emph{leading singularity of the second type} refers to the critical values of the intersection $A_{12}=A_1\cap A_2$. Its fibre over $t$ is
\begin{equation*}
    A_{12}\cap\pi^{-1}(t)
    = \set{x_1+\ldots+x_{n+1}=0} \cap \set{ x^{\Transpose} S x = 0 }
    \subset \PP^n
\end{equation*}
and depends only on the momentum part $S=M|_{m=0}=(s_{ij}/2)$ of $M$, but not the masses. A point $[x]\in A_{12}$ is critical if either, the normal vectors $\mathbf{1}=(1,\ldots,1)^{\Transpose}$ and $Sx$ of $A_1$ and $A_2$ become collinear, or if $\set{ x^{\Transpose} S x = 0 }$ is singular. So either $\det S=0$, or $S$ is invertible and $x\propto S^{-1}\mathbf{1}$. In the latter case, for $x$ to lie on $A_{12}$, we need $\mathbf{1}^{\Transpose} S^{-1}\mathbf{1}=0$.
Thus, using the \emph{Cayley matrix}
\begin{equation*}
    S'
    = \begin{pmatrix}
        0 & \mathbf{1}^{\Transpose} \\
        \mathbf{1} & S \\
    \end{pmatrix}
\end{equation*}
with determinant $(\mathbf{1}^{\Transpose} S^{-1} \mathbf{1})\cdot \det S$, the leading second type singularity is\footnote{$\det S'$ is proportional to the Gram determinant \cite[Appendix A.1]{Melrose:RedFeyn}.}
\begin{equation*}
    \ell_G^{\Fpol\Upol} = \left\{\det S'=0  \right\} \subset T.
\end{equation*}
Over a generic point of this component, there is a unique critical point. It is a quadratic simple pinch of type $(\set{A_1,A_2},\varnothing)$.
The corresponding vanishing cycle is an $\Sphere^1\!\times \Sphere^1$-bundle over the vanishing sphere $\cong \Sphere^{n-2} \subset A_{12}\cap\pi^{-1}(t)$.

All other strata lie in $B$. Since $\Upol|_{x_e=0}$ and $\Fpol|_{x_e=0}$ are the Symanzik polynomials of the graph $G/e$ where the edge $e$ is contracted, the critical values of $A_2\cap B^I$ and $A_{12}\cap B^I$ are the leading singularities of the quotient graph $G/I$, where $B^I=\bigcap_{e\in I}B_e$ for any subset $I$ of edges. To summarize:
\begin{prop}
The full Landau variety of a generic one-loop graph $G$ is
\begin{equation}\label{eq:L(oneloop)}
    L_G 
    = \bigcup_{ I \subsetneq E(G) } \ell_{G/I}^\Fpol \cup \ell_{G/I}^{\Fpol\Upol}.
\end{equation}
All components of $L_G$ are simple pinch components (\cref{defn:simplecomponent}).
\end{prop}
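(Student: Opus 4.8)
The plan is to verify equation \eqref{eq:L(oneloop)} by the standard stratify-and-project recipe from \cref{def:landau-variety}, and then to check the ``simple pinch component'' claim locally on each component listed on the right-hand side. Since $A\cup B$ has simple normal crossings (the $B_e$ are coordinate hyperplanes and $A_1,A_2$ are the Symanzik hypersurfaces, transverse to them for generic $t$), the canonical stratification is available, and by \cref{rem:minimal-landau} it computes the minimal Landau variety $L_G$ of \cref{def:L(G)}. The strata are indexed by the intersections $A^{\{1\}}\cap B^I$, $A^{\{2\}}\cap B^I$, $A^{\{1,2\}}\cap B^I$ and $B^I$ for $I\subseteq E(G)$; by \cref{rem:minimal-landau} again it suffices to take closures, so $L_G=\bigcup_{S}\pi(c\overline S)$. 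Strata contained in a single $B_e$ or lower intersections thereof: here $\pi|_{B^I}$ and $\pi|_{A_k\cap B^I}$ compute the critical set of the \emph{same} geometry for the contracted graph $G/I$, because $\Upol|_{x_I=0}=\Psi_{G/I}$ and $\Fpol|_{x_I=0}=\Xi_{G/I}$ by the deletion/contraction formulae for Symanzik polynomials \cite{BognerWeinzierl:GraphPolynomials}. This reduces everything to the two ``leading'' computations for $G/I$, which are then carried out exactly as in the paragraphs preceding the proposition.

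First I would treat the stratum $A_2$ (and more generally $A_2\cap B^I$, replacing $G$ by $G/I$). By \cref{lem:aomoto-critical}-style reasoning (or directly: the fibre $A_{2,t}\subset\PP^n$ is the projective quadric $\{x^\Transpose M x=0\}$, whose singular locus is $\PP(\ker M)$, and $\pi|_{A_2}$ fails to submerge exactly at a fibrewise singular point of the quadric), the critical set of $A_2$ projects onto $\{\det M=0\}=\ell_G^{\Fpol}$, which has codimension one in the space $T$ of symmetric matrices. At a generic point $\operatorname{rk}M=n$, so $\ker M=\CC p$ is a line and $[p]\in\PP^n$ is the unique critical point; bringing $\Fpol$ into the normal form by diagonalising the quadratic form, $\Fpol=t_1+\sum_{k\geq 2}x_k^2$ after an affine change of coordinates, shows $[p]$ is a quadratic simple pinch in the sense of \cref{defn:pinches} with $m=1$. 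Next, the stratum $A_{12}=A_1\cap A_2$: its fibre is $\{\mathbf 1^\Transpose x=0\}\cap\{x^\Transpose S x=0\}$, and a point is critical iff $Sx\parallel\mathbf 1$ (collinear normals) or $S$ is singular on the hyperplane. The resultant of these conditions is exactly $\det S'=0$ with $S'$ the Cayley matrix (its determinant being $(\mathbf 1^\Transpose S^{-1}\mathbf 1)\det S$); so $\pi(cA_{12})=\ell_G^{\Fpol\Upol}$, again codimension one, with a unique critical point $x\propto S^{-1}\mathbf 1$ at a generic point, which is a quadratic simple pinch of type $(\{A_1,A_2\},\varnothing)$ — verified by the same normal-form computation inside the hyperplane $A_1$. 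Finally, the stratum $A_1$ alone: $\pi|_{A_1}$ is always a submersion (the fibre is a fixed hyperplane $\PP^{n-1}$, independent of $t$), so $A_1$ contributes nothing. Assembling these over all $I\subsetneq E(G)$ gives \eqref{eq:L(oneloop)}, and since every component so produced carries a quadratic simple pinch at its generic point, each is a simple pinch component by \cref{defn:simplecomponent}.

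The one genuine subtlety — and the step I expect to be the main obstacle — is the claim that the right-hand side of \eqref{eq:L(oneloop)} contains \emph{every} codimension-one component of $L_G$, i.e.\ that no other stratum produces a new codimension-one piece and that the pieces coming from $I\neq\varnothing$ are not already swallowed up. For the first point one must check that strata $B^I$ with $\abs I$ large (where the fibre is a lower-dimensional projective space and $\pi|_{B^I}$ is entirely critical, $cS=S$) project to \emph{higher} codimension, just as in the discussion after Riemann's extension theorem in \cref{sec:stratsandlandau}; this is where one uses that only at-most-$(n+1)$-fold intersections can contribute to $\irrone L$. For the second, one shows $\ell_{G/I}^{\Fpol}$ and $\ell_{G/I}^{\Fpol\Upol}$ genuinely have codimension one in $T$ for every proper $I$ — equivalently, that $\det M|_{x_I=0}$ and $\det S'|_{x_I=0}$ do not vanish identically on $T$ — which follows because for generic masses and invariants these determinants are non-constant polynomials (indeed $\det M$ is, up to sign and the factor coming from the contracted edges, the second Symanzik-type determinant of $G/I$, which is not identically zero for a connected graph). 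Granting these finiteness/non-degeneracy facts, which are essentially combinatorial properties of $\Psi$ and $\Xi$, the proof reduces to the two local normal-form computations above and the deletion-contraction reduction, with no further input required.
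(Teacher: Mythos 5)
Your overall route matches the paper's (implicit) argument: canonical stratification, deletion/contraction for Symanzik polynomials reducing to quotient graphs, identifying the two leading loci $\{\det M=0\}$ and $\{\det S'=0\}$, checking $A_1$ alone contributes nothing, and then arguing completeness via the ``at most $(n+1)$-fold intersections'' bound. So the skeleton is the same and your completeness and non-degeneracy caveats correctly anticipate what needs to be checked.

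However, your final sentence contains a factual error: it is not true that ``every component so produced carries a quadratic simple pinch at its generic point.'' When you reduce via $B^I$, the coordinate hyperplanes $B_e$ with $e\in I$ are part of the arrangement at the pinch, so the codimension of the critical stratum in $\PP^n$ is $m=\abs{I}+1$ for the first-type singularity $A_2\cap B^I$ and $m=\abs{I}+2$ for the second type $A_1\cap A_2\cap B^I$. For $\abs{I}=n$ (first type, $\ell^{\Fpol}_{G/I}=\{m_i^2=0\}$) and $\abs{I}=n-1$ (second type, $\ell^{\Fpol\Upol}_{G/I}=\{s_{ij}=0\}$), this gives $m=n+1$: these are \emph{linear} simple pinches, not quadratic ones. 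Your normal-form diagonalization $\Fpol=t_1+\sum_{k\ge 2}x_k^2$ with $m=1$ is the picture inside the fibre $\PP^{n-\abs{I}}$ of $B^I$, but the type of the pinch is determined by all the hypersurfaces through it in $\PP^n$, including the $B_e$'s you used to slice down. The conclusion that all components are simple pinch components is still correct, since \cref{defn:simplecomponent} admits both linear and quadratic pinches, but the linear/quadratic distinction matters downstream: these boundary components obey the stronger linear-pinch hierarchy constraints \eqref{eq:itvar_B_linear}--\eqref{eq:itvar_A_linear} and are the reflexive elements in the parity condition \eqref{eq:itvar_n-m_odd}, as the paper points out immediately after the proposition. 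A proof that asserts them all quadratic would give wrong answers for $\Var_{\ell}\circ\Var_{\ell}$ in exactly these cases.
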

If we denote by $M_I$ and $S'_I$ the submatrices without the rows and columns indexed by $I$, then $\ell^{\Fpol}_{G/I}=\set{\det M_I=0}$ and $\ell^{\Fpol\Upol}_{G/I}=\set{\det S_I'=0}$. We exclude $\abs{I}=n+1$ in \eqref{eq:L(oneloop)} because $B_1\cap\ldots\cap B_{n+1}=\varnothing$. In the case $\abs{I}=n$, the graph $G/I$ consists of a single edge, say $i$, and we have no 2nd type singularity: $\ell^{\Fpol\Upol}_{G/I}=\varnothing$ because $A_1\cap B^I$ is empty. The first type singularity is
\begin{equation}\label{eq:type1(tadpole)}
    \ell^{\Fpol}_{G/I}=\set{m_i^2=0}
    \quad\text{for}\quad
    I=E(G)\setm\set{i}.
\end{equation}
For $\abs{I}=n-1$, the quotient $G/I$ is a graph with two edges (c.f.\ \cref{eg:bubble}). In particular, the second type singularity is
\begin{equation}\label{eq:type2(bubble)}
    \ell_{G/I}^{\Fpol\Upol}=\set{s_{ij}=0}
    \quad\text{for}\quad
    I=E(G)\setm\set{i,j}.
\end{equation}
The components \eqref{eq:type1(tadpole)} and \eqref{eq:type2(bubble)} come from strata that are $(n+1)$-fold intersections, hence linear simple pinches. All other components of $L_G$ have quadratic simple pinches.

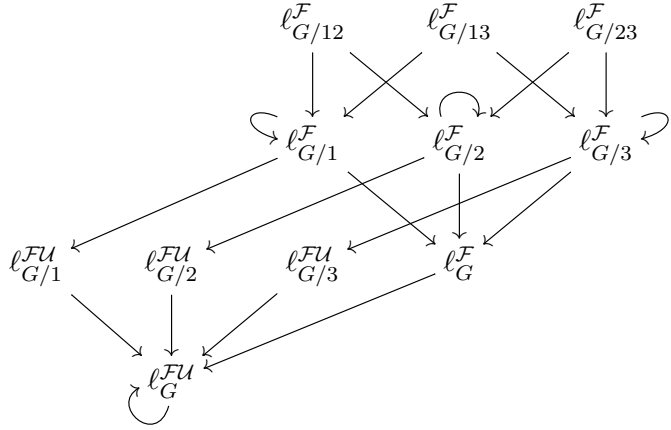
\begin{figure}
\begin{equation*}\xymatrix{
& & \ell_{G/12}^{\Fpol} \ar[d] \ar[dr] & \ell_{G/13}^{\Fpol} \ar[dl] \ar[dr] & \ell_{G/23}^{\Fpol} \ar[dl] \ar[d] \\
& & \ell_{G/1}^{\Fpol} \ar[dr] \ar[dll] \ar@(ul,l) & \ell_{G/2}^{\Fpol} \ar[d]\ar[dll] \ar@(ul,ur) & \ell_{G/3}^{\Fpol} \ar[dl] \ar[dll] \ar@(ur,r) \\
\ell^{\Fpol\Upol}_{G/1} \ar[dr] & \ell^{\Fpol\Upol}_{G/2} \ar[d] & \ell^{\Fpol\Upol}_{G/3}\ar[dl] & \ell_G^{\Fpol} \ar[dll] & \\
 & \ell^{\Fpol\Upol}_G \ar@(d,l) & & & \\
}\end{equation*}
    \caption{The hierarchy of the massive triangle integral. An arrow $\ell\rightarrow \ell'$ indicates the compatibility $\ell' \CV \ell$ of the corresponding variations. If there is no directed path from $\ell$ to $\ell'$, then $\Var_{\ell'}\circ\Var_{\ell}=0$ (\cref{sec:hierarchy_spc}).}%
    \label{fig:poset-triangle}%
\end{figure}
The hierarchy relation defined in \cref{sec:hierarchy} is antisymmetric in this case, and illustrated in \cref{eq:posetbubble} for $n=1$ and in \cref{fig:poset-triangle} for $n=2$. The reflexive elements (indicated by a self-loop) are $\ell_{G/I}^{\Fpol}$ with $n-\abs{I}$ odd, and $\ell_{G/I}^{\Fpol\Upol}$ with $n-\abs{I}$ even, according to \eqref{eq:itvar_n-m_odd}.
This hierarchy implies:
\begin{enumerate}
    \item The variation around a singularity of $G/I$, followed by a variation of a graph $G/J$ that contracts an edge that is not in $I$, must vanish:
\begin{equation*}
    \left(\Var_{\ell_{G/J}^{\Fpol \Upol} } \ \text{or}\ \Var_{\ell_{G/J}^\Fpol} \right)
    \circ
    \left(\Var_{\ell_{G/I}^{\Fpol \Upol} } \ \text{or}\ \Var_{\ell_{G/I}^\Fpol} \right)
    = 0
    \quad\text{whenever}\quad
    J \not \subseteq I.
\end{equation*}
    \item The variation of any second type singularity, followed by the variation of any first type singularity, must vanish:
\begin{equation*}
    \Var_{\ell_{G/J}^\Fpol} \circ \Var_{\ell_{G/I}^{\Fpol \Upol} } 
    = 0
    \quad\text{for all}\quad
    I,J\subsetneq E(G).
\end{equation*}
    \item If $n-\abs{I}$ is even, then $\Var_{\ell^{\Fpol}_{G/I}} \circ \Var_{\ell^{\Fpol}_{G/I}} = 0$.
    If $n-\abs{I}$ is odd, then $\Var_{\ell^{\Fpol\Upol}_{G/I}} \circ \Var_{\ell^{\Fpol\Upol}_{G/I}} = 0$.
\end{enumerate}
We hence recover the results of \cite[eq.'s (2.7), (2.12--14)]{Boyling:HomParaFeyn}. For an analysis of one-loop integrals in (compactified) \emph{momentum space}, see \cite{Boyling:VanishingHyperspheres,Muehlbauer:Cutkosky}. The fundamental group of $T\setm L$ is studied in \cite{PonzanoReggeSpeerWestwater:Monodromy1}.
\begin{rem}
    The integrals $\int_{\sigma} \Omega/\Fpol^{D/2}$ with $D=n+1$ compute the volume of a spherical simplex and have been studied in great detail. Since $\Upol$ is absent, the singularities of 2nd type disappear. It was shown in \cite{Aomoto:AnalyticSchlaefli} that these integrals can be written as iterated integrals of logarithmic one-forms. As mentioned in \cref{sec:Aomoto}, the corresponding symbol encodes the maximal iterated variations. We verified that the symbol \cite[eq.~(3.15)]{Aomoto:AnalyticSchlaefli} is compatible with the hierarchy. In fact, as for the Aomoto polylogarithm, the variations can be computed explicitly using Picard-Lefschetz theory, see \cite[\S 6]{ArkaniHamedYuan:SphericalPlanesQuadrics}.
\end{rem}

\subsection{Massless triangle}\label{ssec:masslesstriangle}
Setting a mass to zero is a singularity of the generic one-loop integral. The corresponding linear pinch \eqref{eq:type1(tadpole)} indicates that the integration simplex $\sigma$ now intersects the quadric $A_2=\set{\Fpol=0}$ in a corner.

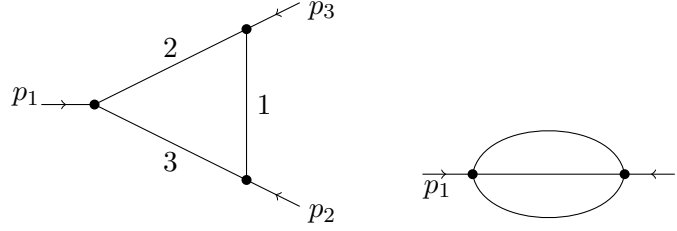
\begin{figure}
  \begin{tikzpicture}
   \coordinate (v0) at (0,0);
   \fill[black] (v0) circle (.066cm);
   \coordinate  (v1) at (2,-1);
   \fill[black] (v1) circle (.066cm);
   \coordinate (v2) at (2,1);
   \fill[black] (v2) circle (.066cm); 
   \coordinate (p3) at (2.7,-1.35);
   \coordinate (p4) at (2.7,1.35);
   \coordinate (p1) at (-0.7,0);
   \draw[decoration={markings, mark=at position 0.625 with {\arrow{<}}},postaction={decorate}] (v1) -- (p3) node [xshift=0.3cm,yshift=-0.1cm] {$p_2$};
   \draw[decoration={markings, mark=at position 0.625 with {\arrow{<}}},postaction={decorate}] (v2) -- (p4) node [xshift=0.3cm,yshift=-0.1cm] {$p_3$};
   \draw[decoration={markings, mark=at position 0.625 with {\arrow{<}}},postaction={decorate}] (v0) -- (p1) node [xshift=-0.2cm,yshift=0.1cm] {$p_1$};
   \draw (v0) -- (v1) node [midway,below] {$3$};
   \draw (v0) -- (v2) node [midway,above] {$2$};
   \draw (v2) -- (v1) node [midway,right] {$1$};
  \end{tikzpicture}
  \qquad
  \begin{tikzpicture}
   \coordinate (v0) at (0,0);   
   \coordinate  (v1) at (2,0); 
   \draw (v0) -- (v1);
   \draw (v0) to[out=80,in=100] (v1);
   \draw (v0) to[out=-80,in=-100] (v1);  
   \fill[black] (v0) circle (.066cm);
   \fill[black] (v1) circle (.066cm);
   \coordinate (p1) at (-0.66,0);
   \coordinate (p2) at (2.66,0);
   \draw[decoration={markings, mark=at position 0.625 with {\arrow{<}}},postaction={decorate}] (v0) -- (p1) node [xshift=.2cm,yshift=-0.2cm] {$p_1$};
   \draw[decoration={markings, mark=at position 0.625 with {\arrow{<}}},postaction={decorate}] (v1) -- (p2) node [xshift=-0.2cm,yshift=-0.2cm] {};
  \end{tikzpicture}
\caption{The triangle graph (left) and the sunrise graph (right).}%
\label{fig:triangle+sunrise}%
 \end{figure}

For the massless triangle graph (\cref{fig:triangle+sunrise}) with parameters $t=(p_1^1,p_2^2,p_{3}^2)\in T=\CC^3$ and the Symanzik polynomials
\begin{equation*}
    \Upol=x_1 +x_2 +x_3
    \quad\text{and}\quad
    \Fpol=-p_1^2 x_2x_3 - p_2^2x_1x_3 - p_3^2x_1x_2,
\end{equation*}
this situation is illustrated in \cref{fig:massless-triangle-divisors}. The Feynman motive is obtained from the blowup $X\rightarrow \PP^2$ of the three intersection points\footnote{The three subgraphs consisting of two edges are the ``motic graphs'' from \cite{Brown:FeynmanAmplitudesGalois}.}
\begin{equation*}
    A_2\cap B = (B_2\cap B_3) \cup (B_1\cap B_3) \cup (B_1\cap B_2) = \set{[1:0:0],[0:1:0],[0:0:1]}
\end{equation*}
of $B=B_1\cup B_2\cup B_3$. The total transform $\ti{B}$ of $B$ includes the exceptional divisors $B_{ij}\cong \PP^1$ above $x_i=x_j=0$ and thus has 6 components in total. It bounds the Feynman polytope $\ti{\sigma}$, which is a hexagon, see \cref{fig:massless-triangle-divisors}.
\begin{figure}
    \centering
    \includegraphics[width=\textwidth]{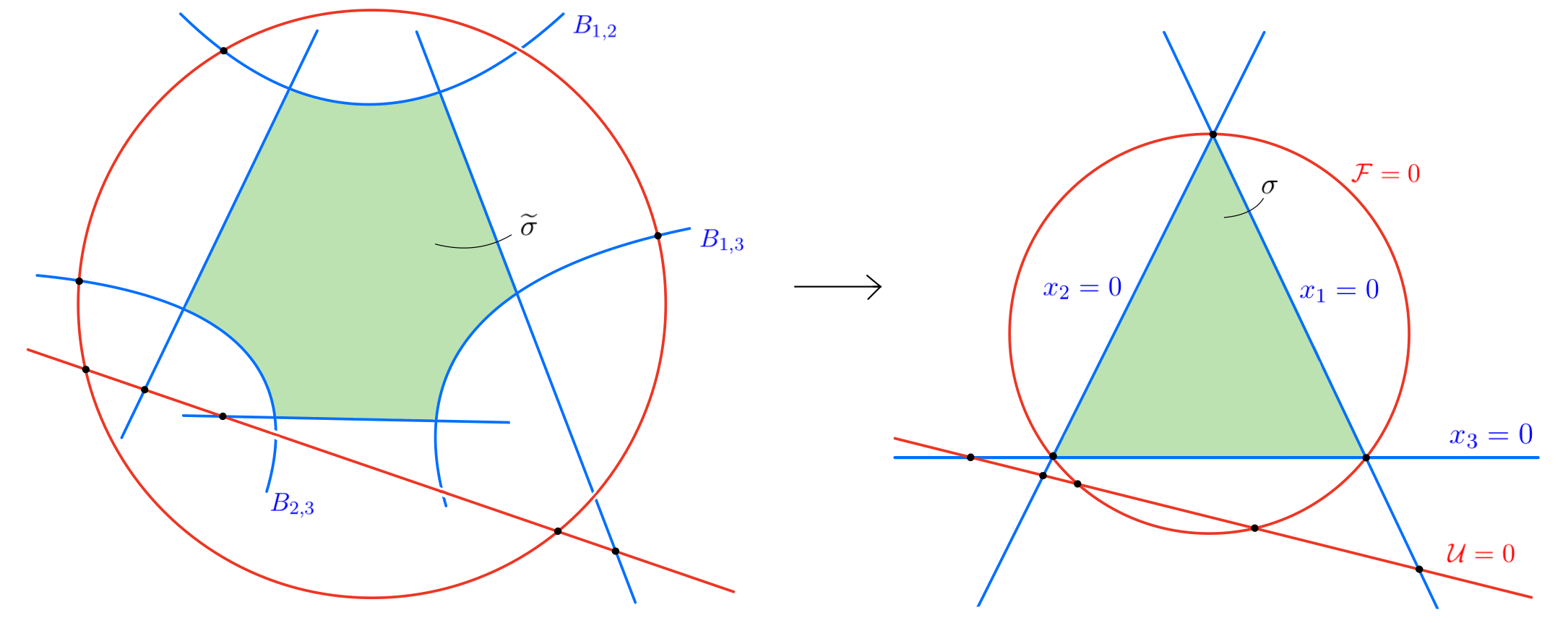}%
\caption{The geometry of the Feynman motive of the massless triangle, in a generic fibre.}%
    \label{fig:massless-triangle-divisors}%
\end{figure}
The blowups do not affect the existing divisor components, and we denote their strict transforms in $Y=X\times T$ still as $B_i$ and $A_j$. All divisor components are smooth, $A_i\cong B_j\cong B_{jk} \cong \PP^1$, and intersect transversely.

An inspection of the projection $\pi\colon Y\rightarrow T$ and the canonical stratification of $D=A_1\cup A_2\cup \ti{B}$ reveals a Landau variety with 4 components:
\begin{equation}\label{eq:L(massless-tri)}
    L_G = \ell_1\cup \ell_2 \cup \ell_3 \cup \ell_{\delta}.
\end{equation}
The first three components $\ell_i=\set{p_i^2=0}$ constitute a singularity of the 1st type. Namely, when $p_i^2=0$, then $A_2\cap Y_t=B_i \cup \set{p_k^2 x_j+p_j^2 x_k}$ degenerates into a union of two lines, where $\set{i,j,k}=\set{1,2,3}$. Since one of these lines is $B_i$, the critical set over $\ell_i$ is not just the simple quadratic pinch point of $A_2$ (e.g.\ $[0:p_2^2:-p_3^2]$ for $i=1$), but the entire line $B_i$. Note that $A_2$ and $B_i$ do not intersect for generic $t$, only when $p_i^2=0$. The critical set over $\ell_i$ is therefore not isolated and can be denoted as
\begin{equation*}
    P_i = c(A_2\cap B_i) = B_i\cap \ell_i = \set{x_i=0} \times \set{p_i^2=0} \subset X\times T.
\end{equation*}

The 2nd type singularity $\ell_{\delta}=\set{\delta=0}$ arises from the coincidence of the two intersection points
$(A_1 \cap A_2)_t = \set{Q_1,Q_2} $.
Explicitly, it is defined by
\begin{equation*}
    \delta = p_1^4+p_2^4+p_3^4-2p_1^2p_2^2-2p_1^2p_3^2-2p_2^2p_3^2,
\end{equation*}
because $ Q_{1,2} = [-p_1^2+p_2^2-p_3^2 \pm \sqrt{\delta} : p_1^2-p_2^2 - p_3^2 \mp \sqrt{\delta} : 2p_3^2] $ degenerate to the same point when $\delta=0$.\footnote{In terms of the notation from \cref{ssec:oneloop}, $\ell_{\delta}=\ell^{\Fpol\Upol}_G$ and $\delta=\det S'$.} This is a quadratic simple pinch, with the vanishing sphere $\set{Q_1,Q_2}\cong\Sphere^0$.

The Picard-Lefschetz formula does not apply due to the non-isolated critical set $P_i$, but the variation still localizes in the sense of \cref{lem:localization} into some tubular neighbourhood $W$ of $B_i$. The type of $P_i$ is $(\set{A_1,A_2}, \set{B_i,B_{ij},B_{ik}})$, but only $A_2$ is simple in the sense of \cref{def:simple-type} (c.f.\ \cref{eg:simple-nonsimple}). So without further analysis, all components of \eqref{eq:L(massless-tri)} have the same simple type, and we do not get any hierarchy constraints from \cref{def:rel_landau_components}.

However, with a more detailed analysis, we can (as in \cref{lem:li2-var.var1=0} and \cref{lem:Li2-Var0}) nevertheless still derive non-trivial constraints:
\begin{lem}\label{lem:tri0-hierarchy}
    $\Var_{\ell_i}\circ \Var_{\ell_i}=\Var_{\ell_i}\circ\Var_{\ell_{\delta}}=0$ for all $i\in\set{1,2,3}$.
\end{lem}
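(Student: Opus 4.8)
The plan is to imitate the strategy used for \cref{lem:li2-var.var1=0} and \cref{lem:Li2-Var0}: even though the Picard--Lefschetz formula does not apply to the non-isolated critical set $P_i$ over $\ell_i$, the localization \cref{lem:localization} still gives $\Var_{\ell_i} = \var_{P_i} W_{P_i*}$ for a tubular neighbourhood $W$ of $B_i$, and we can exploit the factorization \cref{lem:simple-type-factorization} through the \emph{simple} part of the type. As noted in the excerpt, the type of $P_i$ is $(\set{A_1,A_2},\set{B_i,B_{ij},B_{ik}})$ but its simple part is only $(\set{A_2},\varnothing)$ (removing $A_2$ kills all critical points in $P_i$, while removing any other component leaves the quadratic pinch point of $A_2$, or its intersection with $B_i$, behind, exactly as in \cref{eg:simple-nonsimple}). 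Hence \cref{lem:simple-type-factorization} tells us that $\kappa_{A_1*}\Var_{\ell_i}\neq 0$ in general is \emph{not} forced, but crucially that $\Var_{\ell_i} h$ is always a Leray coboundary around $A_2$: there exists
\begin{equation*}
    \mu \in H_1\!\big(A_2\setm(A_1\cup B_{\text{rest}}),\, B_i\cup B_{ij}\cup B_{ik}\big)_t
\end{equation*}
with $\Var_{\ell_i} h = \fibSphere_{A_2}\mu$. (Here $B_{\text{rest}}$ collects the boundary components that $W$ actually meets; the point is only that $\Var_{\ell_i} h$ lies in $\image \fibSphere_{A_2}$.)

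For the second equation, $\Var_{\ell_i}\circ\Var_{\ell_\delta}=0$, I would run the argument in the domain of $\Var_{\ell_i}$. The singularity $\ell_\delta$ is a quadratic simple pinch of type $(\set{A_1,A_2},\varnothing)$, so by \cref{cor:varp-simple-factor-JK} (or \eqref{eq:PL-IJK} with $J=\set{A_1,A_2}$, $K=\varnothing$) the vanishing cycle $\vcyc_{\ell_\delta}$ is an iterated coboundary $\fibSphere_{A_1}\fibSphere_{A_2}(\cdot)$ around both $A_1$ and $A_2$. In particular $\kappa_{A_1*}\vcyc_{\ell_\delta}=0$ in $H_2(Y\setm A_2,\ti B)_t$, by the residue exact sequence for $A_1$, and this vanishing persists under parallel transport to a fibre near $\ell_i$. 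Now $W$ (the tube around $B_i$) does \emph{not} meet $A_1$ for $t$ near $\ell_i$ --- because $A_1\cap B_i$ is the point $[0:p_3^2:-p_2^2]\in X$ (for $i=1$), which is a smooth interior point of $A_1$ and lies at finite distance from $B_i$ only on $\ell_1$ itself; more simply, $W$ is a small neighbourhood of $B_i$ and $A_1\cap B_i$ is a single point that we can arrange to avoid, or else $W$ meets $A_1$ only in a set that is retracted away. Hence the trace $W_{P_i*}$ factors through $\kappa_{A_1*}$, and therefore annihilates the transported $\vcyc_{\ell_\delta}$; so $\Var_{\ell_i}\circ\Var_{\ell_\delta}=0$. (This is precisely the mechanism of \cref{corr:povar_upstairs}: $A_1\in\simple{J}_{\ell_\delta}$ but $A_1\notin J_{\ell_i}$ once we restrict to the localizing neighbourhood, because $W\cap A_1=\varnothing$, so $\kappa_{A_1*}$ sits between $W_{P_i*}$ and the identity.) I need to double check that $W$ can indeed be chosen to miss $A_1$; this is the one geometric point requiring care, and I would verify it by noting that on $\ell_i$ the line $A_1$ meets $B_i$ transversely in one point while the critical set $P_i=B_i$ is a whole line, so a thin enough tube around $B_i$ that avoids that one point still contains $P_i$ --- but actually $P_i=B_i$ is \emph{all} of $B_i$ including $A_1\cap B_i$, so more honestly $W$ does meet $A_1$; the correct statement is then that $W\cap A_1 = W\cap A_1$ is retracted into $\partial W$ or handled exactly as the ``$A_1$ not simple'' case, and one instead argues via $\partial_{B_i}$ as below. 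This is the main obstacle and I would resolve it by the boundary argument, which is cleaner.

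For $\Var_{\ell_i}\circ\Var_{\ell_i}=0$, and as the robust route for the mixed iteration as well, I would argue \emph{in the domain} using iterated boundaries, exactly as in the proof of \cref{lem:li2-var.var1=0}. Since $\Var_{\ell_i}h=\fibSphere_{A_2}\mu$ with $\mu$ supported near $B_i$ and having boundary only in $\ti B$-components meeting $W$, consider $\partial_{B_{ij}}\partial_{B_{ik}}$ of the relevant classes: for generic $t$ (not on $\ell_i$) we have $A_2\cap B_{ij}\cap B_{ik}\cap\pi^{-1}(t)=\varnothing$, so the iterated boundary of $\mu$ --- equivalently of $\Var_{\ell_i}h$ --- along those exceptional divisors lands in $H_{-1}$ of the empty set, hence is zero, and in fact $H_\bullet$ of that empty fibre vanishes in all degrees. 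The second variation $\Var_{\ell_i}$ (resp.\ $\Var_{\ell_\delta}$) factors through exactly this iterated boundary: for $\Var_{\ell_i}$ this is \cref{lem:simple-type-factorization} applied to the simple boundary indices --- wait, $\simple K_{\ell_i}=\varnothing$, so I instead use that the variation at a \emph{linear} pinch factors through its maximal boundary; but $\ell_i$ is not linear. The honest statement: I use \cref{rem:is-partialB} / \cref{cor:varp-simple-factor-JK} showing the intersection number defining $\Var_{\ell_i}$ can be computed after taking all $B$-boundaries that the pinch supports, combined with the fact that $\Var_{\ell_i}h$ already has zero boundary along every $\ti B$-component not in the type of $P_i$. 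Chasing the commuting squares for $(\eta_*,\partial_{B_\bullet})$ from \cref{sec:coboundary-comm-functor} then forces the composite to vanish. I expect the write-up to require: (i) pinning down which $\ti B$-components the localizing tube $W$ around $B_i$ meets; (ii) identifying $\image\fibSphere_{A_2}$ as the target of $\Var_{\ell_i}$ via \cref{lem:simple-type-factorization}; (iii) the empty-intersection computation $A_2\cap B_{ij}\cap B_{ik}=\varnothing$ off $\ell_i$; and (iv) assembling these through the functoriality diagrams. The genuine difficulty is step (i) together with handling $A_1\cap B_i$, and I would dispatch it by the boundary argument rather than by claiming $W$ avoids $A_1$.
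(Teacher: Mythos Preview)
Your proposal correctly identifies the starting point: $\Var_{\ell_i}h=\fibSphere_{A_2}\mu$ via \cref{lem:simple-type-factorization}, since only $A_2$ is simple. But both of your attempted continuations stall, and you yourself flag the reasons. The $A_1$-avoidance argument fails because the critical set $P_i=B_i$ does meet $A_1$ (and in fact $A_1$ is in the type of $P_i$), so $W_{P_i*}$ does \emph{not} factor through $\kappa_{A_1*}$; thus you cannot kill $\vcyc_{\ell_\delta}$ that way. The iterated-boundary argument fails because $\simple{K}_{\ell_i}=\varnothing$, so $\Var_{\ell_i}$ does not factor through any $\partial_{B_\bullet}$ in its domain; the emptiness of $A_2\cap B_{ij}\cap B_{ik}$ (which is trivially true since $B_{ij}\cap B_{ik}=\varnothing$) gives you nothing to hook into.

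The idea you are missing is that $A_2\cap B_i=\varnothing$ in a generic fibre. Hence $\mu$ actually lives in $H_1(A_2\setm A_1,\,B_{ij}\cup B_{ik})_t$: the coboundary $\fibSphere_{A_2}\mu$ has \emph{no boundary on $B_i$}. This lets you drop $B_i$ from the arrangement for the purpose of the second variation. With $B_i$ gone, the critical set over $\ell_i$ collapses to the isolated quadratic simple pinch $q_i$ of $A_2$ alone (the node where the two lines of the degenerate conic meet), and Picard--Lefschetz applies. The group $H_1(A_2\setm A_1,\,B_{ij}\cup B_{ik})_t\cong\ZZ^2$ is generated by a path $\gamma_i$ between $A_{2t}\cap B_{ij}$ and $A_{2t}\cap B_{ik}$, and a loop $\eta=\fibSphere_{A_1}Q_1$; note $\vcyc_{\ell_\delta}=2\fibSphere_{A_2}\eta$. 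So it suffices to show $\is{\vcell_{q_i}}{\gamma_i}=\is{\vcell_{q_i}}{\eta}=0$. The second vanishes because $\eta$ is localized near $Q_1$ (or $Q_2$), which for generic $t_c\in\ell_i$ is distinct from $q_i$. The first vanishes by a hemisphere argument: the vanishing sphere of $q_i$ separates $A_{2t}\cong\Sphere^2$ into two hemispheres that degenerate over $\ell_i$ into the two irreducible components of $A_{2t_c}$; both endpoints of $\gamma_i$ limit to the component $B_i$, hence lie in the same hemisphere, so $\gamma_i$ can be homotoped off the vanishing sphere. This handles both $\Var_{\ell_i}\circ\Var_{\ell_i}=0$ and $\Var_{\ell_i}\circ\Var_{\ell_\delta}=0$ at once.
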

\begin{proof}
    The type of $P_i$ shows that the localization in $W$ factors $\Var_{\ell_i}$ through $H_2(Y\setminus A,B_i\cup B_{ij} \cup B_{ik})_t$. Since $A_2$ is simple, and $A_{2t}=A_2\cap Y_t$ does not intersect $B_i$, we know that $\Var_{\ell_i}$ must be a coboundary $\fibSphere_{A_2}$ of some class in
\begin{equation*}
    H_1(A_2\setminus A_1,B_{ij} \cup B_{ik})_t \cong \ZZ^2.
\end{equation*}
    This group is generated by a path $\gamma_i$ in $A_{2t}\setm A_1$ between the points $B_{ij}\cap A_{2t}$ and $B_{ik}\cap A_{2t}$, and a closed loop $\eta=\fibSphere_{A_1} Q_1=-\fibSphere_{A_1} Q_2$. Note that
    \begin{equation*}
        \vcyc_{\delta}=  \delta_{A_2,A_1} \partial_{A_2,A_1} \vcell_{\delta} = 2\fibSphere_{A_2} \fibSphere_{A_1} Q_1 =2\fibSphere_{A_2} \eta 
        \in H_2(Y\setminus A)_t
    \end{equation*}
    is the vanishing cycle of the quadratic pinch over $\ell_{\delta}$. 
    The image of $\Var_{\ell_\delta}$ and $\Var_{\ell_i}$ is thus contained in the span of $\vcyc_\delta/2$ and the cylinder
    \begin{equation*}
        \vcyc_i\defas \fibSphere_{A_2} \gamma_i
        \in H_2(Y\setminus A,B_{ij}\cup B_{ik} \cup B_{jk})_t.
    \end{equation*}
    Since $B_i$ is now absent, the critical point over $\ell_i$ is an isolated, quadratic simple pinch $q_i$ (e.g.\ $q_1=[0:p_2^2:-p_3^2]$). The claim, $\Var_{\ell_i}\vcyc_{\delta}=\Var_{\ell_i}\vcyc_i=0$, thus follows from the vanishing of the intersection numbers
    \begin{equation*}
        \is{\vcell_{i}}{\vcyc_{\delta}} = -2 \is{\partial_{A_2} \vcell_i}{\eta}
        \quad\text{and}\quad
        \is{\vcell_i}{\vcyc_i} = -\is{\partial_{A_2}\vcell_i}{\gamma_i}
    \end{equation*}
    with the vanishing cell $\vcell_i$ of the quadratic pinch. This cell is localized in a neighbourhood of the critical point $q_i$. The intersection with $\vcyc_{\delta}$ vanishes, because the latter can be localized near $Q_1$ or $Q_2$, and even over a critical value $t_c\in\ell_i$ (and hence, by continuity, for nearby $t\notin L$), neither point is equal to $q_i$. For example, $q_1=[0:p_2^2:-p_3^2]$ is not equal to
    \begin{equation*}
        Q_1|_{t_c} = [0:1:-1]
        \quad\text{or}\quad
        Q_2|_{t_c} = [p_2^2-p_3^2:-p_2^2:p_3^2],
    \end{equation*}
    unless $t_c$ lies in the intersection $\ell_1\cap \ell_{\delta}=\set{p_1^2=0}\cap\set{p_2^2=p_3^2}$.
\begin{figure}
    \centering
    \begin{tabular}{ccc}
    \includegraphics[height=4cm]{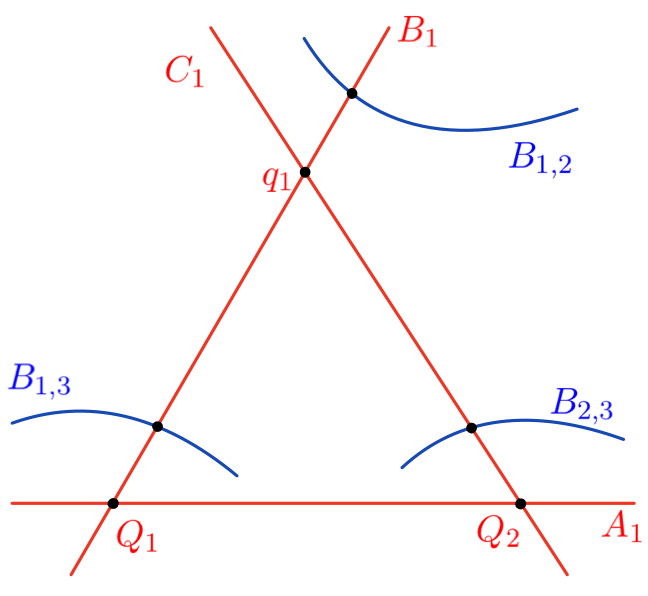} & \includegraphics[height=4cm]{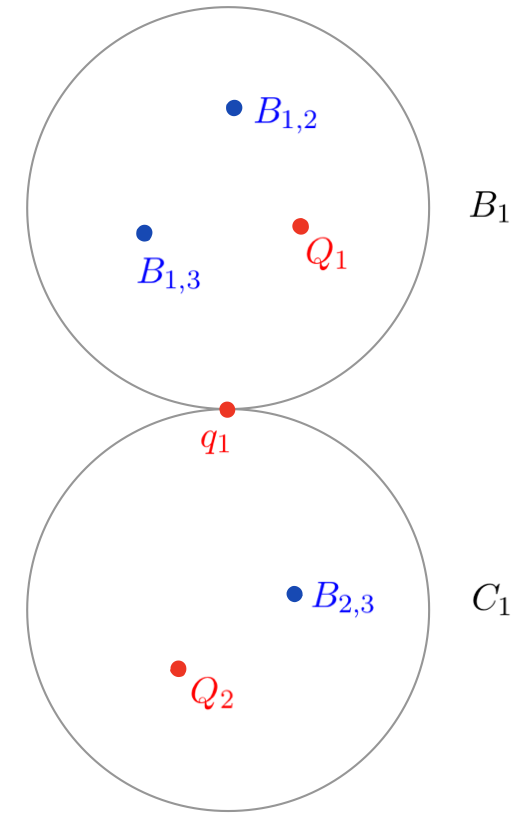} & \includegraphics[height=4cm]{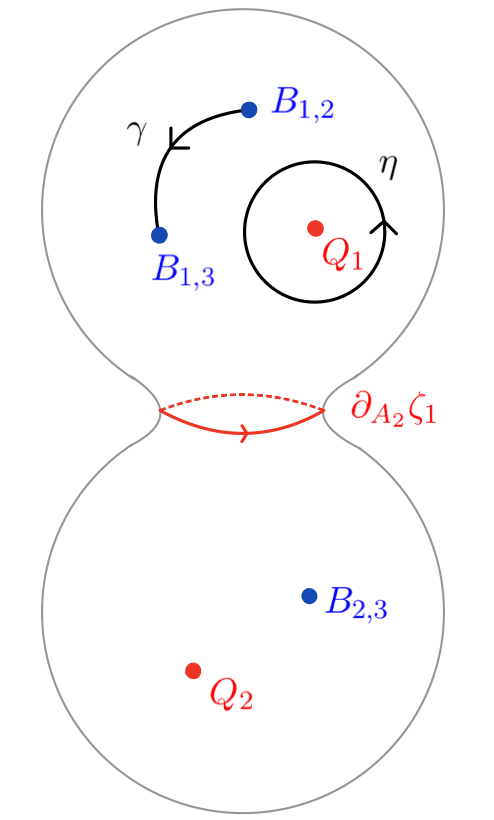} \\
    $(A\cup B_{12}\cup B_{13}\cup B_{23})_{t\in\ell_1}$  & $A_2\cap (A_1\cap \ti{B})_{t\in\ell_1}$ & $A_2\cap(A_1\cap\ti{B})_{t\notin\ell_1}$ \\
    \end{tabular}%
    \caption{The reduced divisor (without $B$) of the massless triangle in a critical fibre (left), the degenerate quadric $A_{2t}=B_1\cup C_1$ (middle), the quadric $A_{2t}$ in a nearby fibre (right).}%
    \label{fig:tri0-crit}%
\end{figure}

    The embedded vanishing sphere $\Sphere^1\subset A_{2t}$ with class $[\Sphere^1]=\partial_{A_2}\vcell_i$ cuts the quadric $A_{2t}\cong\Sphere^2$ over $t\notin L$ into two hemispheres. Over $\ell_i$, these hemispheres degenerate into the two components $B_i$ and $C_i=\set{p_k^2 x_j+p_j^2 x_k=0}$ of $A_{2t_c}$. Since both endpoints of $\gamma_i$ lie on the same component ($B_i$) in the critical fibre, they must lie on the same hemisphere for $t$ near, but not on, $\ell_i$. Hence the path $\gamma_i$ can be homotoped to lie within that hemisphere and thus does not intersect the vanishing sphere. This is illustrated in \cref{fig:tri0-crit}.
\end{proof}

In summary, for the massless triangle, the double variations can be computed using Picard-Lefschetz theory: after the first variation, no cycle has boundary in $B$, hence we can drop $B$ from the arrangement such that for the second variation, the critical points are quadratic simple pinches.

\Cref{lem:tri0-hierarchy} covers all constraints of the form $\Var_{\ell'}\circ\Var_{\ell}=0$. In particular,
\begin{equation}\label{eq:tri0-vari-varj}
    \Var_{\ell_i} \circ \Var_{\ell_j} \neq 0
\end{equation}
for distinct $i,j\in\set{1,2,3}$. Namely, $\is{\vcell_i}{\vcyc_j}=\pm 1$ and so $\Var_{\ell_i}\vcyc_j=\pm\vcyc_{\delta}/2$.\footnote{The path $\gamma_j$ must cross the vanishing sphere, because the endpoints of $\gamma_j$ lie in different hemispheres, see \cref{fig:tri0-crit} and the last part of the proof of \cref{lem:tri0-hierarchy}.}
Furthermore, although this does not follow directly from Picard-Lefschetz theory, we can choose $\gamma_i$ so that $\Var_{\ell_i} \ti{\sigma}=\vcyc_i$, and therefore
\begin{equation*}
    \Var_{\ell_i}\circ \Var_{\ell_j} \ti{\sigma} = \pm \vcyc_{\delta}/2.
\end{equation*}
\begin{rem}
If one interprets $\Var_{\ell_i}$ as only the singularity of the massless bubble quotient graph $G/i$, then \eqref{eq:tri0-vari-varj} would violate the hierarchy principle of absorption integrals \cite{Pham:DiffusionMultiple,HMSV:SeqDiscOnShell}. However, there is no contradiction here, because the latter does not apply to Feynman integrals to begin with (the integration domain $\ti{\sigma}$ has boundary), and also because the masses would be required to be non-zero. In addition, the Landau components $\ell_i$ also parametrize the leading singularity of the full triangle graph, hence $\Var_{\ell_i}$ cannot be attributed to $G/i$ alone (in our terminology, $B_i$ is not simple).
\end{rem}
\begin{rem}
The relations from \cref{lem:tri0-hierarchy} and \eqref{eq:tri0-vari-varj} agree with explicit computations of the triangle integral in terms of polylogarithms \cite[\S 6.2.2]{BHMLSV:SeqDiscMon}.
\end{rem}

\subsection{Massive sunrise} \label{ssec:massivesunrise}

The sunrise graph with unequal masses depends on four parameters $t=(p^2,m_1,m_2,m_3) \in T=\CC^4$. Its Symanzik polynomials
\begin{equation*}
    \Upol=x_2x_3+x_1x_3+x_1x_2, 
    \quad 
    \Fpol
    =(m_1^2x_1+m_2^2x_2+ m_3^2x_3)\Upol-p^2 x_1x_2x_3
\end{equation*}
define a smooth quadric and cubic, respectively, in $\PP^2$. Both of them intersect in the corners $x_i=x_j=0$ of the simplex $\set{x_1 x_2 x_3=0}$. The Feynman motive is defined by the same blowup $X'\rightarrow \PP^2$ as in \cref{ssec:masslesstriangle}. It achieves the separation of the hexagon $\ti{\sigma}$ (Feynman polytope), which has boundary in
\begin{equation*}
    B'=B'_1\cup B'_2 \cup B'_3 \cup B'_{12} \cup B'_{13} \cup B'_{23},
\end{equation*}
from the strict transforms $A_1',A_2'\subset X'\times T$ of $\set{\Upol=0}$ and $\set{\Fpol=0}$. Here we denote $B_i'$ the strict transform of $\set{x_i=0}$ and $B'_{ij}$ the exceptional divisors over the points $\set{x_i=x_j=0}\in\PP^2$.

\begin{figure}
    \centering
    $\vcenter{\hbox{\includegraphics[height=4.17cm]{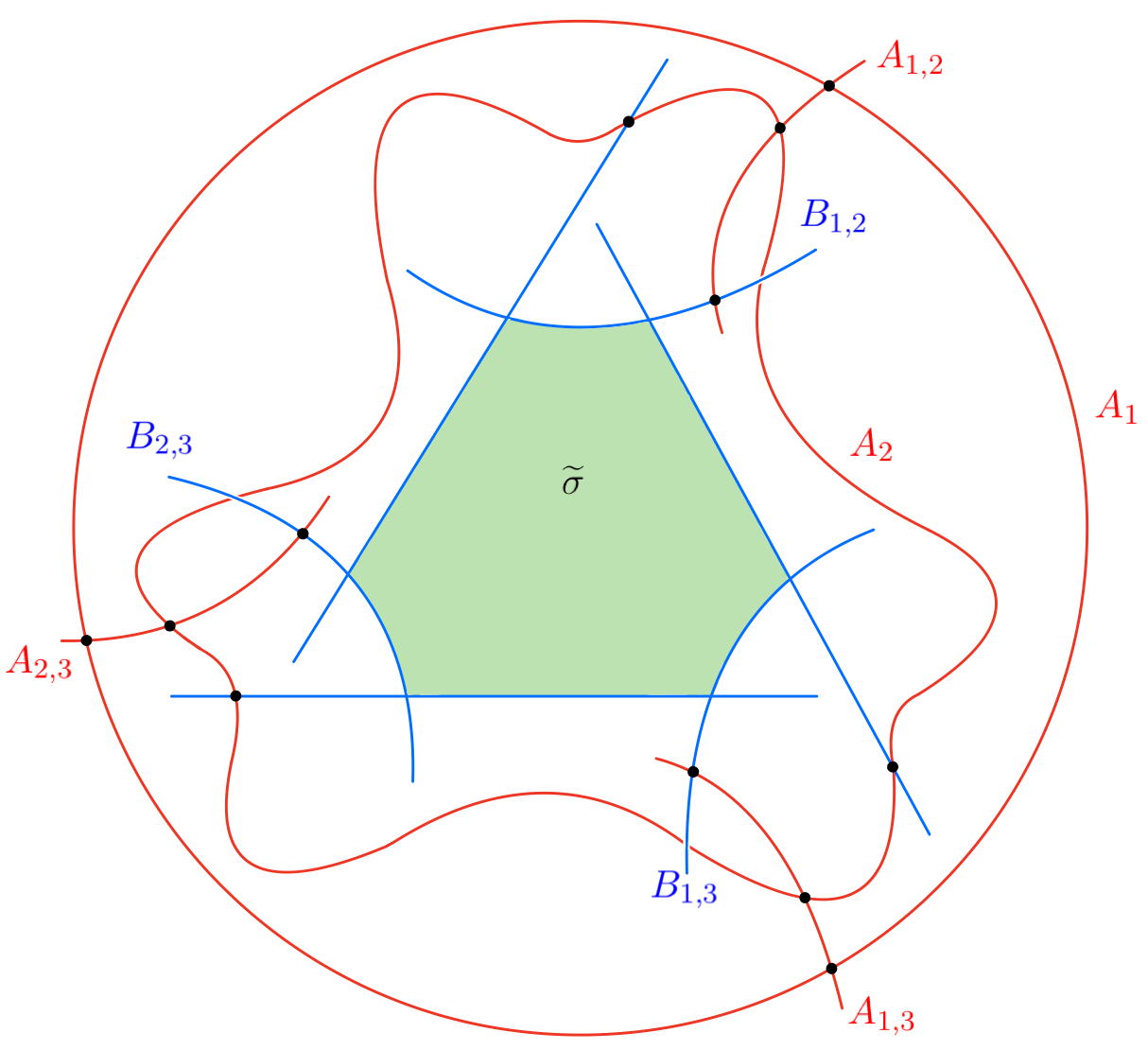}}}$
    $\subset X \quad \longrightarrow \quad X' \supset$
    $\vcenter{\hbox{\includegraphics[height=4.17cm]{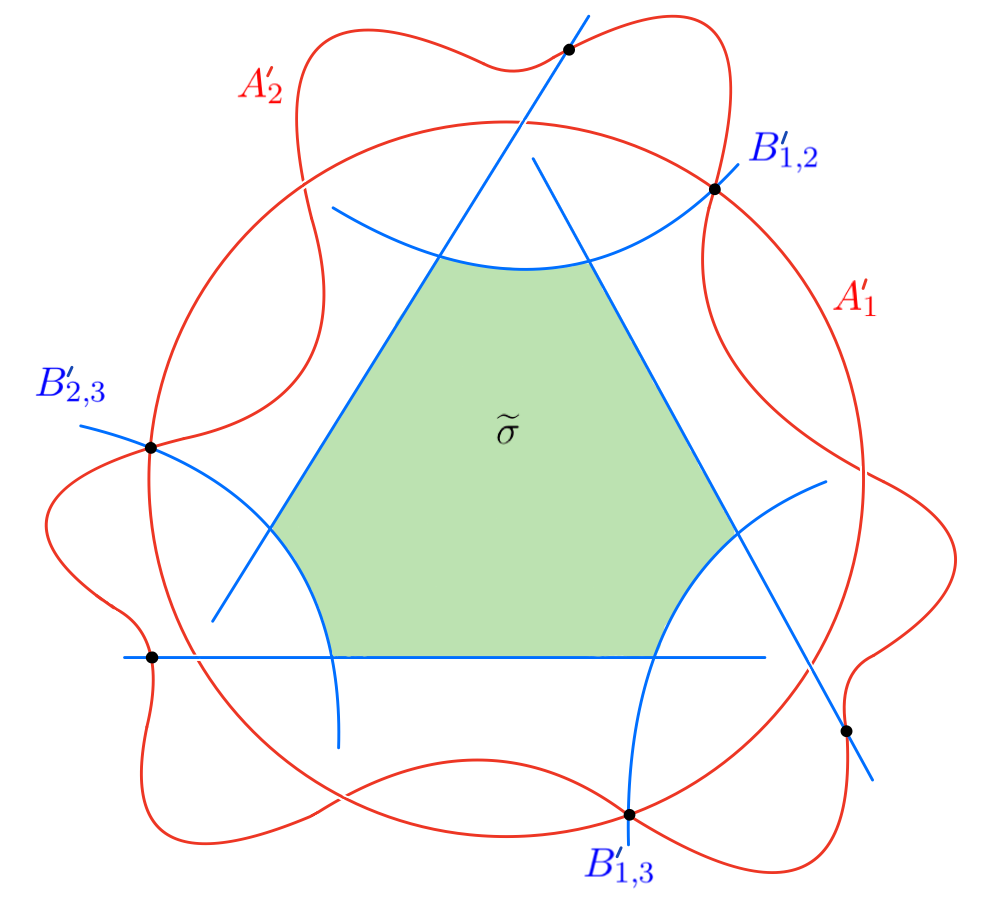}}}$
    \caption{Divisors $A\cup B\subset X$ and $A'\cup B'\subset X'$ of the sunrise graph.}%
    \label{fig:massive-sunrise-divs}%
\end{figure}

However, the divisor $A'\cup B' \subset X'\times T$ is still not normal crossing, because on each $B'_{ij}$, the intersections with $A_1'$ and $A_2'$ coincide (see \cref{fig:massive-sunrise-divs}). For example, in local coordinates $x=[u:uv:1]$ where $B'_{12}=\set{u=0}$, one finds
\begin{equation*}
    A_1' \cap B'_{12} = A_2'\cap B'_{12} = \set{v=-1} \subset B'_{12}.
\end{equation*}
This is resolved by an additional blowup $X\rightarrow X'$ of the fibre in the three points $A_1' \cap B'_{ij}$. Let $A_{ij}$ denote the corresponding exceptional divisors, and write $A_i$, $B_i$, $B_{ij}$ for the strict transforms of $A_i'$, $B_i'$, and $B_{ij}'$. We set
\begin{equation*}
    A=A_1\cup A_2\cup A_{12}\cup A_{13}\cup A_{23}
\end{equation*}
and analyse the normal crossings divisor $A\cup B \subset Y=X\times T$ with respect to the projection $\pi \colon Y \to T$. This is illustrated in \cref{fig:massive-sunrise-divs}. Note in particular that, in a generic fibre, $A_1$ and $A_2$ do not intersect.
\begin{lem}
    Let $\ell_p=\set{p^2=0}$, $\ell_i=\set{m_i^2=0}$ for $i\in\set{1,2,3}$, and for any pair of signs $\alpha,\beta\in\set{1,-1}$ set $\ell_{\alpha\beta}=\set{p^2=(m_1+\alpha m_2+\beta m_3)^2}$.\footnote{$\ell_{++}$ is the \textit{leading threshold}, whereas $\ell_{+-}$, $\ell_{-+}$, $\ell_{--}$ are called \textit{pseudo-thresholds}.}
    
    The Landau variety of the sunrise graph is
    \begin{equation*}
        L_G = \ell_p \cup \ell_1\cup \ell_2\cup \ell_3 \cup \ell_{++} \cup \ell_{+-} \cup \ell_{-+} \cup \ell_{--}.
    \end{equation*}
\end{lem}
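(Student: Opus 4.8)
The plan is to compute the Landau variety directly from the canonical stratification of the normal crossing divisor $A\cup B\subset Y$, using that $L_G=\bigcup_{S\in\mathfrak{S}}\pi(cS)$ and that the codimension one components are what we care about (\cref{rem:minimal-landau}). First I would enumerate the strata: these are connected components of the intersections $A^I\cap B^K\setm\bigcup_{i\notin I\sqcup K}D_i$ where $I$ ranges over subsets of $\set{1,2,12,13,23}$ (indexing the $A$-components $A_1,A_2,A_{12},A_{13},A_{23}$) and $K$ over subsets of $\set{1,2,3,12,13,23}$ (indexing the $B$-components). For each such intersection one checks whether $\pi$ restricted to it is a submersion at a generic point, and if not, which hypersurface in $T$ the critical values sweep out. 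The strategy is the same as in the massless triangle (\cref{ssec:masslesstriangle}): I would work fibrewise, describing $(A\cup B)_t\subset X$ for generic $t$, and then identify the degenerations that create critical strata.

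The key computations, in order: (1) \emph{Leading threshold and pseudo-thresholds.} The stratum $A_1\cap A_2$ is empty for generic $t$ (as noted in the excerpt), so it only becomes nonempty — and then automatically critical — on the locus where $A_1$ and $A_2$ acquire a common point in $X$. On $\PP^2$ this is the resultant/discriminant condition for the smooth quadric $\set{\Upol=0}$ and the cubic $\set{\Fpol=0}$ to meet; a direct elimination (or the classical parametrization of the sunrise threshold) shows this factors as $\prod_{\alpha,\beta\in\set{\pm1}}\big(p^2-(m_1+\alpha m_2+\beta m_3)^2\big)=0$, giving the four components $\ell_{\alpha\beta}$. I would verify that the corresponding critical point in the fibre is a genuine pinch (and for generic signs an isolated, indeed simple, pinch), using the local model \eqref{eq:localcoords} or \eqref{eq:general-simple-pinch}. (2) \emph{First type singularities $\ell_i=\set{m_i^2=0}$.} Setting $m_i^2=0$ makes $\Fpol|_{m_i^2=0}$ divisible by a coordinate-type factor, so the cubic $A_2$ degenerates and acquires a component along (the strict transform of) $B_i$, exactly as in the triangle case; the stratum $A_2\cap B_i$ (empty generically) becomes the critical set over $\ell_i$. (3) \emph{Second type singularity $\ell_p=\set{p^2=0}$.} This is the locus where the quadric $A_1$ and cubic $A_2$ become tangent in a way controlled by the $-p^2x_1x_2x_3$ term; equivalently it is the leading second-type singularity $\ell_G^{\Fpol\Upol}$ in the language of \cref{ssec:oneloop} adapted to two loops, and a short computation with the Cayley-type determinant (or directly: $\Fpol\equiv m^{\Transpose}\cdots$ collapses when $p^2=0$) isolates $p^2=0$. (4) \emph{Reduced singularities.} Strata of the form $A_2\cap B^K$ or $A_j\cap B^K$ with $|K|\geq 1$ correspond to contracted subgraphs $G/K$; for the sunrise these quotients are tadpoles or single edges, whose leading singularities are again among $\set{m_i^2=0}$ and contribute nothing new (and no codimension one component with $|K|\geq 2$ survives, since those intersections are empty after the blowups — this is the point of the extra blowup of $A_1'\cap B_{ij}'$). (5) \emph{No further components.} Finally I would check that the exceptional divisors $A_{ij}$ and $B_{ij}$, and all remaining strata, submerse onto $T$ generically, so they contribute only higher-codimension pieces of $L$, which we discard.

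The main obstacle is step (1): showing that the tangency/intersection locus of the quadric and the cubic in the blown-up surface is \emph{exactly} the product $\prod_{\alpha\beta}\big(p^2-(m_1+\alpha m_2+\beta m_3)^2\big)$, with no spurious extra factors and no missing ones. Done naively, the resultant of a conic and a cubic in two variables is a degree-$6$ polynomial in the remaining coordinate whose discriminant is a large expression in $(p^2,m_1^2,m_2^2,m_3^2)$; one must factor it cleanly. The efficient route is to use the known rational parametrization of the sunrise's normal threshold — the condition that three ``mass vectors'' of lengths $m_1,m_2,m_3$ close up into a triangle-like configuration with total ``energy'' $\sqrt{p^2}$ — which immediately produces the four sign combinations, and then argue by degree count that these account for all of the intersection locus. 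One must also be careful that the blowup $X'\to\PP^2$ and the further blowup $X\to X'$ do not introduce new components of $L$ from the exceptional divisors: here the transversality achieved in the excerpt (each $B_{ij}$ now meets $A_1$ and $A_2$ transversely, after separating $A_1'\cap B_{ij}'$) guarantees that $\pi$ is a submersion on every stratum involving an exceptional divisor except on the loci already found, so the enumeration in steps (2)--(5) is complete.
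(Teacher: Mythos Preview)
Your overall strategy---enumerate strata and find critical values---is sound, but step~(1) misidentifies the stratum responsible for the four threshold components $\ell_{\alpha\beta}$. These do \emph{not} arise from the stratum $A_1\cap A_2$. In fact, over a generic point of $\ell_{\alpha\beta}$ the cubic $A_{2t}$ acquires an ordinary node at $[\alpha\beta m_2 m_3:\beta m_1 m_3:\alpha m_1 m_2]$, and one checks $\Upol$ at this point equals $\alpha\beta\, m_1 m_2 m_3(m_1+\alpha m_2+\beta m_3)\neq 0$, so the node is not on $A_1$. Consequently $A_1\cap A_2$ remains empty over generic $t\in\ell_{\alpha\beta}$; the critical point lies on the open stratum of $A_2$ alone, of type $(\set{A_2},\varnothing)$. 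Your proposed ``resultant of the conic and the cubic'' would not produce $\prod_{\alpha\beta}(p^2-(m_1+\alpha m_2+\beta m_3)^2)$---indeed, on $\set{\Upol=0}$ one has $\Fpol=-p^2 x_1 x_2 x_3$, so in $\PP^2$ the two curves always meet (at the corners), and after the blowups their intersection class is zero for all $t$ with $p^2\neq 0$.

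The paper's approach is much more direct and avoids this pitfall: compute the discriminant of $\Fpol$ as a ternary cubic. This is precisely the locus where $\pi|_{A_2}$ fails to be a submersion (singular fibre of the cubic), and it factors as
\[
    p^2\, m_1^2\, m_2^2\, m_3^2 \prod_{\alpha,\beta\in\set{\pm}} \big(p^2-(m_1+\alpha m_2+\beta m_3)^2\big),
\]
producing \emph{all eight} components $\ell_p,\ell_1,\ell_2,\ell_3,\ell_{\alpha\beta}$ in one stroke. The remaining work is then only to verify that the other strata (intersections of $A_2$ with $A_1$, $A_{ij}$, $B_i$, $B_{ij}$) contribute no further codimension-one components, which is routine. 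Your steps (2)--(5) are essentially this verification, and your concern about the exceptional divisors is well-placed, but the core computation should be the discriminant of $A_2$, not an intersection locus with $A_1$.
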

\begin{proof}
    The Landau variety contains in particular the critical values of the cubic $A_2$. Those are determined by the parameters $t$ such that the fibre $A_{2t}=A_2\cap \pi^{-1}(t)$ is singular. This discriminant of $\Fpol$ is
    \begin{equation*}
        p^2 m_1^2 m_2^2 m_3^2 \prod_{\alpha,\beta=\pm} \big(p^2-(m_1 +\alpha m_2 +\beta m_3)^2 \big)
    \end{equation*}
    and produces $L_G$. It remains to verify that the intersections of $A_2$ with (one or several) of the other components $A_1,A_{ij},B_i,B_{ij}$ do not generate additional components. This is a routine calculation.
\end{proof}

In order to describe the variations, we need to determine the critical sets $P\subset\pi^{-1}(t_c)$ over a generic critical value $t_c \in L$. These are as follows:
\begin{itemize}
    \item Over $\ell_{\alpha\beta}$, $P$ is a quadratic simple pinch of type $(\set{A_2},\varnothing)$. This point is the unique node (double point) of the cubic $A_{2t}$. In coordinates:
    \begin{equation*}
        P = \set{[\alpha\beta m_2m_3:\beta m_1m_3:\alpha m_1m_2]}.
    \end{equation*}
    \item Over $\ell_i$, the cubic $A_{2t}$ acquires two nodes and splits into two $\PP^1$. For example, in the chart $x=[u:uv:1]$ with $B_{12}=\set{u=0}$,
    \begin{equation*}
        A_{2t} \cap B_{12} = \set{(1+v) m_3^2 = 0}
    \end{equation*}
    is generically a point; but for $t_c\in \ell_3$, it contains the entire hyperplane $B_{12}\subset A_{2t_c}$. The other factor of $A_{2t_c}$ (solve $\Fpol$ for $x_3$) is the image of
    \begin{equation*}
        \PP^1\ni[y_1:y_2] \mapsto \left[ y_1:y_2:
        \tfrac{ y_1 y_2(m_1^2 y_1+m_2^2 y_2)}{p^2 y_1 y_2-(y_1+y_2)(m_1^2y_1+m_2^2y_2)}
        \right].
    \end{equation*}
    The two components of $A_{2t_c}$ meet in the two roots $v=y_2/y_1$ of $p^2 v=(1+v)(m_1^2+m_2^2v)$. These double points are quadratic simple pinches of $A_2$; however, since the factor $B_{12}$ is in the arrangement, the critical set is not isolated:
    \begin{equation*}
        P=B_{jk}\cap \pi^{-1}(t_c)
        \quad\text{with type}\quad (\set{A_1,A_2,A_{jk}},\set{B_j,B_k,B_{jk}}).
    \end{equation*}
    \item Over $t_c\in\ell_p$, the cubic also degenerates into two components, $A_{2t_c}=(A_1\cup\set{x_1m_1^2+x_2m_2^2+x_3m_3^2=0})_{t_c}$. These components intersect in two quadratic simple pinches of $A_2$, and the critical set is
    \begin{equation*}
        P = A_1\cap\pi^{-1}(t_c)
        \quad\text{with type}\quad
        (\set{A_1,A_2,A_{12},A_{13},A_{23}},\varnothing).
    \end{equation*}
\end{itemize}

The simple type of every critical set is just $(\set{A_2},\varnothing)$, because the divisor $A_2\subset Y$ alone already produces the entire Landau variety. As for the massless triangle, we thus get no hierarchy constraints from \cref{def:rel_landau_components}, at least not between different variations. For the simple pinches, we do get that
\begin{equation}
    \Var_{\ell_{\alpha\beta}}\circ \Var_{\ell_{\alpha\beta}} = 0
\end{equation}
for all $\alpha,\beta\in\set{1,-1}$, from \eqref{eq:itvar_n-m_odd}.
Furthermore, with a very similar argument as in the proof of \cref{lem:tri0-hierarchy}, one can show that
\begin{equation}\label{eq:sunrise-vari-vari}
    \Var_{\ell_i}\circ\Var_{\ell_i} = 0
\end{equation}
for all $i\in\set{1,2,3}$. Note in particular that, by the localization near the critical set, the variations of any class $h\in H_2(Y\setm A,B)_t$ take values in
\begin{align*}
    \Var_{\ell_{\alpha\beta}} h, \Var_{\ell_p} h & \in \fibSphere_{A_2} H_1(A_2\setm (A_{12}\cup A_{13}\cup A_{23}),\varnothing)_t \quad\text{and} \\
    \Var_{\ell_i} h & \in \fibSphere_{A_2} H_1(A_2\setm (A_{12}\cup A_{13}\cup A_{23}),B_j\cup B_k)_t.
\end{align*}
So after any first variation, no cycle has boundary in $B_{12}\cup B_{13}\cup B_{23}$, so we can drop those from the arrangement. Thus the critical points over $\ell_i$ for the second variation are just the two nodes, hence subject to the Picard-Lefschetz theorem, and we can proceed as in \cref{lem:tri0-hierarchy} to confirm \eqref{eq:sunrise-vari-vari}.

\subsection{Ice cream cone}\label{sec:icecream}

\begin{figure}
  \begin{tikzpicture}
   \coordinate (v0) at (0,0);
   \fill[black] (v0) circle (.066cm);
   \coordinate  (v1) at (2,-1);
   \fill[black] (v1) circle (.066cm);
   \coordinate (v2) at (2,1);
   \fill[black] (v2) circle (.066cm); 
   \coordinate (p3) at (2.7,-1.35);
   \coordinate (p4) at (2.7,1.35);
   \coordinate (p1) at (-0.75,0);
   \draw[decoration={markings, mark=at position 0.625 with {\arrow{<}}},postaction={decorate}] (v1) -- (p3) node [xshift=0.3cm,yshift=-0.1cm] {$p_2$};
   \draw[decoration={markings, mark=at position 0.625 with {\arrow{<}}},postaction={decorate}] (v2) -- (p4) node [xshift=0.3cm,yshift=-0.1cm] {$p_3$};
   \draw[decoration={markings, mark=at position 0.625 with {\arrow{<}}},postaction={decorate}] (v0) -- (p1) node [xshift=-0.2cm,yshift=0.1cm] {$p_1$};
   \draw (v0) -- (v1) node [midway,below] {$4$};
   \draw (v0) -- (v2) node [midway,above] {$3$};
   \draw (v2) to[out=-135,in=135] (v1) node [left,xshift=0cm,yshift=1cm] {$1$};
   \draw (v2) to[out=-45,in=45] (v1) node [right,xshift=0.33cm,yshift=1cm] {$2$};  
  \end{tikzpicture}%
  \qquad 
   \raisebox{0.6cm}{\begin{tikzpicture}
   \coordinate (v1) at (0,0);
   \fill[black] (v1) circle (.066cm);
   \coordinate  (v2) at (0,2);
   \fill[black] (v2) circle (.066cm);
   \draw (v2) to[out=-135,in=135] (v1);
   \draw (v2) to[out=-45,in=45] (v1);  
  \end{tikzpicture}}%
  \qquad
   \raisebox{.6cm}{\begin{tikzpicture}
   \coordinate (v0) at (0,0);
   \fill[black] (v0) circle (.066cm);
   \coordinate  (v1) at (2,0);
   \fill[black] (v1) circle (.066cm);
   \coordinate (p3) at (2.5,-.6);
   \coordinate (p4) at (2.5,.6);
   \coordinate (p1) at (-0.75,0);
   \draw[decoration={markings, mark=at position 0.625 with {\arrow{<}}},postaction={decorate}] (v1) -- (p3) node [xshift=0.3cm,yshift=-0.1cm] {$p_2$};
   \draw[decoration={markings, mark=at position 0.625 with {\arrow{<}}},postaction={decorate}] (v1) -- (p4) node [xshift=0.3cm,yshift=-0.1cm] {$p_3$};
   \draw[decoration={markings, mark=at position 0.625 with {\arrow{<}}},postaction={decorate}] (v0) -- (p1) node [xshift=-0.2cm,yshift=0.1cm] {$p_1$};
   \draw (v0) to[out=70,in=110] (v1);
   \draw (v0) to[out=-70,in=-110] (v1);
  \end{tikzpicture}}%
\caption{The ice cream cone $G$, the subgraph $\gamma=\set{1,2}$, and the quotient graph $G/\gamma$.}\label{fig:icecream}%
 \end{figure}
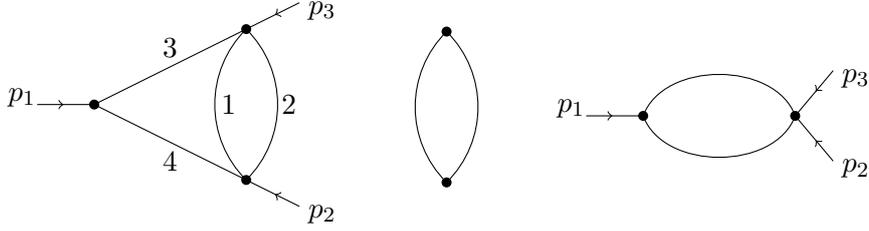

The graph $G$ with four edges in \cref{fig:icecream} depends on seven parameters $t=(p_1^2,p_2^2,p_3^2,m_1,m_2,m_3,m_4)\in T=\CC^7$. The Symanzik polynomials are
\begin{align*} 
  \Upol &= x_1x_2 + (x_1+x_2)(x_3+x_4) \quad\text{and} \\
  \Fpol &= x_1x_2(-p_2^2 x_4 - p_3^2 x_3) - p_1^2 (x_1 + x_2)x_3 x_4 + \Upol \sum_{1\leq i\leq 4} m_i^2x_i.
\end{align*}

The Feynman motive is obtained from the blowup $X'\rightarrow \PP^3$ in the two points $\set{[0:1:0:0],[1:0:0:0]}$ and the line $\set{x_1=x_2=0}$.\footnote{The 3 motic (bridgeless) subgraphs of $G$ have the edge sets $\set{1,2}$, $\set{1,3,4}$, $\set{2,3,4}$.} Its boundary divisor $B'$ has 7 components: exceptional divisors $B'_{134}$, $B'_{234}$, $B'_{12}$ and strict transforms $B'_i$ of $\set{x_i=0}$.
The strict transforms $A_1'$ and $A_2'$ of $\set{\Upol=0}$ and $\set{\Fpol=0}$ are smooth surfaces in $Y'=X'\times T$.
The Feynman polytope (\cref{fig:icecream-polytope}) has facets in all 7 components of $B'$ and defines a class
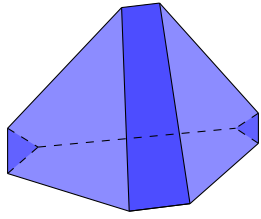
\begin{figure}
  \begin{tikzpicture}%
   \coordinate (v0) at (0,0);
   \coordinate (v01) at (0,-0.6);
   \coordinate (v02) at (0.4,-0.25);
   \coordinate  (v1) at (1.5,1.6); 
    \coordinate (v11) at (2,1.66);
   \coordinate (v2) at (1.6,-1.1);
   \coordinate (v21) at (2.4,-1);
   \coordinate (v3) at (3,0);
   \coordinate (v31) at (3.3,0.2);
   \coordinate (v32) at (3.3,-0.2);
    \fill[fill=blue!45] (v0) -- (v01) -- (v2) -- (v21) -- (v11) -- (v1) -- (v0);
   \fill[fill=blue!45] (v2) -- (v21) -- (v32) -- (v31) -- (v11);
   \fill[fill=blue!70,opacity=0.66] (v0) -- (v01) -- (v02) -- (v0);
   \fill[fill=blue!70,opacity=0.66] (v3) -- (v31) -- (v32) -- (v3);
   \draw[dashed] (v0) -- (v02);
   \draw[dashed] (v01) -- (v02);
   \draw[dashed] (v32) -- (v3);
   \draw[dashed] (v31) -- (v3);
   \fill[fill=blue!70] (v1) -- (v11) -- (v21) -- (v2) -- (v1);
   \draw (v0) -- (v01);
   \draw (v01) -- (v2);
   \draw (v2) -- (v21);
   \draw (v21) -- (v11);
   \draw (v0) -- (v1);
   \draw (v1) -- (v11);
   \draw (v11) -- (v31);
   \draw (v31) -- (v32);
   \draw (v32) -- (v21);
   \draw (v21) -- (v2); 
   \draw (v2)--(v1);
   \draw[dashed] (v02) -- (v3);
  \end{tikzpicture}%
\caption{The Feynman polytope $\ti \sigma$ of the ice cream cone.}\label{fig:icecream-polytope}%
\end{figure}
\begin{equation*}
    \ti{\sigma}\in H_3(Y'\setminus A',B')_t.
\end{equation*}

Although $B'$ is transverse and all components of $D'=A'\cup B'$ are smooth, the arrangement $D'$ is not transverse. In particular, $A_1'$ and $A_2'$ are not transverse to $B_{12}'$, because the intersections $A_1'\cap B_{12}'$ and $A_2'\cap B_{12}'$ are singular: Note that in coordinates $x=[u:uv:x_3:1]$, the divisor
\begin{equation}\label{eq:ice-B12p}
    B_{12}'=\set{u=0}=\set{([1:v],[x_3:1]}=\PP^1\times\PP^1
\end{equation}
is canonically identified with the product of the coordinates $[1:v]=[x_1:x_2]$ of the subgraph $\gamma=\set{1,2}$ and $[x_3:x_4]$ of the quotient graph $G/\gamma$.
The Symanzik polynomials vanish to first order on $B_{12}'$, with
\begin{equation}\label{eq:UF-factors}\begin{aligned}
    \Upol &\equiv \Upol_{\gamma}\Upol_{G/\gamma}\equiv (x_1+x_2)(x_3+x_4) \quad\text{and}\\
    \Fpol &\equiv \Upol_{\gamma}\Fpol_{G/\gamma}\equiv (x_1+x_2)( -p_1^2 x_3 x_4+(x_3+x_4)(m_3^2 x_3+m_4^2 x_4) )
\end{aligned}\end{equation}
modulo terms of order $u^2$ or higher. The factorization of the leading order in $u$ into Symanzik polynomials of $\gamma$ and $G/\gamma$ is well-known \cite{Brown:FeynmanAmplitudesGalois}. Therefore,
\begin{equation*}\begin{aligned}
    A_1' \cap B_{12}' &= \set{[1:-1]}\times\PP^1 \cup \PP^1\times \set{[1:-1]} \quad\text{and}\\
    A_2' \cap B_{12}' &= \set{[1:-1]}\times\PP^1 \cup \PP^1\times \set{\Fpol_{G/\gamma}=0}
\end{aligned}\end{equation*}
 are unions of lines (\cref{fig:icecream-B12}). In a generic fibre, they have three singularities, the intersection points $\set{[1:-1]}\times\set{[1:-1],Q_1,Q_2}$ where $Q_{1,2}\in \PP^1$ denote the roots of $\Fpol_{G/\gamma}=0$. Note also that $A_1'\cap A_2'\cap B_{12}'=\set{[1:-1]}\times\PP^1$ intersect not in a point but in a line---another witness of non-transversality.
\begin{figure}
    \centering
    \includegraphics[width=\textwidth]{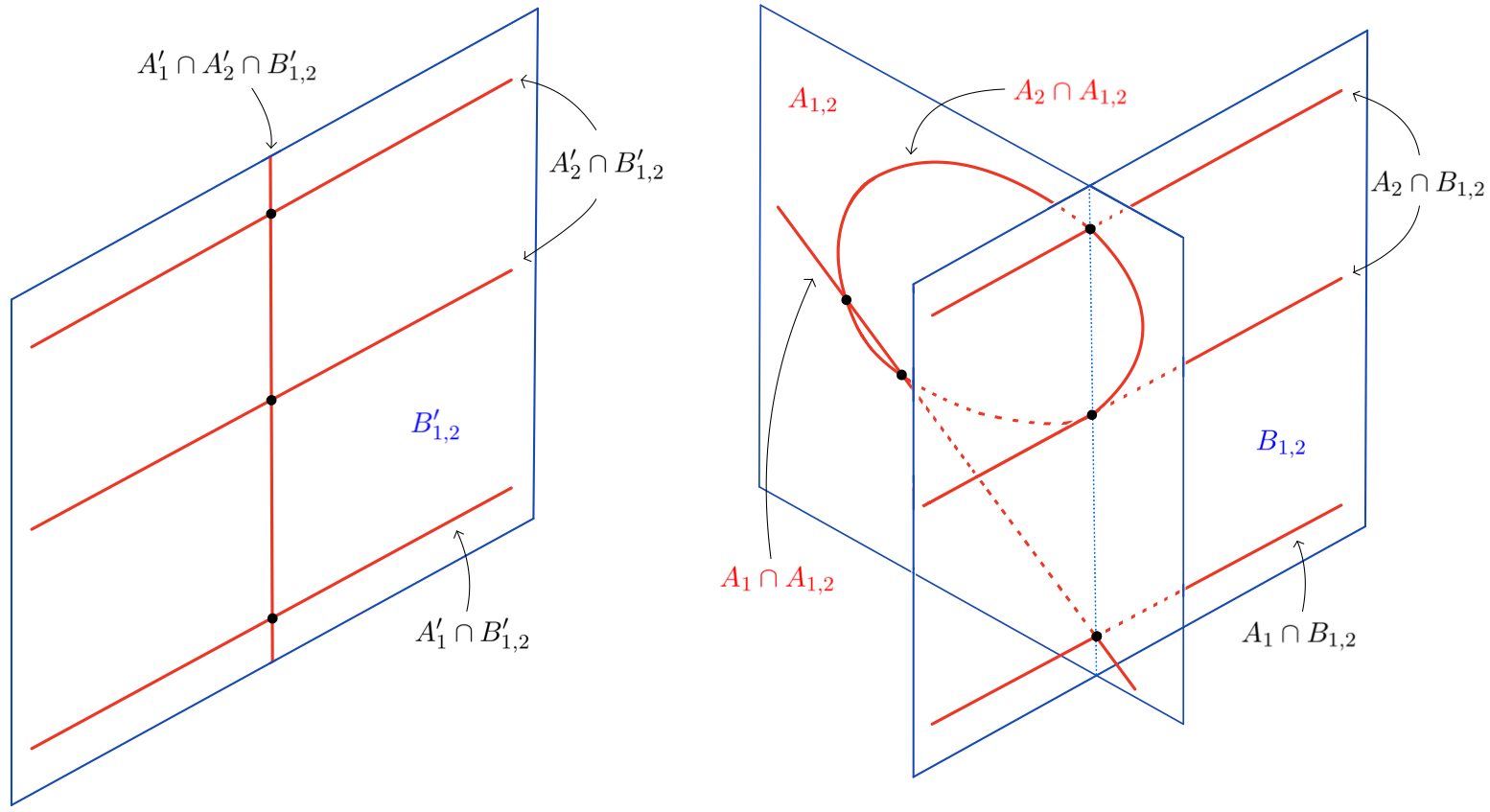}
    \caption{Singular intersection $A'\cap B_{12}'$ (left) and transverse arrangement $A\cup B_{12}$ (right) after blowup.}%
    \label{fig:icecream-B12}%
\end{figure}

Similarly, we identify two further non-transverse intersections. Instead of a point, $A_1'\cap A_2' \cap B_{134}'\cong \PP^1$ intersect in the line
\begin{equation*}
    \set{\Upol_{134}=0}=\set{x_1+x_3+x_4=0} \subset B_{134}'\cong\PP^2,
\end{equation*}
and analogously for $B_{234}'$. To rectify all these problems, we make
\begin{defn}
    We define the fibre $X$ of the ice cream cone as the blowup $X\rightarrow X'$ of the Feynman motive $X'$ in the three lines (each is $\cong\PP^1$)
    \begin{equation}\label{eq:U-lines}
        \set{\Upol_{12}=0}\subset B_{12}',\ \set{\Upol_{134}=0} \subset B_{134}', \ \set{\Upol_{234}=0} \subset B_{234}'.
    \end{equation}
    We denote the corresponding exceptional divisors as $A_{12}$, $A_{134}$, $A_{234}$ and otherwise write $A_{\bullet}$, $B_{\bullet}$ for the strict transforms of $A_{\bullet}'$ and $B_{\bullet}'$. We set $Y=X\times T$ and obtain a divisor $A\cup B\subset Y$.
\end{defn}
\begin{rem}
    The component $B_3'$ is the blowup of $\PP^2\cong \set{x_3=0}\subset \PP^3$ in the 3 corners of $\set{x_1x_2x_4=0}$. This is precisely the fibre denoted $X'$ in \cref{ssec:massivesunrise}, corresponding to the quotient sunrise graph $G/3$.
    Each of the 3 lines \eqref{eq:U-lines} intersects $B_3'$ in a point, so $B_3$ is the blowup of $B_3'$ in 3 further points, that is, $B_3$ is isomorphic to the fibre denoted $X$ in \cref{ssec:massivesunrise}. This means that the operadic structure of Feynman motives $X'$ from \cite{Brown:FeynmanAmplitudesGalois} extends to the spaces $X\rightarrow X'$ with our additional blowups.
\end{rem}
\begin{lem}\label{lem:ice-snc}
    The divisor $D=A\cup B\subset Y$ has simple normal crossings.
\end{lem}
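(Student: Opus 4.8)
The statement is local and analytic: it suffices to produce, at every $p\in Y$, a holomorphic chart in which each component of $D$ through $p$ is a distinct coordinate hyperplane. I would organise the verification around the two-step blowup $\PP^3\leftarrow X'\leftarrow X$ and exploit that, among the components of $D$, only $A_2=\{\Fpol=0\}$ depends on $t$, whereas $A_1$, the exceptional divisors $A_{12},A_{134},A_{234}$ and every component of $B$ are of the form $Z\times T$ with $Z\subset X$.

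First, on the open locus $U=X\setminus(A_{12}\cup A_{134}\cup A_{234})$ the second blowup $X\to X'$ is an isomorphism onto the complement of the three lines \eqref{eq:U-lines}, and there $D$ is identified with $D'=A'\cup B'$. The Feynman-motive construction of \cite{Brown:FeynmanAmplitudesGalois} already makes $\ti B'$ a simple normal crossings divisor, and $A_1'$, $A_2'$ are smooth; the only failures of transversality of $A_1'\cup A_2'\cup B'$ are the ones identified in the discussion preceding \cref{lem:ice-snc}, namely the lines \eqref{eq:U-lines} where a triple intersection $A_1'\cap A_2'\cap B_\gamma'$ (or a nodal $A_i'\cap B_\gamma'$) sits in excessive codimension. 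Ruling out any further failure is a finite, direct check on the affine charts of $X'$, which I would carry out explicitly; granting it, $D$ is simple normal crossings on $U$.

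It remains to analyse a neighbourhood of each exceptional divisor $A_\gamma$, $\gamma\in\{12,134,234\}$. Here I would use the charts around \eqref{eq:ice-B12p}: near the blown-up line $\{\Upol_\gamma=0\}\subset B_\gamma'$ choose coordinates $(u,y)$ on $X'$ with $B_\gamma'=\{u=0\}$, so that by the factorisation \eqref{eq:UF-factors} one has $\Upol\equiv\Upol_\gamma\Upol_{G/\gamma}$ and $\Fpol\equiv\Upol_\gamma\Fpol_{G/\gamma}$ modulo $u^2$, and $\{\Upol_\gamma=0\}$ is cut out in $B_\gamma'$ by one further coordinate, say $\ell=0$. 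Blowing up the ideal $(u,\ell)$ and passing to each of its two charts, I would compute the strict transforms: $A_\gamma$ is the new exceptional divisor, and the local leading equations of $A_1$, $A_2$, $B_\gamma$ and the relevant $B_i$ become (units times) pairwise distinct coordinate functions --- this is precisely the effect of separating the two factors $\Upol_\gamma$ and $\Upol_{G/\gamma}$ (resp.\ $\Fpol_{G/\gamma}$). A short enumeration of the pairs, triples and quadruples of these local components then gives transversality, and in particular $A_1\cap A_2\cap B_\gamma=\varnothing$ in a neighbourhood of $A_\gamma$ for generic $t$, since where the surviving factor $\Upol_{G/\gamma}$ vanishes, $\Fpol_{G/\gamma}$ does not. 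For $\gamma\in\{134,234\}$ this computation is structurally the same as the additional blowup performed for the massive sunrise in \cref{ssec:massivesunrise} --- indeed $B_3$ is isomorphic to the fibre $X$ of that section --- so one may simply invoke it.

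Finally, the $t$-dependence enters only through $A_2$. I would note that $\td_{(x,t)}\Fpol$ is nowhere zero on $Y$: the masses enter $\Fpol$ through $\Upol\sum_i m_i^2x_i$, giving $\partial\Fpol/\partial m_i^2=\Upol x_i$, so $\td_t\Fpol\neq 0$ off $A_1$, while on $A_1$ the linear dependence on the $p_j^2$ does the job (after passing to the strict-transform equations near the exceptional divisors, where the pulled-back $\Fpol$ has its overall $u$-factor removed). Hence $A_2$ is a smooth hypersurface of $Y$, and any conormal dependence between $A_2$ and a horizontal $Z\times T$ would force $\td_t\Fpol=0$; so all the intersections checked fibrewise above are in fact transverse at every point of $Y$, including over the Landau variety where the fibres of $A_2$ degenerate. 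The main obstacle is the bookkeeping of the third step --- verifying transversality of all the local components after each of the three blowups --- while the remainder is either a citation or an elementary computation.
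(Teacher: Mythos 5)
Your approach is essentially the same as the paper's: reduce to a finite local check, work in explicit charts near each exceptional divisor $A_\gamma$ after the second blowup $X\to X'$, and use the factorization \eqref{eq:UF-factors} to identify the local leading terms of $\Upol,\Fpol$. The paper carries out the chart computation concretely for $\gamma=\{1,2\}$ (the coordinates $x=[\lambda w:\lambda w(\lambda-1):x_3:1]$ with $B_{12}=\{w=0\}$, $A_{12}=\{\lambda=0\}$, and explicit equations for the strict transforms of $A_1$ and $A_2$), and leaves the cases near $A_{134},B_{134},A_{234},B_{234}$ as ``similar''; your outline matches this in structure.

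Where you genuinely add something is the last paragraph: the observation that $A_2$ is the only $t$-dependent component, so a nonzero $t$-part of $\td\Fpol$ forces linear independence of the conormal of $A_2$ from any collection of conormals of the horizontal divisors $Z\times T$. Since the horizontal part of $D$ is verified to be SNC on its own, this promotes fibrewise generic transversality to ambient SNC \emph{everywhere}, including over the Landau variety where the fibre degenerates. The paper's explicit computations implicitly deliver this (the equations written down are equations on $Y$, not on a fixed fibre), but never isolates the mechanism; your version makes the logic cleaner and more robust, and would in principle spare some bookkeeping.

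Two cautions though. First, ``$\td_{(x,t)}\Fpol$ is nowhere zero on $Y$'' is literally false on $\PP^3\times T$: on the line $x_1=x_2=0$ every $t$-derivative $\Upol x_i$, $(x_1+x_2)x_3x_4$, $x_1x_2x_k$ vanishes identically. You note this has to be understood for the strict-transform equation after the blowups, but that is exactly the place where the content of the lemma lies, and you leave it as ``a short enumeration''. The paper does write out this enumeration (for $\gamma=\{1,2\}$); your proposal is a roadmap that still needs it filled in. Second, the remark about $B_3$ being the sunrise fibre $X$ concerns the facet $B_3$, not the exceptional divisors $A_{134},A_{234}$; citing \cref{ssec:massivesunrise} for the charts near $A_{134},A_{234}$ is not a direct substitution and needs its own computation (or a more careful argument that the two situations are locally isomorphic). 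Neither point is a conceptual error, but both mean the proposal is incomplete where the paper is explicit.
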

\begin{proof}
    In coordinates $x=[u:uv:x_3:1]$ where $B_{12}'=\set{u=0}$, we are blowing up $\set{u=0}\cap\set{v=-1}$. Hence in the chart $x=[\lambda w:\lambda w(\lambda-1):x_3:1]$, we see $B_{12}=\set{w=0}$ and $A_{12}=\set{\lambda=0}$. The strict transform 
    \begin{equation*}
        A_1 = \set{1+x_3-w+w\lambda=0}
    \end{equation*}
    intersects $B_{12}$ in the horizontal line $\set{x_3=-1}$, whereas the previously vertical line has moved onto $A_{12}\cap A_1 = \set{w=1+x_3}$. In the same coordinates, the strict transform $A_2$ of $A_2'$ reads
    \begin{equation*}
        0=-p_1^2 x_3+(1+x_3-w)(m_3^2 x_3+m_4^2)+w(p_2^2+x_3 p_3^2) + w\lambda\cdot(\cdots).
    \end{equation*}
    We see that $A_2\cap B_{12}$ is smooth, namely the union of two non-intersecting lines (parametrized by $\lambda$) with $w=0$ and $x_3$ a root of $\Fpol_{G/\gamma}$. The other intersection is also smooth, namely the quadric\footnote{We note that the coefficient of $w$ is equal to $\Fpol_{34} = x_3(m_3^2-p_3^2)+x_4(m_4^2-p_2^2)$, the second Symanzik polynomial of the subgraph $\set{3,4}$.}
    \begin{equation}\label{eq:ice-A2-A12}
        A_{2t}\cap A_{12} = \set{\lambda=0} \cap \set{w(m_3^2 x_3+m_4^2-p_2^2-p_3^2 x_3)=\Fpol_{G/\gamma}}\cong\PP^1.
    \end{equation}
    This geometry is illustrated in \cref{fig:icecream-B12}. Similarly, one can verify the transversality of $D$ near $A_{134}$, $B_{134}$, $A_{234}$, and $B_{234}$.
\end{proof}
Since we blow up subsets of $A_2'$, these blowups do not change the complements $Y\setm A\cong Y'\setm A'$ and thus we have a canonical identification
\begin{equation*}
    H_3(Y\setm A,B)_t = H_3(Y'\setm A',B')_t.
\end{equation*}
Because $A\cup B$ has normal crossings, we can now apply our results and analyse the Landau variety and variation constraints using the canonical stratification of $A\cup B$.
Instead of a complete analysis, we discuss two particularly interesting cases, stemming from only 3 components of $D$:
\begin{equation*}
    A_2\cup A_{12} \cup B_{12}.
\end{equation*}

In physics language, the critical values of $A_2\cup A_{12}$ would be called \emph{leading singularities} (absence of boundary $B$) of \emph{first type} (absence of $A_1$). They constitute singularities of integrals $\int_{\sigma} \omega_t$ where $\sigma\in H_3(Y\setminus (A_{12}\cup A_2),\varnothing)_t$ has no boundary, and the integrand $\omega$ has only $\Fpol$ in the denominator (no $\Upol$). Note that $\Fpol$ vanishes on $A_{12}$, hence we cannot discard $A_{12}$ (the differential form $\omega$ can have poles on $A_{12}$).\footnote{To get \emph{all} leading singularities of first type, we have to consider $A_{12}\cup A_{134} \cup A_{234} \cup A_2$. However, we are only interested in the singularities near $\set{x_1=x_2=0}$.} Critical values of $B_{12}\cap A_2$ might be called \emph{reduced} singularities, associated with the quotient graph $G/\gamma$. Instead, we will use the terminology of type and simple type (\cref{sec:hierarchy-general}).

In the coordinates $(w,x_3)$ on $A_{12}$ from the proof of \cref{lem:ice-snc}, the singular points of the quadric \eqref{eq:ice-A2-A12} constitute the critical set
\begin{equation}\label{eq:c(A2+A12)}
    c(A_2\cap A_{12}) = \Big\{w=\frac{\partial_{x_3} \Fpol_{G/\gamma}}{m_3^2-p_3^2}\Big\}
    \cap \Big\{x_3=-\frac{m_4^2-p_2^2}{m_3^2-p_3^2}\Big\}
    \cap \Big\{\Fpol_{G/\gamma}=0\Big\}.
\end{equation}
This set has a non-empty fibre only when $x_3=-(m_4^2-p_2^2)/(m_3^2-p_3^2)$ is a root of $\Fpol_{G/\gamma}$ from \eqref{eq:UF-factors}. The corresponding discriminant defines a component $\ell_{A_{12}}\subset L_G$ of the Landau variety, concretely
\begin{equation}\label{eq:ice-L(A12)}
    \ell_{A_{12}} = \big\{p_1^2(m_3^2-p_3^2)(m_4^2-p_2^2)+(m_3^2p_2^2-m_4^2p_3^2)(m_3^2-m_4^2+p_2^2-p_3^2)=0\big\}.
\end{equation}
Over a generic point $t$ of $\ell_{A_{12}}$, the critical point \eqref{eq:c(A2+A12)} is not a critical point for $A_2$, that is, $A_{2t}$ remains smooth over $\ell_{A_{12}}$. This means that the type and simple type of $\ell_{A_{12}}$ are the same: $(\set{A_{12},A_2},\varnothing)$. More precisely, \eqref{eq:c(A2+A12)} is a quadratic simple pinch, where $A_2\cap A_{12}$ degenerates into two lines,
\begin{equation}\label{eq:ice-A212-sing-fibre}
    A_{2t}\cap A_{12} = \Big\{x_3=-\frac{m_4^2-p_2^2}{m_3^2-p_3^2}\Big\}
    \cup \Big\{w= \frac{m_3^2 x_3}{m_3^2-p_3^2}+\frac{m_4^2}{m_4^2-p_2^2}\Big\}
    .
\end{equation}

Now consider the submanifold $A_2\cap B_{12} = \PP^1\times \set{\Fpol_{G/\gamma}=0}$. In a generic fibre, it consists of two lines (coordinate $\lambda=v+1$), which degenerate into a single line when the two roots of $\Fpol_{G/\gamma}$ coincide. Hence, the corresponding contribution $\ell_{B_{12}}\subset L_G$ to the Landau variety of the ice cream cone is
\begin{equation}\label{eq:ice-bubble-sing}
    \ell_{B_{12}} = \ell^{\Fpol}_{G/\gamma} = \set{p_1^2=(m_3\pm m_4)^2}
\end{equation}
in the notation from \cref{ssec:oneloop}, the leading singularity of the bubble graph $G/\gamma$ (\cref{eg:bubble}). The critical set over $t\in\ell_{B_{12}}$ is not isolated:
\begin{equation*}
    c(A_{2} \cap B_{12}) \cap \pi^{-1}(t) = \PP^1\times\big\{[x_3:x_4]=[m_4:\pm m_3]\big\}.
\end{equation*}
This line intersects also $B_1=\set{\lambda=\infty}$, $B_2=\set{\lambda=1}$ and $A_{12}=\set{\lambda=0}$. The component $B_{12}$ is \emph{not} simple over $t\in\ell_{B_{12}}$, because two points\footnote{$P_{1,2}$ are the roots of $v^2 m_2^2+v(m_1^2+m_2^2+m_3m_4)+m_1^2=v(m_3 p_2^2\pm m_4 p_3^2)/(m_3\pm m_4)$}
\begin{equation}\label{eq:icecream-c(A2)}
    c(A_2) \cap \pi^{-1}(t)=\set{P_1,P_2} \times \set{[m_4:\pm m_3]} \subset c(A_2\cap B_{12})
\end{equation}
on this line are critical points of $A_2$ alone.
This fact, that $B_{12}$ is not simple for the singularity $\ell_{B_{12}}$, was noted already in \cite[\S 4]{Boyling:HomParaFeyn} and \cite[\S 2]{LandshoffOlivePolkinghorne:Hierarchical}.
\begin{cor}
    The bubble singularity \eqref{eq:ice-bubble-sing} of the ice cream cone graph has type $(\set{A_2,A_{12}},\set{B_1,B_2,B_{12}})$ and simple type $(\set{A_2},\varnothing)$. The type and simple type of the singularity \eqref{eq:ice-L(A12)} is $(\set{A_2,A_{12}},\varnothing)$.
\end{cor}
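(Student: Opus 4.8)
The plan is to prove the two assertions — the one about $\ell_{A_{12}}$ of \eqref{eq:ice-L(A12)} and the one about the bubble singularity $\ell_{B_{12}}$ of \eqref{eq:ice-bubble-sing} — by, in each case, first writing down the critical set in a generic fibre, then reading off the type by testing which of the twelve components of $D=A_1\cup A_2\cup A_{12}\cup A_{134}\cup A_{234}\cup B_1\cup B_2\cup B_3\cup B_4\cup B_{12}\cup B_{134}\cup B_{234}$ it meets, and finally deciding the simple type component by component via \cref{def:simple-type}. All of this is carried out in the explicit affine charts of the two blowups $X\to X'\to\PP^3$ constructed above.

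For $\ell_{A_{12}}$ this is short. We have already seen that over a generic point of $\ell_{A_{12}}$ the critical set \eqref{eq:c(A2+A12)} is a single quadratic simple pinch, at which $A_2\cap A_{12}$ degenerates into the two lines \eqref{eq:ice-A212-sing-fibre}. By \cref{rem:simple-pinch-everything-simple} the entire type of a simple pinch coincides with its simple type, so it remains only to identify the type. By construction the pinch lies on $A_2$ and on $A_{12}$; working in the chart $x=[\lambda w:\lambda w(\lambda-1):x_3:1]$ from the proof of \cref{lem:ice-snc}, where $A_{12}=\set{\lambda=0}$, $B_{12}=\set{w=0}$ and $A_1=\set{1+x_3-w+w\lambda=0}$, one checks that at the pinch $w=\partial_{x_3}\Fpol_{G/\gamma}/(m_3^2-p_3^2)\neq 0$, that $1+x_3-w\neq 0$, and that $x_3=-(m_4^2-p_2^2)/(m_3^2-p_3^2)$ is finite and nonzero for generic $t\in\ell_{A_{12}}$; hence the pinch avoids $B_{12}$, $A_1$, $B_3$ and $B_4$, and it also avoids $A_{134},A_{234},B_{134},B_{234}$ since $A_{12}$ lies in (the total transform of) $B_{12}'$, which is disjoint from the exceptional loci over the other blowup centres. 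This gives type $(\set{A_2,A_{12}},\varnothing)$, equal to the simple type.

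For the bubble singularity, recall that over a generic $t$ the critical set is the line $P=\PP^1\times\set{[x_3:x_4]=[m_4:\pm m_3]}\subset B_{12}\cong\PP^1\times\PP^1$, whose first factor carries the coordinate $[x_1:x_2]$ of the subgraph $\gamma=\set{1,2}$, and which contains the two points $\set{P_1,P_2}\times\set{[m_4:\pm m_3]}=c(A_2)\cap\pi^{-1}(t)$ of \eqref{eq:icecream-c(A2)} at which $A_2$ is already critical on its own. Since $P\subseteq c(A_2\cap B_{12})\subseteq A_2$ and $P\subseteq B_{12}$, both $A_2$ and $B_{12}$ are in the type. Testing the other components: the points $[x_1:x_2]=[1:-1],[0:1],[1:0]$ of $P$ lie respectively on $A_{12}$ (the exceptional divisor of the blowup of $\set{\Upol_{12}=0}=\set{x_1+x_2=0}$ from \eqref{eq:U-lines}), on $B_1=\set{x_1=0}$, and on $B_2=\set{x_2=0}$; $P$ misses $A_1\cap B_{12}=\set{x_3+x_4=0}$ (using \eqref{eq:UF-factors}) and $B_3,B_4$ because $[m_4:\pm m_3]$ is generic; and $P\subseteq B_{12}$ is disjoint from $A_{134},A_{234},B_{134},B_{234}$, whose centres lie off the line $\set{x_1=x_2=0}$. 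Hence the type of $P$ is $(\set{A_2,A_{12}},\set{B_1,B_2,B_{12}})$.

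Finally, to compute the simple type of $\ell_{B_{12}}$ we apply \cref{def:simple-type}: for each $D_i\in\set{A_{12},B_1,B_2,B_{12}}$ the reduced arrangement $\bigcup_{k\neq i}D_k$ still contains either the stratum $A_2\cap B_{12}$ — whose critical set over $t\in\ell_{B_{12}}$ is all of $P$ — when $i\neq B_{12}$, or the stratum $A_2$ — whose critical set $c(A_2)\cap\pi^{-1}(t)$ meets $P$ in $P_1,P_2$ — when $i=B_{12}$; so none of these four is simple. Conversely $A_2$ is simple: in $\bigcup_{k\neq A_2}D_k$ the only strata meeting $P$ are $B_{12}$ together with its intersections with $A_{12},B_1,B_2$, and each of these is cut out by divisors whose equations do not involve $t$, so $\pi$ restricts on them to a trivial fibre bundle and they carry no critical points. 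Hence the simple type is $(\set{A_2},\varnothing)$. The main obstacle is purely organisational: tracking all twelve components through the two blowup charts for the type computation of $P$, and, for the simple type, making precise that the auxiliary $B$-type strata through $P$ are genuinely $t$-independent — once these are set up, each individual check is a one-line calculation.
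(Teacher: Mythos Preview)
Your proposal is correct and follows essentially the same approach as the paper: the corollary is a summary of the discussion immediately preceding it, where the critical sets, the divisors they meet, and the non-simplicity of $B_{12}$ (via the two $A_2$-critical points $P_1,P_2$ on $P$) are already spelled out. You simply make explicit the checks against all twelve components of $D$ and supply the easy observation that every divisor other than $A_2$ is $t$-independent, which immediately gives that $A_2$ is the only simple element.
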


Let $\irrone{L}_G$ denote the codimension one components of the full Landau variety $L_G$ of the ice cream cone $G$, and $\irrone{L}_{G/\gamma}$ correspondingly for the bubble quotient. Note that $\irrone{L}_{G/\gamma}\subset \irrone{L}_G$.
If $B_{12}$ were simple for $\ell_{B_{12}}$, then \cref{sec:hierarchy} would give
\begin{equation}\label{eq:ice-B12-hierarchy}
    \Var_{\ell_{B_{12}}} \circ \Var_{\ell} = 0
    \quad\text{for all}\quad
    \ell\in\irrone{L}_G\setm \irrone{L}_{G/\gamma}.
\end{equation}
Namely, a compatible variation requires that $\ell$ has $B_{12}$ in its type, meaning that the critical set $P$ over $\ell$ intersects $B_{12}$, and thus $P\cap B_{12}$ are critical points of $B_{12}$---showing that the restriction of the arrangement $A\cup B$ to $B_{12}$ is critical over $\ell$ (hence $\ell\in\irrone{L}_{G/\gamma}$). Anyway, $B_{12}$ is \emph{not} simple, and indeed, \eqref{eq:ice-B12-hierarchy} does not hold \cite[eq.~(14)]{LandshoffOlivePolkinghorne:Hierarchical}.

However, we \emph{do} have hierarchy constraints in the $\irrone{A}$-direction:
\begin{equation}\label{eq:ice-A12-hierarchy}
    \Var_{\ell} \circ \Var_{\ell_{A_{12}}} = 0,
\end{equation}
unless $\ell$ has $A_{12}$ and $A_2$ in its type. In fact, since $\ell_{A_{12}}$ is a simple pinch component, its vanishing cycle is an \emph{iterated} coboundary $\vcyc=\fibSphere_{A_2}\fibSphere_{A_{12}} \eta$ around the vanishing sphere
\begin{equation}\label{eq:ice-VarA12-cell}
    \eta=\partial_{A_{12}} \partial_{A_2} \vcell=[\Sphere^1]\in H_1(A_2\cap A_{12}\setm A_1)_t\cong \ZZ.
\end{equation}
This group has rank one, because $A_{2t}\cap A_{12} \cong \PP^1$ from \eqref{eq:ice-A2-A12} is punctured in two points $A_{2t}\cap A_{12} \cap A_1=\set{R_1,R_2}$. The vanishing sphere $\eta$ separates the two punctures, because over the critical fibre, the line $A_1\cap A_{12}$ intersects each of the two components \eqref{eq:ice-A212-sing-fibre} once. Therefore, $\eta=\fibSphere_{R_1} 1=-\fibSphere_{R_2} 1$ is a generator of \eqref{eq:ice-VarA12-cell}.

By \eqref{eq:itvar_n-m_odd}, $\Var_{\ell_{A_{12}}} \vcyc=0$. The only other singularity of $A_2\cap A_{12}\setm A_1$ is the quadratic simple pinch of $A_2\cap A_{12}\cap A_1$, when $R_1=R_2$. This happens over a singularity $\ell_{\delta}\in\irrone{L}_G$ that is familiar from the triangle graph:
\begin{equation}
    \ell_{\delta} = \big\{p_1^4+p_2^4+p_3^4-2p_1^2p_2^2-2p_1^2p_3^2-2p_2^2p_3^2=0\big\} \subset L_G.
\end{equation}
\begin{cor}
    For all $\ell\in\irrone{L}_G\setm\set{\ell_{\delta}}$, we have $\Var_{\ell}\circ\Var_{\ell_{A_{12}}}=0$.
\end{cor}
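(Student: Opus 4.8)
The plan is to track the image of the first variation explicitly and then reduce the second variation to a one-dimensional computation on the curve $A_2\cap A_{12}\setm A_1$. Since $\ell_{A_{12}}$ is a simple pinch component, the Picard-Lefschetz theorem (\cref{thm:PL}, \cref{eq:sum-of-varp-PL}) shows that the image of $\Var_{\ell_{A_{12}}}$ on $H_3(Y\setm A,B)_t$ is contained in $\ZZ\cdot\vcyc$, with $\vcyc=\pm\fibSphere_{A_2}\fibSphere_{A_{12}}\eta$ the vanishing cycle from \cref{eq:ice-VarA12-cell}. So it is enough to prove that $\Var_{\ell}\vcyc=0$ for every $\ell\in\irrone{L}_G\setm\set{\ell_{\delta}}$, where, as throughout \cref{sec:hierarchy}, $\Var_{\ell}\vcyc$ is read after the implicit parallel transport to a base point near $\ell$.

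For $\ell=\ell_{A_{12}}$ this is already settled: the pinch over $\ell_{A_{12}}$ is a quadratic simple pinch on a stratum of codimension $2$ with $n=3$, so $n-2=1$ is odd and \cref{eq:itvar_n-m_odd} gives $\Var_{\ell_{A_{12}}}\circ\Var_{\ell_{A_{12}}}=0$, hence $\Var_{\ell_{A_{12}}}\vcyc=0$.

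For the remaining $\ell\in\irrone{L}_G\setm\set{\ell_{A_{12}},\ell_{\delta}}$ I would proceed as follows. The monodromy along a small loop around $\ell$ is realized by a smooth stratified diffeomorphism $g_{\ell}$ of $X$ (\cref{prop:smoothom}) that preserves $A_1$, $A_2$ and $A_{12}$, hence restricts to a diffeomorphism of $A_2\cap A_{12}\setm A_1$. Since variation commutes with the Leray coboundaries $\fibSphere_{A_2},\fibSphere_{A_{12}}$ and with inclusion-induced maps (\cref{s:leray}, \cref{sec:coboundary-comm-functor}), one gets $\Var_{\ell}\vcyc=\pm\fibSphere_{A_2}\fibSphere_{A_{12}}(\Var_{\ell}\eta)$, where the right-hand $\Var_{\ell}$ is the variation of the sub-bundle on $H_1(A_2\cap A_{12}\setm A_1)_t\cong\ZZ$; recall from \cref{eq:ice-A2-A12} that $A_{2t}\cap A_{12}\cong\PP^1$, punctured at $\set{R_1,R_2}=A_{2t}\cap A_{12}\cap A_1$. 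The stratified map $A_2\cap A_{12}\to T$ has, besides lower strata, only two codimension-one critical loci: the stratum $A_2\cap A_{12}$ is critical over $\ell_{A_{12}}$, where the conic breaks into the two lines of \cref{eq:ice-A212-sing-fibre}, and the stratum $A_1\cap A_2\cap A_{12}$ is critical over $\ell_{\delta}$, where $R_1=R_2$. Since $\ell\notin\set{\ell_{A_{12}},\ell_{\delta}}$, a small loop around $\ell$ becomes null-homotopic once these two divisors are reinstated (\cref{sec:fg-codim1}), so the sub-bundle monodromy around $\ell$ is trivial, $\Var_{\ell}\eta=0$, and therefore $\Var_{\ell}\vcyc=0$.

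The hard part is the geometric input in the last step: verifying that the Landau variety of the stratified map $A_2\cap A_{12}\to T$ has no codimension-one components other than $\ell_{A_{12}}$ and $\ell_{\delta}$. This continues the analysis already begun around \cref{eq:ice-A2-A12,eq:ice-L(A12),eq:ice-A212-sing-fibre}: one has to check that the conic $A_{2t}\cap A_{12}$ stays one-dimensional for all $t$ and that its two punctures $R_1,R_2$ collide only over $\ell_{\delta}$, which is a routine if slightly tedious computation with the Symanzik polynomial $\Fpol_{G/\gamma}$.
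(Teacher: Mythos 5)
Your proof is correct and follows essentially the same route as the paper: reduce $\Var_{\ell_{A_{12}}}$ to the vanishing cycle $\vcyc=\fibSphere_{A_2}\fibSphere_{A_{12}}\eta$, dispatch $\ell=\ell_{A_{12}}$ via \cref{eq:itvar_n-m_odd}, and for the remaining $\ell$ localize the problem into the rank-one local system $H_1(A_2\cap A_{12}\setm A_1)_t$ whose Landau variety consists only of $\ell_{A_{12}}$ and $\ell_\delta$. The paper phrases the middle step through the hierarchy principle (``$\Var_{\ell}\circ\Var_{\ell_{A_{12}}}=0$ unless $\ell$ has $A_{12}$ and $A_2$ in its type'') rather than your explicit commutativity identity $\Var_\ell\vcyc=\pm\fibSphere_{A_2}\fibSphere_{A_{12}}\Var_\ell\eta$, but these are equivalent, and like you the paper asserts rather than verifies that the sub-bundle has no other codimension-one critical locus.
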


In fact, the singularity $\ell_{\delta}$ gives another example of non-simple types. Traditionally, this singularity is thought of as ``second type'', that is, associated with the critical points of $A_1\cap A_2$. And indeed, this intersection is critical over $\ell_{\delta}$: if we parametrize the smooth quadric $A_1\cong\PP^1\times\PP^1\times T$ as
\begin{equation*}
([y],[z]) \mapsto  [y_1(y_1+y_2)(z_1+z_2):y_2(y_1+y_2)(z_1+z_2):-y_1y_2z_1:-y_1y_2z_2],
\end{equation*}
then $A_1\cap A_{2t}=\PP^1\times\set{z_1^2p_3^2-z_1z_2(p_1^2-p_2^2-p_3^2)+z_2^2p_2^2=0}$ is generically a union of two non-intersecting lines. Over $\ell_{\delta}$, the two solutions for $z$ coincide and the lines merge into one, $c(A_1\cap A_2)\cap \pi^{-1}(t)\cong\PP^1$. As for $B_{12}$ above, there are two points on this line that are critical points of $A_2$ alone. Hence the simple type of $\ell_{\delta}$ is only $(\set{A_2},\varnothing)$, like for a ``first type'' singularity.
\begin{rem}
    The vanishing sphere of $A_2\cap A_{12}\cap A_1$ over $\ell_{\delta}$ is $\set{R_1,R_2}\cong\Sphere^0$, hence the corresponding vanishing cycle is $2\vcyc$. It follows that $\Var_{\ell_{\delta}} \vcyc=-\vcyc$ and $\Var_{\ell_{\delta}}\circ\Var_{\ell_{A_{12}}}=-\Var_{\ell_{A_{12}}}$. If $A_1$ were simple for $\ell_{\delta}$, then the hierarchy would in contradiction predict that this double variation is zero.
\end{rem}

To summarize: the additional blowup $X\rightarrow X'$ not only makes the divisor $A\cup B$ normal crossing, but the new divisor $A_{12}$ also provides a new singularity \eqref{eq:ice-L(A12)} with hierarchy constraints \eqref{eq:ice-A12-hierarchy}.

\begin{rem}\label{rem:mod_motive} 
    As already noted in \cite[\S 4]{Boyling:HomParaFeyn}, the non-transversality of $A'\cup B'$ is a completely general phenomenon for Feynman motives. As soon as a graph $G$ has a subgraph $\gamma$ such that $\gamma$ and $G/\gamma$ have each at least one loop and at least two edges, the factorizations \eqref{eq:UF-factors} lead to singular intersections of $A_1'$ and $A_2'$ with $B_{\gamma}'$.
    
    Our calculations suggests that an understanding of hierarchy constraints for variations of Feynman graphs with more than one loop might be obtained by a study of modified Feynman motives $X\rightarrow X'$, where $X$ is the blowup of $X'$ in the divisors $\set{\Upol_{\gamma}=0}\subset B_{\gamma}'$.
\end{rem}

\bibliographystyle{JHEPsortdoi}
\bibliography{refs}

\appendix

\section{Homology of simple pinches}\label{sec:homology-groups}

Consider a simple pinch on a stratum $S_1\cap \ldots \cap S_{\PLm}$ in local coordinates on a complex open ball $W=\oBall_{n}=\set{\abs{x_1}^2+\ldots+\abs{x_n}^2<1} \subset \CC^n$, as in eq.~\eqref{eq:localcoords}:
\begin{align*}
    S_{i} & = \{ x_i =0 \}, \quad i=1, \ldots, \PLm-1 , \\ 
    S_{\PLm} &= \{ x_1 + \ldots + x_{\PLm-1} + x_{\PLm}^2 + \ldots + x_n^2=t\} .
\end{align*}
For any subset $I\subseteq\set{1,\ldots,\PLm}$, we denote union and intersection as
\begin{equation*}
    S_I = \bigcup_{i \in I} S_i
    \quad\text{and}\quad
    S^I = \bigcap_{i \in I} S_i.
\end{equation*}
In this appendix we generalize results of \cite{FFLP} and calculate the homology groups
\begin{equation*}
    H_{\bullet}(W\cap S^I \setm S_J, S_K)_t
    \quad\text{and}\quad
    H_{\bullet}(\overline{W}\cap S^I \setm S_J, \partial W \cup S_K)_t
\end{equation*}
for all distributions $I\sqcup J\sqcup K \subseteq\set{1,\ldots,\PLm}$ of the hypersurfaces. As we only compute the groups in a fixed fibre, we suppress the subscript $_t$ from now on.

We begin with a review of the well-known case $J=K=\varnothing$, and then the rest follows from the boundary and residue sequences of \cref{sec:relative-residues}.
\begin{lem}\label{lem:intersect-contractible}
    If $\abs{t}<1$ and $\abs{I}<\PLm$, then $W\cap S^I$ is contractible.
\end{lem}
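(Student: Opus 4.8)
The claim is that $W\cap S^I$ is contractible when $\abs{I}<\PLm$ and $\abs{t}<1$. The plan is to distinguish whether or not $\PLm\in I$, and in each case exhibit an explicit deformation retraction (or linear contraction) inside $W\cap S^I$.

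First suppose $\PLm\notin I$. Then $S^I=\bigcap_{i\in I}\set{x_i=0}$ is a coordinate subspace, so $W\cap S^I$ is the open unit ball in the remaining $n-\abs{I}$ coordinates. A ball is convex, hence contractible via the straight-line homotopy $x\mapsto (1-\lambda)x$; this clearly stays inside $W\cap S^I$. So the only real content is the case $\PLm\in I$, where we write $I=I'\sqcup\set{\PLm}$ with $I'\subseteq\set{1,\ldots,\PLm-1}$ and $\abs{I'}\leq \PLm-2$. On $S^{I'}$ the coordinates $x_i$ with $i\in I'$ vanish, and the defining equation of $S_{\PLm}$ becomes $\sum_{i\in\set{1,\ldots,\PLm-1}\setm I'} x_i + x_{\PLm}^2+\ldots+x_n^2 = t$. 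Since $\abs{I'}\leq \PLm-2$, there is at least one linear variable $x_i$ (with $i\leq \PLm-1$, $i\notin I'$) that appears linearly; call it $x_1$ after relabelling. Then $W\cap S^I$ is the graph of the holomorphic function $x_1 = t - \sum_{i\geq 2,\,i\leq\PLm-1,\,i\notin I'} x_i - x_{\PLm}^2-\ldots-x_n^2$ over the open set $\Omega\subset\CC^{n-\abs{I}}$ of those tuples $(x_2,\ldots)$ for which $\norm{x}^2<1$. Projection to these free coordinates is a homeomorphism $W\cap S^I\cong\Omega$, so it suffices to show $\Omega$ is contractible.

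The domain $\Omega$ need not be convex because of the $x_{\PLm}^2+\ldots+x_n^2$ terms, so the main step is to contract $\Omega$ by hand. I would use the scaling homotopy $h_\lambda(x) = \lambda\cdot x$ for $\lambda\in[0,1]$ on the free coordinates, and check that $\norm{h_\lambda(x)}^2<1$ whenever $\norm{x}^2<1$, where $\norm{\cdot}$ here is the full Euclidean norm in $\CC^n$ including the dependent coordinate $x_1(x)$. The subtlety is that $x_1$ depends \emph{quadratically} on the $x_j$'s ($j\geq\PLm$), so $\abs{x_1(\lambda x)}$ is not simply $\lambda\abs{x_1(x)}$. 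The cleanest route is to choose a more adapted homotopy: scale the linear free coordinates linearly and the quadratic ones quadratically, i.e.\ $x_i\mapsto \lambda x_i$ for $i\leq\PLm-1$ and $x_j\mapsto \sqrt{\lambda}\,x_j$ for $j\geq\PLm$ (a continuous path as $\lambda$ runs over $[0,1]$, ending at the point with $x_i=0$, $x_j=0$, $x_1=t$, which lies in $\Omega$ since $\abs{t}<1$). Under this path the dependent coordinate transforms as $x_1\mapsto t-\lambda\sum x_i - \lambda\sum x_j^2$, and one computes that the full squared norm is bounded by $\abs{t}^2$ plus $\lambda$ times a correction, which I expect to stay below $1$ after a routine but slightly fiddly estimate using $\abs{t}<1$ and $\norm{x}^2<1$; if the naive estimate is not immediate, one can first shrink via $x\mapsto (1-\epsilon)x$ to gain room. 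This norm estimate is the one genuine obstacle; everything else is formal.

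Finally, I would remark that the same argument handles the degenerate boundary cases: if $\abs{I}=\PLm-1$ and $I=\set{1,\ldots,\PLm-1}$, then $S^I\cap W\cong\set{x_{\PLm}^2+\ldots+x_n^2=t,\ \norm{x}^2<1}$ which for $n>\PLm$ is still contractible by the quadratic scaling above (and for $n=\PLm$ is two points, $\Sphere^0$, which is \emph{not} contractible---but then $\abs{I}=\PLm-1=n-1$ and this case is excluded or handled separately in the text, consistent with the vanishing-sphere discussion). Since the statement as written only asserts contractibility for $\abs{I}<\PLm$ under the standing assumption that we are looking at an intersection of a \emph{proper} subset, and the relevant application always has $\abs{I}<\PLm\leq n$, the scaling homotopy suffices.
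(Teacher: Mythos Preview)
Your main argument is correct, and the norm estimate you flag as ``the one genuine obstacle'' is in fact immediate. Under your homotopy the dependent coordinate becomes $x_1 \mapsto (1-\lambda)t + \lambda x_1$, the remaining free linear coordinates scale by $\lambda$, and the quadratic ones by $\sqrt{\lambda}$. Convexity of $\abs{\cdot}^2$ on $\CC$ gives $\abs{(1-\lambda)t + \lambda x_1}^2 \leq (1-\lambda)\abs{t}^2 + \lambda\abs{x_1}^2$, and since $\lambda^2\leq\lambda$,
\begin{equation*}
\norm{h_\lambda(x)}^2 \leq (1-\lambda)\abs{t}^2 + \lambda\norm{x}^2 < 1.
\end{equation*}
No pre-shrinking is needed; this is essentially the same one-line bound the paper obtains for its scaling homotopy.

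Your final paragraph, however, is muddled. The set $I = \{1, \ldots, \PLm-1\}$ does \emph{not} contain $\PLm$, so it falls under your first case: $W \cap S^I$ is simply the open ball in the coordinates $(x_\PLm, \ldots, x_n)$, not a quadric. The space $\{x_\PLm^2 + \ldots + x_n^2 = t\}$ you describe is $W \cap S^{\{1,\ldots,\PLm\}}$, which has $\abs{I} = \PLm$ and is excluded by the hypothesis $\abs{I} < \PLm$ (it is treated separately as the vanishing sphere). So there is no degenerate boundary case to worry about; your two-case split is already complete. The remark that ``the relevant application always has $\PLm \leq n$'' is also inaccurate, since the lemma is invoked for linear pinches with $\PLm = n+1$ as well.

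The paper's proof is close to yours but organized differently. It first reduces to $I = \{\PLm\}$ by noting that intersecting with $S_i$ for $i < \PLm$ just drops the ambient dimension. It then applies two successive homotopies: an averaging step collapsing all $\PLm-1$ linear coordinates to their mean $\bar{x}$ (the norm bound there is Cauchy--Schwarz), followed by a scaling step identical in spirit to yours. Singling out one linear coordinate as dependent, as you do, bypasses the averaging and handles general $I'\subsetneq\{1,\ldots,\PLm-1\}$ in one stroke; this is a mild simplification, provided you actually write down the estimate above rather than leave it as a hope.
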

\begin{proof}
    Coordinate hyperplanes $i\in I\setm\set{\PLm}$ just lower the ambient dimension $\oBall_n\cap S_i\cong \oBall_{n-1}$, hence the claim reduces to $I\subseteq\set{\PLm}$. The ball $I=\varnothing$ is contractible, so only $I=\set{\PLm}$ remains.
    We have $\PLm\geq 2$ because $I$ is a proper subset. Set $\bar{x}=(x_1+\ldots+x_{\PLm-1})/(\PLm-1)$ and apply the linear homotopy
    \begin{equation*}
        H_{\lambda}\colon x=(x_1,\ldots,x_n) \mapsto (1-\lambda) x+\lambda(\underbrace{\bar{x},\ldots,\bar{x}}_{\text{$\PLm-1$ times}}, x_{\PLm},\ldots,x_n)
        ,\ \lambda \in [0,1].
    \end{equation*}
    This preserves $S_{\PLm}$, $H_{\lambda}(S_{\PLm})\subseteq S_{\PLm}$, and furthermore stays within the unit ball, $H_{\lambda}(\oBall_n)\subseteq \oBall_n$ for all $\lambda\in[0,1]$, because (Cauchy-Schwarz)
    \begin{equation*}
        \norm{H_{\lambda}(x)}^2
        =\norm{x}^2-\lambda(2-\lambda)\left(
            \sum_{i=1}^{\PLm-1} \abs{x_i}^2-(\PLm-1)\abs{\bar{x}}^2
        \right)
        \leq \norm{x}^2.
    \end{equation*}
    We obtain a strict deformation retraction $H_1$ from $\oBall_n\cap S_{\PLm}$ onto the subspace where $x_1=\ldots=x_{\PLm-1}=\bar{x}$ are equal. In this subspace of codimension $\PLm-2$, we have $\norm{x}^2=\abs{\bar{x}\sqrt{\PLm-1}}^2+\abs{x_{\PLm}}^2+\ldots+\abs{x_n}^2$.
    This gives an isometry $H_1(\oBall_n\cap S_{\PLm}) \cong \oBall_{n-\PLm+2} \cap S'$ where
    \begin{equation*}
        S'=\set{x_1 \sqrt{\PLm-1} + x_2^2+\ldots+x_{n-\PLm+2}^2=t}.
    \end{equation*}
    For the linear pinch case $\PLm=n+1$, this is just a point $x_1=t/\sqrt{n}$ and we are done. In the quadratic case $\PLm\leq n$ let now $x=(x_1,\ldots,x_{n-\PLm+2})$ and apply a further homotopy
    \begin{equation*}
        F_{\lambda}\colon x \mapsto \left( 
        (1-\lambda) x_1+\lambda \frac{t}{\sqrt{\PLm-1}},
        x_2\sqrt{1-\lambda},\ldots, x_{n-\PLm+2}\sqrt{1-\lambda}
        \right)
    \end{equation*}
   which preserves $S'$, $F_{\lambda}(S')\subseteq S'$. 
   A calculation shows
    \begin{equation*}
        \norm{F_{\lambda}(x)}^2 =
        (1-\lambda)\norm{x}^2+\lambda \frac{\abs{t}^2}{\PLm-1} - \lambda(1-\lambda)\left|x_1-\frac{t}{\sqrt{\PLm-1}}\right|^2
    \end{equation*}
    and therefore $\norm{F_{\lambda}(x)}^2\leq (1-\lambda)\norm{x}^2+\lambda\abs{t}^2<1$, provided that $\abs{t}<1$ and $\norm{x}<1$. In conclusion, $F$ shows that the point $F_1(x)=(t/\sqrt{\PLm-1},0,\ldots,0)$ is a deformation retract of $\oBall_{n-\PLm+2} \cap S'$.
\end{proof}

\begin{lem}\label{lem:vanishing-spheres}
    Let $I=\set{1,\ldots,\PLm}$ and $0<\abs{t}<1$. If $\PLm=n+1$ (linear pinch), then $W\cap S^I=\varnothing$ is empty. If $\PLm\leq n$ (quadratic pinch), then $W\cap S^I$ is homotopy equivalent to a real sphere of dimension $n-\PLm$.
\end{lem}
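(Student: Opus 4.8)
The plan is to reduce $W\cap S^I$ to an affine quadric contained in an open ball, and then to deformation-retract that quadric onto the standard vanishing sphere; the linear case falls out of the reduction for free.

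First I would observe that on $S^I=\bigcap_{i=1}^{\PLm}S_i$ the coordinates $x_1,\dots,x_{\PLm-1}$ all vanish, so the equation of $S_{\PLm}$ collapses to $x_{\PLm}^2+\dots+x_n^2=t$, while the ball condition becomes $\abs{x_{\PLm}}^2+\dots+\abs{x_n}^2<1$. Writing $r=n-\PLm$ and $z=(x_{\PLm},\dots,x_n)\in\CC^{r+1}$, this yields a homeomorphism
\begin{equation*}
W\cap S^I\;\cong\;Q_t\defas\set{z\in\CC^{r+1}\colon z_1^2+\dots+z_{r+1}^2=t}\cap\set{\norm{z}^2<1}.
\end{equation*}
If $\PLm=n+1$, the index range $\PLm,\dots,n$ is empty: there are no variables $z$ left, and the condition reads $0=t$, which is impossible since $0<\abs{t}$. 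Hence $W\cap S^I=\varnothing$, settling the linear pinch case.

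For $\PLm\le n$ I would next normalize $t$: the map $z\mapsto e^{-\iu\arg(t)/2}z$ is unitary, hence preserves $\norm{z}$ (and the unit ball) and turns $\sum_j z_j^2=t$ into $\sum_j z_j^2=\abs{t}$, so without loss of generality $t\in(0,1)$. Decomposing $z=u+\iu v$ with $u,v\in\RR^{r+1}$, the quadric equation becomes $\norm{u}^2-\norm{v}^2=t$ together with $u\cdot v=0$, and the ball condition is $\norm{u}^2+\norm{v}^2<1$; in particular $\norm{u}^2=\tfrac12(t+\norm{u}^2+\norm{v}^2)>t/2>0$, so $u\neq 0$ at every point of $Q_t$. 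Then I would exhibit the strong deformation retraction onto the vanishing sphere $\Sigma=\set{\sqrt{t}\,\hat u\colon \hat u\in\Sphere^{r}}\subset Q_t$ (the locus $v=0$, $\norm{u}^2=t$) given, for $\lambda\in[0,1]$, by
\begin{equation*}
H_\lambda(u+\iu v)=c_\lambda\,u+\iu(1-\lambda)v,\qquad c_\lambda=c_\lambda(u,v)=\sqrt{\frac{t+(1-\lambda)^2\norm{v}^2}{\norm{u}^2}},
\end{equation*}
which is continuous because $\norm{u}>0$ on $Q_t$. One checks $(c_\lambda u)\cdot((1-\lambda)v)=0$ and $\norm{c_\lambda u}^2-(1-\lambda)^2\norm{v}^2=t$, so $H_\lambda$ maps $Q_t$ into the quadric, and
\begin{equation*}
\norm{H_\lambda(z)}^2=\norm{c_\lambda u}^2+(1-\lambda)^2\norm{v}^2=t+2(1-\lambda)^2\norm{v}^2\le t+2\norm{v}^2=\norm{u}^2+\norm{v}^2<1,
\end{equation*}
so $H_\lambda$ keeps $Q_t$ inside the open ball. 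Since $H_0=\id$, $H_1$ takes values in $\Sigma$, and $H_\lambda$ restricts to the identity on $\Sigma$ (there $c_\lambda=1$), this proves $W\cap S^I\cong Q_t\simeq\Sigma\cong\Sphere^{n-\PLm}$.

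There is no serious obstacle here; the only points that require a little care are the reduction to $t\in(0,1)$ (so that the real–imaginary splitting of the quadric is clean) and checking that the homotopy never escapes the open unit ball — both of which follow at once from the relation $t=\norm{u}^2-\norm{v}^2$. As a byproduct, this argument reproves the classical fact that the Milnor fibre of the ordinary quadratic singularity $z_1^2+\dots+z_{r+1}^2$ is homotopy equivalent to $\Sphere^{r}$.
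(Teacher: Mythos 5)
Your proof is correct and follows essentially the same route as the paper: drop the vanishing coordinates to reduce to an affine quadric $Q_t\cap\oBall_{r+1}$, rotate so that $t\in(0,1)$, split into real and imaginary parts, and write down an explicit deformation retraction onto the real sphere of radius $\sqrt t$ that is verified to stay inside the unit ball via the relation $\norm{u}^2-\norm{v}^2=t$. The paper's homotopy scales $v$ by $\sqrt{1-\lambda}$ (making $\norm{F_\lambda(x)}^2=(1-\lambda)\norm{x}^2+\lambda t$ particularly clean) and first records the diffeomorphism $S'\cong T\Sphere^{n-\PLm}$, whereas yours scales $v$ by $1-\lambda$ and skips the tangent-bundle identification; these are cosmetic variations of the same argument.
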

\begin{proof}
    The case $\PLm=n+1$ is clear from \eqref{eq:localcoords}. For $\PLm\leq n$, we get $W\cap S^I \cong \oBall_{n-\PLm+1}\cap S'$ with the complex sphere
    \begin{equation*}
        S' = \set{x_1^2+\ldots+x_{n-\PLm+1}^2=t} \subset \CC^{n-\PLm+1}.
    \end{equation*}
    After a unitary transformation $x\mapsto x \sqrt{t/\abs{t}}$, we may assume that $t=\abs{t}>0$ is real. In terms of the real and imaginary parts $u,v\in\RR^{n-\PLm+1}$ of $x=u+\iu v$,
    \begin{equation*}
        S' = \set{\norm{u}^2=t+\norm{v}^2} \cap \set{u\cdot v=0}
    \end{equation*}
    is diffeomorphic to the tangent bundle $T\Sphere^{n-\PLm}$ of the real sphere via $x\mapsto(u/\norm{u},v)$ with inverse $(p,v)\mapsto p\sqrt{t+\norm{v}^2}+\iu v$. The contraction to the zero section can be realized by a homotopy $F\colon S'\times [0,1]\longrightarrow S'$ of the form
    \begin{equation*}
        F_{\lambda} \colon x=u+\iu v \mapsto
        u\sqrt{1-\lambda+\lambda t/\norm{u}^2} + \iu v \sqrt{1-\lambda}.
    \end{equation*}
    Due to $\norm{F_\lambda(x)}^2=(1-\lambda)\norm{x}^2+\lambda t<1$, this homotopy restricts to a contraction of $S' \cap \oBall_{n-\PLm+1}$ onto the real sphere $F_1(S'\cap \oBall_{n-\PLm+1})=\set{u\sqrt{t}/\norm{u}}$ with radius $\sqrt{t}$.
\end{proof}
\begin{lem}\label{lem:H(S^I-S_J.S_K)=0}
    For all disjoint triples of subsets $I\sqcup J \sqcup K \subset \set{1,\ldots,\PLm}$ that do not cover all of the $\PLm$ hypersurfaces, the homology groups are
    \begin{equation*}
        H_{\bullet}(W\cap S^I \setm S_J, S_K) \cong 0 \quad\text{whenever $K\neq \varnothing$.}
    \end{equation*}
\end{lem}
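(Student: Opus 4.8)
The plan is to induct on $\abs{K}$ and use the long exact boundary sequence \eqref{eq:exact_boundary_seq} to peel off one boundary hypersurface at a time, reducing everything to the base case $K = \set{k}$ a singleton. For the base case, pick $k \in K$ and consider the triple $S_K \setm S_J \subset (W \cap S^I) \setm S_J$, or rather the boundary sequence for the pair $(W \cap S^I \setm S_J, S_k)$ relative to $S_k$: it reads
\begin{equation*}
    \cdots \to H_{\bullet}(W\cap S^I\setm S_J) \to H_{\bullet}(W\cap S^I\setm S_J, S_k) \xrightarrow{\partial_k} H_{\bullet-1}(S_k \cap S^I \setm S_J) \to \cdots
\end{equation*}
Here $S_k \cap S^I = S^{I\sqcup\set{k}}$ and $S_k \setm S_J = S_k \cap (W \cap S^I) \setm S_J$. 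Since $I \sqcup J \sqcup \set{k}$ is still a proper subset of $\set{1,\ldots,\PLm}$ (we assumed $I \sqcup J \sqcup K$ does not cover everything and in the base case $K = \set{k}$), both $W \cap S^I$ and $W \cap S^{I\sqcup\set{k}}$ are contractible by \cref{lem:intersect-contractible}, and removing the further coordinate-hyperplane-type divisors in $J$ turns these contractible balls into complements of hyperplane arrangements. The key sub-claim is then that both $H_{\bullet}(W\cap S^I\setm S_J)$ and $H_{\bullet}(S^{I\sqcup\set{k}}\setm S_J)$ are concentrated and agree with each other in a way that makes $\partial_k$ in the above sequence an isomorphism — forcing the relative group to vanish.

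More carefully: for the base case I would not even need the detailed structure. Since $W\cap S^I$ is contractible and $S_k\cap(W\cap S^I) = W\cap S^{I\sqcup\set{k}}$ is contractible (both proper intersections), and since the divisors $S_j$, $j \in J$, restrict compatibly, the inclusion $(W\cap S^{I\sqcup\set{k}})\setm S_J \hookrightarrow (W\cap S^I)\setm S_J$ is — up to the deformation retractions of \cref{lem:intersect-contractible} — the inclusion of a coordinate hyperplane's complement-of-an-arrangement into the ambient ball's complement of the same arrangement, and this is a homotopy equivalence: indeed the retraction of \cref{lem:intersect-contractible} onto the corner point can be arranged to preserve all the relevant $S_j$ (these are coordinate hyperplanes $\set{x_j = 0}$ for $j \in J \cap \set{1,\ldots,\PLm-1}$, or the quadric $S_{\PLm}$, and the homotopies $H_\lambda$, $F_\lambda$ in that proof are designed to preserve $S_{\PLm}$ and are linear in the coordinates hence preserve coordinate hyperplanes). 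So the map $H_{\bullet}(S^{I\sqcup\set{k}}\setm S_J) \to H_{\bullet}(S^I\setm S_J)$ induced by inclusion is an isomorphism in all degrees; by the long exact sequence of the pair, $H_{\bullet}(W\cap S^I\setm S_J, S_k) = 0$.

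For the inductive step with $\abs{K} \geq 2$, pick $k \in K$ and use the boundary sequence \eqref{eq:exact_boundary_seq} with $i = k$:
\begin{equation*}
    \cdots \to H_{\bullet}(W\cap S^I\setm S_J, S_{K\setm\set{k}}) \to H_{\bullet}(W\cap S^I\setm S_J, S_K) \xrightarrow{\partial_k} H_{\bullet-1}(S_k\cap S^I\setm S_J, S_{K\setm\set{k}}) \to \cdots
\end{equation*}
The first term vanishes: if $K \setm \set{k} \neq \varnothing$ this is the inductive hypothesis applied to the triple $(I, J, K\setm\set{k})$ (still a proper cover); if $K = \set{k}$ we are back in the base case. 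The third term is $H_{\bullet-1}(S^{I\sqcup\set{k}}\setm S_J, S_{K\setm\set{k}})$: if $K \setm \set{k} \neq \varnothing$ this vanishes by the inductive hypothesis applied to the triple $(I\sqcup\set{k}, J, K\setm\set{k})$ — note $I\sqcup\set{k}$ is a valid intersection index set and the union $I \sqcup \set{k} \sqcup J \sqcup (K\setm\set{k}) = I\sqcup J\sqcup K$ still misses at least one hypersurface; if $K = \set{k}$ the third term is $H_{\bullet-1}(S^{I\sqcup\set{k}}\setm S_J)$ which we have just seen is the (contractible-ball) homology that the base-case argument already handled. In the singleton base case the middle term sits between the homology of $(W\cap S^I)\setm S_J$ and that of $S^{I\sqcup\set{k}}\setm S_J$ with $\partial_k$ fitting into the pair sequence, and the homotopy-equivalence observation above gives vanishing. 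In the inductive case, the middle term is squeezed between two zeros, hence zero. This completes the induction.

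\textbf{Main obstacle.} The one point requiring genuine care is the claim that the deformation retraction of \cref{lem:intersect-contractible} can be taken to preserve all hypersurfaces $S_j$ with $j \in J$ simultaneously, so that it descends to a homotopy equivalence of the \emph{complements} $S^I\setm S_J \simeq S^{I\sqcup\set{k}}\setm S_J$. One must check that the explicit homotopies $H_\lambda$ and $F_\lambda$ in that lemma's proof — which by construction preserve $S_{\PLm}$ — also preserve each coordinate hyperplane $\set{x_j=0}$; this is immediate for $F_\lambda$ (it only rescales coordinates) but for $H_\lambda$, which averages $x_1, \ldots, x_{\PLm-1}$, one has $\set{x_j = 0}$ for $j < \PLm$ is \emph{not} preserved pointwise, so instead of a naive appeal one should restrict attention to the index $k$ being the quadric $S_{\PLm}$ (if $\PLm \in K$) versus a linear $S_j$ separately, or — cleaner — simply invoke that for the coordinate-hyperplane indices, $S^I \setm S_J$ is literally a product of a lower-dimensional instance with a copy of $\CC$ or $\CC^{\times}$, reducing to the case $I \subseteq \set{\PLm}$ where only the quadric-preserving homotopies $H_\lambda, F_\lambda$ are needed and these are unproblematic. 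Threading this reduction correctly — essentially the bookkeeping that distinguishes "coordinate" indices from the "quadric" index $\PLm$ throughout $I$, $J$, $K$ — is the only real work; everything else is a diagram chase with \eqref{eq:exact_boundary_seq}. I expect the cleanest writeup mirrors \cite{FFLP} by first disposing of all coordinate-hyperplane indices via product decompositions, then running the boundary-sequence induction purely among the quadric-type configurations.
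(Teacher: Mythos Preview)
Your induction on $\abs{K}$ via the boundary sequence \eqref{eq:exact_boundary_seq} is sound for the step $\abs{K}\geq 2$, and this part coincides with the paper's use of the same sequence. The genuine difference is the base case $\abs{K}=1$: you aim to prove directly that the inclusion
\[
    (W\cap S^{I\sqcup\set{k}})\setm S_J \ \hookrightarrow\ (W\cap S^I)\setm S_J
\]
is a homotopy equivalence, via the explicit retractions $H_\lambda,F_\lambda$ of \cref{lem:intersect-contractible}. The obstacle you flag is real: $H_\lambda$ averages $x_1,\ldots,x_{\PLm-1}$ and therefore does \emph{not} preserve the individual coordinate hyperplanes $\set{x_j=0}$ for $j<\PLm$, so the retraction does not descend to the complement of $S_J$ without further work. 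Your proposed fixes (case-split on whether the quadric index $\PLm$ lies in $I,J,K$; product decompositions for coordinate indices) can be made to work, but the bookkeeping is nontrivial and you have not actually carried it out.

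The paper sidesteps this entirely by a different organization: it first proves the case $J=\varnothing$ by induction on $\abs{K}$, where the base input is merely that $W\cap S^I$ and $W\cap S^{I+k}$ are contractible (\cref{lem:intersect-contractible}) and the map on $H_0$ is an isomorphism---no complements, hence no preservation issue. Only afterwards does it bring in $J$, by a second induction on $\abs{J}$ using the \emph{residue} sequence (not the boundary sequence): the two outer terms with $J-p$ vanish by the inner induction, squeezing the middle term to zero. This double induction (boundary for $K$, residue for $J$) is strictly less work than your single induction, precisely because it never needs the homotopy-equivalence claim. Incidentally, that claim is true---it is exactly \cref{lem:H(W-SJ)-decomposition} in the paper---but there it is proved \emph{using} \cref{lem:H(S^I-S_J.S_K)=0}, so you cannot invoke it; you would have to supply an independent proof along the lines you sketch, and the cleanest such proof would essentially reconstruct the paper's residue-sequence induction on $\abs{J}$.
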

\begin{proof}
    Let $J=\varnothing$ and pick any $p\in K$. Consider the boundary in the submanifold $W\cap S^I\cap S_p$ inside the pair $(W\cap S^I,W\cap S^I \cap S_K)$. By induction over $\abs{K}$ and \cref{lem:intersect-contractible}, we may assume that $H_{\bullet}(W\cap S^{I},S_{K-p})$ is concentrated in degree zero. Hence the boundary sequence breaks into isomorphisms
    \begin{align*}
        H_{\bullet}(W\cap S^I,S_K) &\xrightarrow{\cong} H_{\bullet-1}(W\cap S^{I+p},S_{K-p}) \quad\text{for $\bullet\geq 2$ and}
        \\
        H_{\bullet}(W\cap S^I,S_K) &\cong 0 \quad\text{for $\bullet=0,1$,}
    \end{align*}
    noting for the latter that $H_0(W\cap S^{I+p},S_{K-p})\rightarrow H_0(W\cap S^I,S_{K-p})$ is an isomorphism. This concludes the induction step and hence the proof for $J=\varnothing$. For $J\neq\varnothing$, apply induction over $\abs{J}$, using the residue sequence
    \begin{multline*}
        \cdots \rightarrow H_{\bullet}(W\cap S^I\setm S_{J-p},S_K) \rightarrow H_{\bullet-2}(W\cap S^{I+p}\setm S_{J-p},S_K) \\ \rightarrow H_{\bullet-1}(W\cap S^I\setm S_J,S_K) \rightarrow H_{\bullet-1}(W\cap S^I\setm S_{J-p},S_K)\rightarrow \cdots
    \end{multline*}
    with respect to some $p\in J$: all terms with $J-p$ are zero by induction, so the groups with $J$ vanish as well, by exactness.
\end{proof}
\begin{cor}\label{lem:codomain-Knon0-full}
    Take any partition $I\sqcup J \sqcup K = \set{1,\ldots,\PLm}$ with $K\neq \varnothing$. Then all boundary and Leray coboundary maps $\partial_p,\fibSphere_p$ with $p\in I$ are isomorphisms:
    \begin{equation*}\xymatrix{
        H_{\bullet-1}(W\cap S^{I}\setm S_{J},S_K) \ar[d]^{\fibSphere_p}_{\cong}
        & H_{\bullet}(W\cap S^{I-p}\setm S_J,S_{K+p}) \ar[l]^{\partial_p}_{\cong}
        \\
        H_{\bullet}(W\cap S^{I-p}\setm S_{J+p},S_K) 
        &
    }\end{equation*}
    By iterations of these maps, all these groups (with $K\neq \varnothing$) are isomorphic and in fact free of rank one, concentrated in degree $n-\abs{I}$:
    \begin{equation*}
        \delta_J\partial_{I\sqcup J}\colon H_{n}(W,S_{1,\ldots,\PLm})\xrightarrow{\cong} H_{n-\abs{I}}(W\cap S^I\setm S_J,S_K) \cong \ZZ.
    \end{equation*}
\end{cor}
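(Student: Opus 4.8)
The plan is to obtain the corollary purely formally from \cref{lem:H(S^I-S_J.S_K)=0} together with the long exact boundary and residue sequences of \cref{sec:relative-residues}; no geometric input is needed beyond the contractibility and vanishing-sphere lemmas (\cref{lem:intersect-contractible,lem:vanishing-spheres}) already established.

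\textbf{The two isomorphisms in the diagram.} First I would fix $p\in I$ and treat $\partial_p$: feed the hypersurface $S_p$ and the relative pair $S_{K}\subseteq S_{K+p}$ into the boundary sequence \eqref{eq:exact_boundary_seq} inside the manifold $W\cap S^{I-p}\setm S_J$. The two groups flanking $\partial_p$ are then $H_{\bullet}(W\cap S^{I-p}\setm S_J,S_K)$ in two consecutive degrees; their triple $(I-p,J,K)$ omits $S_p$, hence does not cover all $\PLm$ hypersurfaces, while $K\neq\varnothing$. So they vanish by \cref{lem:H(S^I-S_J.S_K)=0}, and $\partial_p$ is an isomorphism in every degree. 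For $\fibSphere_p$ I would use instead the residue sequence inside $W\cap S^{I-p}$, removing $S_p$ together with the already-removed $S_J$ and keeping $S_K$ relative; the groups flanking $\fibSphere_p$ are again $H_{\bullet}(W\cap S^{I-p}\setm S_J,S_K)$, which vanish for the same reason, so $\fibSphere_p$ is an isomorphism. This proves the displayed diagram.

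\textbf{The base group.} Next I would check that $H_{\bullet}(W,S_{1,\ldots,\PLm})$ is free of rank one, concentrated in degree $n$. Apply $\partial_1$: since $H_{\bullet}(W,S_{\set{2,\ldots,\PLm}})=0$ by \cref{lem:H(S^I-S_J.S_K)=0} (the triple $(\varnothing,\varnothing,\set{2,\ldots,\PLm})$ omits $S_1$), the boundary sequence gives $\partial_1\colon H_{\bullet}(W,S_{1,\ldots,\PLm})\xrightarrow{\cong}H_{\bullet-1}(W\cap S_1,S_{\set{2,\ldots,\PLm}})$. The slice $W\cap S_1\cong\oBall_{n-1}$ carries the restricted arrangement $S_2,\ldots,S_{\PLm}$, which is again a simple pinch arrangement of $\PLm-1$ hypersurfaces in dimension $n-1$, of the same (linear or quadratic) type since $n-\PLm$ is preserved, and with the same parameter $t$. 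Iterating this reduction $\PLm-1$ times strips off all coordinate hyperplanes and leaves $H_{\bullet-\PLm+1}$ of $\oBall_{n-\PLm+1}$ relative to the single remaining hypersurface $S_{\PLm}$. By \cref{lem:vanishing-spheres} that hypersurface is homotopy equivalent to $\Sphere^{n-\PLm}$ in the quadratic case; in the linear case $\PLm=n+1$, so $\oBall_{n-\PLm+1}$ is a point and $S_\PLm$ is empty in it. Either way the pair sequence of $(\oBall_{n-\PLm+1},S_\PLm)$ gives $\ZZ$ concentrated in degree $n-\PLm+1$, whence $H_{\bullet}(W,S_{1,\ldots,\PLm})\cong\ZZ$ for $\bullet=n$ and $0$ otherwise. (Alternatively one could run the pair sequence of $(W,S_{1,\ldots,\PLm})$ and compute $\widetilde H_{\bullet}(S_{1,\ldots,\PLm})$ by Mayer--Vietoris over the cover by the $S_i$, but the boundary reduction is cleaner.)

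\textbf{Assembly, and the main obstacle.} Finally I would write $\delta_J\partial_{I\sqcup J}$ as the composite that first moves each index of $I$, then each index of $J$, from the relative slot into the intersection (via partial boundaries $\partial_p$), and then moves each index of $J$ back out of the intersection into the removed slot (via Leray coboundaries $\fibSphere_p$). At every intermediate stage the relative slot still contains $K\neq\varnothing$, and each elementary step is precisely one of the two moves treated above --- hence an isomorphism in every degree. Thus $\delta_J\partial_{I\sqcup J}$ is an isomorphism of graded groups, and composed with the base case it identifies $H_{\bullet}(W\cap S^I\setm S_J,S_K)$ with $\ZZ$ concentrated in degree $n-\abs{I}$ (each $\partial_p$ lowers the degree by one, each $\fibSphere_p$ raises it by one, net shift $-\abs{I}$). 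The one genuinely computational point is the base group; everything else is bookkeeping, and the only thing to watch is that the relative part never becomes empty, so that \cref{lem:H(S^I-S_J.S_K)=0} keeps applying at each step.
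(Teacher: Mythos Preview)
Your proof is correct and follows essentially the same approach as the paper: both obtain the isomorphisms $\partial_p,\fibSphere_p$ by observing that the flanking terms in the boundary and residue sequences vanish by \cref{lem:H(S^I-S_J.S_K)=0}, and both identify the groups by reducing to the pair $(W\cap S^{1,\ldots,\PLm-1},S_{\PLm})$ and invoking \cref{lem:vanishing-spheres}. The only cosmetic difference is that the paper anchors the computation at the endpoint $(I,J,K)=(\{1,\ldots,\PLm-1\},\varnothing,\{\PLm\})$ directly, whereas you anchor at $(\varnothing,\varnothing,\{1,\ldots,\PLm\})$ and use the already-proved boundary isomorphisms to descend---but this amounts to the same calculation.
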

\begin{proof}
    Due to the vanishing of $H_{\bullet}(W\cap S^{I-p}\setm S_J,S_K)$ from \cref{lem:H(S^I-S_J.S_K)=0}, the boundary and residue exact sequences with respect to $S_p$ break into the isomorphisms $\partial_p$ and $\fibSphere_p$ as claimed. It remains to identify the group for a convenient choice. Pick $I=\set{1,\ldots,\PLm-1}$, $J=\varnothing$, and $K=\set{\PLm}$. For a linear pinch ($\PLm=n+1$), this yields a single point $W\cap S^I=\set{0}$ (the origin) relative to $W\cap S^I \cap S_K=\varnothing$, hence
    \begin{equation*}
        H_{\bullet}(W\cap S^{1,\ldots,n},S_{n+1}) = H_{\bullet}(\set{0}) \cong \ZZ
    \end{equation*}
    is indeed concentrated in degree $\bullet=n-\abs{I}=0$. For a quadratic pinch ($\PLm\leq n$), the intersection $W\cap S^I\cap S_K=W\cap S^{1,\ldots,\PLm}$ is non-empty and contributes an augmentation $H_0(W\cap S^{I+\PLm})\twoheadrightarrow H_0(W\cap S^I)\cong \ZZ$ in the boundary sequence for $S_{\PLm}$. Hence we get an identification
    \begin{equation*}
        \partial_{\PLm}\colon H_{\bullet}(W\cap S^{1,\ldots,\PLm-1},S_\PLm) \xrightarrow{\cong} \widetilde{H}_{\bullet-1}(W\cap S^{1,\ldots,\PLm}) \cong \ZZ
    \end{equation*}
    with the reduced homology of the vanishing sphere from \cref{lem:vanishing-spheres}, concentrated in degree $\bullet-1=n-\PLm$.
\end{proof}
\begin{lem}\label{lem:H(W-SJ)-decomposition}
    Take any disjoint pair $I\sqcup J \subset \set{1,\ldots,\PLm}$ of less than $\PLm$ indices. Then the inclusion induces an isomorphism $H_{\bullet}(W\cap S^I\setm S_J)\cong H_{\bullet}(W\setm S_J)$.
    Furthermore, the sum $\sum_L \delta_L$ of the iterated Leray coboundaries
    \begin{equation*}
        \fibSphere_L\colon \ZZ \cong H_{0}(W\cap S^{L}\setm S_{J-L})\rightarrow H_{\abs{L}}(W\setm S_{J})
    \end{equation*}
    over indices $L\subseteq J$, provides canonical isomorphisms
    \begin{equation}\label{eq:H(W-SJ)-decomposition}
        \ZZ^{\binom{\abs{J}}{d}}\cong
        \bigoplus_{L\subseteq J, \abs{L}=d} H_{0}(W\cap S^{L}\setm S_{J-L}) \xrightarrow[\sum_L \fibSphere_L]{\cong} H_{d}(W\setm S_J).
    \end{equation}
\end{lem}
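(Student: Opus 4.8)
\textbf{Proof plan for \cref{lem:H(W-SJ)-decomposition}.}

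The plan is to prove the two claims by a double induction, on $\abs{I}$ and $\abs{J}$, using the long exact residue sequence of \cref{sec:relative-residues} together with \cref{lem:intersect-contractible}. For the first claim, I would argue that for $I\sqcup J\subsetneq\set{1,\ldots,\PLm}$ and any $i\in I$, the residue sequence with respect to $S_i$ reads
\begin{equation*}
    \cdots\rightarrow H_{\bullet}(W\cap S^{I-i}\setm S_J)\xrightarrow{\varpi_i} H_{\bullet-2}(W\cap S^{I}\setm S_J)\xrightarrow{\fibSphere_i} H_{\bullet-1}(W\cap S^{I-i+i'}\setm S_{J-i'})\rightarrow\cdots
\end{equation*}
wait---more carefully, I would peel off one \emph{coordinate} hyperplane $i\in I\cap\set{1,\ldots,\PLm-1}$ at a time: since $S_i=\set{x_i=0}$ simply reduces the ambient dimension, $W\cap S^I\setm S_J\cong \oBall_{n-1}\cap(\text{arrangement})$ and nothing changes homologically, so we reduce to $I\subseteq\set{\PLm}$. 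If $I=\varnothing$ there is nothing to prove. If $I=\set{\PLm}$, I would compare the residue sequences of $S_{\PLm}$ inside $(W\setm S_J)$ and inside $(W\cap S^{\PLm}\setm S_J)$... actually the cleanest route is: both $W\cap S^{\PLm}$ and $W$ are contractible by \cref{lem:intersect-contractible} (as $\abs{\set{\PLm}},\abs{\varnothing}<\PLm$), and removing the divisors $S_j$, $j\in J$, which meet $S_{\PLm}$ transversely, the pair $(W,W\cap S_{\PLm})$ deformation retracts compatibly with all $S_j$ onto $(\text{pt},\text{pt})$ using the homotopy $H_1$ from the proof of \cref{lem:intersect-contractible}, which preserves $S_{\PLm}$ and all coordinate hyperplanes. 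This retraction shows $H_{\bullet}(W\cap S^{\PLm}\setm S_J)\cong H_{\bullet}(W\setm S_J)$, the map being induced by inclusion. (I should check that $H_1$ can be chosen to also preserve each $S_j$ with $j\in J$; since $S_j$ for $j\le \PLm-1$ is a coordinate hyperplane and $H_1$ is linear with the right block structure, this holds.)

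For the second claim, I would induct on $\abs{J}$. The base case $J=\varnothing$ says $H_0(W\cap S^{\varnothing})\cong H_0(W)\cong\ZZ$, which is \cref{lem:intersect-contractible}. For the inductive step, pick $p\in J$ and write the residue sequence for $S_p$ inside $W\setm S_{J-p}$:
\begin{equation*}
    \cdots\rightarrow H_{\bullet}(W\setm S_{J-p})\xrightarrow{\varpi_p} H_{\bullet-2}(W\cap S^p\setm S_{J-p})\xrightarrow{\fibSphere_p} H_{\bullet-1}(W\setm S_J)\rightarrow H_{\bullet-1}(W\setm S_{J-p})\rightarrow\cdots
\end{equation*}
By the first claim, $H_{\bullet}(W\setm S_{J-p})\cong H_{\bullet}(W\cap S^p\setm S_{J-p})$, and under this identification I expect $\varpi_p$ to be an \emph{isomorphism} onto $H_{\bullet-2}(W\cap S^p\setm S_{J-p})$ in each degree $\bullet\ge 2$ and the target groups vanish in degree $\bullet=0,1$ except the degree-zero part. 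The cleanest way to see $\varpi_p$ is an isomorphism after this identification: the composite $H_{\bullet}(W\cap S^p\setm S_{J-p})\xrightarrow{\fibSphere_p^{-1}?}$... rather, I would note that the inclusion $H_{\bullet}(W\cap S^p\setm S_{J-p})\xrightarrow{\cong} H_{\bullet}(W\setm S_{J-p})$ followed by $\varpi_p$ is the identity up to sign (intersecting the coboundary-free classes, which already live on $S^p$, with $S_p$ recovers them), so $\varpi_p$ is split surjective, and the long exact sequence forces it to be an isomorphism and $\fibSphere_p$ to be injective with $H_{\bullet-1}(W\setm S_J)\cong \ker(H_{\bullet-1}(W\setm S_{J-p})\to\cdots)\oplus \image\fibSphere_p$. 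Running the induction hypothesis $H_{d}(W\setm S_{J-p})\cong\bigoplus_{L\subseteq J-p,\abs{L}=d}H_0(W\cap S^L\setm S_{J-p-L})$ through this splitting, together with $\fibSphere_p$ applied to the copy for $J-p$ shifted by one, yields exactly \eqref{eq:H(W-SJ)-decomposition}: the classes with $p\notin L$ come from $W\setm S_{J-p}$ via inclusion, those with $p\in L$ come from the $\fibSphere_p$-image of the decomposition one level down. Anticommutativity of the $\fibSphere_i$ (stated in \cref{s:leray}) makes the resulting map $\sum_L\fibSphere_L$ well-defined independent of ordering, and induction on $\abs{L}$ identifies each summand $H_0(W\cap S^L\setm S_{J-L})\cong\ZZ$ again by \cref{lem:intersect-contractible} (valid since $\abs{L}\le\abs{J}<\PLm$).

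The main obstacle I anticipate is \emph{not} the homological bookkeeping---which is routine given the exact sequences---but rather pinning down precisely that $\varpi_p$ becomes an isomorphism under the inclusion-induced identification, i.e.\ that the connecting/coboundary maps split the residue sequence cleanly. This requires knowing that a homology class on $W\setm S_{J-p}$ can be represented by a chain transverse to $S_p$ and supported near $S^p$ (so that $\varpi_p$ followed by re-inclusion is the identity), which is exactly the content of the transversality in \cref{sec:relative-residues} and the contractibility retraction; assembling this carefully, while tracking that all homotopies respect the coordinate hyperplanes $S_j$ with $j\in J$ simultaneously, is the delicate point. Everything else---the reduction to $I\subseteq\set{\PLm}$, the base cases, the anticommutativity ensuring $\sum_L\fibSphere_L$ is canonical---is straightforward.
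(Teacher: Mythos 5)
Your first-claim argument has two gaps. The homotopy $H_\lambda$ from the proof of \cref{lem:intersect-contractible} does \emph{not} preserve the coordinate hyperplanes $S_j=\set{x_j=0}$ for $j\le\PLm-1$: the $j$-th entry of $H_\lambda(x)$ is $(1-\lambda)x_j+\lambda\bar{x}$, which is nonzero once $\lambda>0$ even when $x_j=0$, because averaging $x_1,\ldots,x_{\PLm-1}$ actively moves points off $\set{x_j=0}$. There is no ``right block structure'' here. Moreover, even a retraction with $H_\lambda(S_j)\subseteq S_j$ would not automatically restrict to a retraction of the complement $W\setm S_J$; for that you would need the stronger $H_\lambda(W\setm S_j)\subseteq W\setm S_j$. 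The paper sidesteps homotopies entirely: for any $p\in I$ the boundary sequence in $S_p$ reads $0\to H_\bullet(W\cap S^I\setm S_J)\to H_\bullet(W\cap S^{I-p}\setm S_J)\to 0$, because the adjacent terms $H_\bullet(W\cap S^{I-p}\setm S_J,S_p)$ vanish by \cref{lem:H(S^I-S_J.S_K)=0}; peeling off $I$ one index at a time gives the isomorphism.

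Your second-claim argument rests on $\varpi_p$ becoming an isomorphism via ``inclusion followed by $\varpi_p$ is the identity'', but this cannot hold: the inclusion preserves degree while $\varpi_p$ drops degree by $2$ (the real codimension of $S_p$), so the composite is not degree-preserving. Geometrically, intersecting a chain that already lies in $S_p$ with $S_p$ computes a self-intersection (cap with the Euler class of the normal bundle), not the original class. The correct mechanism---the opposite of what you propose---is that $\varpi_p$ is \emph{zero}: by the inductive hypothesis every class in $H_d(W\setm S_{J-p})$ is $\fibSphere_L 1_L$ for some $L\subseteq J-p$ with $1_L\in H_0(W\cap S^L\setm S_{J-p-L})$, and $\varpi_p\fibSphere_L 1_L=\pm\fibSphere_L\varpi_p 1_L=0$ since $\varpi_p 1_L\in H_{-2}(W\cap S^{L+p}\setm S_{J-p-L})=0$. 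The residue sequence therefore breaks into short exact sequences, and the paper concludes with the five lemma applied to the resulting commutative diagram against the split sequence $0\to\ZZ^{\binom{J-p}{d-1}}\to\ZZ^{\binom{J}{d}}\to\ZZ^{\binom{J-p}{d}}\to 0$. Your intuition about which copies of $\ZZ$ come from where is right, but the engine driving the decomposition is the vanishing of $\varpi_p$, not its invertibility.
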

\begin{proof}
    For any $p\in I$, the boundary in $S_p$ sequence breaks into isomorphisms
    \begin{equation*}
        \cdots\rightarrow 0 \rightarrow H_{\bullet}(W\cap S^{I}\setm S_J)
        \xrightarrow{\cong} H_{\bullet}(W\cap S^{I-p}\setm S_J) \rightarrow 0 \rightarrow \cdots
    \end{equation*}
    due to $H_{\bullet}(W\cap S^{I-p}\setm S_J,S_p)=0$ from \cref{lem:H(S^I-S_J.S_K)=0}. 
    To show that \eqref{eq:H(W-SJ)-decomposition} are isomorphisms, we apply induction over $\abs{J}$. Since $J=\varnothing$ forces $L=\varnothing$, these maps are zero in degree $d\neq 0$ in agreement with $H_d(W)=0$, and in degree $d=0$ we get the identity map on $\ZZ\cong H_0(W)$. For the induction step, let $p\in J\neq\varnothing$ and consider the residue sequence with respect to $S_p$. Its maps
    \begin{equation*}
        \varpi\colon H_{d}(W\setm S_{J-p}) \longrightarrow H_{d-2}(W\cap S_p\setm S_{J-p})
    \end{equation*}
    are zero, because every generator $\delta_L 1_L \in H_d(W\setm S_{J-p})$, from $L\subseteq J-p$ with $1_L\in H_0(W\cap S^L\setm S_{J-p-L})\cong \ZZ$, maps to $\varpi \delta_L 1_L=\delta_L \varpi 1_L=0$ due to $\varpi 1_L\in H_{-2}(W\cap S^{L+p}\setm S_{J-p-L})=0$. Therefore, the residue sequence breaks into short exact sequences. We obtain a commutative diagram
    \begin{equation*}\xymatrix@C-5pt{
        0\ar[r] & H_{d-1}(W\cap S_p\setm S_{J-p}) \ar[r]^(0.6){\fibSphere_p} & H_{d}(W\setm S_J) \ar[r] & H_d(W\setm S_{J-p}) \ar[r] & 0 \\
        0 \ar[r] &
        \ZZ^{\binom{J-p}{d-1}}
        \ar[u]^{\cong} 
        \ar[r]^{\alpha}
        &
        \ZZ^{\binom{J}{d}}
        \ar[u]
        \ar[r]^{\beta}
        &
        \ZZ^{\binom{J-p}{d}}
        \ar[u]^{\cong} \ar[r]
        & 0
    }\end{equation*}
    with \eqref{eq:H(W-SJ)-decomposition} on the vertical arrows, $\alpha(1_L)=1_{L+p}$, $\beta(1_L)=1_L$ for all $L\subseteq J-p$ and $\beta(1_L)=0$ otherwise ($p\in L$). The bottom row is split exact and thus the five lemma concludes the induction step.
\end{proof}
\begin{prop}
    Take any partition $I\sqcup J = \set{1,\ldots,\PLm}$ of all $\PLm$ indices. Then the iterated Leray coboundaries induce canonical splittings
    \begin{equation}\label{eq:H(WSI-SJ)-decomposition}
        H_{d-\abs{J}}(W\cap S^{I+J}) \oplus \bigoplus_{L\subset J,\abs{L}=d} H_{0}(W\cap S^{I+L}\setm S_{J-L}) \xrightarrow{\cong} H_{d}(W\cap S^I\setm S_J).
    \end{equation}
\end{prop}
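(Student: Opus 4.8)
The plan is to prove \eqref{eq:H(WSI-SJ)-decomposition} by induction on $\abs{J}$, the map in question being $\fibSphere_J$ on the first summand and $\sum_{L\subsetneq J}\fibSphere_L$ on the remaining ones, where $\fibSphere_L\colon H_0(W\cap S^{I+L}\setm S_{J-L})\to H_{\abs{L}}(W\cap S^I\setm S_J)$ is the iterated Leray coboundary and $L$ ranges over the proper subsets of $J$ with $\abs{L}=d$. The base case $\abs{J}=0$ is trivial: then $I=\set{1,\ldots,\PLm}$, there is no proper subset of $J=\varnothing$, and the statement reduces to the identity on $H_d(W\cap S^{1,\ldots,\PLm})$.

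For the inductive step, fix $p\in J$ and put $J'=J\setm\set{p}$. Then $(I+p)\sqcup J'=\set{1,\ldots,\PLm}$ with $\abs{J'}<\abs{J}$, so the inductive hypothesis applies to the pair $(I+p,J')$; and $I\sqcup J'$ consists of only $\PLm-1$ indices, so \cref{lem:H(W-SJ)-decomposition,lem:intersect-contractible} apply to $(I,J')$ and to all intersections $W\cap S^{I+L}$ with $L\subseteq J'$. I would first record that the residue (Leray coboundary) long exact sequence of \cref{sec:relative-residues}, taken inside the manifold $W\cap S^I$ with respect to the hypersurface $S_p$, reads
\begin{equation*}
    \cdots\to H_{d+1}(W\cap S^I\setm S_{J'})\xrightarrow{\varpi_p} H_{d-1}(W\cap S^{I+p}\setm S_{J'})\xrightarrow{\fibSphere_p} H_d(W\cap S^I\setm S_J)\xrightarrow{\iota_*} H_d(W\cap S^I\setm S_{J'})\to\cdots,
\end{equation*}
where $\iota$ is the inclusion that reinstates $S_p$. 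Exactly as in the proof of \cref{lem:H(W-SJ)-decomposition}, the intersection map $\varpi_p$ vanishes in all degrees: by \cref{lem:H(W-SJ)-decomposition} every class in $H_\bullet(W\cap S^I\setm S_{J'})$ is a sum of $\fibSphere_L\mathbf 1_L$ with $L\subseteq J'$ and $\mathbf 1_L\in H_0(W\cap S^{I+L}\setm S_{J'-L})$, and since $p\notin L$ we get $\varpi_p\fibSphere_L\mathbf 1_L=\pm\fibSphere_L\varpi_p\mathbf 1_L=0$ because $\varpi_p\mathbf 1_L\in H_{-2}(W\cap S^{I+L+p}\setm S_{J'-L})=0$. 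Hence the sequence breaks into short exact sequences
\begin{equation*}
    0\to H_{d-1}(W\cap S^{I+p}\setm S_{J'})\xrightarrow{\fibSphere_p} H_d(W\cap S^I\setm S_J)\xrightarrow{\iota_*} H_d(W\cap S^I\setm S_{J'})\to 0.
\end{equation*}

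I would then place this alongside the short exact sequence assembled from the three direct sum decompositions: the one for $(I+p,J')$ in degree $d-1$ (inductive hypothesis; note $(d-1)-\abs{J'}=d-\abs{J}$), the conjectural one for $(I,J)$ in degree $d$, and the one for $(I,J')$ in degree $d$ from \cref{lem:H(W-SJ)-decomposition}. The proper subsets $L\subsetneq J$ with $\abs{L}=d$ split into those with $p\in L$, which are exactly $L=L'+p$ for proper subsets $L'\subsetneq J'$ with $\abs{L'}=d-1$ (since $L'+p\subsetneq J'+p\iff L'\subsetneq J'$), and those with $p\notin L$, which are exactly the $L\subseteq J'$ with $\abs{L}=d$; moreover $H_0(W\cap S^{I+L}\setm S_{J-L})\cong H_0(W\cap S^{I+L}\setm S_{J'-L})\cong\ZZ$ for $p\notin L$ because $W\cap S^{I+L}$ is contractible (\cref{lem:intersect-contractible}) and deleting further complex hypersurfaces leaves the complement connected. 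This yields a split short exact sequence on the bottom, with the inclusion $\alpha$ matching the first summand to itself and the $L'$-summand to the $(L'+p)$-summand, and the projection $\beta$ onto the $p\notin L$ summands. The two outer vertical coboundary maps are isomorphisms (induction, and \cref{lem:H(W-SJ)-decomposition}), the middle one is the map in \eqref{eq:H(WSI-SJ)-decomposition}, and the two squares commute because $\fibSphere_p\fibSphere_{J'}=\pm\fibSphere_J$ and $\fibSphere_p\fibSphere_{L'}=\pm\fibSphere_{L'+p}$ (so $\fibSphere_p$ carries the left decomposition onto the first summand and the $p\in L$ summands of the middle), while $\iota_*\fibSphere_p=0$ by exactness and $\iota_*\fibSphere_L\mathbf 1_L=\fibSphere_L\iota_*\mathbf 1_L$ for $p\notin L$. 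The five lemma then forces the middle vertical map to be an isomorphism, completing the induction.

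The only real difficulty is the bookkeeping just described — keeping track of which proper subsets of $J$ contain $p$, matching them bijectively with proper subsets of $J'$, and verifying that the numerous coboundary signs are consistent; all of this is routine given the anticommutativity relations $\fibSphere_i\fibSphere_j=-\fibSphere_j\fibSphere_i$ and $\varpi_i\fibSphere_j=-\fibSphere_j\varpi_i$ recorded in \cref{s:leray} and \cref{sec:relative-residues}.
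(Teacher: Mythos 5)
Your proof is correct and follows the same line as the paper's (terse) argument: induction on $\abs{J}$ via the residue sequence for $S_p$, vanishing of $\varpi_p$ because the domain is generated by coboundaries of points (so any further residue lands in $H_{-2}=0$), and the five lemma applied to the resulting short exact sequence against the split sequence of direct sum decompositions. You have filled in the bookkeeping that the paper compresses into ``The same reasoning from the proof of \cref{lem:H(W-SJ)-decomposition} applies, except for the reduction to $I=\varnothing$,'' and the details — including the bijection between proper $L\subsetneq J$ containing $p$ and proper $L'\subsetneq J'$, and the identification of the $p\notin L$ summands — all check out.
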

\begin{proof}
    The claim is trivial for $J=\varnothing$, and generalizes to $\abs{J}>0$ by induction using the residue sequence.
    The same reasoning from the proof of \cref{lem:H(W-SJ)-decomposition} applies, except for the reduction to $I=\varnothing$ (which no longer holds). In particular, we still have $\varpi=0$, since its domain $H_{\bullet}(W\cap S^I \setm S_{J-p})$ involves less hypersurfaces and is thus generated by coboundaries \eqref{eq:H(W-SJ)-decomposition} of points.
\end{proof}
This decomposition implies the following two corollaries, which describe $H_{n-\abs{I}}(W\cap S^I\setm S_J)$; c.f.\ the \emph{marginal cases} in \cite[Footnote~10 on page~93]{Pham:Singularities}.
\begin{cor}\label{cor:Kzero-isos}
    For every bipartition $I\sqcup J=\set{1,\ldots,\PLm}$ of a quadratic pinch ($\PLm\leq n$), the iterated Leray coboundary gives an isomorphism
    \begin{equation*}
        \fibSphere_J\colon
        H_{n-\PLm}(W\cap S^{1,\ldots,\PLm})\xrightarrow{\cong}
        H_{n-\abs{I}}(W\cap S^I\setm S_J)
        .
    \end{equation*}
\end{cor}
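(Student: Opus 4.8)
The plan is to deduce \cref{cor:Kzero-isos} directly from the splitting \eqref{eq:H(WSI-SJ)-decomposition} established in the preceding Proposition, by extracting the summand in the single relevant homological degree and checking that every other summand is forced to vanish. Concretely, I would apply \eqref{eq:H(WSI-SJ)-decomposition} with the partition $I\sqcup J=\set{1,\ldots,\PLm}$ and in degree $d=n-\abs{I}$. Since $\abs{J}=\PLm-\abs{I}$, we get $d-\abs{J}=n-\PLm$, so the first summand is precisely $H_{n-\PLm}(W\cap S^{I+J})=H_{n-\PLm}(W\cap S^{1,\ldots,\PLm})$, and the component of the splitting map on this summand is the iterated Leray coboundary $\fibSphere_J$: it raises degree by $\abs{J}$, hence $n-\PLm\mapsto n-\PLm+\abs{J}=n-\abs{I}$, landing in the target group $H_{n-\abs{I}}(W\cap S^I\setm S_J)$.

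The next step is to rule out the remaining summands of \eqref{eq:H(WSI-SJ)-decomposition}, which are indexed by the proper subsets $L\subsetneq J$ with $\abs{L}=d=n-\abs{I}$. Any such $L$ satisfies $\abs{L}\leq\abs{J}-1=\PLm-\abs{I}-1$, and because we are in the quadratic case $\PLm\leq n$ this forces $\abs{L}\leq n-\abs{I}-1<n-\abs{I}=d$. Hence no admissible $L$ exists, the direct sum over $L$ in \eqref{eq:H(WSI-SJ)-decomposition} is empty, and the splitting collapses in degree $d$ to the single map $\fibSphere_J\colon H_{n-\PLm}(W\cap S^{1,\ldots,\PLm})\xrightarrow{\cong}H_{n-\abs{I}}(W\cap S^I\setm S_J)$, which is therefore an isomorphism. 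This is exactly the assertion of the corollary (and, via \cref{lem:vanishing-spheres}, one reads off that for $\PLm<n$ the common value of these groups is $\ZZ$, the top homology of the vanishing sphere $\Sphere^{n-\PLm}$).

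I do not expect a genuine obstacle here; the only point needing minor care is the extreme case $\PLm=n$, where $d=\abs{J}$ and one must check that it is the \emph{first} summand of \eqref{eq:H(WSI-SJ)-decomposition}, rather than an $L$-term, that survives — this is guaranteed precisely by the strict inclusion $L\subsetneq J$ in the statement of the Proposition. As an alternative (but less economical) route, one could give a self-contained proof by induction on $\abs{J}$: peel off one $p\in J$ using the Leray residue long exact sequence, use \cref{lem:H(S^I-S_J.S_K)=0} to see that the residue maps $\varpi_p$ vanish (their targets involve fewer hypersurfaces and hence are generated by coboundaries of points in degrees $\leq\abs{J}-1<n-\abs{I}$), and conclude with the five lemma, the base case $J=\varnothing$ being trivial. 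This, however, merely re-runs the argument already used to prove the Proposition, so routing everything through \eqref{eq:H(WSI-SJ)-decomposition} is preferable.
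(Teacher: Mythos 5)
Your proof is correct and follows essentially the same route as the paper: the paper's one-line proof also reads the isomorphism off from the decomposition \eqref{eq:H(WSI-SJ)-decomposition} in degree $d=n-\abs{I}$ and observes that the $L$-summands only contribute in degrees $\abs{L}<\abs{J}=\PLm-\abs{I}\leq n-\abs{I}$, so none survive. Your extra remark about the edge case $\PLm=n$ (where $d=\abs{J}$ and the strictness of $L\subsetneq J$ is what keeps the $L$-sum empty) is a correct and worthwhile clarification, though the paper leaves it implicit.
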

\begin{proof}
    The classes $\delta_L 1_L$ from points $1_L\in H_0(W\cap S^{I+L}\setm S_{J-L})\cong\ZZ$ in \eqref{eq:H(WSI-SJ)-decomposition} contribute only to degrees $\abs{L}<\abs{J}=\PLm-\abs{I}\leq n-\abs{I}$.
\end{proof}
\begin{cor}
    For a linear pinch ($\PLm=n+1$) and $I\sqcup J=\set{1,\ldots,n+1}$, the $\abs{J}$ iterated Leray coboundaries $\fibSphere_{J-p}\colon \ZZ\cong H_0(S^{I+J-p})\rightarrow H_{n-\abs{I}}(W\cap S^I\setm S_J)$ provide a canonical isomorphism
    \begin{equation*}
        \ZZ^{\abs{J}} \cong \bigoplus_{p\in J} H_0(S^{I+J-p})\xrightarrow[\sum\limits_{p\in J}\fibSphere_{J-p}]{\cong} H_{n-\abs{I}}(W\cap S^I\setm S_J).
    \end{equation*}
\end{cor}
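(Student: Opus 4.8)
The plan is to deduce the statement directly from the general splitting \eqref{eq:H(WSI-SJ)-decomposition}, specialized to a linear pinch ($\PLm=n+1$) and to the single interesting degree $d=n-\abs{I}$. First I would record the dimension bookkeeping: since $\abs{I}+\abs{J}=\PLm=n+1$, we have $\abs{J}=n+1-\abs{I}=d+1$. Plugging $d=n-\abs{I}$ into \eqref{eq:H(WSI-SJ)-decomposition}, the leading summand $H_{d-\abs{J}}(W\cap S^{I+J})$ vanishes — its degree $d-\abs{J}=-1$ is negative, and in any case $W\cap S^{I+J}=W\cap S^{1,\ldots,n+1}=\varnothing$ for a linear pinch by \cref{lem:vanishing-spheres}. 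The remaining summands of \eqref{eq:H(WSI-SJ)-decomposition} are indexed by $L\subset J$ with $\abs{L}=d=\abs{J}-1$, i.e.\ by the subsets $L=J\setm\set{p}$ for $p\in J$; there are exactly $\abs{J}$ of them.

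Next I would identify each of these summands. For $L=J-p$ we have $J-L=\set{p}$, so the summand is $H_0(W\cap S^{I+L}\setm S_{J-L})=H_0(W\cap S^{I+J-p}\setm S_p)$. Because $S^{1,\ldots,n+1}=\varnothing$, the set $S^{I+J-p}$ does not meet $S_p$, so this equals $H_0(W\cap S^{I+J-p})$. The index set $I+J-p=\set{1,\ldots,n+1}\setm\set{p}$ has size $n$, and these $n$ transverse hypersurfaces of \eqref{eq:localcoords} intersect in the single corner point $O_p$ of the vanishing simplex (cf.\ \cref{sec:linear-pinch}), which lies inside the ball $W$ by the choice of coordinates in \cref{sec:simplepinchsings}. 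Hence $W\cap S^{I+J-p}=S^{I+J-p}=\set{O_p}$ and $H_0(S^{I+J-p})\cong\ZZ$, which also explains why the direct sum has rank $\abs{J}$.

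Finally, the component of the canonical isomorphism \eqref{eq:H(WSI-SJ)-decomposition} on the summand indexed by $L=J-p$ is precisely the iterated Leray coboundary $\fibSphere_L=\fibSphere_{J-p}$, raising degree from $0$ to $\abs{L}=n-\abs{I}$, so summing over $p\in J$ gives the asserted isomorphism $\ZZ^{\abs{J}}\cong\bigoplus_{p\in J}H_0(S^{I+J-p})\xrightarrow{\sum_p\fibSphere_{J-p}}H_{n-\abs{I}}(W\cap S^I\setm S_J)$. I do not expect any genuine obstacle here: the argument is a pure specialization of \eqref{eq:H(WSI-SJ)-decomposition}, and the only step needing a moment's care is the identification of each degree-zero summand with $\ZZ$ via the corner $O_p$ — exactly the singleton computation already carried out at the beginning of \cref{sec:linear-pinch}.
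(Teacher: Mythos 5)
Your proposal is correct and follows essentially the same route as the paper's own proof: specialize the general splitting \eqref{eq:H(WSI-SJ)-decomposition} to degree $d=n-\abs{I}$, kill the leading term via $S^{1,\ldots,n+1}=\varnothing$ from \cref{lem:vanishing-spheres}, and identify the surviving summands $L=J-p$ with the corner points $O_p$ of the vanishing simplex lying inside $W$. The only added content (the degree $d-\abs{J}=-1$ observation and the explicit remark that $S^{I+J-p}\cap S_p=\varnothing$) is a harmless, correct amplification of what the paper leaves implicit.
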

\begin{proof}
    The set $S^{I+J}=\varnothing$ is empty (\cref{lem:vanishing-spheres}), thus the first contribution in \eqref{eq:H(WSI-SJ)-decomposition} is absent and only $L\subset J$ with $\abs{L}=n-\abs{I}=\abs{J}-1$ remain. These $n$-fold intersections $W\cap S^{I+J-p}\setm S_p = S^{I+J-p}$ are points: precisely those corners of the simplex which lie opposite to the facets $S_p$ with $p\in J$.
\end{proof}

\subsection{Duality} \label{ss:homgroups-duality}
Since $W\cap S^I$ is contractible for $\abs{I}<\PLm$ (\cref{lem:intersect-contractible}), Poincar\'{e}-Lefschetz duality implies that
\begin{equation*}
    H_{\bullet}(\overline{W}\cap S^I, \partial W)
    \cong H^{2n-2\abs{I}-\bullet}(W\cap S^I)
    \cong \ZZ
\end{equation*}
is concentrated in degree $\bullet=2n-2\abs{I}$, spanned by the fundamental class of the manifold $\overline{W}\cap S^I$ with boundary $\partial W\cap S^I$. For a quadratic pinch with its retraction to the vanishing sphere, the duality shows that
\begin{align*}
    H_{\bullet}(\overline{W}\cap S^{1,\ldots,\PLm}, \partial W) 
    &\cong H^{2n-2\PLm-\bullet}(W\cap S^{1,\ldots,\PLm}) \\
    &\cong H^{2n-2\PLm-\bullet}(\Sphere^{n-\PLm})
    \cong H_{\bullet-n+\PLm}(\Sphere^{n-\PLm})
\end{align*}
is concentrated in degrees $n-\PLm$ and $2n-2\PLm$.

From this starting point, all proofs above formally dualize upon swapping boundaries $\partial$ and coboundaries $\fibSphere$, swapping $J$ with $K$, replacing the pairs $(W\cap S^I,\varnothing)$ with $(\overline{W}\cap S^I,\partial W)$, and shifting degrees $\bullet$ to $2n-2\abs{I}-\bullet$. In particular, one finds:
\begin{enumerate}
    \item For all strict subsets $I\sqcup J\sqcup K\subset \set{1,\ldots,\PLm}$ with $J\neq\varnothing$,
    \begin{equation*}
        H_{\bullet}(\overline{W}\cap S^I\setm S_J,\partial W\cup S_K) = 0.
    \end{equation*}
    \item For all $I\sqcup J\sqcup K=\set{1,\ldots,\PLm}$ with $J\neq\varnothing$, the groups are concentrated in degree $n-\abs{I}$ and iterated (co)boundaries yield isomorphisms
    \begin{equation*}
        \fibSphere_{I\sqcup K} \partial_K\colon H_{n-\abs{I}}(\overline{W}\cap S^I\setm S_J,\partial W\cup S_K) \xrightarrow{\cong} H_n(\overline{W}\setm S_{1,\ldots,\PLm},\partial W)\cong \ZZ.
    \end{equation*}
    \item For $I\sqcup K=\set{1,\ldots,\PLm}$ in a quadratic pinch ($\PLm\leq n$), the boundaries are isomorphisms in the lowest degree $n-\abs{I}$:
    \begin{equation*}
        \partial_K\colon H_{n-\abs{I}}(\overline{W}\cap S^I,\partial W\cup S_K) \xrightarrow{\cong} H_{n-\PLm}(\overline{W}\cap S^{1,\ldots,\PLm},\partial W).
    \end{equation*}
\end{enumerate}
\begin{proof}
The first claim above is dual to \cref{lem:H(S^I-S_J.S_K)=0}. That proof via long exact sequences used only that $H_{0}(W\cap S^{I+p})\rightarrow H_{0}(W\cap S^{I})$ is an isomorphism. So dually, one only has to check that the intersection 
$\varpi\colon H_{\bullet}(\overline{W}\cap S^{I},\partial W)\rightarrow H_{\bullet-2}(\overline{W}\cap S^{I+p},\partial W)$
with $S_p$ is an isomorphism. This is clear, since these groups $\cong\ZZ$ are generated by the respective fundamental classes.

The second claim is the dual of \cref{lem:codomain-Knon0-full}; the isomorphisms follow just from the existence of the exact sequences. All that is left is to identify the groups at a single slot, like $I=\set{1,\ldots,\PLm-1}$, $J=\set{\PLm}$ and $K=\varnothing$. For a linear pinch, $\overline{W}\cap S^{1,\ldots,n}\setm S_{n+1}=\set{0}$ is a point with homology $\ZZ$. For a quadratic pinch, consider the residue sequence for $S_m$. The intersection
\begin{equation*}
    \varpi_m\colon \ZZ\cong H_{\bullet+2}(\overline{W}\cap S^{1,\ldots,\PLm-1},\partial W) \hookrightarrow H_{\bullet}(\overline{W}\cap S^{1,\ldots,\PLm},\partial W)
\end{equation*}
is injective, sending the generator (fundamental class) to the (non-zero) fundamental class $[\overline{W}\cap S^{1,\ldots,\PLm}]$. Hence the coboundary
\begin{equation*}
    \fibSphere_m\colon H_{\bullet}(\overline{W}\cap S^{1,\ldots,\PLm},\partial W) \twoheadrightarrow H_{\bullet+1}(\overline{W}\cap S^{1,\ldots,\PLm-1}\setm S_m,\partial W)
\end{equation*}
is surjective and identifies the image with the quotient of $H_{\bullet-n+\PLm}(\Sphere^{n-\PLm})$ by $\ZZ [\overline{W}\cap S^{1,\ldots,\PLm}]$, which cancels degree $\bullet=2n-2\PLm$ and leaves only $H_0(\Sphere^{n-m})\cong\ZZ$ in degree $\bullet=n-\PLm$ for $\PLm<n$. The case $\PLm=n$ where $\Sphere^0=\set{P_+,P_-}$ consists of two points is slightly different: Here we reduce $\ZZ^2\cong H_0(\Sphere^0)$ modulo the sum $\varpi_{\PLm}[\overline{W}\cap S^{1,\ldots,\PLm-1}]=[P_+]+[P_-]$, leaving $\ZZ$.

The third claim is a very similar dualization, of \cref{cor:Kzero-isos}.
\end{proof}
\begin{rem}
    Modulo torsion, these claims follow immediately from a generalization of Poincar\'{e}-Lefschetz duality to transverse arrangements, spelled out in \cref{thm:verdier-duality}.
\end{rem}

\subsection{Variation of a quadric}\label{sec:quadratic-pinch-JKzero}
The well-known classical Picard-Lefschetz formula (e.g.\ \cite[\S(5.3.3)]{Lamotke:HomIsoSing} or \cite[\S(6.3.3)]{Lamotke:TopVarLef}) computes the variation of a quadratic pinch in the case $J=K=\varnothing$, i.e., for the complete intersection of all $I=\set{1,\ldots,\PLm}$. We can ignore the coordinates $x_1=\ldots=x_{\PLm-1}=0$, so
\begin{equation*}
    S^I_t \cong Q_t = \set{x_1^2+\ldots+x_{r+1}^2=t} \subset \CC^{r+1}
\end{equation*}
is a family of complex quadrics $Q_t$ with dimension $r=n-\PLm$. They are localized $W\cap S^I_t\cong \oBall\cap Q_t$ in the ball $\oBall=\set{\abs{x_1}^2+\ldots+\abs{x_{r+1}}^2<1}\subset \CC^{r+1}$.
For $0<\abs{t}<1$, the embedded vanishing sphere
\begin{equation}\label{eq:vanishing-sphere}
    \Sphere^{r} \hookrightarrow \oBall\cap Q_t
    ,\quad
    u \mapsto x=u\cdot \sqrt{t}
\end{equation}
is a deformation retract $H_{\bullet}(\oBall\cap Q_t) \cong H_{\bullet}(\Sphere^{r})$, and its fundamental class $\vcyc=\dvcyc = \pm[\Sphere^r]\in H_r(\oBall\cap Q_t)$ provides the vanishing cycle.
In polar coordinates $t=\rho\cdot e^{2\ipi\tau}$, the embedding \eqref{eq:vanishing-sphere} extends (for $\rho>0$) to a diffeomorphism
\begin{equation}\label{eq:tangent-bundle-vanishing-sphere}\begin{split}
    T\Sphere^{r}=\set{(u,v)\in\Sphere^{r}\times\RR^{r+1}\colon u\cdot v=0}
    \xrightarrow{\quad\cong\quad} Q_t \subset \CC^{r+1}, \\
    (u,v)
    \mapsto
    e^{\ipi\tau}
    \cdot \Big(u\sqrt{\rho+\norm{v}^2\cdot \tfrac{1-\rho}{2}}+\iu v\sqrt{\tfrac{1-\rho}{2}}\Big)
\end{split}\end{equation}
of the quadric with the tangent bundle of the vanishing sphere. Restricted to the disk bundle $\overline{V}=\set{\norm{v}\le 1} \subset T\Sphere^{r}$ and its boundary, the sphere bundle $\partial V=\set{\norm{v}=1}$, we obtain local trivializations of the pairs
\begin{equation}\label{eq:quadric-thom-iso}
    (\cBall\cap Q_t,\partial\oBall\cap Q_t)\cong (\overline{V},\partial V).
\end{equation}

The corresponding isotopy $g_{\tau}\colon Q_{\rho}\cong Q_{t}, x \mapsto e^{\ipi\tau}\cdot x$ over the path $\gamma(\tau)=\rho\cdot e^{2\ipi\tau}$ ends up in the antipode $g_1(x)=-x$, which is not the identity on $\partial \oBall$. Hence we compose it with the diffeomorphisms $\Phi_{\tau}\colon T\Sphere^r\longrightarrow T\Sphere^r$ defined by
\begin{equation*}
    \Phi_{\tau}(u,v) =\big(u\cos\theta+(v/\norm{v})  \sin\theta,\; -u\,\norm{v}\sin\theta+v\cos\theta\big)\big|_{\theta=\pi\tau\cdot \norm{v}}
\end{equation*}
which rotate in the plane spanned by $u$ and $v$. By continuity in $\tau$, the map $\Phi_{\tau}$ is homotopic to the identity $\Phi_0=\id$, and therefore $g_{\tau}$ is homotopic to $g'_{\tau}=g_{\tau}\circ \Phi_{\tau}$. The angle $\theta$ of rotation decreases from $\pi\tau$ on the boundary $\partial V$, down to $0$ at the zero section $\Sphere^{r}=\set{v=0}$.
The resulting modified isotopy (\cref{fig:spheremap})
\begin{equation*}
    g_1'(u,v) = -\big(u\cos\theta+(v/\norm{v})\sin\theta,\; -u\norm{v}\sin\theta+v\cos\theta\big)\big|_{\theta=\pi\cdot\norm{v}}
\end{equation*}
restricts to the identity map on $\partial V=\set{\norm{v}=1}$. 
\begin{figure}
    \centering
    \includegraphics[width=0.7\textwidth]{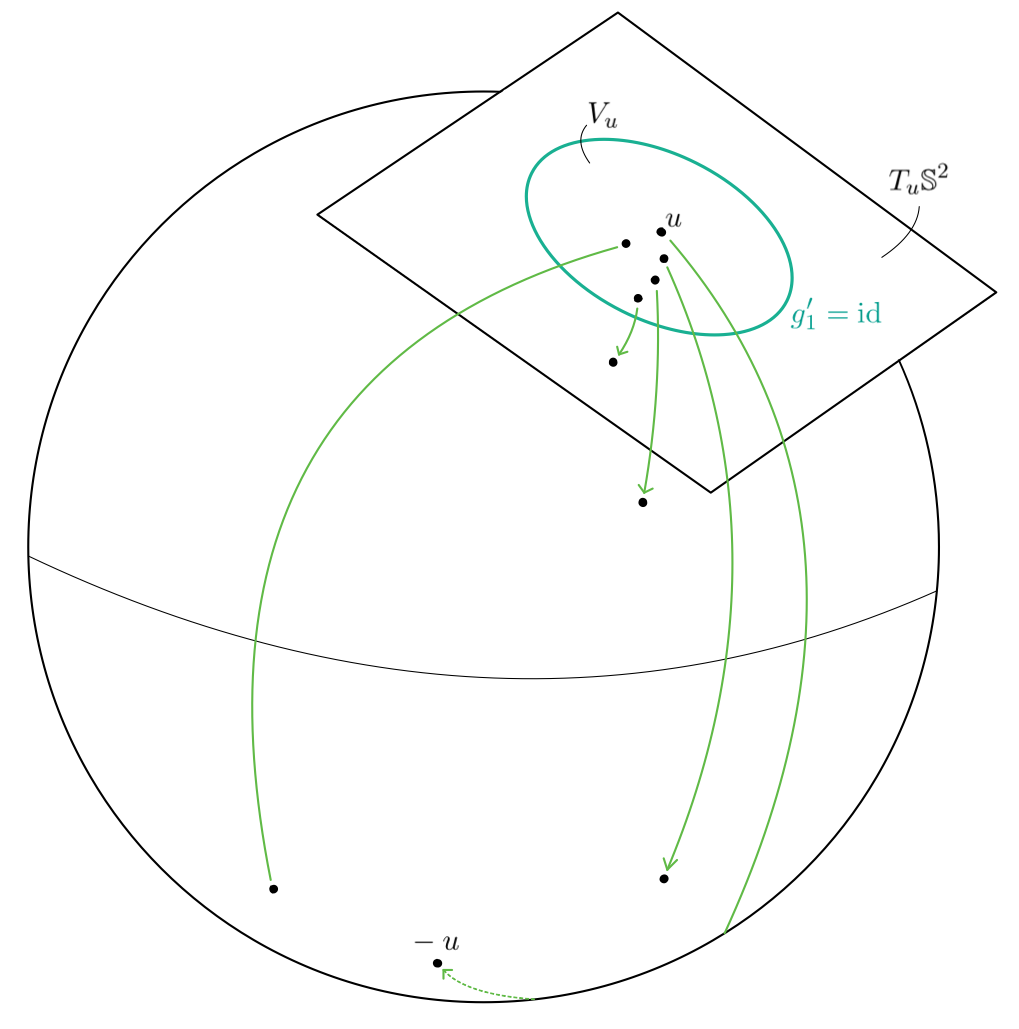}%
    \caption{Illustration of the map $v\mapsto g'_1(u,v)$ for $r=2$.}%
    \label{fig:spheremap}%
\end{figure}
In fact, $g_{\tau}'$ trivializes the sphere bundle globally: The isomorphism $g_{\tau}'|_{\partial V} \colon \partial V\rightarrow Q_t\cap\partial\oBall$ is given by
\begin{align*}
    (u,v) &\mapsto 
    e^{\iu\theta}\Big( [u\cos\theta+v\sin\theta]\sqrt{\tfrac{1+\rho}{2}}+\iu [v\cos\theta-u\sin\theta]\sqrt{\tfrac{1-\rho}{2}}\Big)\Big|_{\theta=\pi\tau}
    \\
    &= \left(\frac{u+\iu v}{\sqrt{2}} \frac{R}{2} + \frac{u-\iu v}{\sqrt{2}} \frac{t}{R}\right)_{R=\sqrt{1+\rho}+\sqrt{1-\rho}}
\end{align*}
and therefore well-defined and smooth also over $t=0$, where it identifies $\partial V\cong Q_0\cap\partial\oBall$ as $(u,v)\mapsto (u+iv)/\sqrt{2}$.
Hence, we can use $g_1'$ to compute the localized variation. With the identification \eqref{eq:quadric-thom-iso}, the domain of
\begin{equation*}
    \var=(g_1')_*-\id \colon\quad H_r(\overline{V},\partial V)\rightarrow H_r(V)
\end{equation*}
is $H_{r}(\overline{V},\partial V)\cong H_0(\Sphere^{r})$ by the Thom isomorphism of the bundle $\mu\colon \overline{V}\rightarrow \Sphere^r$, $(u,v)\mapsto u$. So for $r\geq 1$, $H_{r}(\overline{V},\partial V)\cong\ZZ$ is generated by the class of a fibre $\mu^{-1}(p)$. Over the point $p=(0,\ldots,0,1)\in\Sphere^r$, this class $h=j_* [\cBallD]$ is represented by the $r$-dimensional disk $\cBallD=\set{\norm{w}\leq 1}\subseteq \RR^{r}$, embedded as
\begin{equation*}
    j\colon
    (\cBallD,\partial\oBallD) \hookrightarrow (\overline{V},\partial V),\quad
    w \mapsto (p,(w,0)) \in T_{p}\Sphere^{r}.
\end{equation*}
Because the retraction $\mu$ induces an isomorphism $\mu_*\colon H_r(V)\cong H_r(\Sphere^r)\cong\ZZ$, we can compute the variation by pushing forward along $\mu$:
\begin{equation*}
    \mu_* (\var h) = \mu_* (g_1')_* j_* [\cBallD] - \mu_* j_* [\cBallD].
\end{equation*}
Since $\mu j\colon \cBallD\rightarrow \Sphere^r$ is the constant map $w\mapsto p$, the second term is zero in $H_r(\Sphere^r)\cong H_r(\Sphere^r,\set{p})$. The remaining term is the push-forward under
\begin{equation*}
    \mu g_1' j\colon (\cBallD,\partial\oBallD)\rightarrow (\Sphere^r,\set{p}),\quad 
    w \mapsto 
    \left(-\frac{w}{\norm{w}}\sin(\pi\norm{w}),\;-\cos(\pi\norm{w})\right).
\end{equation*}
This map is bijective on the interior $\oBallD\cong \Sphere^r\setm\set{p}$ and realizes a homeomorphism $\cBallD/\partial\oBallD\cong\Sphere^r$. Therefore, we proved that
\begin{equation}\label{eq:pic-quadric}
    \var\colon H_r(\overline{V},\partial V)\cong\ZZ \longrightarrow H_r(V)\cong \ZZ,\quad h \mapsto \pm\vcyc
\end{equation}
is an isomorphism. To fix the sign, we endow $\oBallD\subset\RR^r$ with the orientation $\td w_1\wedge\ldots\wedge \td w_r$. Under the differential of $u=\mu g_1' j(w)$ at $w=0$, this becomes $(-1)^r\td u_1\wedge\ldots\wedge\td u_r$ at $T_{-p}\Sphere^r$. In terms of the standard orientation\footnote{The standard orientation is induced by the ambient volume form $\td u_1\wedge\ldots\wedge \td u_{r+1}$ of $\RR^{r+1}$, by contraction with the outward normal vector. At $p$, the outward normal is $\partial /\partial u_{r+1}$, hence the standard orientation of $T_p \Sphere^r$ is $(-1)^r \td u_1\wedge\ldots\wedge \td u_r$. At $-p$, the outward normal is $-\partial/\partial u_{r+1}$ and thus the standard orientation of $T_{-p}\Sphere^r$ is $(-1)^{r+1}\td u_1\wedge\ldots\wedge \td u_r$.} on the vanishing sphere $\vcyc=[\Sphere^r]$, this gives $\var h=-\vcyc$ in \eqref{eq:pic-quadric}.
The disk $j(\cBallD)$ and the vanishing sphere $\Sphere^r=\set{v=0}$ intersect in precisely one point, namely $(p,0)\in T_p\Sphere^r$. According to \cref{sec:verdier+intersection}, the intersection number
\begin{equation*}
    \is{\vcyc}{h}=\is{[\Sphere^r]}{[j(\cBallD)]} = \epsilon_p = \pm 1
\end{equation*}
is determined by comparing the orientation of the complex manifold $Q_t$ with the orientation of $T_{(p,0)} j(\cBallD)\oplus T_{(p,0)} \Sphere^r$. With our choices above, the latter is
\begin{equation*}\label{eq:quadric-is-orient-sum}\tag{$\sharp$}
    \td v_1\wedge\ldots\wedge\td v_r \wedge (-1)^r \td u_1\wedge\ldots\wedge\td u_r.
\end{equation*}
In terms of real and imaginary parts $x_k=\Re x_k+\iu \Im x_k$, the orientation of $T_{p\sqrt{t}} Q_t$ induced from the complex structure is $\td \Re x_1\wedge\td \Im x_1\wedge\ldots\wedge\td\Re x_r\wedge\td\Im x_r$. In the parametrization \eqref{eq:tangent-bundle-vanishing-sphere}, this is equal to
\begin{equation*}\label{eq:quadric-is-orient-complex}\tag{$\natural$}
    \td u_1\wedge \td v_1\wedge\ldots\wedge\td u_r\wedge \td v_r \cdot \big(\sqrt{\abs{t}(1-\abs{t})/2}\big)^r.
\end{equation*}
The orientations \eqref{eq:quadric-is-orient-sum} and \eqref{eq:quadric-is-orient-complex} differ by the sign $\epsilon_p=(-1)^r(-1)^{r(r+1)/2}$. We therefore arrive at the Picard-Lefschetz formula
\begin{equation*}
    \var h = -\vcyc = -(-1)^r(-1)^{r(r+1)/2}\is{\vcyc}{h} \vcyc
    =(-1)^{(r+1)(r+2)/2}\is{\vcyc}{h}\vcyc.
\end{equation*}
\begin{rem}
    If $r=0$, then $V=\overline{V}=\Sphere^0=\set{p,-p}$ consists of two points and $\partial V=\varnothing$. In this case, $\var\colon H_0(\Sphere^0)\rightarrow H_0(\Sphere^0)$ is a map from $\ZZ^2$ to $\ZZ^2$. The variations $\var [p] = -\nu$ and $\var[-p]=\nu=[p]-[-p]$ from \cref{eg:pureA-nm=11} are indeed compatible with $\var h = (-1)^{1\cdot 2/2}\is{\nu}{h} \nu$.
\end{rem}

\section{Partial boundaries and relative residues}
\label{sec:relative-residues}

\subsection{Partial boundaries}
Given any two topological subspaces $B,S\subseteq X$, we can consider the triple $B\subseteq B\cup S \subseteq X$ and its exact sequence
\begin{equation*}
    \cdots \rightarrow H_{\bullet}(X,B) \rightarrow H_{\bullet}(X,B\cup S) \xrightarrow{\partial} H_{\bullet-1}(B\cup S,B)\rightarrow H_{\bullet-1}(X,B)\rightarrow\cdots
\end{equation*}
The inclusion of pairs $(S,B\cap S) \rightarrow (B\cup S,B)$ induces a morphism
\begin{equation}\label{eq:(S,B)->(B+S,B)}%
    H_{\bullet}(S,B\cap S) \longrightarrow H_{\bullet}(B\cup S, B).
\end{equation}
\begin{defn}\label{defn:transverse}
    A finite collection $(S_i)_{i \in I}$ of smooth closed submanifolds $S_i\subset X$ is \emph{transverse} if, for every subset $J\subseteq I$ and at every point $p\in S^J$ of the intersection $S^J=\bigcap_{i\in J}S_i$, the canonical projections $N_p S^J \longrightarrow N_p S_i$ assemble into isomorphisms $N S^J \cong \bigoplus_{i \in J} N S_i|_{S^J}$ of the normal bundles.
\end{defn}
By the implicit function theorem, a transverse collection admits local coordinates $(x,(y_i)_{i \in J})\colon U\longrightarrow \RR^d\times\prod_{i\in J} \RR^{r_i}$ near any point $p\in S^J$ such that
$S_i \cap U = \set{y_i=0}$ for all $i\in J$, where $r_i=\codim S_i$ and $d=\dim S^J$. If each $S_i$ is a complex hypersurface ($r_i=2$) in a complex manifold, a transverse collection is also called \emph{simple normal crossings divisor}.
\begin{prop}\label{lem:partial-boundary-iso}%
    If $S=S_1$ and $B=\bigcup_{i>1} S_i$ form a transverse collection of smooth closed submanifolds $S_i \subset X$, then \eqref{eq:(S,B)->(B+S,B)} is an isomorphism.%
\end{prop}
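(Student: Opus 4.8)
The plan is to prove that the inclusion-induced map \eqref{eq:(S,B)->(B+S,B)} is an isomorphism by an excision argument, using the local product structure guaranteed by transversality. First I would set up a suitable open neighbourhood: let $U$ be an open neighbourhood of $B\cup S$ that deformation retracts onto $B\cup S$, obtained by taking tubular neighbourhoods of each $S_i$ that are compatible with all intersections (this is possible precisely because the collection is transverse, so one can choose the tubular neighbourhood of $S=S_1$ to respect $B$, in the sense that its fibres over points of $S\cap B$ stay inside $B$; this is the same ``compatible collar'' construction invoked in the proof of \cref{lem:var-loc-dom-iso}). Replacing $X$ by such a $U$ changes neither side of \eqref{eq:(S,B)->(B+S,B)} up to the relevant isomorphisms, so we may assume $X$ itself retracts onto $B\cup S$, and in fact that $X$ is (the total space of) the normal bundle of $S$ restricted appropriately.

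The key step is then excision. We want to remove from the pair $(B\cup S,B)$ everything except a neighbourhood of $S\setm B$ inside it. Concretely, consider the subset $Z = B \setm (\text{open tubular neighbourhood of } S\cap B \text{ inside } B)$. Then $\overline{Z}$ is a closed subset of the open set $X\setm S$, and $Z$ lies in the interior of $B$ relative to $B\cup S$ in the appropriate sense, so excising $Z$ is admissible: $H_\bullet(B\cup S,B)\cong H_\bullet((B\cup S)\setm Z, B\setm Z)$. By transversality, $B\setm Z$ deformation retracts onto $B\cap S$ along the tubular-neighbourhood fibres, and $(B\cup S)\setm Z$ deformation retracts onto $S$ compatibly; these two retractions fit together into a deformation retraction of the pair $((B\cup S)\setm Z, B\setm Z)$ onto $(S, B\cap S)$. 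Chasing through, this retraction is a homotopy inverse (on the pair level) to the map induced by the inclusion $(S,B\cap S)\hookrightarrow ((B\cup S)\setm Z, B\setm Z)$, and composing with the excision isomorphism shows that \eqref{eq:(S,B)->(B+S,B)} is an isomorphism. Alternatively, one can phrase the whole argument via the long exact sequence of the triple $B\subseteq B\cup S\subseteq X$ together with the pair $(S,B\cap S)$ and the five lemma, but the excision route is cleanest.

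The main obstacle is the careful choice of mutually compatible tubular neighbourhoods (or collars) so that the two deformation retractions---of $S$-directions and of $B$-directions---can be performed simultaneously and agree on the overlap $S\cap B$. This is exactly where transversality of the whole collection $(S_i)_{i\in I}$, not merely pairwise transversality of $S$ with each component of $B$, is used: in the local model $S_i\cap U=\set{y_i=0}$ the retractions are the obvious linear scalings of the coordinates $y_i$, and transversality guarantees that a global partition-of-unity gluing of these local scalings still preserves each $S_i$ and hence preserves $B$. Once this bookkeeping is in place, the homotopies and the excision are routine. I would also remark (as the paper does elsewhere) that the same statement and proof apply verbatim with any of the ambient complements and partial boundaries in place, which is what makes the long exact sequence \eqref{eq:exact_boundary_seq} available; so it is worth stating the lemma in the slightly more general form where one additionally works inside $X\setm D_I$ and relative to further $D_j$'s, the proof being unchanged.
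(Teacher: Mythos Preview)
Your proposal is correct and follows essentially the same approach as the paper: excise the part of $B$ away from $S$, then use a tubular neighbourhood of $S$ compatible with $B$ (guaranteed by transversality, with the same references) to retract what remains onto the pair $(S, B\cap S)$. The paper's presentation differs only cosmetically---it takes the tubular neighbourhood $U$ of $S$ in $X$ directly (your first paragraph's neighbourhood of $B\cup S$ is unnecessary since neither side of \eqref{eq:(S,B)->(B+S,B)} involves $X$), excises $B\setm U$, and concludes via the five lemma on the long exact sequences of $(S\cup(B\cap U),B\cap U)$ and $(S,B\cap S)$ rather than a direct homotopy equivalence of pairs.
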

\begin{proof}
    Pick a tubular neighbourhood $u\colon N\hookrightarrow X$ of $S$, that is, a smooth embedding of the normal bundle $N=TX|_S/TS$ such that $u\iota=\id_S$ for the zero section $\iota\colon S\hookrightarrow N$. Let $U=u(N)$, then excise $B\setm U$ to get  
    \begin{equation*}
        H_{\bullet}(B\cup S,B) \cong H_{\bullet}(S\cup (B\cap U),B\cap U).
    \end{equation*}
    Contraction $F_{\lambda}(v)=(1-\lambda)v$ of the fibres $N_{\pi(v)}$ shows that the projection $\pi=F_1\colon U\cong N \longrightarrow S$ is a deformation retract.
    Using the transversality assumption we may construct $u$ in such a way that $F_{\lambda}(U\cap B) \subseteq U\cap B$ for all $\lambda$, see \cite[Lemma~4.3]{TehraniMcLeanZinger:NCsymp} or \cite[Corollary~4.1]{Guadagni:SymplecticCrossing}. Then $F_{\lambda}$ restricts to homotopies on $B\cap U$ and $S\cup (B\cap U)$, which proves that
    \begin{equation*}
        \pi|_{B\cap U}\colon B\cap U \longrightarrow B\cap S
        \quad\text{and}\quad
        \pi|_{S\cup(B\cap U)} \colon S\cup(B\cap U)\longrightarrow S
    \end{equation*}
    are deformation retracts and induce isomorphisms $H_{\bullet}(B\cap U)\cong H_{\bullet}(B\cap S)$ and $H_{\bullet}(S\cup(B\cap U)) \cong H_{\bullet}(S)$. To conclude, apply the five lemma to the long exact sequences of the pairs $(S\cup(B\cap U),B\cap U)$ and $(S,B\cap S)$.
\end{proof}

In this proof, we exploited the fact that one may choose tubular neighbourhoods $u_i\colon TX|_{S_i}/T S_i \hookrightarrow X$ for all $S_i$ in a mutually compatible way, such that $u_i^{-1}(S_j)=T S_j|_{S_i\cap S_j}/T S_i|_{S_i\cap S_j}$.
We will repeatedly use such compatible systems of tubes in this appendix. They are constructed in the references given in the proof above, see also \cite[\S 7]{Mather:Stability}, and illustrated in \cref{fig:tubes}.
\begin{defn}
    In the situation of \cref{lem:partial-boundary-iso}, the \emph{boundary in $S$} is the map $\partial_S\colon H_{\bullet}(X,B\cup S)\longrightarrow H_{\bullet-1}(S,B\cap S)$ obtained from the connecting morphism $\partial$ of the triple $B\subseteq B\cup S \subseteq X$ under the isomorphism \eqref{eq:(S,B)->(B+S,B)}:
    \begin{equation*}\xymatrix{
        H_{\bullet}(X,B\cup S) \ar[r]^{\partial} \ar[dr]_{\partial_S} & H_{\bullet-1}(B\cup S,B) \\
        & H_{\bullet-1}(S,B\cap S) \ar[u]_{\cong}
    }\end{equation*}
\end{defn}
Now the exact sequence of the triple $B\subseteq B\cup S \subseteq X$ takes the form
\begin{equation*}
    \cdots \rightarrow H_{\bullet}(X,B) \rightarrow H_{\bullet}(X,B\cup S) \xrightarrow{\partial_S} H_{\bullet-1}(S,B\cap S)\rightarrow H_{\bullet-1}(X,B)\rightarrow\cdots
\end{equation*}

\subsection{Relative residues}
If we replace $S$ by its complement $S^\setcompl=X\setm S$ and consider instead the triple $B\subseteq B\cup S^\setcompl \subseteq X$, we get the sequence
\begin{equation*}
    \cdots
    \rightarrow H_{\bullet}(X,B)
    \rightarrow H_{\bullet}(X,B\cup S^\setcompl) 
    \xrightarrow{\partial} H_{\bullet-1}(B\cup S^\setcompl,B) 
    \rightarrow H_{\bullet-1}(X,B)
    \rightarrow \cdots
\end{equation*}
The inclusion of pairs $(S^\setcompl, B\setm S) \rightarrow (B\cup S^\setcompl, B)$ induces a morphism
\begin{equation}\label{eq:(X-S,B-S)->(X-S+B,B)}
    H_{\bullet}(X\setm S, B\setm S) \longrightarrow H_{\bullet}(B\cup S^\setcompl, B).
\end{equation}
\begin{prop}\label{prop:(X-S.B-S)excision}
    If $S=S_1$ and $B=\bigcup_{i\neq1} S_i$ form a transverse collection of smooth closed submanifolds $S_i\subset X$, then \eqref{eq:(X-S,B-S)->(X-S+B,B)} is an isomorphism.
\end{prop}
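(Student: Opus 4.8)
\textbf{Proof proposal for \cref{prop:(X-S.B-S)excision}.}
The plan is to mirror the proof of \cref{lem:partial-boundary-iso}, replacing the tubular neighbourhood of $S$ by its \emph{deleted} version. First I would pick a compatible system of tubes, in particular a tubular neighbourhood $u\colon N\hookrightarrow X$ of $S=S_1$, where $N=TX|_S/TS$, chosen so that $u^{-1}(S_i)=TS_i|_{S\cap S_i}/TS|_{S\cap S_i}$ for every $i\neq 1$ (such compatible systems exist by \cite[Lemma~4.3]{TehraniMcLeanZinger:NCsymp} or \cite[Corollary~4.1]{Guadagni:SymplecticCrossing}, see also \cite[\S7]{Mather:Stability}). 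Write $U=u(N)$. The key point is that $U$ is an open neighbourhood of $S$ inside $X$, so $X\setm U$ is closed in $X\setm S$ and, being disjoint from $S$, we may excise it: since $(B\cup S^{\setcompl})\setm B = S^{\setcompl}\setm B$ and the part of $S^{\setcompl}$ outside $U$ does not meet the relevant subspaces, excision of $X\setm U$ (more precisely of $(X\setm S)\setm U$, whose closure in $X\setm S$ is contained in the interior $X\setm \overline{S}$ of the complement of $B\cup(\text{stuff})$; one checks the excision hypothesis by shrinking $U$ slightly) gives
\begin{equation*}
    H_{\bullet}(B\cup S^{\setcompl},B) \cong H_{\bullet}\big(B\cup(U\setm S),\, B\big)
    = H_{\bullet}\big((B\cap U)\cup(U\setm S),\, B\cap U\big),
\end{equation*}
the last equality again by excising $B\setm U$.

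Next I would use the fibrewise contraction $F_{\lambda}(v)=(1-\lambda)v$ on the fibres of $N$, transported to $U$ via $u$. As in \cref{lem:partial-boundary-iso}, compatibility of the tubes ensures $F_{\lambda}(U\cap B)\subseteq U\cap B$ for all $\lambda\in[0,1]$; but now I also need $F_{\lambda}$ to preserve $U\setm S$, which is automatic because $F_{\lambda}$ fixes the zero section $S$ setwise and maps nonzero vectors to nonzero vectors for $\lambda<1$, while $F_1$ collapses everything to $S$. Hence $F_{\lambda}$ does \emph{not} directly retract $U\setm S$; instead I retract the \emph{pair}. The deformation retraction onto the sphere bundle: restrict to the unit disk bundle complement and push radially outward. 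Concretely, the correct statement is that $U\setm S$ deformation retracts (rel nothing) onto the unit sphere bundle $SN\subset N$, and $(B\cap U)\setm S$ retracts onto $SN|_{S\cap B}=SN\cap u^{-1}(B)$, compatibly. This yields
\begin{equation*}
    \pi|\colon U\setm S \longrightarrow SN,
    \qquad
    \pi|\colon (B\cap U)\setm S \longrightarrow SN\cap u^{-1}(B)
\end{equation*}
inducing isomorphisms on homology, and similarly on the unions. Feeding these into the long exact sequences of the pairs $\big((B\cap U)\cup(U\setm S),\,(B\cap U)\setm S\big)$ versus $(S^{\setcompl},B\setm S)$ — where on the latter side the same radial retraction identifies $S^{\setcompl}$ with $SN$ up to homotopy away from the complement of $U$, which is handled by one more excision — the five lemma gives the claimed isomorphism \eqref{eq:(X-S,B-S)->(X-S+B,B)}.

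The main obstacle I anticipate is bookkeeping the excisions cleanly: one must verify that the subset being excised has closure contained in the interior of the subspace appearing as the second entry of the relative homology pair, and this requires slightly shrinking $U$ (or passing to a closed subtube $\overline{U'}\subset U$) so that $\overline{X\setm U}\cap \overline{S}=\varnothing$ with room to spare. A secondary subtlety is that, unlike the $S$ (non-complement) case where the tube \emph{retracts onto} $S$, here the deleted tube retracts onto the \emph{sphere bundle}, so one cannot simply invoke "$\pi$ is a deformation retract onto $S$"; one must instead observe that the inclusion $(S^{\setcompl},B\setm S)\hookrightarrow (B\cup S^{\setcompl},B)$, after excision, becomes the inclusion of $(U\setm S,(B\cap U)\setm S)$ into $((B\cap U)\cup(U\setm S),B\cap U)$, and that \emph{this} inclusion is what the five-lemma argument addresses — the sphere bundle never needs to be named explicitly, one only needs that $F_{\lambda}$ restricts to compatible homotopies on all four spaces in sight, which follows from the tube being compatible and from $F_{\lambda}$ preserving "being nonzero". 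Once these topological hygiene points are pinned down, the argument is formally identical to that of \cref{lem:partial-boundary-iso}.
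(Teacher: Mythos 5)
Your first excision step is invalid, and this is not a bookkeeping issue: the passage from $H_{\bullet}(B\cup S^{\setcompl},B)$ to $H_{\bullet}\big(B\cup(U\setm S),\,B\big)$ fails. To excise a set $Z$ from the pair $(B\cup S^{\setcompl},B)$ one needs $\overline{Z}$ contained in the interior of $B$ relative to $B\cup S^{\setcompl}$. But that interior is \emph{empty}: every point of $B$ is a limit of points of the open dense set $S^{\setcompl}\setm B$, and $B$ has positive codimension, so no open subset of $B\cup S^{\setcompl}$ lies inside $B$. The set you want to remove, $S^{\setcompl}\setm(U\cup B)$, is a codimension-zero chunk of $X$ and certainly not contained in $B$. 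Shrinking the tube $U$ of $S$ does not help, because the obstruction lives entirely on the $B$ side.

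Here is a concrete counterexample to your first displayed equation. Take $X=\RR^2$, $S=\set{(0,0)}$, $B=\set{(1,0)}$, and $U$ a small open disc around the origin disjoint from $B$. Then $B\cup S^{\setcompl}=S^{\setcompl}$ and
\begin{equation*}
    H_{\bullet}\big(B\cup S^{\setcompl},B\big)
    = H_{\bullet}\big(\RR^2\setm\set{0},\set{(1,0)}\big)
    \cong \widetilde{H}_{\bullet}(\Sphere^1),
\end{equation*}
which is $\ZZ$ in degree $1$ and vanishes in degree $0$; whereas $B\cup(U\setm S)$ is the disjoint union of a point and a punctured disc, so
\begin{equation*}
    H_{\bullet}\big(B\cup(U\setm S),B\big)\cong H_{\bullet}(U\setm S)\cong H_{\bullet}(\Sphere^1),
\end{equation*}
which is $\ZZ$ in degrees $0$ \emph{and} $1$. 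The two sides disagree in degree $0$, so the map you call ``excision'' is not an isomorphism.

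The structural reason your plan cannot be repaired is that you thickened the wrong set. In \cref{lem:partial-boundary-iso} the second entry of the pair is $B$, and the pieces you discard ($B\setm U$) lie in the interior of $B$ inside $B\cup S$ (points of $B$ away from $S$ are interior there); that is what makes the excision legal. In the present proposition the ambient $B\cup S^{\setcompl}$ is essentially all of $X$ near $B$, so $B$ has no interior and the first move must be to \emph{thicken $B$}, not $S$. The paper chooses compatible tubular neighbourhoods $U_i$ of each $S_i$ with $i\neq 1$, sets $U=\bigcup_{i\neq 1}U_i$, proves (Mayer--Vietoris plus five lemma, using compatibility with $S_1$) that the inclusions $B\hookrightarrow U$, $B\setm S\hookrightarrow U\setm S$ and $B\hookrightarrow B\cup(U\setm S)$ are homology isomorphisms, and only then excises $B\cap S$, which now \emph{does} sit in the interior of $B\cup(U\setm S)$ inside $B\cup S^{\setcompl}$ because $U$ is an open neighbourhood of $B\cap S$ in $X$. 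Your later discussion of retracting deleted tubes to sphere bundles is also a symptom of the same misdirection: after thickening $B$ rather than $S$, no sphere bundles appear and no such retractions are needed.
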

\begin{proof}
The zero section $S_i \hookrightarrow U_i$ of any tubular neighbourhood $N S_i \cong U_i\subseteq X$ induces an isomorphism $H_{\bullet}(S_i) \cong H_{\bullet}(U_i)$. Following \cite{TehraniMcLeanZinger:NCsymp,Guadagni:SymplecticCrossing} we may assume the compatibility that every intersection $U^J = \bigcap_{i \in J} U_i$ is a tubular neighbourhood of $S^J$ (\cref{fig:tubes}), so that $H_{\bullet}(S^J)\cong H_{\bullet}(U^J)$. 
\begin{figure}
    \centering
    \includegraphics[width=0.8\textwidth]{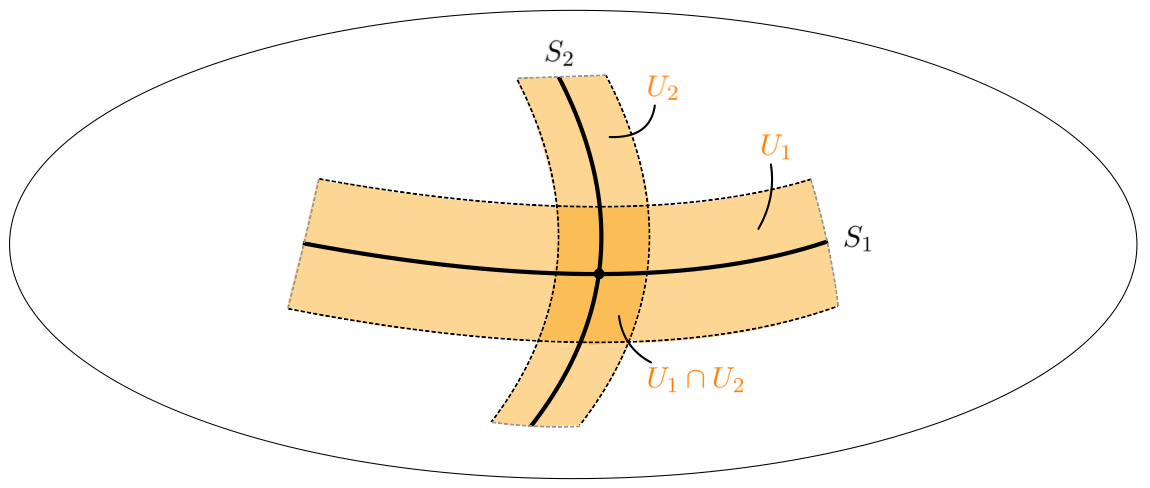}%
    \caption{Tubular neighbourhoods of transverse submanifolds.}%
    \label{fig:tubes}%
\end{figure}
Then also the inclusion of $B$ into $U \defas \bigcup_{i\neq1} U_i$ induces an isomorphism
$ 
H_{\bullet}(B)\cong H_{\bullet}(U),    
$ 
which follows from the Mayer-Vietoris sequence and the five lemma via induction over the number $\abs{I}-1$ of components of $B$. Analogously, the inclusions $B\setm S\hookrightarrow U\setm S$ and $B\hookrightarrow B\cup(U\setm S)$ induce isomorphisms
\begin{equation*}
    H_{\bullet}(B\setm S) \cong H_{\bullet}(U\setm S)
    \quad\text{and}\quad
    H_{\bullet}(B)\cong H_{\bullet}(B\cup(U\setm S))
    ,
\end{equation*}
because compatibility with $S_1$ ensures that for $1\notin J$, the projections $U^J\rightarrow S^J$ map $U^J\cap S$ to $S^J\cap S$ and restrict to deformation retracts $U^J\setm S\rightarrow S^J\setm S$ and $U^J\cap(B\cup S^\setcompl) \rightarrow S^J$. These thickenings of $B\setm S$ and $B$ induce $H_{\bullet}(S^\setcompl, B \setm S)\cong H_{\bullet}(S^\setcompl,U\setm S)$ and
$
    H_{\bullet}(B\cup S^\setcompl, B)
    \cong H_{\bullet}(B\cup S^\setcompl, B\cup(U\setm S))
$,
and these isomorphisms identify \eqref{eq:(X-S,B-S)->(X-S+B,B)} with the canonical map
\begin{equation*}
    H_{\bullet}(S^\setcompl, U \setm S)
    \rightarrow
    H_{\bullet}(B\cup S^\setcompl, B\cup(U\setm S)).
\end{equation*}
The interior of $B\cup(U\setm S)$ in $B\cup S^\setcompl$ contains $B\cap S$, thus the map above is an isomorphism by excision.
\end{proof}
To describe the group $H_{\bullet}(X,B\cup S^\setcompl)$, let $r=\codim S$ and consider a subbundle $\mu\colon V\rightarrow S$ of a tubular neighbourhood $NS\cong U\rightarrow S$ such that the fibres of $V\subset U$ are  isomorphic to closed discs $\cBall_r=\set{x\in \RR^r\colon\norm{x}\leq 1}$. Then we can excise the complement of $V$ to localize near $S$,
\begin{equation*}\label{eq:Gysin-excision}\tag{$\ast$}
    H_{\bullet}(X,B\cup S^\setcompl) \cong H_{\bullet}(V,(B\cap V) \cup (V\setm S)).
\end{equation*}
For a transverse collection $S\cup B$, recall that we can choose $U\rightarrow S$ such that it preserves $B$. Then $\mu^{-1}(B\cap S)=V\cap B=V|_{B\cap S}$ is itself a disc bundle, namely the restriction of $V$ to $B\cap S$.
\begin{defn}
    Suppose that the disc bundle $V\rightarrow S$ is oriented. Then the pullback bundle $\sigma^{\ast} V \cong \cBall_r\times \Delta$ over a singular simplex $\sigma\colon \Delta\rightarrow S$ inherits an orientation, and hence descends to a well-defined map on homology:\footnote{The maps \eqref{eq:fibre-in-discs} and \eqref{eq:fibre-in-spheres} only depend on the orientation of the normal bundle $NS$ (equivalently the orientation of $V$), but not on the choice of tubular neighbourhoods, since tubular neighbourhoods are unique up to isotopy.}
    \begin{equation}\label{eq:fibre-in-discs}
        \mu_S^{\ast}\colon H_{\bullet}(S,B\cap S) \longrightarrow
        H_{\bullet+r}(X, B \cup S^\setcompl)
        .
    \end{equation}
    Similarly, the pullback of the sphere bundle $\partial V \rightarrow S$ induces a map
    \begin{equation}\label{eq:fibre-in-spheres}
        \fibSphere_S\colon H_{\bullet}(S,B\cap S) \longrightarrow H_{\bullet+r-1}(S^\setcompl, B\setm S)
    \end{equation}
    called the \emph{Leray coboundary}, such that $\partial (\mu_S^{\ast}[\sigma]) = \fibSphere_S[\sigma] + (-1)^r \mu_S^{\ast} (\partial[\sigma])$.
\end{defn}
\begin{prop}\label{prop:relative-thomiso}
    If $S=S_1$ and $B=\bigcup_{i\neq1} S_i$ form a transverse collection of smooth closed submanifolds $S_i\subset X$, and $S$ and $X$ are oriented, then the fibration-by-discs map \eqref{eq:fibre-in-discs} is an isomorphism.
\end{prop}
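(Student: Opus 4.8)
The plan is to localize near $S$ via the excision $(\ast)$, which replaces the pair $(X,B\cup S^\setcompl)$ by the disc bundle pair $(V,(B\cap V)\cup(V\setm S))$, and then identify $\mu_S^{\ast}$ with the relative Thom isomorphism of the oriented $r$-disc bundle $\mu\colon V\to S$ over the pair $(S,B\cap S)$. Concretely, one first checks that under the excision isomorphism $(\ast)$, the map $\mu_S^{\ast}$ of \eqref{eq:fibre-in-discs} becomes the map $H_\bullet(S,B\cap S)\to H_{\bullet+r}(V,(B\cap V)\cup(V\setm S))$ that sends a simplex $\sigma\colon\Delta\to S$ to the fundamental chain of the oriented disc bundle $\sigma^\ast V\cong\cBall_r\times\Delta$ relative to its ``side'' $\partial\cBall_r\times\Delta$ together with the part lying over $\partial\sigma$; this is well defined precisely because $V$ is oriented, and the compatibility of the chosen tubular neighbourhood with $B$ is what guarantees that $V\cap B=V|_{B\cap S}$ is again a sub-disc-bundle, so the pair $(V,(B\cap V)\cup(V\setm S))$ is fibered over the pair $(S,B\cap S)$ with fiber $(\cBall_r,\partial\oBall_r)$.

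Next I would invoke the Thom isomorphism theorem for an oriented disc bundle in its relative form. For the total space $V$ of an oriented $\cBall_r$-bundle over a space $S$ with a subspace $S'=B\cap S$ over which $V$ restricts to a sub-disc-bundle $V'$, the Thom class $\tau\in H^r(V,V\setm S)$ gives isomorphisms $H_\bullet(S,S')\xrightarrow{\cong}H_{\bullet+r}(V,V'\cup(V\setm S))$ by capping with $\tau$ (equivalently, by the relative version of the Leray--Hirsch/Thom argument). In the trivial case $V=S\times\cBall_r$ this is just the suspension/product isomorphism $H_\bullet(S,S')\cong H_{\bullet+r}(S\times\cBall_r,S\times\partial\oBall_r\cup S'\times\cBall_r)$, and the general case follows by a Mayer--Vietoris patching argument over a trivializing cover of $S$ (using the already established \cref{prop:(X-S.B-S)excision} to handle the subspace bookkeeping) and the five lemma. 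Since $S$ and $X$ are oriented and $r=\codim S$, the normal bundle $NS\cong V$ is oriented, so the hypothesis needed for the Thom class is met; the orientation of $V$ is exactly the datum used to define $\mu_S^\ast$ in \eqref{eq:fibre-in-discs}, so the identification is on the nose rather than up to sign.

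The main obstacle I expect is purely bookkeeping: making precise that after excision the relevant pair really is a disc-bundle pair over $(S,B\cap S)$ with the \emph{two distinct} boundary pieces $V\setm S$ (the ``open end'', i.e.\ the complement of the zero section, which deformation retracts onto the sphere bundle $\partial V$) and $B\cap V$ (the ``sub-bundle'' part) interacting correctly, and that the Thom isomorphism respects this decomposition. This is handled by the same compatible-tubular-neighbourhood technology already cited in the proofs of \cref{lem:partial-boundary-iso,prop:(X-S.B-S)excision} (see \cite{TehraniMcLeanZinger:NCsymp,Guadagni:SymplecticCrossing,Mather:Stability} and \cref{fig:tubes}): choose the tube $u\colon NS\hookrightarrow X$ so that $u^{-1}(S_j)$ is the restricted normal bundle for each $j\neq1$, whence $V\cap B$ is a sub-disc-bundle and the radial contraction $F_\lambda(v)=(1-\lambda)v$ preserves all the relevant subsets. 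Once this is set up, the Thom isomorphism applies verbatim and gives the claim; the only remaining point, needed to make \eqref{eq:fibre-in-discs} independent of choices, is the uniqueness of tubular neighbourhoods up to isotopy, which is standard.
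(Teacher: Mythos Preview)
Your proposal is correct and follows the same overall strategy as the paper: localize via the excision $(\ast)$ to the disc bundle $(V,(B\cap V)\cup(V\setm S))$ and then invoke a Thom-type isomorphism for the oriented bundle $V\to S$ relative to the subspace $B\cap S$. The paper handles the relative Thom step slightly differently from what you sketch: rather than patching over a trivializing cover with Mayer--Vietoris, it compares the long exact sequence of the triple $V'\subseteq V'\cup V_B\subseteq V$ (where $V'=V\setm S$ and $V_B=V|_{B\cap S}$) with that of the pair $B\cap S\subseteq S$, and applies the five lemma to reduce directly to two \emph{absolute} Thom isomorphisms, for $V\to S$ and for $V_B\to B\cap S$. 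This avoids the local-to-global bookkeeping of a cover and keeps the argument global; your Mayer--Vietoris route is equally valid but slightly heavier on the patching side.
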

\begin{proof}
    The orientations of $S$ and $X$ induce an orientation of the disc bundle $V$, and the complement $V'\defas V\setm S$ of the zero section deformation retracts onto the sphere bundle $\partial V$. So if $B=\varnothing$, then the excision \eqref{eq:Gysin-excision} reduces the claim to the Thom isomorphism $H_{\bullet}(S)\rightarrow H_{\bullet+r}(V,V')$.\footnote{
        For this homological version of the Thom isomorphism, see e.g.\ \cite[Theorem~18.1.2]{tomDieck:AT} or \cite[Corollary~\S 10.7]{MilnorStasheff:CC}.
    } 
 
    For non-empty $B$, let $V_B\defas V\cap B=V|_{B\cap S}$ denote the restriction of $V$ to $B\cap S$ and consider the commutative diagram of the long exact sequences of the triple $V'\subseteq V'\cup V_B\subseteq V$ and the pair $B\cap S \subseteq S$:
    \begin{equation*}\xymatrix{
        \cdots \rightarrow H_{\bullet}(V,V') \ar[r] & H_{\bullet}(V,V' \cup V_B) \ar[r] & H_{\bullet-1}(V' \cup V_B,V') \rightarrow \cdots \\
        \cdots \rightarrow H_{\bullet-r}(S) \ar[r] \ar@<-15pt>[u]_{\cong} & H_{\bullet-r}(S,B\cap S) \ar[r] \ar[u]_{\mu_S^{\ast}} & H_{\bullet-1-r}(B\cap S) \ar[u]_{\nu} \rightarrow \cdots
    }\end{equation*}
    The vertical maps are induced by the pullbacks of the disc bundle $V\rightarrow S$ and its restriction $V_B\rightarrow B\cap S$. The claim follows from the five lemma because $\nu$ is an isomorphism. Indeed, by construction $\nu$ factors into
    \begin{equation*}
        H_{\bullet}(B\cap S)\rightarrow H_{\bullet+r}(V_B,V_B') \rightarrow H_{\bullet+r}(V_B\cup V', V')
    \end{equation*}
    where $V_B'\defas V_B\setm S = V_B\cap V'$ and the first map is the Thom isomorphism for the bundle $V_B$. The second map is an isomorphism because $V'$ and $V_B$ constitute an excisive couple, by the thickening principle from the proof of \cref{prop:(X-S.B-S)excision}: replace $V_B$ by an open neighbourhood $V\cap U \cap (V'\cup V_B)$ inside $V'\cup V_B$ with the same homology as $V_B$, obtained from a union $U$ of compatible tubular neighbourhoods of the components of $B$.
\end{proof}
Combining this result with \cref{prop:(X-S.B-S)excision}, we can write the long exact sequence of the triple $B\subseteq B\cup S^\setcompl \subseteq X$ for a transverse collection of oriented smooth closed submanifolds as the \emph{relative residue sequence}
\begin{equation*}
    \cdots \rightarrow H_{\bullet}(X,B)\xrightarrow{\varpi_S} H_{\bullet-r}(S,B\cap S) \xrightarrow{\fibSphere_S} H_{\bullet-1}(X\setm S,B\setm S)\rightarrow H_{\bullet-1}(X,B)\rightarrow \cdots
\end{equation*}
where the map $\varpi_S$ can be interpreted as ``transverse intersection with $S$''. It is defined by the commutative diagram
\begin{equation*}\xymatrix{
    H_{\bullet}(X,B) \ar[r] \ar[dr]_{\varpi_S} & H_{\bullet}(X,B\cup S^\setcompl) \ar[r]^{\partial} & H_{\bullet-1}(B\cup S^\setcompl,B) \\ & H_{\bullet-r}(S,B\cap S) \ar[u]_{\mu^{\ast}_S}^{\cong} \ar[r]^{\fibSphere_S} & H_{\bullet-1}(X\setm S,B\setm S) \ar[u]^{\cong}_{\eqref{eq:(X-S,B-S)->(X-S+B,B)}} 
}\end{equation*}

\subsection{Functoriality and commutativity}\label{sec:coboundary-comm-functor}
Let $r_i$ denote the codimension of a submanifold $S_i\subset X$ in a transverse family. Then the maps $(\partial_{S_i},\fibSphere_{S_i},\varpi_{S_i})$ have homological degrees $(-1,r_i-1,-r_i)$, and for $i\neq j$ they are graded commutative with the same degrees:
\begin{equation}\begin{aligned}
    \partial_{S_i} \partial_{S_j}     &= -\partial_{S_j} \partial_{S_i} &
    \fibSphere_{S_i} \fibSphere_{S_j} &= (-1)^{(r_i-1)(r_j-1)} \fibSphere_{S_j} \fibSphere_{S_i} \\
    \partial_{S_i} \fibSphere_{S_j}   &= (-1)^{r_j-1} \fibSphere_{S_j} \partial_{S_i} &
    \fibSphere_{S_i} \varpi_{S_j}     &= (-1)^{(r_i-1)r_j} \varpi_{S_j} \fibSphere_{S_i} \\
    \partial_{S_i} \varpi_{S_j}       &= (-1)^{r_j} \varpi_{S_j} \partial_{S_i} &
    \varpi_{S_i} \varpi_{S_j}         &= (-1)^{r_i r_j} \varpi_{S_j} \varpi_{S_i}
\end{aligned}\end{equation}
The first identity amounts to $\partial^2=0$\footnote{Consider a single simplex $\Delta$. Its orientation induces orientations on its facets, which in turn induce orientations on their respective facets. These respective orientations must be opposite, in order to ensure $\partial^2=0$.} and the other signs compare the induced orientations on the disc and sphere bundles, see also \cite[Proposition~39]{Leray:CauchyIII}. Furthermore, suppose that a smooth map $F\colon X\longrightarrow X'$ restricts to submersions $S_i\longrightarrow S_i'\defas F(S_i)$ onto a transverse family $(S_i')_{i\in I}$ in $X'$. Then the maps commute with pushforward:
\begin{equation}\label{eq:[(co)boundary,pushforwad]}
    F_{\ast} \partial_{S_i} = \partial_{S_i'} F_{\ast},\quad
    F_{\ast} \fibSphere_{S_i} = \fibSphere_{S_i'} F_{\ast},\quad
    F_{\ast} \varpi_{S_i} = \varpi_{S_i'} F_{\ast}.
\end{equation}

\section{Relative intersection numbers}
\label{sec:verdier+intersection}

Consider a smooth compact oriented manifold $\overline{M}$ of dimension $n$, with boundary $\partial M=\overline{M}\setm M$ (possibly empty) and interior $M$. The cap product with the fundamental class $[M]\in H_n(\overline{M},\partial M)$ defines a homomorphism
\begin{equation*}
    \Psi\colon H^d(M)\longrightarrow H_{n-d}(\overline{M},\partial M)
    ,\quad \alpha\mapsto [M]\frown \alpha
\end{equation*}
from singular cohomology, to homology in complementary degree. Poincar\'{e}-Lefschetz duality states that this is an isomorphism \cite[Theorem~6.25, p.~171]{Vick:HomIntroAT}. The duality $H_d(M)\times H^d(M)\rightarrow \ZZ$ can thus be viewed as a bilinear map
\begin{equation*}
    \is{\cdot}{\cdot}\colon H_{d}(M)\times H_{n-d}(\overline{M},\partial M) \longrightarrow \ZZ, 
    \quad
    \is{x}{y} = (\Psi^{-1}(y))(x).
\end{equation*}

These \emph{intersection numbers} can be computed as follows: If $x=[Q]$ and $y=[R]$ are fundamental classes of compact oriented submanifolds $Q$ and $R$ ($\partial Q=\varnothing$ and $\partial R\subseteq \partial M$) that intersect transversally, then $Q\cap R\subseteq M$ is also a compact oriented submanifold---in fact, it is a finite set of points. Then
\begin{equation*}
    \is{[Q]}{[R]} = \sum_{p \in Q\cap R} \epsilon_p,
\end{equation*}
with signs $\epsilon_p=\pm 1$ determined by the orientations. We follow the convention in \cite{Pham:Singularities}, namely that $\epsilon=+1$ precisely when the orientation of $M$ at $p$ agrees with the orientation of first $R$, then $Q$, under the identification
\begin{equation*}\label{eq:transverse-tangent-sum}\tag{$\flat$}
    T_pM=T_pR\oplus T_pQ.
\end{equation*}

When $x$ and $y$ are not presented by transverse submanifolds, one can perturb the singular simplices slightly to obtain homologous chains with transverse simplices, and apply the same recipe; cf.\ \cite[\S 73]{SeifertThrelfall:topology} and \cite[\S VII.4]{Dold:LecturesAT}.

If $\partial M=\varnothing$, we can swap $x$ and $y$. Then from \eqref{eq:transverse-tangent-sum} it follows that
\begin{equation}\label{eq:is-swap}
    \is{x}{y} = (-1)^{d(n-d)} \cdot \is{y}{x}.
\end{equation}
\begin{rem}
    The sign convention is more intuitive in terms of the normal bundles $NR=TM|_R/TR$ and $NQ=TM|_Q/TQ$. From \eqref{eq:transverse-tangent-sum} we see $N_p R\cong T_p Q$ and $N_p Q\cong T_p R$, hence $\epsilon_p=+1$ if the orientations on the normal bundles agree, under the canonical isomorphism
    \begin{equation*}\label{eq:transverse-normal-sum}\tag{$\natural$}
        N_p (Q\cap R) \cong N_p Q \oplus N_p R
    \end{equation*}
    with the normal bundles of \emph{first $Q$, then $R$} (the same order as in ``$\is{[Q]}{[R]}$'').
    Note that \eqref{eq:transverse-normal-sum} holds for transverse intersections of arbitrary dimensions $i=\dim Q$ and $j=\dim R$, whereas \eqref{eq:transverse-tangent-sum} only applies for $i+j=n$.
    The orientation of $Q\cap R$ induced by \eqref{eq:transverse-normal-sum} generalizes the intersection numbers to the \emph{intersection product} $[Q]\bullet [R]=[Q\cap R]\in H_{i+j-n}(M)$, see \cite[\S VIII.13]{Dold:LecturesAT}.
\end{rem}

\subsection{Transverse arrangements}
Inside a smooth compact oriented manifold $\overline{M}$ as above, consider a transverse arrangement $D=A\cup B=\bigcup_i D_i$ of smooth oriented closed submanifolds $D_i$. If $\partial M\neq\varnothing$, we also require that $D$ is transverse to $\partial M$. Then the intersection numbers define a bilinear pairing
\begin{equation}\label{eq:intersection-pairing}
    \is{\cdot}{\cdot}\colon H_{d}(M\setm B,A) \times H_{n-d}(\overline{M}\setm A,\partial M\cup B) \longrightarrow \ZZ.
\end{equation}

For transverse submanifolds $Q\subseteq M\setm B$, $R\subseteq\overline{M}\setm A$ with $\partial Q\subseteq A$ and $\partial R\subseteq B\cup\partial M$, their intersection is a finite set of points inside $M\setm D$, and $\is{[Q]}{[R]}$ can be computed as above. In general, the intersection pairing is constructed in \cite[\S VII.4]{Dold:LecturesAT}. In fact, we can define \eqref{eq:intersection-pairing} as the composition of:
\begin{itemize}
\item the K\"{u}nneth morphism for the product $Y=(M\setm B)\times(\overline{M}\setm A)$, relative to $D'=A\times (\overline{M}\setm A)\cup (M\setm B)\times(\partial M\cup B)$:
\begin{equation*}
    H_d(M\setm B,A)\otimes H_{n-d}(\overline{M}\setm A,\partial M\cup B)
    \longrightarrow 
    H_n(Y,D')
\end{equation*}
\item intersection with the diagonal $\Delta= \set{(p,p)\colon p\in M}\cap Y \cong M\setm D$:
\begin{equation*}
    H_n(Y,D') \xrightarrow{\varpi_{\Delta}} H_0(\Delta,D'\cap \Delta)
    \cong H_0(M\setm D) \cong \ZZ.
\end{equation*}
\end{itemize}

Here note that $D'\cap \Delta=\varnothing$, and the map $\varpi_{\Delta}$ from the Gysin/residue sequence is constructed in \cref{sec:relative-residues}: work relative to $D'\cup (Y\setm \Delta)$, excise all but a tubular neighbourhood of $\Delta$, and apply the Thom isomorphism.

\subsection{Duality}
Consider a submanifold $S\in\irrone{B}$ of real codimension $r$. Let $B'\subseteq B$ denote the union of the remaining $B_i\neq S$. Then the intersection pairings on $M$ and $S$ are compatible in the following sense: If $x\in H_{d}(S\setm B',A)$ and $y\in H_{n-d-r+1}(\overline{M}\setm A,\partial M\cup B)$, then
\begin{equation}\label{eq:intersection-delta-partial-general}
    \is{\fibSphere_S\, x}{y} =
    (-1)^{1+(r-1)(n-d)}
    \cdot \is{x}{\partial_S\, y}.
\end{equation}
This is a simple generalization of \cite[\S III.2.3]{Pham:Singularities}. Equivalently, swapping the roles of $x$ and $y$ as in \eqref{eq:is-swap} when $S\in\irrone{A}$, we have for $x\in H_{d}(M\setm B,A)$ and $y\in H_{n-d-r+1}(\overline{S}\setm A',\partial S\cup B)$ the identity
\begin{equation*}
    \is{\partial_S\, x}{y}
    = (-1)^{1+d+n} 
    \cdot \is{x}{\fibSphere_S\, y}.
\end{equation*}
Hence in the case of a complex hypersurface $S\subset M$ ($r=2$ and $n$ even), we obtain for homology classes $x$ of degree $d$ that
\begin{equation}\label{eq:intersection-delta-partial}\begin{aligned}
    \is{\fibSphere_S\, x}{y}
    &= (-1)^{1+d} 
    \cdot \is{x}{\partial_S\, y}
    \quad\text{and}\\
    \is{\partial_S\, x}{y}
    &= (-1)^{1+d} \cdot \is{x}{\fibSphere_S\, y}
    .
\end{aligned}\end{equation}

For an Abelian group $G$, let $G^{\vee}=\Hom(G,\ZZ)$ denote its dual. The pairing \eqref{eq:intersection-pairing} defines a map
$
    H_d(M\setm B,A) \rightarrow H_{n-d}(\overline{M}\setm A,\partial M\cup B)^{\vee}
$. Extend to rational coefficients, then the codomain is canonically isomorphic to singular cohomology, by the universal coefficient theorem. This defines a linear map
\begin{equation}\label{eq:verdier-duality}
    H_d(M\setm B,A; \QQ) \longrightarrow
    H^{n-d}(\overline{M}\setm A,\partial M\cup B;\QQ).
\end{equation}
\begin{thm}\label{thm:verdier-duality}
    The intersection pairing \eqref{eq:intersection-pairing} is perfect modulo torsion. In other words, \eqref{eq:verdier-duality} is an isomorphism.
\end{thm}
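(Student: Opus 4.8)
\textbf{Proof sketch of \cref{thm:verdier-duality}.} The plan is to reduce the statement to the classical Poincar\'{e}--Lefschetz duality on a suitable smooth manifold with boundary by peeling off the hypersurfaces in $A$ and $B$ one at a time, using the long exact boundary and residue sequences from \cref{sec:relative-residues} together with the compatibility formula \eqref{eq:intersection-delta-partial-general}. The base case is $A=B=\varnothing$: there \eqref{eq:verdier-duality} is exactly the map $\Psi$ whose invertibility (modulo torsion, after $\otimes\QQ$) is Poincar\'{e}--Lefschetz duality for $(\overline{M},\partial M)$, as cited from \cite{Vick:HomIntroAT}. The induction is on $\abs{\irrone{A}}+\abs{\irrone{B}}$.

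First I would handle the inductive step that removes a component $S\in\irrone{B}$ of codimension $r$. Write $B'=\bigcup_{B_i\neq S}B_i$. On the homology side there is the boundary-in-$S$ long exact sequence
\begin{equation*}
    \cdots\to H_{d}(M\setm B',A)\to H_{d}(M\setm B,A)\xrightarrow{\partial_S} H_{d-1}(S\setm B',A\cap S)\to H_{d-1}(M\setm B',A)\to\cdots,
\end{equation*}
and on the cohomology side the dual (residue) sequence for the pair $(\overline M\setm A,\partial M\cup B)$ versus $(\overline M\setm A,\partial M\cup B')$, whose connecting maps are the coboundary $\fibSphere_S$ and the restriction maps. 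The pairing identity \eqref{eq:intersection-delta-partial-general} says precisely that the square relating $\partial_S$ on homology with $\fibSphere_S$ on the dual group commutes up to the explicit sign $(-1)^{1+(r-1)(n-d)}$; together with the obvious compatibility of the inclusion-induced maps with the duality pairing, this produces a ladder of the two long exact sequences in which \eqref{eq:verdier-duality} is the vertical arrow. Two of every three consecutive vertical arrows are isomorphisms by the induction hypothesis (the groups for $B'$ involve one fewer hypersurface), so the five lemma gives that the middle one is an isomorphism after $\otimes\QQ$. The case of removing a component $S\in\irrone{A}$ is identical after swapping the roles of $x$ and $y$ via \eqref{eq:is-swap}, using instead the residue sequence on the homology side and the boundary sequence on the cohomology side; equivalently one applies the already-proved $\irrone{B}$-case to the duality-swapped pair $(\overline M\setm B,A)\leftrightarrow(\overline M\setm A,\partial M\cup B)$ discussed in \cref{sec:verdier+intersection}.

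The step I expect to be the main obstacle is not the homological algebra but the bookkeeping that makes the ladder genuinely commute: one must check that the K\"unneth-and-diagonal definition of \eqref{eq:intersection-pairing} is natural with respect to \emph{all three} of the structure maps --- the inclusions $(M\setm B)\hookrightarrow(M\setm B')$, $(\overline M\setm A,\partial M\cup B')\hookrightarrow(\overline M\setm A,\partial M\cup B)$, and the (co)boundary maps attached to $S$ --- and that the sign in \eqref{eq:intersection-delta-partial-general} is exactly the one that makes the dual of $\partial_S$ equal to $\pm\fibSphere_S$ as maps of the two exact sequences, with all signs consistent around the ladder. This is where the orientation conventions of \cref{sec:relative-residues} and \cref{sec:verdier+intersection} (the tangent-sum rule \eqref{eq:transverse-tangent-sum}, the normal-sum rule \eqref{eq:transverse-normal-sum}, and the identity $\partial(\mu_S^{\ast}[\sigma])=\fibSphere_S[\sigma]+(-1)^r\mu_S^{\ast}(\partial[\sigma])$) must all be invoked carefully; a sign error would not affect the final "isomorphism" claim but would make the intermediate diagram fail to commute, so it is worth verifying once at the level of a single transverse intersection point and then propagating by the K\"unneth formalism. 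Finally, I would remark that once the pairing is perfect modulo torsion, the universal coefficient theorem over $\QQ$ identifies the codomain of \eqref{eq:verdier-duality} with $\Hom(H_{n-d}(\overline M\setm A,\partial M\cup B),\QQ)$, so the displayed map is the one induced by the pairing and the theorem is exactly the assertion that it is an isomorphism. $\qed$
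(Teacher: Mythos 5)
Your proof follows the same strategy as the paper: induct on $\abs{\irrone{A}}+\abs{\irrone{B}}$ with Poincar\'{e}--Lefschetz duality for $(\overline M,\partial M)$ as base case, and in the inductive step build a commutative ladder between a long exact sequence on the homology side and the dual of a long exact sequence on the codomain side, then apply the five lemma over $\QQ$. The key commutativity ingredient, \eqref{eq:intersection-delta-partial-general}, is correctly identified.

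However, you have systematically interchanged the boundary and residue sequences. For $S\in\irrone{B}$ the component $S$ is \emph{excised} from $M$ in the group $H_d(M\setm B,A)$, so the LES comparing $B'$ with $B=B'\cup S$ on that side is the \emph{residue} sequence, whose maps are $\varpi_S$ (degree $-r$) and $\fibSphere_S$ (degree $r-1$); there is no degree-$(-1)$ map $\partial_S\colon H_d(M\setm B,A)\to H_{d-1}(S\setm B',A\cap S)$ of the kind your display shows, and the inclusion-induced map runs $H_d(M\setm B,A)\to H_d(M\setm B',A)$, opposite to the arrow you wrote. Dually, in $H_{n-d}(\overline M\setm A,\partial M\cup B)$ the component $S$ sits in the \emph{relative} divisor, so the LES there is the \emph{boundary} sequence with connecting map $\partial_S$, not the residue sequence as you state. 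The same swap recurs in your treatment of $S\in\irrone{A}$. This is exactly consistent with \eqref{eq:intersection-delta-partial-general}, which pairs $\fibSphere_S$ on one factor against $\partial_S$ on the other, so once the labels are corrected your ladder coincides with the paper's diagram and the five-lemma argument proceeds as you describe. The remaining remarks about signs and the universal coefficient theorem are fine.
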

\begin{proof}
    When $A=B=\varnothing$, this is just a restatement of the Poincar\'{e}-Lefschetz theorem. 
    With this as a base case, the proof follows by induction over the number of elements of the arrangement $\irrone{D}=\irrone{A}\sqcup\irrone{B}$:
    
    Consider an element $S\in \irrone{B}$, so that $B=S\cup B'$ where $B'$ is the union of all $B_i$ in $\irrone{B}\setm\set{S}$. The residue and boundary sequences from \cref{sec:relative-residues} fit into the diagram
\begin{equation*}\xymatrixcolsep{2cm}\xymatrix{
    \ar[d]^{\varpi_S} & \ar[d]
    \\
    H_{d+1-r}(S\setm B',A)  \ar[r]^{\alpha} \ar[d]^{\fibSphere_S} 
    & H_{n-d-1}(\overline{S}\setm A,\partial S\cup B')^{\vee} \ar[d]^{\partial_S^{\vee}}
    \\
    H_{d}(M\setm B,A)       \ar[r] \ar[d]
    & H_{n-d}(\overline{M}\setm A,\partial M\cup B)^{\vee} \ar[d]
    \\
    H_{d}(M\setm B',A)      \ar[r]^{\beta} \ar[d]^{\varpi_S}
    & H_{n-d}(\overline{M}\setm A,\partial M\cup B')^{\vee}  \ar[d]
    \\
    &
}\end{equation*}
    built out of the residue sequence in the left column, the dual of the boundary sequence in the right column, and the horizontal maps induced by the intersection pairings. We can add signs such that this diagram commutes: one of the squares is \eqref{eq:intersection-delta-partial-general}, and similarly one checks that the other squares also commute up to sign. We may assume that $\alpha\otimes\QQ$ and $\beta\otimes\QQ$ are isomorphisms, because they arise from intersection pairings of smaller arrangements $\irrone{D}\setm\set{S}$. The induction step then follows from the five-lemma.
    
    Analogously, we can add an element $S\in\irrone{A}$, by considering the diagram built from the boundary sequence (left column) and the dual of the residue sequence (right column).
\end{proof}

\subsection{de Rham cohomology}
Under the identification \eqref{eq:verdier-duality}, the intersection pairing can be viewed as a perfect pairing in singular cohomology,
\begin{equation*}
    H^{d}(M\setm B,A;\QQ) \times H^{n-d}(\overline{M}\setm A,\partial M\cup B;\QQ) \longrightarrow \QQ.
\end{equation*}
Consider now the special case when $M$ is a compact complex manifold without boundary, and $D\subset M$ is a transverse union of smooth complex hypersurfaces. Let $H_{\dR}^{\bullet}$ denote the cohomology of smooth complex differential forms. By the de Rham theorem $H_{\dR}^{\bullet}\cong H^{\bullet}\otimes\CC$, we obtain a perfect pairing
\begin{equation}\label{eq:intersection-dR}
    H_{\dR}^{d}(M\setm B,A) \times H_{\dR}^{n-d}(M\setm A, B) \longrightarrow \CC.
\end{equation}

This incarnation of \cref{thm:verdier-duality} is derived in \cite[\S 3]{BrownDupont:SVdoublecopy} from Verdier duality. Furthermore, by \cite[Proposition~3.10]{BrownDupont:SVdoublecopy} the pairing \eqref{eq:intersection-dR} equals the integral
\begin{equation*}
    ([\nu],[\omega]) \longmapsto \int_M \nu \wedge \omega. 
\end{equation*}
Here the representative forms $\nu$ ($\omega$) have to be chosen such that they have at most simple poles on $B$ ($A$). This is always possible \cite[pp.~121--122]{Leray:CauchyIII}, and ensures that the integral converges. When $M$ and $D$ are algebraic over $\QQ$, then the pairing \eqref{eq:intersection-dR} also respects the $\QQ$-mixed Hodge structures on algebraic de Rham cohomology \cite{BrownDupont:SVdoublecopy}.

\section{Fundamental group and small loops}
\label{sec:fg-codim1}

Consider a closed analytic subvariety $L\subsetneq T$ of a connected complex manifold $T$.
Let $\irrone{L}$ denote the subset of irreducible components of $L$ which have complex codimension one. Their union forms a subset
$
    L'=\bigcup_{\ell\in\irrone{L}} \ell
    \subseteq L
$.
\begin{prop}\label{lem:fg-codim1}
Fix a point $t\in T\setm L$ and take all fundamental groups in the sequel relative to $t$ as the base point.
Then:
    \begin{enumerate}
        \item 
    The inclusion $T\setm L \hookrightarrow T\setm L'$ induces an isomorphism
    \begin{equation*}
        \pi_1(T\setm L) \xrightarrow{\;\cong\;} \pi_1(T\setm L').
    \end{equation*}
    \item The inclusion $T\setm L\hookrightarrow T$ induces a surjection
    \begin{equation*}
        \iota\colon \pi_1(T\setm L)\relbar\joinrel\twoheadrightarrow \pi_1(T).\qquad\quad
    \end{equation*}
    \item The kernel of $\iota$ is generated by simple loops (\cref{def:small-simple}).
    \end{enumerate}
\end{prop}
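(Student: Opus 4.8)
The plan is to prove the three claims in sequence, since each builds on the previous one. For (1), I would use the standard stratification argument: the lower-dimensional components $L\setm L'$ have complex codimension $\geq 2$, hence real codimension $\geq 4$. A generic path in $T\setm L'$ meets $L\setm L'$ in a set that can be avoided by a small perturbation (transversality, since a path is real one-dimensional and $L\setm L'$ has real codimension $\geq 4 > 1+1$), and likewise a generic homotopy (a map from $[0,1]^2$, real dimension $2$) can be pushed off $L\setm L'$ because $2 + 4 > \dim_{\RR} T$ fails only in low dimension but here $\codim_{\RR}(L\setm L') \geq 4 > 2$. So both $\pi_1$-surjectivity and $\pi_1$-injectivity of the inclusion follow. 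The cleanest reference-style phrasing invokes that removing a closed analytic (or even just locally closed, locally finite CW) subset of real codimension $\geq 3$ does not change $\pi_1$; I would cite this and give the transversality sketch. For (2), surjectivity of $\iota\colon \pi_1(T\setm L)\to\pi_1(T)$ is again transversality: any loop in $T$ is homotopic (generically) to one missing $L$, because $L$ has real codimension $\geq 2$ and a loop is $1$-dimensional. Both (1) and (2) are essentially routine.

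\textbf{The main step is (3).} Here is the plan. By (1) we may replace $L$ by $L'$, so assume $L=\bigcup_{\ell\in\irrone{L}}\ell$ is pure of complex codimension one. Let $K=\ker\iota$. A general-position argument shows $K$ is normally generated by loops of the following shape: pick a path $\xi$ from $t$ to a point $t_1$ lying very close to a smooth point $t_c$ of exactly one component $\ell$ (and on no other component, and not in the singular locus of $\ell$), let $\eta$ be a small meridian loop around $\ell$ at $t_1$, and form $\xi^{-1}\conc\eta\conc\xi$ — precisely a simple loop in the sense of \cref{def:small-simple}. The reason such loops normally generate $K$: choose a nullhomotopy $h\colon [0,1]^2\to T$ of a given $\gamma\in K$ (so $h$ restricts to $\gamma$ on the bottom edge and is constant on the other three edges). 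Perturb $h$ rel $\partial[0,1]^2$ to be transverse to the smooth locus $\ell^{\mathrm{sm}}$ of each component and to avoid the (complex codimension $\geq 2$, hence real codimension $\geq 4$) singular strata $\ell\cap\ell'$ and $\mathrm{Sing}(\ell)$. Then $h^{-1}(L)$ is a compact one-dimensional submanifold-with-boundary of the open square, i.e.\ a finite union of arcs and circles, meeting $\partial[0,1]^2$ only along the bottom edge (since $h$ is constant, hence off $L$, on the other three edges), and meeting it transversally there. Each such arc has two endpoints on the bottom edge; I would then express $\gamma$, read along the bottom edge, as a product of conjugates of small meridian loops, one for each intersection point, by the usual "cutting the disk along the preimage arcs" argument (each arc cuts off a sub-bigon whose boundary contribution is a conjugated meridian). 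Since by construction $h$ takes the relevant sub-square into a neighborhood of $\ell^{\mathrm{sm}}$ where $T\setm L$ looks like (disk)$\times$(punctured disk), each conjugated meridian is a simple loop.

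\textbf{Expected obstacle.} The delicate point is bookkeeping the transversality and the "cutting" combinatorics carefully enough to be rigorous without a long excursion into PL/smooth general position theory — in particular, ensuring the perturbation of $h$ can be taken rel the boundary (where $h$ already avoids $L$, so this is fine) and that near each arc of $h^{-1}(L)$ the map factors through a product neighborhood so the meridian genuinely qualifies as a \emph{simple} loop (conjugate of a \emph{small} loop by a path, as in \cref{def:small-simple}), rather than merely "a loop around $L$". I would handle this by shrinking: after transversality, cover $h^{-1}(L)$ by finitely many small squares on each of which $h$ lands in a coordinate chart $U$ with $U\cap L=\set{t_1=0}$, so that the local meridian is literally a small loop; the connecting paths $\xi$ are then traced out along the bottom edge and through these charts. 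I would also remark that a completely analogous (and in places easier) argument, or a direct appeal to the long exact homotopy-type sequence for the complement of a smooth divisor, handles the case where $L$ is itself smooth, and the general case is reduced to it by the stratification. Finally I would note that this proposition is exactly what justifies restricting attention to $\irrone{L}$ throughout \cref{sec:PL,sec:hierarchy}.
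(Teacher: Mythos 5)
Your strategy for parts (1) and (2) matches the paper's: stratify the bad set, use transversality to push loops and nullhomotopies off pieces of real codimension $\ge 3$ (respectively $\ge 2$), and invoke the standard fact about fundamental groups of complements. This part is fine.

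For part (3), you have the right idea (apply transversality to a contracting homotopy $h\colon[0,1]^2\to T$ and read off $\gamma$ as a product of conjugated meridians), but there is a genuine error in the dimension count, and it propagates into the rest of the argument. The codimension-one part $L'$ of $L$ has \emph{real} codimension two in $T$, and the square $[0,1]^2$ has real dimension two. So after perturbing $h$ to be transverse to the smooth strata of $L$ and to avoid the deeper strata, the preimage $h^{-1}(L)$ is a \emph{zero}-dimensional submanifold of the open square---a finite set of interior points---not a one-dimensional submanifold with arcs and circles. Your statement that ``$h^{-1}(L)$ is a compact one-dimensional submanifold-with-boundary \ldots meeting $\partial[0,1]^2$ only along the bottom edge'' is wrong on both counts: the dimension is off, and the bottom edge cannot meet $h^{-1}(L)$ at all since $h$ restricted to it is $\gamma$, which already lies in $T\setminus L$. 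Consequently the ``cutting the disk along preimage arcs / sub-bigons'' step has no arcs to cut along. The correct picture (and the one the paper uses, after first reducing via (1) to the situation where $L$ is a smooth codimension-two submanifold $D$ of a manifold $Y$) is the \emph{punctured square}: $h^{-1}(D)=\{p_1,\ldots,p_k\}$ is finite, the boundary $\partial[0,1]^2$ of the square punctured at the $p_i$ is a product of $k$ simple loops in $\pi_1([0,1]^2\setminus\{p_1,\ldots,p_k\})$, and pushing these forward by $h$ exhibits $[\gamma]$ as a product of simple loops around $L$. Transversality at each $p_i$ then provides the local coordinate chart needed to certify each pushed-forward loop as a simple loop in the sense of \cref{def:small-simple}. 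Once you fix the dimension count, the rest of your sketch (shrinking to coordinate charts where $L$ is $\{t_1=0\}$) is exactly what is needed.
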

\begin{cor}
    The fundamental group of $T\setm L$ depends only on the codimension one part $L'$ of $L$.
    When $T$ is simply connected, then $\pi_1(T\setm L)$ is generated by simple loops.
\end{cor}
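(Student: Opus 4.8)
The plan is to deduce the Corollary directly from \cref{lem:fg-codim1}, which I take as given. The Corollary has two assertions. First, that $\pi_1(T\setm L)$ depends only on the codimension one part $L'$ of $L$: this is \emph{immediate} from part (1) of \cref{lem:fg-codim1}, which supplies an isomorphism $\pi_1(T\setm L)\xrightarrow{\cong}\pi_1(T\setm L')$ induced by the inclusion $T\setm L\hookrightarrow T\setm L'$. Since $L'$ is by definition determined purely by the codimension one components of $L$, nothing else about $L$ enters, and the first sentence of the Corollary follows at once. One might add, to make the statement fully precise, that if $L_1,L_2\subsetneq T$ are two closed analytic subvarieties with the same codimension one part $L'$, then $\pi_1(T\setm L_1)\cong\pi_1(T\setm L_2)$ canonically, both being identified with $\pi_1(T\setm L')$ via the respective inclusions; the naturality is clear because both isomorphisms are induced by inclusions into $T\setm L'$ fixing the base point $t$.

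Second, that when $T$ is simply connected, $\pi_1(T\setm L)$ is generated by simple loops. Here I would combine parts (2) and (3) of \cref{lem:fg-codim1}. By part (2), the inclusion induces a surjection $\iota\colon\pi_1(T\setm L)\twoheadrightarrow\pi_1(T)$; since $T$ is simply connected, $\pi_1(T)$ is trivial, so $\iota$ is the zero map and hence $\ker\iota=\pi_1(T\setm L)$ is the whole group. By part (3), $\ker\iota$ is generated by simple loops. Therefore $\pi_1(T\setm L)$ itself is generated by simple loops, which is exactly the claim.

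There is essentially no obstacle here: the Corollary is a formal consequence of the three parts of the Lemma, and the entire argument is a two-line unwinding of definitions plus the observation that a surjection onto a trivial group has full kernel. The only point deserving a word of care is making sure the base point is handled consistently — all fundamental groups in \cref{lem:fg-codim1} are taken relative to a fixed $t\in T\setm L$, and the same $t$ works for $T\setm L'$ and for $T$ since $T\setm L\subseteq T\setm L'\subseteq T$. I would state the Corollary's proof in just a few sentences, citing \cref{lem:fg-codim1} parts (1), (2), (3) respectively for the two assertions, with no further computation needed.
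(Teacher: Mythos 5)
Your proof is correct and is exactly the derivation the paper has in mind: the paper states the Corollary immediately after \cref{lem:fg-codim1} without a separate proof, because it follows formally from parts (1), (2), and (3) precisely as you spell out. Nothing is missing; the base-point remark and the observation that a surjection onto a trivial group has full kernel are the only points of care, and you handle both.
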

\Cref{lem:fg-codim1} follows from the existence of stratifications by smooth subvarieties (\cref{sec:stratsandlandau}) and transverse deformations. A smooth map $f\colon X\rightarrow Y$ of manifolds is called \emph{transverse} to a closed submanifold $D\subset Y$, if at every point $x\in X$ with $f(x)\in D$, the tangent space of $Y$ is spanned by the tangent space of $D$ and the image of the differential of $f$:
\begin{equation*}
    T_{f(x)} Y = T_{f(x)}D + \td f_x\left( T_x X\right).
\end{equation*}
Under this condition, $f^{-1}(D)$ is a closed submanifold of $X$, with codimension $\dim Y-\dim D$. In particular, if $\dim D+\dim X<\dim Y$, then transversality implies that the image of $f$ does not intersect $D$, that is $f(X)\subseteq Y\setm D$.
\begin{thm}[{Transversality theorem \cite[\S IV.Corollary (2.5)]{Kosinski:DiffMan}}]
    Given a smooth closed submanifold $D\subset Y$ and a smooth map $f\colon X\rightarrow Y$, there exists a smooth map that is homotopic to $f$ and transverse to $D$.
\end{thm}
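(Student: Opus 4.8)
The plan is to deduce the theorem from Sard's theorem, using the local normal form of a closed submanifold supplied by a tubular neighbourhood. First I would establish a local perturbation lemma---near any point, $f$ can be replaced by an arbitrarily $C^1$-small perturbation, through an explicit homotopy supported in a small set, that is transverse to $D$ on a prescribed compact piece---and then globalise by induction over a countable locally finite cover, using that transversality on a compact set is a $C^1$-open condition.

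For the local lemma, fix a tubular neighbourhood $N$ of $D$ in $Y$, identified with the total space of the normal bundle $\nu\to D$, and set $k=\codim D$. Over a trivialising open set $V\subseteq D$ one has $N|_V\cong V\times\RR^k$ with second projection $q\colon N|_V\to\RR^k$ and $D\cap N|_V=\inv q(0)$. Put $W=\inv f(N|_V)$. By Sard's theorem applied to $q\circ f|_W\colon W\to\RR^k$, its regular values are dense, so I may pick a regular value $v$ with $\abs v$ as small as desired. Choosing a bump function $\lambda\colon X\to[0,1]$ that equals $1$ on an open neighbourhood $U^\circ$ of a prescribed compact $C\subseteq W$ and is supported in $W$, I would define $f_v$ to equal $f$ outside $W$ and, inside $W$, to keep the base point $\pi_\nu(f(x))\in D$ while replacing the fibre coordinate $q(f(x))$ by $q(f(x))-\lambda(x)v$. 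On $U^\circ$ one then has $q\circ f_v=q\circ f-v$, which has $0$ as a regular value since $v$ is regular for $q\circ f$; hence $f_v$ is transverse to $D$ at every point of $U^\circ\cap\inv{f_v}(D)$, in particular on $C$. Moreover $f_v$ is homotopic to $f$ through the homotopy that deforms the fibre coordinate linearly from $q(f(x))$ to $q(f(x))-\lambda(x)v$ (stationary outside $W$), and $f_v\to f$ in the $C^1$-topology, uniformly on compacta, as $v\to0$.

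For the globalisation I would take a countable locally finite open cover $\set{O_i}_{i\geq1}$ of $Y$ together with compact sets $C_i\subseteq O_i$ whose interiors still cover $Y$, arranged so that each $O_i$ is either disjoint from $D$ (nothing to do there) or contained in some bundle-trivialising neighbourhood. Set $g_0=f$ and define $g_i$ inductively by applying the local lemma to $g_{i-1}$ over $O_i$, with a bump function equal to $1$ near $C_i$ and supported in $O_i$, choosing the regular value $v$ small. For $\abs v$ small this leaves $g_i$ equal to $g_{i-1}$ outside $O_i$, so transversality on those $C_j$ ($j<i$) disjoint from $O_i$ survives for free, while on $C_j\cap O_i$ it survives because transversality on the compact $C_j$ is $C^1$-open and $g_i\to g_{i-1}$ in $C^1$ on compacta; and it makes $g_i$ transverse to $D$ on $C_i$. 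By local finiteness the sequence $g_i$ stabilises on every compact subset of $X$, so the pointwise limit $g_\infty$ is a well-defined smooth map, transverse to $D$ at every point of $\inv{g_\infty}(D)$ (each lies in some $\operatorname{int}C_i$, where $g_\infty$ coincides with all sufficiently late $g_j$), and homotopic to $f$ by concatenating the locally finitely many non-trivial homotopies of the local lemma, reparametrised to be smooth.

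The main obstacle is this noncompact globalisation, not the local lemma: one has to be sure that each perturbation does not destroy the transversality already secured on earlier compact pieces---which is why $C^1$-openness of transversality and continuity of $g_i$ in the perturbation parameter are invoked---and that the infinite concatenation of homotopies is genuinely well-defined and smooth, which is exactly what local finiteness of the cover provides. A minor nuisance is organising compatible bundle trivialisations of $\nu$ over a locally finite cover of $D$ and keeping the fibrewise perturbation inside the tubular neighbourhood where it is defined; bounding $\abs v$ handles this. If one prefers to bypass tubular neighbourhoods, the same scheme runs after a Whitney embedding $Y\hookrightarrow\RR^M$ with a smooth retraction $r$ from a neighbourhood of $Y$: perturb by $f_s(x)=r(f(x)+\lambda(x)s)$ with $s\in\RR^M$ and apply Sard to the composition of $f$ with the local defining coordinates of $D$; the patching step is unchanged.
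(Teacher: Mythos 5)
The paper does not prove this theorem: it is quoted as an external reference to Kosinski's \emph{Differential Manifolds}, Corollary IV.2.5, and the paper uses it as a black box. So there is no ``paper's proof'' to compare against, and I can only assess your argument on its own merits.

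Your overall strategy---a local perturbation lemma via a tubular neighbourhood, a regular value of the fibre projection supplied by Sard, and globalisation by induction over a countable locally finite cover with transversality-is-$C^1$-open as the preservation device---is the standard elementary proof, and the local lemma is correctly set up. However, the globalisation step as written has a genuine gap. You take a locally finite cover $\{O_i\}$ of $Y$ with compacta $C_i\subset O_i$, and then argue that transversality already achieved ``on $C_j$'' survives later perturbations because ``transversality on the compact $C_j$ is $C^1$-open and $g_i\to g_{i-1}$ in $C^1$ on compacta.'' But the openness statement you need is about the set of maps that are transverse to $D$ at every $x\in X$ with $g(x)\in C_j$, and the relevant set of such $x$ is $g^{-1}(C_j)\subset X$, which is generally \emph{not} compact when $X$ is noncompact. $C^1$-openness of transversality in the weak (compact--open $C^1$) topology holds only over compact subsets of the \emph{source} $X$: a sequence of transversality failures $x_n$ with $g_n(x_n)\in C_j\cap D$ can escape to infinity in $X$, and the contradiction-by-convergent-subsequence argument does not apply. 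So ``on $C_j\cap O_i$ it survives because\ldots$C^1$-open'' does not follow as stated. The same defect infects the final verification for $g_\infty$: knowing that $g_\infty(x)\in\operatorname{int}C_{i_0}$ lets you say $g_{i_0}$ is transverse at $x$ (after the small-perturbation bookkeeping), but the perturbations at stages $i_0{+}1,\ldots,N'(x)$ act on noncompact sets of the form $g^{-1}(O_j)$, so you cannot invoke compact $C^1$-openness to pass to $g_\infty$ without further control.

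The fix is standard but has to be said: replace the cover of $Y$ by a countable locally finite exhaustion of $X$ by compacta $K_m$ (each contained in some $g^{-1}(O_{i(m)})$ for a suitable bundle-trivialising $O_{i(m)}$), achieve transversality on $K_m$ at stage $m$, and shrink $\abs{v_m}$ so that the stage-$m$ perturbation is $C^1$-smaller than the openness modulus of transversality on each of the finitely many already-handled $K_j$ that meet the support of the stage-$m$ bump. Equivalently, one can work in the Whitney strong $C^1$-topology (where the set of maps transverse to a closed submanifold \emph{is} open), or cite and use a parametric/jet transversality theorem directly, which is closer to what Kosinski does. Your concluding remark about Whitney embedding and a retraction is a valid alternative to tubular neighbourhoods, but it inherits exactly the same noncompact-source patching issue, so it does not side-step the gap.
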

\begin{cor}[{\cite[Th\'eor\`eme~2.3]{Godbillon:ElemATn}}]\label{cor:fg-complement}
    For a closed smooth submanifold $D\subset Y$ of real codimension $r\geq 2$, the morphism $\pi_1(Y\setm D)\rightarrow \pi_1(Y)$, induced by the inclusion $Y\setm D\hookrightarrow Y$, is surjective. Whenever $r\geq 3$, it is also injective.
\end{cor}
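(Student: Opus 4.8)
\textbf{Proof plan for \Cref{cor:fg-complement}.}

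The plan is to reduce both assertions to the Transversality Theorem applied to maps from low-dimensional manifolds (spheres and their null-homotopies) into $Y$. First I would prove surjectivity. Let $\dim Y = N$, so $D$ has dimension $N-r \leq N-2$. Given a loop $\gamma\colon \Sphere^1 \to Y$ representing an arbitrary class in $\pi_1(Y)$, I would apply the Transversality Theorem to make $\gamma$ homotopic to a smooth map transverse to $D$. Since $\dim \Sphere^1 + \dim D = 1 + (N-r) \leq N-1 < N$, transversality forces $\gamma(\Sphere^1) \cap D = \varnothing$, i.e.\ the homotoped loop lands in $Y\setm D$. Hence every class in $\pi_1(Y)$ is hit by $\iota_*$, so $\pi_1(Y\setm D)\to\pi_1(Y)$ is surjective. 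A minor technical point I would address: the homotopy itself must be controlled near the basepoint, which is handled by performing the transversality homotopy relative to the basepoint (or, equivalently, first moving the basepoint into $Y\setm D$, which is possible since $D$ has empty interior, and noting $\pi_1$ is independent of basepoint up to isomorphism on the connected component).

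Next, for injectivity when $r\geq 3$, suppose $\gamma_0,\gamma_1$ are loops in $Y\setm D$ that are homotopic in $Y$; equivalently, take a single loop in $Y\setm D$ that bounds a disk $h\colon \oBallD^2 \to Y$ (here $\oBallD^2$ is the 2-disk with $\partial \oBallD^2 = \Sphere^1$ mapping to the loop). I would apply the Transversality Theorem to $h$, rel boundary, to obtain $h'$ homotopic to $h$ rel $\partial\oBallD^2$ and transverse to $D$. Now $\dim \oBallD^2 + \dim D = 2 + (N-r) \leq N-1 < N$, so transversality again yields $h'(\oBallD^2)\cap D = \varnothing$. Thus the disk lives entirely in $Y\setm D$, witnessing that the loop is already null-homotopic in $Y\setm D$. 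This shows $\ker \iota_*$ is trivial, i.e.\ injectivity. The version of the Transversality Theorem that permits homotopies rel a closed subset on which the map is already transverse (here the boundary circle, which misses $D$ entirely) is the standard relative form; I would cite \cite[\S IV]{Kosinski:DiffMan} for this refinement.

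The main obstacle, such as it is, lies in the bookkeeping around basepoints and relative homotopies rather than in any deep geometric input: one must ensure that the transversality homotopy does not disturb the basepoint and, in the injectivity argument, that it fixes the boundary of the disk. Both are handled by the relative version of the Transversality Theorem, since the basepoint (respectively the boundary circle) maps into the open set $Y\setm D$ where transversality to $D$ holds vacuously. A further routine remark worth including: in the statement of the corollary, $Y$ is not assumed compact, but the Transversality Theorem as quoted applies to maps from compact manifolds ($\Sphere^1$ and $\oBallD^2$ are compact), so no properness hypothesis on $D\subset Y$ is needed. Finally, I would note that \Cref{lem:fg-codim1} then follows by combining this corollary with the stratification results: stratify $L$ by smooth subvarieties, peel off strata of codimension $\geq 2$ one at a time using parts (with $r\geq 4$ over $\CC$, so both surjectivity and injectivity apply) to get part (1), apply part (2) of the corollary fibrewise to obtain the surjection in part (2), and analyze the kernel via small transverse disks meeting a single codimension-one stratum to identify the simple-loop generators in part (3); but the self-contained content of \Cref{cor:fg-complement} itself is exactly the two transversality arguments above.
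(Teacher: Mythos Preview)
Your proposal is correct and follows essentially the same route as the paper's proof: apply the Transversality Theorem to a loop (for surjectivity) and to a contracting disk/square (for injectivity when $r\geq 3$), using the dimension count to conclude the image misses $D$. The paper's version is terser---it uses $[0,1]$ and $[0,1]^2$ instead of $\Sphere^1$ and $\oBallD^2$ and omits the basepoint/relative discussion---but the argument is the same.
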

\begin{proof}
    Every loop $\gamma\colon [0,1]\rightarrow Y$ is homotopic to a loop $\gamma'$ that is transverse to $D$. Because $\dim D+\dim\, [0,1]=\dim Y-r+1$ and $r\geq 2$, $\gamma'$ does not intersect $D$, proving surjectivity.
    For injectivity, suppose $[\gamma]\in\pi_1(Y\setm D)$ is in the kernel, i.e.\ there is a contracting homotopy $F\colon[0,1]^2\rightarrow Y$ of $\gamma$ in $Y$. Choose $F$ transverse to $D$. Then since $\dim ([0,1]^2)=2$ and $r\geq 3$, the homotopy does not intersect $D$. Thus $\gamma$ is contractible already in $Y\setm D$.
\end{proof}
\begin{proof}[{Proof of \cref{lem:fg-codim1}}]
    The subvariety $L\setm L'$ of $Y= T\setm L'$ has complex codimension at least two (definition of $L'$). Consider a stratification $L\setm L'=C_{n-2}\supseteq\cdots\supseteq C_0$ by closed analytic subvarieties $C_k$, so that each difference $D_k=C_k\setm C_{k-1}$ is a smooth closed complex submanifold of $Y\setm C_{k-1}$, with real codimension $2n-2k\geq 4$.
    By \cref{cor:fg-complement}, inclusions induce isomorphisms
    \begin{equation*}\label{eq:pi1-strat-step}\tag{$\ast$}
        \iota_k\colon \pi_1(Y\setm C_k)\longrightarrow \pi_1(Y\setm C_{k-1})
    \end{equation*}
    for each $k\leq n-2$. Their composition $\iota_0\cdots\iota_{n-2}$ yields the isomorphism (1) between the fundamental groups of $Y\setm C_{n-2}=T\setm L$ and $Y=T\setm L'$.

    The claim (2) is proved analogously, by considering a stratification of $L=C_{n-1}\supseteq\cdots\supseteq C_0$ inside $Y=T$. As above, the maps \eqref{eq:pi1-strat-step} are isomorphisms for $k\leq n-2$. At $k=n-1$, we still have surjectivity (real codimension two).
    
    Given the isomorphisms $\pi_1(T\setm C_{n-2})\cong\pi_1(T\setm C_{n-3})\cong\cdots\cong \pi_1(T)$, we set $Y= T\setm C_{n-2}$ and identify the map $\iota$ in claim (3) with the morphism
    \begin{equation*}
        \pi_1(T\setm L)=\pi_1(Y\setm D)\longrightarrow \pi_1(Y)\cong \pi_1(T)
    \end{equation*}
    induced by the inclusion of the complement $T\setm L=Y\setm D$ of the smooth submanifold $D= L\setm C_{n-2}\subset Y$, which has real codimension two. We apply the proof of \cite[Proposition \S V.1.3]{Pham:Singularities}: For $[\gamma]\in\pi_1(Y\setm D,t)$ in the kernel of $\iota$, there is a homotopy $F\colon \square\rightarrow Y$ on the square $\square=[0,1]^2$ with $F(\tau,1)=F(0,\tau)=F(1,\tau)=t$ constant and $\gamma(\tau)=F(\tau,0)$. Choose $F$ transverse to $D$, so $P=\set{p_1,\ldots,p_k}=F^{-1}(D)$ is a submanifold---with dimension zero (a finite set of points). In the group $\pi_1(\square\setm P)$, the boundary of the square is a product of simple loops $\gamma_{i}$ 
    (\cref{fig:punctured-square}), hence, applying $F$,
    \begin{equation*}
        [\gamma] = \eta_{1}\conc\cdots\conc\eta_{k}
    \end{equation*}
    is a product of simple loops $\eta_{i}= [F\circ\gamma_{i}]\in\pi_1(Y\setm D,t)$ around $t_i=F(p_i)\in D$. Consider a holomorphic chart $(z_1,\ldots,z_d)$ of $Y$, centered at $t_i$, with locally $D=\set{z_1=0}$. By transversality of $F$, the map $q\mapsto z_1(F(q))$ is a local diffeomorphism and embeds a sufficiently small disk $\oBallD=\set{q\in\square\colon \norm{q-p}<\epsilon}$. As $\gamma_i$ is conjugate to the circle $\partial\oBallD$, the loop $\eta_i$ is conjugate to the closed curve $F(\partial\oBallD)=\partial(F(\oBallD))$. The latter winds once around $t_i=\set{z=0}= F(\oBallD)\cap L$, and it is homotopic to $\set{\abs{z_1}=\epsilon}\times\set{z_2=\cdots=z_d=0}\subset T\setm L$. Thus indeed, $\eta_i$ (or its inverse, depending on the orientation) represents the class of a simple loop in the sense of \cref{def:small-simple}.
\end{proof}
\begin{figure}
    \centering
    \includegraphics[width=\textwidth]{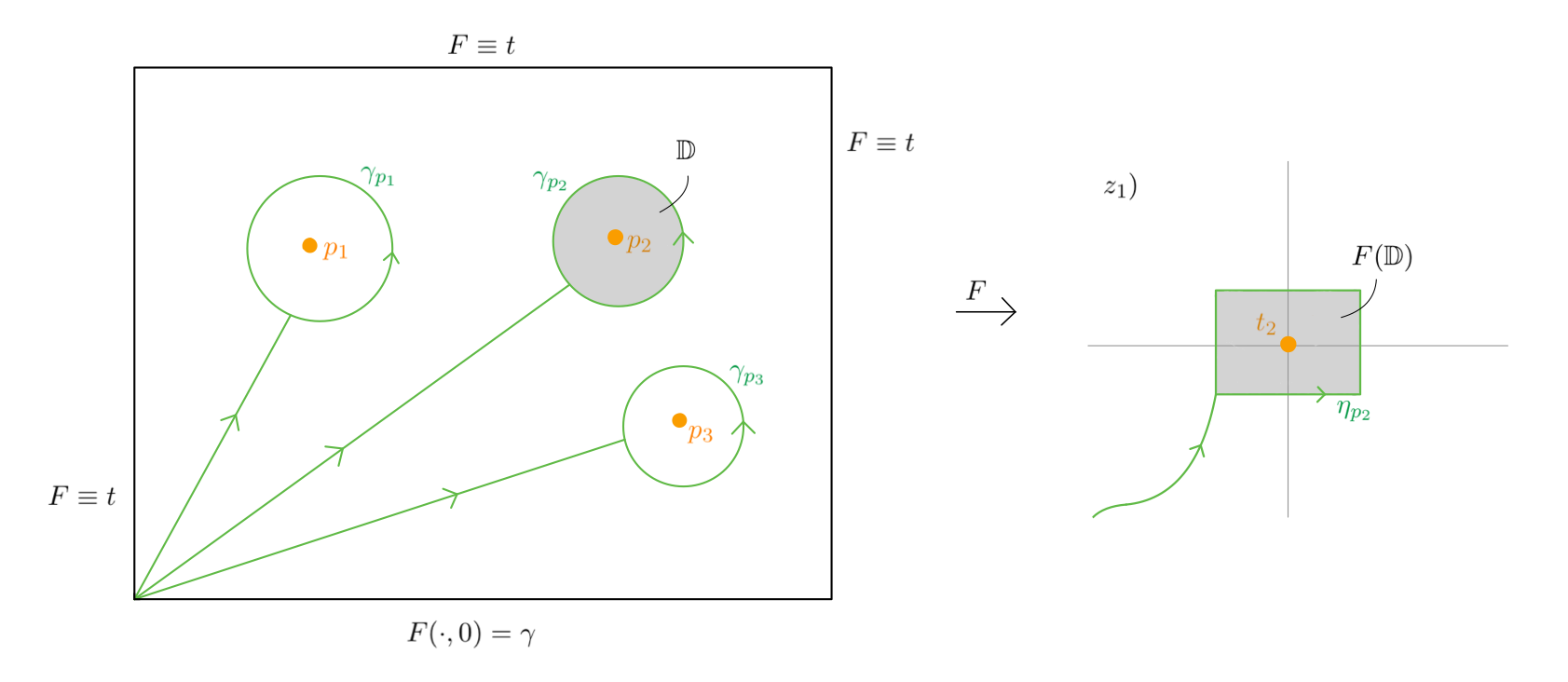}
    \caption{Decomposition of a homotopy into simple loops.}%
    \label{fig:punctured-square}%
\end{figure}
\begin{rem}
Instead of transversality, one can prove \cref{lem:fg-codim1} and \cref{cor:fg-complement} also with the Seifert--van Kampen theorem. Consider the open cover $Y=Y\setm D\cup N$ obtained from a tubular neighbourhood $N\supset D$. By induction over the connected components, we may assume that $D$ is connected. Then we can write the amalgamated product as
\begin{equation}\label{eq:fg-SvK}
    \pi_1(Y) \cong \pi_1(Y\setm D) *_{\pi_1(S)} \pi_1(D),
\end{equation}
exploiting the homotopy equivalences $N \simeq D$ and $N\setm D\simeq S$, where $S\subset N$ denotes a sphere bundle with fibre $\Sphere^{r-1}$. Since the fibre is connected ($r\geq 2$), the long exact homotopy sequence for the bundle $S$ ends in
\begin{equation*}
    \cdots\rightarrow \pi_1(\Sphere^{r-1})\rightarrow \pi_1(S) \xrightarrow{\alpha} \pi_1(D) \rightarrow 0.
\end{equation*}
The surjectivity of $\alpha$ renders the factor $\pi_1(D)$ in \eqref{eq:fg-SvK} superfluous, proving that $\iota\colon\pi_1(Y\setm D)\rightarrow \pi_1(Y)$ surjects. In consequence, the amalgamated product degenerates to a quotient
\begin{equation*}
    \pi_1(Y) \cong \pi_1(Y\setm D)/ \langle G \rangle,
\end{equation*}
where $\ker\iota=\langle G\rangle$ denotes the normal closure of the leftover relations, where $G$ denotes the image in $\pi_1(Y\setm D)$ of the kernel of $\alpha$. For $r\geq 3$, the groups $\pi_1(\Sphere^{r-1})=0$ and thus $\ker\alpha$ are trivial, so that $\iota$ is injective.

Finally, if $r=2$, $\pi_1(\Sphere^1)\cong\ZZ$ shows that $G$ is generated by a single small loop $\gamma$ around $D$. Thus the entire kernel $\langle G\rangle$ of $\iota$ is generated by the conjugates of $\gamma$.
This generation by a single element and its conjugates is explained also in \cite[Lemme~1.2]{Kervaire:NoeudsSup}.
\end{rem}
\begin{rem}
    Any two simple loops $\gamma$, $\gamma'$ around the \emph{same} component $\ell\in\irrone{L}$ are conjugate. Indeed, by definition,
    \begin{equation*}
        \gamma=\eta^{-1} \conc \partial \oBallD\conc\eta
        \quad\text{and}\quad
        \gamma'={\eta'}^{-1} \conc \partial \oBallD'\conc\eta'
    \end{equation*}
    are conjugate to the the boundary circles $\partial\oBallD\cong\Sphere^1$ of small holomorphic discs $\oBallD$, $\oBallD'$ embedded transverse to the smooth part $S=\ell\setm C_{n-2}$. We can shrink them to fit into a tubular neighbourhood $N\supset S$, so that $\partial\oBallD\simeq\pi^{-1}(t_c)$ and $\partial\oBallD'\simeq\pi^{-1}(t_c')$ are homotopic to fibres of the corresponding circle bundle $\pi\colon \partial N\rightarrow S$. Since $S$ is connected, there is a path $\rho\colon[0,1]\rightarrow S$ that connects $t_c$ and $t_c'$. The pullback $\rho^{-1}\pi \cong [0,1]\times\Sphere^1$ is a cylinder and supplies a homotopy $\partial\oBallD \simeq \partial\oBallD'$; as paths with basepoints, these circles are conjugate.
\end{rem}
The closure $S\mapsto\overline{S}$ gives a bijection between the connected components $S\subseteq D=L\setm C_{n-2}$ (strata of complex codimension one), and the Landau singularities $\irrone{L}$ (the irreducible components of $L'$).

For every $\ell\in\irrone{L}$, choose an arbitrary simple loop $\gamma_{\ell}$ around $\ell$. Then by the previous remark, their conjugates provide the set
\begin{equation*}
    H=\set{\eta^{-1}\conc\gamma_{\ell}\conc\eta\colon 
    \ell\in\irrone{L}\ \text{and}\ \eta\in\pi_1(T\setm L)}.
\end{equation*}
of \emph{all} simple loops.
Part (3) of \cref{lem:fg-codim1} states that the kernel of $\iota$ is the subgroup generated by $H$.
\begin{rem}
The subgroup generated by the simple loops $H'=\set{\gamma_{\ell}\colon \ell\in\irrone{L}}$ alone, without including conjugates, can be strictly smaller. For example, \cite[Theorem~2]{Poenaru:genFGcod2} constructs, among several others, a smooth embedding
\begin{equation*}
    L=\ell\sqcup\ell'\quad\hookrightarrow\quad Y=\Sphere^6
\end{equation*}
of two disjoint spheres $\ell\cong\ell'\cong\Sphere^4$ with the following property: For every choice of simple loops $\gamma_{\ell}$ and $\gamma_{\ell'}$, the group generated by $H'=\set{\gamma_{\ell},\gamma_{\ell'}}$ is strictly smaller than the kernel of $\pi_1(Y\setm L)\rightarrow \pi_1(Y)$.

Thus the monodromy representation is not necessarily completely determined by the variation along only one simple loop per Landau component.
\end{rem}

\end{document}